\theoremstyle{plain}
\newtheorem{theorem}{Theorem}[section]
\newtheorem{lemma}[theorem]{Lemma}
\newtheorem{corollary}[theorem]{Corollary}
\theoremstyle{definition}
\newtheorem{definition}[theorem]{Definition}
\theoremstyle{remark}
\newtheorem{remark}[theorem]{Remark}
\newtheorem{example}[theorem]{Example}
\newtheorem{notation}[theorem]{Notation}
\newcommand\pelimsupof[7]{\elimsup[#1 #2](#3,#4.#5, #6.#7)}
\newcommand\spdv[1][t]{\mathcal M_{#1}}
\newcommand\irule[3]{\infer[{#3}]{#2}{#1}}
\newcommand\escalar[1]{\mathsf #1}
\newcommand\one{\ensuremath{\mathfrak 1}}
\newcommand\zero{\ensuremath{\mathfrak 0}}
\newcommand\OC[1][\mathcal S]{$\mathcal L\odot^{{#1}\escalar p}$-calculus\xspace}
\newcommand\OCl{$\mathcal L\odot^{\mathcal S\escalar p}$-logic\xspace}
\newcommand\recap[3]{\noindent {\bf #1 \ref{#2}.}\emph{#3}}
\newcommand\xrecap[4]{\noindent {\bf #1 \ref{#3} (#2)}{\bf .} \emph{#4}}
\newcommand\SM{\ensuremath{{\mathbf C}_{\mathcal S}}}
\newcommand\Rel{\ensuremath{\mathbf{Rel}}}
\newcommand\plus{\mathbin{\text{\normalfont\scalebox{0.7}{\faPlus}}}}
\newcommand\pair[2]{\langle #1, #2 \rangle}
\newcommand\super[2]{[#1, #2]}
\newcommand\inl{\mathsf{inl}}
\newcommand\inr{\mathsf{inr}}
\newcommand\elimtens{\delta_{\otimes}}
\newcommand\elimone{\delta_{\one}}
\newcommand\elimzero{\delta_{\zero}}
\newcommand\elimwith{\delta_{\with}}
\newcommand\elimoplus{\delta_{\oplus}}
\newcommand\elimsup[1][{\escalar p}{\escalar q}]{\delta^{#1}_{\odot}}
\newcommand\relimsup[1]{\delta_{\odot}^{#1}}
\newcommand\fstsup{\pi_1^\odot}
\newcommand\sndsup{\pi_2^\odot}
\newcommand\fst{\pi_1}
\newcommand\snd{\pi_2}
\newcommand\home[2]{\left[{#1}\to{#2}\right]} 
\newcommand\Hom[2]{\mathsf{Hom}({#1},{#2})} 
\newcommand\sem[1]{\left\llbracket {#1}\right\rrbracket}
\newcommand\semS[1]{\llparenthesis {#1}\rrparenthesis}
\newcommand\Id{\mathsf{id}}
\newcommand\lra[1][1_{\mathcal S}]{\longrightarrow_{#1}}
\newcommand\xlra[1]{\xrightarrow{#1}}
\newcommand\mas[1][\mathcal S]{\mathbin{\plus_{\!\!{}_{#1}}}}
\newcommand\masS[1][\mathcal S]{\mathbin{+_{\!\!{}_{#1}}}}
\newcommand\produ[1][\mathcal S]{\mathbin{\cdot_{\!{}_{#1}}}}
\newcommand\scal[1][\mathcal S]{\mathbin{\bullet_{\!{}_{#1}}}}
\newcommand\coproducto[2]{{[}#1,#2{]}}
\newcommand\we{\ensuremath{\mathsf{W}}}
\newcommand\sumwe[2]{\ensuremath{\nabla_{\!\!#1#2}}}
\journal{arXiv}
\begin{document}

\begin{frontmatter}
  \title{The Sup Connective in IMALL:\\ A Categorical Semantics}

  \author[Inria,UNQ]{\texorpdfstring{Alejandro Díaz-Caro\fnref{epiq,fundingboth}}{Alejandro Díaz-Caro}}
  \ead{alejandro.diaz-caro@inria.fr}

  \author[IMERL]{\texorpdfstring{Octavio Malherbe\fnref{fundingboth}}{Octavio Malherbe}}
  \ead{malherbe@fing.edu.uy}

  \affiliation[Inria]{
    organization={Université de Lorraine, CNRS, Inria, LORIA},
    city={Nancy},
    country={France}
  }
  \affiliation[UNQ]{
    organization={DCyT, Universidad Nacional de Quilmes},
    city={Bernal},
    state={Buenos Aires},
    country={Argentina}
  }
  \affiliation[IMERL]{
    organization={IMERL, FIng, Universidad de la República},
    city={Montevideo},
    country={Uruguay}
  }
\fntext[epiq]{Supported by the PEPR integrated project EPiQ (ANR-22-PETQ-0007).}
\fntext[fundingboth]{Supported by the European Union through the MSCA-SE project QCOMICAL (Grant Agreement ID: 101182520) and by the Uruguayan project ANII FCE-1-2025-1-186751.}

  \begin{abstract}
    We explore a proof language for intuitionistic multiplicative additive linear
    logic, incorporating the sup connective that introduces additive pairs with a
    probabilistic elimination, and sum and scalar products within the proof-terms.
    We provide an abstract characterisation of the language, revealing that any
    symmetric monoidal closed category with biproducts and a monomorphism from the
    semiring of scalars to the semiring $\Hom II$ is suitable for the job.
    Leveraging the binary biproducts, we define a weighted codiagonal map which is at the
    core of the sup connective.
  \end{abstract}

  \begin{keyword}
    Probabilistic setting \sep Linear logic \sep Categorical model.

    \MSC[2020] 18M45 \sep 03F52 \sep 03B70
  \end{keyword}

\end{frontmatter}

\section{Introduction}\label{sec:intro}
\subsection{Historical origins}
In the quest for a logic for quantum computing, the non-cloning principle~\cite{Nocloning} is one of the challenges to tackle. This principle states that it is impossible to create an identical copy of an arbitrary unknown quantum state. This is a consequence of the linearity of the quantum mechanics operators, which is a fundamental principle of quantum computing. 
However, the first step into considering this linearity is to have a language where that linearity can be expressed.
With this aim, calculi with sums and scalar product in the proof-terms has been used for quantum computing and algebraic lambda-calculi on many occasions~\cite{AltenkirchGrattageLICS05,SelingerValironMSCS06,Vaux2009,AssafDiazcaroPerdrixTassonValironLMCS14,ArrighiDiazcaroLMCS12,ZorziMSCS16,Lineal,ArrighiDiazcaroValironIC17,LambdaS,DiazcaroGuillermoMiquelValironLICS19,DiazcaroMalherbeLSFA18,DiazcaroMalherbeACS20,DiazcaroMalherbeLMCS22}.
The idea is that if $t$ and $u$ are proofs of the same proposition $A$, then $t\plus
u$ and $\escalar s\bullet t$ are also proofs of $A$, with $\escalar
s$ in some set of scalars.  Most of these works consider a call-by-value
strategy for the reduction of the terms, which forces a kind of linearity by
considering the reduction rules $t(u\plus v)\longrightarrow tu\plus tv$ and $t(s\bullet u)\longrightarrow s\bullet tu$, when $u$ and $v$ are values.

In~\cite{DiazcaroDowekMSCS24} the approach to have linearity in the
proof-language is different. There is no need to define a reduction strategy. Instead, the logic considered is Intuitionistic
Multiplicative Additive Linear Logic (IMALL), and in the proof language, there is one
proof of the proposition $\one$ (the multiplicative truth) as elements of a semiring of scalars
$\mathcal S$. Then, the proofs
of $\bigwith_{i=1}^n\one$ (for any bracketing) are in
one-to-one correspondence with the elements of $\mathcal
S^n$. In such a calculus any closed
proof $t$ of the proposition $A\multimap B$ is proved to be linear in
the syntactic sense. That is, the proof $t(u\plus v)$ is proof-equivalent to
$tu\plus tv$ and $t(s\bullet u)$ is proof-equivalent to $s\bullet
tu$. Moreover, any $\mathcal
S$-homomorphism $\mathcal S^n\xlra f\mathcal S^m$ has a representation as a
proof-term of the proposition $\bigwith_{i=1}^n\one\multimap\bigwith_{i=1}^m\one$.

The proof language in question is the $\mathcal L^{\mathcal S}$-calculus.
It is a proof language for IMALL, which contains sums and scalar products as proof constructors, but whose provable formulae are nothing more---and nothing less---than the tautologies of IMALL.

A second challenge of a proof-language for quantum computing is the
non-determinism of the measurement. In~\cite{DiazcaroDowekTCS23},
non-determinism has been treated as a new connective in Intuitionistic
Propositional Logic, in a Natural Deduction presentation. This connective,
$\odot$ (read as ``sup'' for superposition), is introduced to express the
superposition of data and, more importantly, the measurement operation. The
connective sup arises from the observation that a superposition behaves as a
conjunction, where both propositions are true (and so, its proof is the pair of
proofs), but also, when measured, it behaves as a disjunction, where only one
proposition will be recovered in a non-deterministic process. The
$\odot^{\mathcal S}$-calculus contains the sup connective, and also sums and
scalar products. While not enforcing linearity (thus allowing cloning), it
allows the encoding of basic quantum lambda calculus.

The sup connective then has the introductions and eliminations of conjunction, but it goes further by also including one extra elimination rule, 
that of the disjunction. This elimination is, in
fact, derivable in Natural Deduction when sup is replaced with a conjunction,
but the derivation is not unique, thus enabling non-determinism.
The $\odot^{\mathcal S}$-calculus shows that superposition and measurement can
be represented by this new connective. 

In~\cite{DiazcaroDowekMSCS24}, alongside the $\mathcal L^{\mathcal S}$-calculus,
the $\mathcal L\odot^{\mathcal S}$-calculus is also considered, which
incorporates the sup connective within the linear setting. This distinguishes it
from other approaches to non-deterministic and probabilistic linear calculus,
such as PCF$^\mathcal R$~\cite{LairdManzonettoMcCuskerPaganiLICS13}, where
the non-deterministic reduction arises from terms like $t_1\,\mathtt{or}\,t_2$ with
$t_1$ and $t_2$ of the same type. In contrast, the $\mathcal L\odot^{\mathcal
S}$-calculus introduces non-deterministic reduction as a pair destructor:
$\fstsup$ and $\sndsup$ serve as deterministic pair destructors, while
$\delta_\odot$ is non-deterministic. That is, $\delta_\odot
(\super{t_1}{t_2},x.{s_1},y.{s_2})$ reduces to either $(t_1/x)s_1$ or
$(t_2/y)s_2$. Consequently, the non-deterministic behaviour is explicit in its
elimination and is not triggered by an introduction term. This approach also
allows for a choice among elements of different types.

In the present paper, our aim is to provide an abstract categorical characterisation for a
proof-language of IMALL with $\odot$.  IMALL with $\odot$ is essentially IMALL,
as the sup connective can be regarded as an additive conjunction, with an extra
rule that is derivable by more than one deduction tree---resulting in
non-determinism. Further technical details are presented in
Remark~\ref{rmk:sup-e-logic}, following the presentation of the deduction rules.

PCF$^{\mathcal R}$~\cite{LairdManzonettoMcCuskerPaganiLICS13} not only
addresses non-determinism, with its $\mathtt{or}$ constructor, but also the
probabilistic choice, with the $\bullet$ constructor.  Hence, $(\escalar
p\bullet t_1)\,\mathtt{or}\,(\escalar q\bullet t_2)$ expresses the
probabilistic choice between $t_1$ and $t_2$, with probabilities $\escalar p$
and $\escalar q$ respectively. In fact, it is slightly more general than a
probabilistic choice since the scalars belong to the continuous semiring
$\mathcal R$. In the case of $\mathcal R = \mathbb R^{\geq 0}$, it is a proper
probabilistic calculus. We will refer to this as ``generalised probabilistic
choice''.

We generalise the $\mathcal L\odot^{\mathcal S}$-calculus to the \OC, where,
instead of considering the non-deterministic destructor $\delta_\odot$, we
employ a (generalised) probabilistic destructor $\elimsup$, with $\escalar p$
and $\escalar q$ scalars in the semiring $\mathcal S$ summing to one. What
PCF$^{\mathcal R}$ expresses as $(\escalar p\bullet t_1)\,\mathtt{or}\,(\escalar
q\bullet t_2)$ can be written in the \OC as
$\elimsup(\super{t_1}{t_2},x.x,y.y)$. Nonetheless, we can also write the term
$(\escalar p\bullet t_1)\plus(\escalar q\bullet t_2)$, which carries the same
denotational interpretation but does not reduce probabilistically. Instead, it
represents a linear combination of terms, enabling us to express linear
functions (matrices) and vectors. In this sense, the \OC uses the sums and
scalar product provided by its model not only to represent probabilistic
reductions but also to denote sums and scalar products within the proof
language. Note, however, that the elimination rule of $\sup$ in the \OC
explicitly carries the probabilities in the operator, and therefore cannot
encode quantum measurement. For modelling measurement one requires a semantics
based on density matrices, which is outside the scope of this paper (see~\cite{DaveDiazcaroZamdzhievAPLAS25} for a recent proposal in this direction).

\subsection{Modelling the sup connective}
Introducing a (generalised) probabilistic operator to a linear language is not
straightforward. We begin the informal analysis of this section with the concrete category
$\mathbf{SM}_{\mathcal S}$ of semimodules over the semiring $\mathcal S$ and linear maps, as a
means to aid intuition. Such a category is one of the concrete construction
examples we will use throughout the paper.

Our interpretation does not rely on the Powerset Monad, which is commonly used to express non-deterministic effects~\cite{MoggiIC91}, because this monad is not compatible with the structure of our category. Specifically, using the Powerset Monad would require forming sets from the Cartesian product of non-deterministic paths. However, the map \( A \times A \xlra{\xi} \mathcal PA \), where \( \xi(a_1, a_2) = \{a_1, a_2\} \), is not linear. Since linearity is required in our categorical setting, such a map cannot be part of the category.

Our approach is instead inspired by the density matrix quantum formalism (see,
for example,~\cite[Section 2.4]{NC}), wherein we consider the linear combination
of results as a representation of a probability distribution.  Let $t$ be a term
reducing with probability $p$ to $t_1$ and a probability of $q$ to $t_2$, with
$p+q=1$.  
We
interpret $t$ as 
\(
 \sumwe pq(t_1,t_2) = p\scal[A]t_1\mas[A]q\scal[A]t_2
\), where, if $\hat p$ is the mapping that multiplies its argument by $p$, then
$\sumwe pq$ is defined as $\coproducto{\hat p}{\hat q}$, that is
\[
  \begin{tikzcd}[labels=description,column sep=2cm,row sep=1cm]
    {A}\ar[r,"i_1"]\ar[dr,"\hat p"] & A+A\ar[d,"\sumwe pq"] & {A} \ar[dl,"\hat q"]\ar[l,"i_2"]\\
    &  A &
  \end{tikzcd}
\]

This approach is close to that used for PCF$^{\mathcal R}$ in~\cite{LairdManzonettoMcCuskerPaganiLICS13}. 

In an abstract categorical framework, we require at least a category with biproducts to interpret \(\sumwe pq\) as \(\nabla \circ (\hat p \oplus\hat q)\), where \(\hat p\) and \(\hat q\) are suitable maps from \(A\) to \(A\). These scalar maps make it necessary for the category to also be monoidal, allowing us to define a semiring of scalars in \(\Hom II\) \cite{KellyLaplaza80}, where \(I\) is the tensor unit.
Next, we define a monomorphism \(\semS{\cdot} : \mathcal{S} \to \Hom II\), which guarantees that if two proof terms are mapped to the same morphism in \(\text{Hom}(I, I)\), then they are considered equivalent in the categorical sense.

\subsection{Related works}\label{sec:related}
The probabilistic choice in linear logic has been studied in many settings. 

\paragraph{Compact closed categories}
In~\cite{AbramskyCoeckeLICS04}, the authors proposed a categorical semantics of
quantum protocols using symmetric monoidal closed categories with biproducts,
which are also compact. The compactness property provides a notion of dagger,
which gives a natural definition of measurements in terms of the {\em Born rule}
in quantum mechanics. Thus, the main difference between our presentation for a
model of IMALL+$\odot$ and their presentation for a model of quantum protocols
is their reliance on a dagger operator and their use of the compactness property
for this purpose.  Remark~\ref{rmk:NoCompactNeeded} illustrates that some
properties would be significantly easier to prove if the category were compact
closed.  However, assuming compactness would limit the generality of the
results.

\paragraph{Probabilistic coherence spaces} 
In~\cite{DanosEhrhardIC11}, based on an idea from Girard~\cite{Girard04}, the
authors proposed a model of linear logic using probabilistic coherence spaces,
interpreting types through continuous domains. Morphisms in the associated
category are Scott-continuous.  Additionally, they provide a probabilistic
interpretation of terms, extending PCF with a probabilistic choice construction
which selects a natural number from a probability distribution.  They show the
denotational semantics of closed terms in their base type as sub-probability
distributions.

\paragraph{Cones}
In~\cite{SelingerQPL04}, the author employed the concept of normed cones to
provide an interpretation for the probabilities inherent in quantum programming.
An abstract cone is analogous to an $\mathbb R$-vector space, except that scalars
are drawn from the set of non-negative real numbers.  This idea has been further
developed in~\cite{EhrhardPaganiTasonPOPL17}, and then proved to be a model of
intuitionistic linear logic in~\cite{EhrhardLICS20}. In addition, it is
proved~\cite{CrubilleLICS18} that this model is a conservative extension of the
probabilistic coherence spaces.

\paragraph{PCF$^{\mathcal R}$}
In~\cite{LairdManzonettoMcCuskerPaganiLICS13}, the authors proposed a model of
PCF$^{\mathcal R}$---that is, PCF with a probabilistic choice operator---based
on the category of weighted relations. The first main difference with our
approach is that PCF$^{\mathcal R}$ introduces a probabilistic choice operator,
whereas our system employs a probabilistic pair destructor, as mentioned in the
previous sections. A second difference is that their model is concrete, given in
the category of matrices over a continuous semiring, while ours is formulated at
an abstract categorical level. They also consider a fixed-point operator, which
is outside the scope of this paper.

A more general categorical semantics of PCF$^{\mathcal R}$ was
later developed by Laird~\cite{LairdLiCS16}, in the setting of
symmetric monoidal closed categories with biproducts. From the
semantic point of view, this framework is essentially the same as
ours: both rely on biproducts and scalars from a semiring to interpret
probabilistic choice. The key difference lies in the set of connectives
considered. PCF$^{\mathcal R}$ incorporates specific constructs for
probabilistic choice and scalar multiplication, whereas our system
deals with the full set of connectives of IMALL, with $\odot$ playing
a central role. In this way, probabilistic reasoning is internalised
within the proof system and uniformly integrated with the other
connectives, instead of being restricted to a dedicated operator at
the term level.

A crucial syntactic difference between our approach and PCF$^{\mathcal R}$ is
the way probabilistic choice is expressed. As mentioned before, in PCF$^{\mathcal R}$ one writes
$(p \cdot t_1) \ \mathtt{or}\ (q \cdot t_2)$, while in our system the corresponding
construction is $\elimsup(\super{t_1}{t_2},x.x,y.y)$. Both terms have the same
denotation (a distribution over $t_1$ and $t_2$), but their logical roles differ.
In PCF$^{\mathcal R}$, the operator ``or'' is a primitive construct of the language,
whereas in our calculus it arises from the elimination of the connective $\odot$.
This shows that $\odot$ offers a uniform logical account of probabilistic choice,
integrated with the other connectives of IMALL.

\subsection{Contents of the paper}
In Section~\ref{sec:calculus}, we introduce the \OC, detailing its grammars,
deduction and reduction rules, 
and its correctness properties.

In Section~\ref{sec:examples}, we show how to use it to encode matrices and vectors, and give some concrete examples of how to encode the probabilistic choice.

In Section~\ref{sec:seminodules}, we introduce the categorical construction
together with some specific maps, such as $\sumwe pq$ and $\hat p$, which are
fundamental to interpreting the language.

Section~\ref{sec:semantics} is dedicated to providing the denotational semantics
of the \OC within the category just defined, and establishing its soundness and
adequacy proofs.

Finally, in Section~\ref{sec:conclusion}, we offer some concluding remarks.

\section{The \texorpdfstring{\OC}{L-sup-calculus}}\label{sec:calculus}
\subsection{Grammars}
\begin{definition}
  [Propositions of the \OCl]
  \label{def:props}
  The propositions of the \OCl are those of IMALL with $\odot$.
  \begin{align*}
    A = &\ \one\mid A\otimes A\mid A\multimap A & \text{multiplicative}\\
    \mid&\ \top\mid\zero\mid A\with A\mid A\oplus A\mid A\odot A & \text{additive}
  \end{align*}
\end{definition}

\begin{remark}
	In intuitionistic linear logic there is no multiplicative
  falsehood ($\bot$), multiplicative disjunction ($\parr$), nor additive
  implication ($\Rightarrow$). 
\end{remark}

\begin{definition}[Proof-terms of the \OC]
		\label{def:proof-terms}
  The proof-terms of the \OC are those produced by the following grammar, where
    $x\in\mathsf{Vars}$, an infinite set of variables, 
    $\mathcal S$ is a fixed semiring,
    $\escalar s, \escalar p,\escalar q\in\mathcal S$, and
    $\escalar p\masS \escalar q=1_{\mathcal S}$.
  \[
    \begin{array}{r@{}ll@{}lc}
      &\textrm{introductions} & \textrm{eliminations} && \textrm{connective}\\
      t = x \mid t\plus t\mid \escalar s\bullet t  \\
      &\mid \escalar s.\star & \mid \elimone(t,t)  && \one\\
      &\mid\lambda x.t & \mid tt && \multimap\\
      &\mid t\otimes t & \mid\elimtens(t,xy.t) &&\otimes\\
      &\mid\langle\rangle & && \top\\
      & &\mid\elimzero(t) && \zero\\
      &\mid\pair tt & \mid\fst(t)\mid\snd(t) && \with\\
      &\mid \inl(t)\mid \inr(t) & \mid\elimoplus(t,x.t,y.t) && \oplus\\
      &\mid \super tt & \mid\fstsup(t)\mid\sndsup(t)&\,\mid\elimsup(t,x.t,y.t) & \odot
    \end{array}
  \]
\end{definition}

The substitution of $x$ by $u$ in $t$ is written $(u/x)t$.

\begin{definition}
  [Proof-term context]
  \label{def:term-context}
  We let $K$ be a proof-term with a distinguished variable $[\cdot]$.
  We write $K[t]$ for $(t/[\cdot])K$, that is, the substitution of $[\cdot]$ by $t$ in $K$.
\end{definition}

\subsection{Deduction rules}
\label{sec:deductionRules}
The deduction rules are given in Figure~\ref{fig:typingrules}.  They include the
standard rules of IMALL, plus the extra rules for $\plus$, $\bullet$, and $\odot$.

\begin{remark}
  \label{rmk:sup-e-logic}
  Rules $\odot_i$, $\odot_{e1}$, and $\odot_{e_2}$ coincide with $\with_i$, $\with_{e1}$, and $\with_{e2}$. If we use those rules instead, the extra rule $\odot_e$ could be derivable in IMALL as follows:
  \[
    \infer[\oplus_e,]{\Gamma,\Theta\vdash C}
    {
      \infer[\oplus_i]{\Gamma\vdash A\oplus B}{\infer[\with_{e1}]{\Gamma\vdash A}{\Gamma\vdash A\with B}}
      &
      A,\Theta\vdash C & B,\Theta\vdash C
    }
  \]
  or, similarly
  \[
    \infer[\oplus_e.]{\Gamma,\Theta\vdash C}
    {
      \infer[\oplus_i]{\Gamma\vdash A\oplus B}{\infer[\with_{e2}]{\Gamma\vdash B}{\Gamma\vdash A\with B}}
      &
      A,\Theta\vdash C & B,\Theta\vdash C
		}
  \]
  The goal of having $\odot$ instead of just these two derivations is that these two have a
  deterministic cut-elimination, while $\odot_e$ makes a non-deterministic choice
  between the two.
\end{remark}

\begin{figure}[t]
  \[
    \infer[\mbox{ax}]{x:A\vdash x:A}{}
    \qquad
    \qquad
    \infer[\plus]{\Gamma\vdash t\plus u:A}{\Gamma\vdash t:A & \Gamma\vdash u:A}
    \qquad
    \infer[\bullet(\escalar s)]{\Gamma\vdash\escalar s\bullet t:A}{\Gamma\vdash t:A}
  \]
  \[
    \infer[\one_i]{\vdash \escalar s.\star:\one}{}
    \qquad
    \infer[\one_e]{\Gamma,\Theta\vdash\elimone(t,u):A}{\Gamma\vdash t:\one & \Theta\vdash u:A}
  \]
  \[
    \infer[\otimes_i]{\Gamma,\Theta\vdash t\otimes u:A\otimes B}{\Gamma\vdash t:A & \Theta\vdash u:B}
    \qquad
    \infer[\otimes_e]{\Gamma,\Theta\vdash\elimtens(t,xy.u):C}
    {\Gamma\vdash t:A\otimes B & \Theta,x:A,y:B\vdash u:C}
  \]
  \[
    \infer[\multimap_i]{\Gamma\vdash\lambda x.t:A\multimap B}{\Gamma,x:A\vdash t:B}
    \qquad
    \infer[\multimap_e]{\Gamma,\Theta\vdash tu:B}{\Gamma\vdash t:A\multimap B & \Theta\vdash u:A}
  \]
  \[
    \infer[\top_i]{\Gamma\vdash\langle\rangle:\top}{}
    \qquad
    \infer[\zero_e]{\Gamma,\Theta\vdash\elimzero(t):C}{\Gamma\vdash t:\zero}
  \]
  \[
    \infer[\with_i]{\Gamma\vdash\pair tu:A\with B}{\Gamma\vdash t:A & \Gamma\vdash u:B}
    \qquad
    \infer[\with_{e1}]{\Gamma\vdash\fst(t):A}{\Gamma\vdash t:A\with B}
    \qquad
    \infer[\with_{e2}]{\Gamma\vdash\snd(t):B}{\Gamma\vdash t:A\with B}
  \]
  \[
    \infer[\oplus_{i1}]{\Gamma\vdash\inl(t):A\oplus B}{\Gamma\vdash t:A}
    \qquad
    \infer[\oplus_{i2}]{\Gamma\vdash\inl(t):A\oplus B}{\Gamma\vdash t:B}
  \]
  \[
    \infer[\oplus_e]{\Gamma,\Theta\vdash\elimoplus(t,x.u,y.v):C}
    {\Gamma\vdash t:A\oplus B & x:A,\Theta\vdash u:C & y:B,\Theta\vdash v:C}
  \]
  \[
    \infer[\odot_i]{\Gamma\vdash\super tu:A\odot B}{\Gamma\vdash t:A & \Gamma\vdash u:B}
    \qquad
    \infer[\odot_{e1}]{\Gamma\vdash\fstsup(t):A}{\Gamma\vdash t:A\odot B}
    \qquad
    \infer[\odot_{e2}]{\Gamma\vdash\sndsup(t):B}{\Gamma\vdash t:A\odot B}
  \]
  \[
    \infer[\odot_e]{\Gamma,\Theta\vdash\elimsup(t,x.u,y.v):C}
    {\Gamma\vdash t:A\odot B & x:A,\Theta\vdash u:C & y:B,\Theta\vdash v:C}
  \]
  \caption{The deduction rules of the \OC.\label{fig:typingrules}}
\end{figure}

\subsection{Reduction rules}
The reduction rules define a relation between two proof-terms and a scalar in $\mathcal S$ (in the particular case of $\mathcal S=\mathbb R^{\geq 0}$, it can be seen as a probabilistic reduction relation). 
The first group of rules, that we call ``beta group'' and are presented in Figure~\ref{fig:betarules}, are standard, except for those corresponding to the term $\elimsup$.
\begin{figure}[t]
  \begin{align*}
    \elimone(\escalar s.\star,t) &\lra \escalar s\bullet t & (\elimone)\\
    \elimtens(t\otimes u,xy.r) &\lra (t/x,u/y)r & (\elimtens)\\
    (\lambda x.t)u & \lra  (u/x)t & (\beta)\\
    \pi_1\pair{t}{u} & \lra  t & (\pi_1)\\
    \pi_2\pair{t}{u} & \lra  u & (\pi_2)\\
    \elimoplus(\inl(t), x.v, y.w) & \lra  (t/x)v &({\elimoplus}_1)\\
    \elimoplus(\inr(u), x.v, y.w) & \lra  (u/y)w & ({\elimoplus}_2)\\
    \pi_1^\odot\super{t}{u} & \lra  t & (\pi^\odot_1)\\
    \pi_2^\odot\super{t}{u} & \lra  u & (\pi^\odot_2)\\
    \elimsup(\super{t_1}{t_2}, x.r, y.s) & \lra[\escalar p]  ({t_1}/x)r &({\relimsup \ell})\\
    \elimsup(\super{t_1}{t_2}, x.r, y.s) & \lra[\escalar q]  ({t_2}/x)r &({\relimsup r}) 
  \end{align*}
	\[
  \infer[(C)]{K[t]\lra[\escalar p] K[r]}{t\lra[\escalar p] r}
	\]
  \caption{The beta group of reduction rules of the \OC.\label{fig:betarules}}
\end{figure}
\begin{remark}
  \label{rmk:sup-e-reduction}
  Continuing with Remark~\ref{rmk:sup-e-logic}, if we consider instead of $\super{t_1}{t_2}$, the term $\pair{t_1}{t_2}$, the rule $({\relimsup \ell})$ would be equivalent to
  \[
    \elimoplus(\inl(\fst\pair{t_1}{t_2}),{x}.u,{y}.v) \lra[\escalar p]  ({t_1}/x)u,
  \]
  and the rule $({\relimsup r})$ to 
  \[
    \elimoplus(\inr(\snd\pair{t_1}{t_2}),{x}.u,{y}.v) \lra[\escalar q]  ({t_2}/y)v.
  \]
\end{remark}

The deduction rules $\plus$ and $\bullet(\escalar s)$ allows the building of proofs that cannot be reduced because the introduction rule of some connective and its elimination rule are separated by an interstitial rule. For example,
\[
  \irule{\irule{\irule{\irule{\pi_1}{\Gamma \vdash A}{}
      }
      {\Gamma \vdash A \oplus B}
      {\oplus_{i1}}
      &
      \irule{\irule{\pi_2}{\Gamma \vdash A}{}
      }
      {\Gamma \vdash A \oplus B}
      {\oplus_{i1}}
    }
    {\Gamma \vdash A \oplus B}
    {\plus}
    &
    \irule{\pi_3}{\Gamma, A \vdash C}{}
    &
    \irule{\pi_4}{\Gamma, B \vdash C}{}
  }
  {\Gamma \vdash C}
  {\oplus_e.}
\]
Reducing such a proof, sometimes called a commuting cut, requires reduction rules to commute the rule sum either with the elimination rule below or with the introduction rules above.

As commutation with the introduction rules above is not always possible, for example in the proof
\[
  \irule{\irule{\irule{\pi_1}{\Gamma \vdash A}{}}
    {\Gamma \vdash A \oplus B}
    {\oplus_{i1}}
    &
    \irule{\irule{\pi_2}{\Gamma \vdash B}{}}
    {\Gamma \vdash A \oplus B}
    {\oplus_{i2}}
  }
  {\Gamma \vdash A \oplus B}
  {\plus,}
\]
the commutation with the elimination rule below is often preferred.  However, in the \OC, the commutation of the interstitial rules with the introduction rules is chosen, rather than with the elimination rules, whenever it is possible, that is for all connectives except the disjunction and the tensor. For example, the proof
\[
  \irule{\irule{\irule{\pi_1}{\Gamma \vdash A}{}
      &
    \irule{\pi_2}{\Gamma \vdash B}{}}
    {\Gamma \vdash A \with B}
    {\with_i}
    &
    \irule{\irule{\pi_3}{\Gamma \vdash A}{}
      &
      \irule{\pi_4}{\Gamma \vdash B}{}
    }
    {\Gamma \vdash A \with B}
    {\with_i}
  }
  {\Gamma \vdash A \with B}
  {\plus}
\]
reduces to 
\[
  \irule{\irule{\irule{\pi_1}{\Gamma \vdash A}{}
      &
    \irule{\pi_3}{\Gamma \vdash A}{}}
    {\Gamma \vdash A}
    {\plus}
    &
    \irule{\irule{\pi_2}{\Gamma \vdash B}{}
      &
      \irule{\pi_4}{\Gamma \vdash B}{}
    }
    {\Gamma \vdash B}
    {\plus}
  }
  {\Gamma \vdash A \with B}
  {\with_i.}
\]

Such a choice of commutation yields a stronger introduction property for the
considered connective (Theorem~\ref{thm:intros}): Most connectives have as closed normal forms, introductions, rather than linear combinations of those.  The reduction rules corresponding to these commutations are presented in
Figure~\ref{fig:commutationrules}.
\begin{figure}[t]
  \begin{align*}
    \escalar s_1.\star\plus\,\escalar s_2.\star &\lra (\escalar s_1\masS[\mathcal S]\escalar s_2).\star & (\plus_\one)\\
    \elimtens(t\plus u,xy.v) &\lra \elimtens(t,xy.v)\plus\elimtens(u,xy.v) & (\plus_\otimes)\\
    (\lambda x.t)\plus(\lambda x.u) &\lra \lambda x.(t\plus u) & (\plus_\multimap)\\
    \langle\rangle\plus\langle\rangle &\lra\langle\rangle & (\plus_\top)\\
    \pair tu\plus\pair vw&\lra \pair{t\plus v}{u\plus w} & (\plus_\with)\\
    \elimoplus(t\plus u,x.v,y.w) &\lra \elimoplus(t,x.v,y.w)\plus\elimoplus(u,x.v,y.w) & (\plus_\oplus)\\
    \super tu\plus\super vw&\lra \super{t\plus v}{u\plus w} & (\plus_\odot)\\
    \\
    \escalar s_1\bullet\escalar s_2.\star &\lra (\escalar s_1\produ[\mathcal S]\escalar s_2).\star & (\bullet_\one)\\
    \elimtens(\escalar s\bullet t,xy.v) &\lra \escalar s\bullet\elimtens(t,xy.v) & (\bullet_\otimes)\\
    \escalar s\bullet (\lambda x.t) &\lra\escalar s\bullet  \lambda x.t & (\bullet_\multimap)\\
    \escalar s\bullet  \langle\rangle &\lra\langle\rangle & (\bullet_\top)\\
    \escalar s\bullet  \pair tu&\lra \pair{\escalar s\bullet t}{\escalar s\bullet u} & (\bullet_\with)\\
    \elimoplus(\escalar s\bullet t,x.v,y.w) &\lra \escalar s\bullet \elimoplus(t,x.v,y.w) & (\bullet_\oplus)\\
    \escalar s\bullet \super tu&\lra \super{\escalar s\bullet t}{\escalar s\bullet u} & (\bullet_\odot)
  \end{align*}
  \caption{The commutation group of reduction rules of the \OC.\label{fig:commutationrules}}
\end{figure}

\subsection{Correctness}\label{sec:correctness}
The safety properties (subject reduction, confluence, strong normalisation, and introduction) have been established for the $\mathcal{L}\odot^{\mathcal{S}}$-calculus in~\cite{DiazcaroDowekMSCS24} (with the exception of confluence, which has been proved for the fragment of the calculus without $\odot$). These results extend trivially to the \OC. We state the theorems next.

\begin{theorem}[Subject reduction~{\cite[Theorem 2.2]{DiazcaroDowekMSCS24}}]
  \label{thm:SR}
  If $\Gamma\vdash t:A$ and $t\lra[\escalar p] u$, then $\Gamma\vdash u:A$.
  \qed
\end{theorem}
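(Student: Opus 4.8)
The plan is to prove type preservation by a case analysis on the single reduction step $t \lra[\escalar p] u$, with the contextual rule $(C)$ treated separately. Two auxiliary results carry the argument. The first is a \emph{substitution lemma}: if $\Gamma, x{:}A \vdash r:B$ and $\Delta \vdash w:A$, then $\Gamma, \Delta \vdash (w/x)r:B$. The second is the observation that the deduction system is \emph{syntax-directed} — each proof-term constructor of Definition~\ref{def:proof-terms} corresponds to exactly one rule of Figure~\ref{fig:typingrules} — so that \emph{inversion} is immediate: from a derivation of $\Gamma \vdash t:A$ whose subject has a given head constructor, the premises and their contexts are uniquely recovered, up to the splitting of the context in the multiplicative rules.

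First I would establish the substitution lemma by induction on the derivation of $\Gamma, x{:}A \vdash r:B$. The essential distinction is between the \emph{multiplicative} rules ($\one_e$, $\otimes_i$, $\otimes_e$, $\multimap_e$, $\oplus_e$, $\odot_e$), where the linear context is split across premises so that the hypothesis $x{:}A$ lands in exactly one branch — and thus $w$ together with its context $\Delta$ is routed there, the conclusion recombining to $\Gamma, \Delta$ — and the \emph{additive} rules ($\plus$, $\bullet(\escalar s)$, $\with_i$, $\odot_i$), where the context is shared so that $x{:}A$ occurs in every branch; here the induction hypothesis is applied to each premise with the same $\Delta$, and the shared-context rule delivers $\Gamma, \Delta$ in the conclusion. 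The $\otimes_e$ case additionally requires iterating the lemma twice, once for each bound variable.

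With these in hand, the base reductions are routine. For each $\beta$-like rule I would invert the typing of the redex, extract the typings of the immediate subterms, and reassemble a derivation of the contractum: for $(\beta)$, inversion of $\multimap_e$ then $\multimap_i$ yields $\Gamma, x{:}A \vdash t:B$ and $\Delta \vdash u:A$, and the substitution lemma gives $\Gamma, \Delta \vdash (u/x)t:B$; for $(\elimtens)$, the double substitution does the job; and for the two sup-elimination rules $(\relimsup\ell)$ and $(\relimsup r)$, inversion of $\odot_e$ and $\odot_i$ feeds the substitution lemma on the corresponding branch, so that both non-deterministic contracta receive the common type $C$. The commutation rules of Figure~\ref{fig:commutationrules} preserve types by a purely structural rearrangement of the interstitial rules $\plus$ and $\bullet$: inversion peels off the shared $\plus$ or $\bullet(\escalar s)$ and the introduction below it, and one rebuilds the derivation with the interstitial rules pushed inward (e.g.\ $(\plus_\with)$ turns two $\with_i$ over a $\plus$ into a $\with_i$ over two $\plus$'s), the scalar cases $(\plus_\one)$ and $(\bullet_\one)$ being settled by the semiring operations $\masS$ and $\produ$.

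Finally, the contextual rule $(C)$ is handled by a \emph{replacement lemma}, proved by induction on the proof-term context $K$ of Definition~\ref{def:term-context}: a derivation of $\Gamma \vdash K[t]:A$ exposes a sub-derivation $\Gamma' \vdash t:B$ at the hole, and any $r$ with $\Gamma' \vdash r:B$ may be plugged back to yield $\Gamma \vdash K[r]:A$. Since every base case above preserves both the type and the typing context, substituting the contractum into the hole preserves the type of the whole term. I expect the main obstacle to be the substitution lemma, and within it the bookkeeping of the linear context discipline — ensuring that the additive rules share $\Delta$ across their branches while the multiplicative rules route it to the single branch containing $x{:}A$ — rather than any individual reduction case, each of which is mechanical once inversion and substitution are available.
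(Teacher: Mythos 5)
The paper contains no proof of this theorem for you to be compared against: Theorem~\ref{thm:SR} is stated with a tombstone and imported directly from the cited prior work (Theorem 2.2 there), as announced at the start of Section~\ref{sec:correctness}. Judged on its own, your outline is the standard and correct route: a linear substitution lemma, inversion licensed by the syntax-directedness of the rules in Figure~\ref{fig:typingrules}, a case analysis over the beta and commutation groups, and a replacement lemma for the contextual rule $(C)$.

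One point needs repair, though it is local. Your dichotomy --- multiplicative rules route $\Delta$ to ``the single branch containing $x{:}A$'', additive rules share it across all branches --- is not exhaustive. The case-analysis rules $\oplus_e$ and $\odot_e$ are hybrid: the scrutinee's context $\Gamma$ is split off multiplicatively, but $\Delta$ is shared between the two continuation premises $x{:}A,\Delta\vdash u:C$ and $y{:}B,\Delta\vdash v:C$. Hence when the substituted variable lies in $\Delta$, the substitution must enter \emph{both} branches at once, and the induction hypothesis is applied to both premises (exactly as in your additive case); it cannot be routed to a single branch. The same observation is needed in your replacement lemma and in the commutation cases $(\plus_\oplus)$ and $(\bullet_\oplus)$, where the shared $\Delta$ appears in both continuations of the rebuilt derivation. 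With that adjustment your argument goes through and presumably reconstructs the proof of the cited reference.
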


\begin{theorem}[Confluence~{\cite[Theorem 2.3]{DiazcaroDowekMSCS24}}]
  \label{thm:confluence}
  The \OC is confluent if we exclude the rules ${\relimsup \ell}$ and ${\relimsup r}$.
      \qed
\end{theorem}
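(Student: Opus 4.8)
The plan is to derive confluence from strong normalization by way of Newman's Lemma, so that the entire problem collapses to checking local confluence. First I would note that, once the two genuinely non-deterministic rules ${\relimsup\ell}$ and ${\relimsup r}$ are removed, every surviving reduction step carries the label $1_{\mathcal S}$; the scalar annotations are therefore vacuous and the relation may be treated as an ordinary, unlabelled rewrite relation. Strong normalization of the full calculus is already available from \cite{DiazcaroDowekMSCS24}, and deleting rules cannot introduce infinite reductions, so the subsystem is strongly normalizing as well. By Newman's Lemma it then suffices to establish local confluence on well-typed terms: whenever $t \lra u_1$ and $t \lra u_2$, there is a common reduct to which both $u_1$ and $u_2$ reduce in finitely many steps.

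To organize local confluence I would compare the positions of the two contracted redexes, exploiting that reduction is closed under the context rule $(C)$. If the redexes are disjoint, contracting one leaves the other intact and the two single steps reconverge at once. If one redex lies strictly inside a subterm that the outer rule treats as a metavariable — in particular inside the substituted argument of a $\beta$-, $\elimone$-, $\elimtens$- or $\elimoplus$-style rule — then contracting the outer redex may copy, erase or preserve the inner one; here I would close the diagram by reducing every residual of the inner redex, which is legitimate precisely because Newman's Lemma permits multi-step joining and because reduction is stable under substitution. The remaining case is a genuine overlap, where the inner redex meets a non-variable part of the outer rule's left-hand side, and these are the critical pairs that must be examined one by one.

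The heart of the argument — and the step I expect to be the main obstacle — is this critical-pair analysis. I would enumerate the overlaps by running through the pairs of rules in Figures~\ref{fig:betarules} and~\ref{fig:commutationrules}, pruning impossible matches with the typing discipline (we work only with well-typed terms, which by subject reduction, Theorem~\ref{thm:SR}, is a closed class), since the principal argument of each eliminator is constrained to a specific connective. The delicate overlaps are those in which an interstitial rule ($\plus$ or $\bullet(\escalar s)$) is sandwiched between an introduction and its surrounding context: one may either commute the interstitial rule outward through the introduction or first expose the introduction and fire the corresponding $\beta$-style rule. I would verify that each such diamond closes; the design choice of commuting the interstitial rules with the introductions rather than the eliminations — for every connective except $\otimes$ and $\oplus$ — is exactly what makes these diagrams join to a common term. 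For the unit the relevant critical pairs, built from $\plus_\one$ and $\bullet_\one$, close only modulo the semiring identities, so the associativity, commutativity and distributivity axioms of $\mathcal S$ are invoked at that point.

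Once all critical pairs are shown to be joinable, local confluence holds, and Newman's Lemma upgrades it to confluence, completing the proof. It is worth recording why the two excluded rules cannot be retained: the term $\elimsup(\super{t_1}{t_2}, x.r, y.s)$ reduces by ${\relimsup\ell}$ and by ${\relimsup r}$ to two terms that are in general distinct normal forms, so this single overlap is irreparably non-joinable — which is precisely the intended non-determinism of the sup eliminator, and the reason the hypothesis of the theorem excludes these rules.
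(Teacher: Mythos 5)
Your overall strategy is sound, but it is genuinely different from the proof this paper relies on, and your description of its central step is inaccurate. On the comparison: the paper does not prove Theorem~\ref{thm:confluence} at all --- it imports it from \cite[Theorem 2.3]{DiazcaroDowekMSCS24}, where confluence is Theorem~2.3 while strong normalization only appears much later as Corollary~2.29; the confluence proof there therefore cannot (and does not) go through Newman's Lemma. It proceeds instead from the observation that, once $({\relimsup \ell})$ and $({\relimsup r})$ are discarded, the remaining rules form a left-linear system with no critical pairs, i.e.\ an orthogonal (higher-order) rewrite system, and such systems are confluent outright. Your route (strong normalization, then Newman, then local confluence) is also viable, but it buys less: it consumes the much harder SN theorem as an input and applies only to typed terms, whereas the orthogonality argument needs neither typing nor termination and survives non-normalizing extensions.

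On the inaccuracy: the ``delicate overlaps'' you plan to check do not exist. In $\pi_1(\pair tu\plus\pair vw)$ the rule $(\pi_1)$ cannot fire, since its left-hand side requires the argument to be literally a pair; the only enabled step is the commutation $(\plus_\with)$, after which $(\pi_1)$ applies --- a single reduction path, not a diamond. This is systematic: every beta rule requires the eliminator's principal argument to be an introduction, every commutation rule requires it to be a $\plus$ or a $\bullet$, and no introduction unifies with a sum or a scalar product, so no two left-hand sides overlap. In particular there are no $(\plus_\one)$/$(\bullet_\one)$ critical pairs, and the semiring axioms are never invoked: in $(\escalar s_1.\star\plus\,\escalar s_2.\star)\plus\,\escalar s_3.\star$ only the innermost redex is enabled, so associativity of $\masS$ plays no role in joinability. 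Had you actually executed your enumeration you would have found the set of critical pairs empty and your argument would close vacuously, so this is a misprediction rather than a fatal flaw --- but it misses the structural fact (orthogonality) that is the real content of the theorem and the reason the cited proof needs neither SN nor typing. Your closing observation that $\elimsup(\super{t_1}{t_2},x.r,y.s)$ produces a genuinely non-joinable pair, which is exactly why those two rules must be excluded, is correct.
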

\begin{theorem}[Strong normalisation~{\cite[Corollary 2.29]{DiazcaroDowekMSCS24}}]
  \label{thm:SN}
  The $\mathcal L\odot^{\mathcal S\escalar p}$-calcu-lus is strongly normalizing.
      \qed
\end{theorem}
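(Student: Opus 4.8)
The plan is to prove strong normalization by the reducibility-candidates method of Tait and Girard, adapted to the linear and algebraic features of the calculus, in the spirit of the argument for the $\mathcal L^{\mathcal S}$-calculus. I would assign to every proposition $A$ a set $\mathrm{Red}_A$ of proof-terms, defined by induction on $A$, and require each $\mathrm{Red}_A$ to be a candidate in the usual sense: every element is strongly normalizing (CR1); $\mathrm{Red}_A$ is closed under reduction (CR2); and a neutral term all of whose one-step reducts lie in $\mathrm{Red}_A$ is itself in $\mathrm{Red}_A$ (CR3). The clauses for the multiplicative and standard additive connectives are the expected ones—for instance $t\in\mathrm{Red}_{A\multimap B}$ iff $tu\in\mathrm{Red}_B$ for every $u\in\mathrm{Red}_A$, and $t\in\mathrm{Red}_{A\with B}$ iff $\fst(t)\in\mathrm{Red}_A$ and $\snd(t)\in\mathrm{Red}_B$.

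The delicate clause is the one for the sup connective $A\odot B$. Since $\odot_i,\odot_{e1},\odot_{e2}$ mirror the rules for $\with$ (Remark~\ref{rmk:sup-e-logic}), I would define $t\in\mathrm{Red}_{A\odot B}$ by the two projection conditions $\fstsup(t)\in\mathrm{Red}_A$ and $\sndsup(t)\in\mathrm{Red}_B$; but because $\elimsup$ behaves like a disjunction elimination, I must separately prove, as a lemma, that $\elimsup(t,x.u,y.v)\in\mathrm{Red}_C$ whenever $t\in\mathrm{Red}_{A\odot B}$ and $u,v$ are reducible under reducible substitutions. This lemma is where the two non-deterministic rules $\relimsup{\ell}$ and $\relimsup{r}$ are dispatched: both reducts of an $\elimsup$ redex must be shown reducible, which follows from the reducibility of the branches $u$ and $v$ once $\super{t_1}{t_2}$ is reducible. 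Treating strong normalization as termination of every reduction path, these rules are unproblematic because each of them strictly consumes one $\elimsup$; the scalar labels $\escalar p,\escalar q$ play no role in the argument.

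The second source of work is the algebraic layer. The interstitial rules $\plus$ and $\bullet(\escalar s)$ form sums and scalar multiples of terms of the same type, and the commutation rules of Figure~\ref{fig:commutationrules} push these constructors through the connectives (either inward through introductions, as in $(\plus_\with)$ and $(\plus_\odot)$, or outward past eliminations, as in $(\plus_\otimes)$ and $(\plus_\oplus)$). I would therefore require each $\mathrm{Red}_A$ to be closed under $\plus$ and $\bullet$, and check that the inductive definition preserves this closure. Because the commutation rules do not obviously decrease term size, I would control them with an auxiliary well-founded measure counting interstitial symbols weighted by how far they can still travel, and fold this measure into the reduction ordering used in the candidate definition, so that CR1--CR3 are established for the full reduction relation including commutations.

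With the candidates in place, the crux is the adequacy (fundamental) lemma: if $x_1:A_1,\dots,x_n:A_n\vdash t:A$ and each $u_i\in\mathrm{Red}_{A_i}$, then the simultaneous substitution $(u_1/x_1,\dots,u_n/x_n)t$ lies in $\mathrm{Red}_A$, with the substitution respecting the linear splitting of contexts in the rules $\one_e,\otimes_e,\multimap_e,\oplus_e$ and $\odot_e$. The proof is by induction on the typing derivation, one case per rule of Figure~\ref{fig:typingrules}, invoking the $\odot$ lemma for $\odot_e$ and the closure under $\plus,\bullet$ for the interstitial rules. Variables are neutral normal forms, hence reducible by CR3, so the identity substitution is reducible and adequacy gives $t\in\mathrm{Red}_A$; strong normalization then follows from CR1. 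The step I expect to be the main obstacle is exactly the $A\odot B$ clause together with the algebraic closure: the candidate at $A\odot B$ must simultaneously support the two projections, the branching $\elimsup$, and closure under $\plus$ and $\bullet$, while the commutation rules $(\plus_\odot)$ and $(\bullet_\odot)$ are kept terminating—so the real difficulty is finding one well-founded ordering that reconciles the reducibility ordering with the structural commutation measure across all of these features at once.
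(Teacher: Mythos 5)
First, a point about the comparison itself: this paper does not prove Theorem~\ref{thm:SN} at all. The statement is imported, proof and all, from~\cite{DiazcaroDowekMSCS24} (Corollary 2.29), which is why it carries only a citation and an end-of-proof mark. So your attempt has to be measured against that external proof rather than anything in the present text. Your general plan---Girard-style reducibility candidates adapted to the linear connectives, $\odot$ treated as $\with$ for introduction and projections, the extra elimination $\elimsup$ dispatched by a separate lemma, and the observation that the labels $\escalar p,\escalar q$ play no role in termination---is the right family of argument, and the label-irrelevance point is precisely why SN for the probabilistic \OC is a \emph{corollary} in the cited work: there, termination is established for a strictly larger, label-forgetting reduction relation (``ultra-reduction''), into which every labelled reduction sequence of the \OC embeds.

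The genuine gap is your treatment of the commutation group (Figure~\ref{fig:commutationrules}). You correctly identify it as the crux, but the tool you propose---a well-founded measure ``counting interstitial symbols weighted by how far they can still travel,'' folded into the candidate ordering---is neither constructed nor plausible as stated. Several commutation rules duplicate subterms: in $(\plus_\otimes)$ and $(\plus_\oplus)$ the continuation is copied into both summands, so symbol counts do not decrease; and a $\beta$-step can substitute a term containing $\plus$ and $\bullet$ into several positions at once, so any ``distance left to travel'' assigned to an interstitial symbol is not stable under the remaining rules. A globally decreasing measure is in fact the wrong instrument: in a reducibility proof one does not need one. The standard resolution, and the route of the cited proof, is to let candidates consist of strongly normalizing terms and to prove \emph{closure of each candidate under $\plus$ and $\bullet$} as lemmas, by induction on the (finite) longest reduction lengths of the components, case-analysing every one-step reduct of $t\plus u$ (steps internal to $t$ or $u$, top-level commutations, and the extra projection-style steps that the enlarged ultra-reduction provides); adequacy of the interstitial deduction rules then follows, and CR1--CR3 never have to be re-proved against an ad hoc ordering. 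Until that closure argument is actually carried out, what you have is an outline of the right proof, with its hardest step still open.
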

\begin{theorem}[Introduction~{\cite[Theorem 2.30]{DiazcaroDowekMSCS24}}]
  \label{thm:intros}
  Let $\vdash t:A$ and $t$ be irreducible.
  \begin{itemize}
    \item If $A=\one$, then $t=\star$.
    \item If $A=B\otimes C$, then $t=u\otimes v$, $u\plus v$, or $\escalar s\bullet u$.
    \item If $A=B\multimap C$, then $t=\lambda x.u$.
    \item If $A=\top$, then $t=\langle\rangle$.
    \item $A$ cannot be equal to $\zero$.
    \item If $A=B\with C$, then $t=\pair uv$.
    \item If $A=B\oplus C$, then $t=\inl(l)$, $t=\inr(r)$, $u\plus v$, or $\escalar s\bullet u$.
    \item If $A=B\odot C$, then $t=\super uv$.
      \qed
  \end{itemize}
\end{theorem}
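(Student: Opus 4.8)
The plan is to argue by structural induction on the closed proof-term $t$, with a case analysis on its outermost constructor. Two standing observations will drive every case. First, by the congruence rule $(C)$ any proper subterm of an irreducible term is again irreducible, so whenever I descend into a subterm I may assume it is in normal form. Second, in each elimination the \emph{principal} premise is closed whenever $t$ is (it lies outside the scope of the variables bound by the eliminator) and is strictly smaller than $t$, so the induction hypothesis applies to it. Since $t$ is closed it cannot be a variable, hence $t$ is an introduction, an elimination, a sum $u\plus v$, or a scalar product $\escalar s\bullet u$. The introduction case is immediate: the deduction rules of Figure~\ref{fig:typingrules} attach a fixed connective to each introduction constructor, so the shape is forced by $A$ (only $\lambda x.u$ is typed by $\multimap$, only $u\otimes v$ by $\otimes$, only $\pair uv$ by $\with$, only $\super uv$ by $\odot$, only $\inl/\inr$ by $\oplus$, and $\langle\rangle$ is the unique introduction of $\top$).

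The core of the argument, and the step I would develop most carefully, is that no closed irreducible term is headed by an elimination. In each elimination the principal premise $t_0$ is closed, smaller and irreducible, so the induction hypothesis forces it into a canonical shape that immediately exposes a redex from the beta group of Figure~\ref{fig:betarules}. An application $t_0\,u$ has $t_0=\lambda x.r$ and fires $(\beta)$; $\elimone(t_0,u)$ has $t_0=\escalar s.\star$ and fires $(\elimone)$; $\fst(t_0)$ and $\snd(t_0)$ have $t_0$ a $\with$-pair and fire $(\pi_1)$/$(\pi_2)$; $\fstsup(t_0)$ and $\sndsup(t_0)$ have $t_0=\super{r}{r'}$ and fire the corresponding $\odot$-projection; $\elimsup(t_0,x.r,y.s)$ has $t_0=\super{r_1}{r_2}$ and fires $(\relimsup \ell)$; and for $\elimtens(t_0,xy.u)$ resp. $\elimoplus(t_0,x.u,y.v)$ the hypothesis gives $t_0$ among the three resp. four listed shapes, each triggering the matching reduction $(\elimtens)$, $(\plus_\otimes)$, $(\bullet_\otimes)$ resp. $({\elimoplus}_1)$, $({\elimoplus}_2)$, $(\plus_\oplus)$, $(\bullet_\oplus)$. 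The remaining case $\elimzero(t_0)$ would need $t_0$ of type $\zero$, which the $\zero$-clause (available on the smaller term $t_0$) rules out. In every subcase irreducibility is contradicted, so $t$ is not an elimination.

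It then remains to treat $t=u\plus v$ and $t=\escalar s\bullet u$, where the summands/scaland share the type $A$ and are closed, smaller and irreducible, so the induction hypothesis pins down their shape. For $A\in\{\one,\multimap,\top,\with,\odot\}$ these shapes are exactly the left-hand sides of the introduction-directed commutation rules $(\plus_\one)$, $(\plus_\multimap)$, $(\plus_\top)$, $(\plus_\with)$, $(\plus_\odot)$ and their $\bullet$-analogues, so $t$ would reduce, a contradiction; for those connectives only the introduction survives. For $A=B\otimes C$ and $A=B\oplus C$ the situation is opposite: the commutation rules are directed towards the eliminations, so no rule has a sum or a scalar multiple of tensors (resp. of injections) as its left-hand side, and such terms are genuinely normal — this is precisely why $u\plus v$ and $\escalar s\bullet u$ appear in the lists for $\otimes$ and $\oplus$. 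The clause $A=\zero$ closes the induction: $\zero$ has no introduction, the elimination case is excluded as above, and a sum or scalar of type $\zero$ would by the induction hypothesis demand a closed irreducible subterm of type $\zero$, which does not exist; hence no closed irreducible term of type $\zero$ exists at all.

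I expect the elimination case to be the real obstacle, not because any single subcase is hard but because it must be carried out uniformly for every target type $A$ at once, and because it forces one to isolate the structural fact separating $\otimes$ and $\oplus$ from the other connectives: the \emph{direction} of the commutation rules (towards introductions for $\one,\multimap,\top,\with,\odot$, towards eliminations for $\otimes,\oplus$) is exactly what decides whether a sum or scalar product can survive as a normal form. The one point demanding care in the write-up is interlocking the $\zero$-clause with the rest of the induction, so that its impossibility is already available where the $\elimzero$ subcase and the $\zero$-typed sum/scalar subcases invoke it.
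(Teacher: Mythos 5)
Your proposal is correct, and its skeleton---structural induction on $t$, introductions immediate, eliminations ruled out because the closed irreducible principal premise is forced by the induction hypothesis into a shape that exposes a redex---is the same as the paper's own proof (cf.~\ref{app:intros}). But your write-up is genuinely more complete than that proof. The appendix restates the theorem \emph{without} the $u\plus v$ and $\escalar s\bullet u$ alternatives for $\otimes$ and $\oplus$, and its proof never treats the constructors $\plus$ and $\bullet$ at all; yet those clauses are needed, since for instance $(\escalar s.\star\otimes\escalar s.\star)\plus(\escalar s.\star\otimes\escalar s.\star)$ is a closed irreducible proof of $\one\otimes\one$ that is not a tensor introduction. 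Your analysis of the interstitial constructors---sums and scalars of type $\one$, $\multimap$, $\top$, $\with$, $\odot$ reduce by the introduction-directed commutation rules, while for $\otimes$ and $\oplus$ no rule has a sum or scalar of introductions as its left-hand side, which is exactly why those alternatives survive as normal forms---is precisely what the paper's proof omits. Two further points where you are more careful: in the $\elimtens(u,xy.v)$ and $\elimoplus(u,x.s_1,y.s_2)$ cases the induction hypothesis actually yields three (resp.\ four) possible shapes for $u$, and you discharge the sum and scalar ones with $(\plus_\otimes)$, $(\bullet_\otimes)$, $(\plus_\oplus)$, $(\bullet_\oplus)$, whereas the paper considers only the introduction shape; and you justify the $\elimzero(t_0)$ case through the $\zero$-clause of the induction hypothesis applied to the smaller term $t_0$, where the paper dismisses it with only ``since it is closed''. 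In short: same route, but yours is the argument that establishes the theorem as it is actually stated in the body of the paper.
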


\section{Examples and applications}\label{sec:examples}
\subsection{Vectors and matrices}
In this section we replicate some results of~\cite{DiazcaroDowekMSCS24} for the $\mathcal L^{\mathcal S}$-calculus, that is, the fragment of \OC without $\odot$\footnote{In fact, we can just remove $\elimsup$, since $\odot$, without rule $\odot_e$ from Figure~\ref{fig:typingrules} becomes a second additive conjunction where all the results are still valid.}.
These results show that the \OC can be used to encode vectors and matrices, and, moreover, that the sum and scalar product in the syntax represent the sum and scalar product of the elements of a semimodule, and that all the abstractions that we can construct with these symbols are homomorphisms.  Please, refer to that paper for a comprehensive treatment of the subject.

The set of semimodule propositions ${\mathcal V}$ is inductively defined as follows: $\one \in {\mathcal V}$, and if $A$ and $B$ are in ${\mathcal V}$, then so is $A \with  B$.
To each proposition $A \in {\mathcal V}$, we associate a positive natural number $d(A)$, which is the number of occurrences of the symbol $\one$ in $A$: $d(\one) = 1$ and $d(B \with  C) = d(B) + d(C)$.

If $A \in {\mathcal V}$ and $d(A) = n$, then the closed irreducible proofs of $A$ and the elements of the semimodule ${\mathcal S}^n$ are in one-to-one correspondence: to each closed irreducible proof $t$ of $A$, we associate an element $\underline{t}$ of ${\mathcal S}^n$ and to each element ${\bf u}$ of ${\mathcal S}^n$, we associate a closed irreducible proof $\overline{\bf u}^A$ of $A$.

\begin{definition}[One-to-one correspondence~{\cite[Definition 3.6]{DiazcaroDowekMSCS24}}]
  \label{def:onetoone}
  Let $A \in {\mathcal V}$ with $d(A) = n$.  To each closed irreducible
  proof $t$ of $A$, we associate an element $\underline{t}$ of ${\mathcal
  S}^n$ as follows.
  \begin{itemize}
    \item
      If $A = \one$, then $t = \escalar s.\star$. We let $\underline{t} =
      \left(\begin{smallmatrix} \escalar s \end{smallmatrix}\right)$.

    \item
      If $A = A_1 \with  A_2$, then $t = \pair{u}{v}$.  We let
      $\underline{t}$ be $\underline{t} = \left(\begin{smallmatrix}
        \underline{u}\\\underline{v} \end{smallmatrix}\right)$, where we use the block notation with the convention that if ${\bf u} =
      \left(\begin{smallmatrix} 1\\2 \end{smallmatrix}\right)$
      and ${\bf v} =
      \left(\begin{smallmatrix} 3 \end{smallmatrix}\right)$,
      then 
      $\left(\begin{smallmatrix}
        {\bf u}\\{\bf v} \end{smallmatrix}\right) =
        \left(\begin{smallmatrix}
        1\\2\\3 \end{smallmatrix}\right)$ and not
       $\left(\begin{smallmatrix}
          \left(\begin{smallmatrix} 1\\2 \end{smallmatrix}\right)
          \\
          \left(\begin{smallmatrix} 3 \end{smallmatrix}\right)
        \end{smallmatrix}\right)$.
  \end{itemize}

  To each element ${\bf u}$ of ${\mathcal S}^n$, we associate a closed
  irreducible proof $\overline{\bf u}^A$ of $A$.

  \begin{itemize}
    \item If $n = 1$, then ${\bf u} = \left(\begin{smallmatrix}
      \escalar s \end{smallmatrix}\right)$. We let $\overline{\bf u}^A = \escalar s.\star$.

    \item If $n > 1$, then $A = A_1 \with  A_2$, let $n_1$ and $n_2$ be
      the dimensions of $A_1$ and $A_2$.  Let ${\bf u}_1$ and ${\bf u}_2$
      be the two blocks of ${\bf u}$ of $n_1$ and $n_2$ rows, so ${\bf u}
      = \left(\begin{smallmatrix} {\bf u}_1\\ {\bf
      u}_2\end{smallmatrix}\right)$.  We let $\overline{\bf u}^A =
      \pair{\overline{{\bf u}_1}^{A_1}}{\overline{{\bf u}_2}^{A_2}}$.
  \end{itemize}
\end{definition}

We extend the definition of $\underline{t}$ to any closed proof of
$A$, $\underline{t}$ is by definition $\underline{t'}$ where $t'$ is
the irreducible form of $t$.

The following two lemmas show that the sum and scalar product in the syntax express the sum and scalar product of the vectors just defined.

\begin{lemma}[Sum of two vectors~{\cite[Lemma 3.7]{DiazcaroDowekMSCS24}}]
  \label{parallelsum}
  Let $A \in {\mathcal V}$, and $u$ and $v$ be two closed proofs of $A$.
  Then, $\underline{u \plus v} = \underline{u} + \underline{v}$.
\end{lemma}

\begin{lemma}[Product of a vector by a scalar~{\cite[Lemma 3.8]{DiazcaroDowekMSCS24}}]
  \label{parallelprod}
  Let $A \in {\mathcal V}$ and $u$ be a closed proof of $A$.  Then,
  $\underline{\escalar s \bullet u} = \escalar s \underline{u}$.
  \qed
\end{lemma}

\begin{theorem}[Matrices~{\cite[Theorem 3.10]{DiazcaroDowekMSCS24}}]
  \label{thm:matrices}
  Let $A, B \in {\mathcal V}$ with $d(A) = m$ and $d(B) = n$ and let $M$
  be a matrix with $m$ columns and $n$ rows, then there exists a closed
  proof $t$ of $A \multimap B$ such that, for all the elements ${\bf u}$ of
  ${\mathcal S}^m$, we have $\underline{t\overline{\bf u}^A} = M {\bf u}$.
\end{theorem}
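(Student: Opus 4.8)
The plan is to construct the proof-term $t$ by induction on the structure of the codomain proposition $B$, building up the map $M$ from its rows. The key observation is that Theorem~\ref{thm:matrices} is really a statement about representing $\mathcal S$-linear maps $\mathcal S^m\to\mathcal S^n$, and since $B\in\mathcal V$ with $d(B)=n$ decomposes as a nested $\with$ of $n$ copies of $\one$, a matrix with $n$ rows corresponds to an $n$-tuple of linear functionals $\mathcal S^m\to\mathcal S$. So first I would reduce the general case to the case $n=1$ (that is, $B=\one$), since a proof of $A\multimap(B_1\with B_2)$ can be assembled from proofs of $A\multimap B_1$ and $A\multimap B_2$ using $\lambda$-abstraction and the pairing $\pair{\cdot}{\cdot}$: if $t_1$ represents the top block of rows and $t_2$ the bottom block, then $\lambda x.\pair{t_1 x}{t_2 x}$ represents $M$, and one checks via $\underline{\pair uv}=\left(\begin{smallmatrix}\underline u\\\underline v\end{smallmatrix}\right)$ from Definition~\ref{def:onetoone} that the blocks stack correctly.

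For the base case $B=\one$, the matrix $M$ is a single row $(a_1,\ldots,a_m)$, and I must produce a closed proof of $A\multimap\one$ computing the inner product ${\bf u}\mapsto\sum_i a_i u_i$. Here I would again induct on $A\in\mathcal V$. When $A=\one$ (so $m=1$), the map is multiplication by the scalar $a_1$, realized by $\lambda x.\elimone(x,a_1.\star)$: feeding $\overline{\bf u}^{\one}=u_1.\star$ triggers the $(\elimone)$ rule to give $u_1\bullet a_1.\star$, which reduces through $(\bullet_\one)$ to $(u_1\produ a_1).\star$, whose underline is the required scalar. When $A=A_1\with A_2$, I split the row into blocks matching the dimensions of $A_1$ and $A_2$, project the argument with $\fst$ and $\snd$, recursively apply the representing terms for each block, and combine the two resulting proofs of $\one$ using the syntactic sum $\plus$; correctness follows because $\underline{t\plus u}=\underline t\masS\underline u$ and the projections recover the blocks of $\underline{\bf u}$.

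The main bookkeeping obstacle will be verifying that the computed value $\underline{t\,\overline{\bf u}^A}$ genuinely equals $M{\bf u}$ for \emph{every} ${\bf u}$, which requires chaining together the beta-group reductions with the commutation rules for $\plus$ and $\bullet$, and then invoking the one-to-one correspondence of Definition~\ref{def:onetoone} together with strong normalization and confluence (Theorems~\ref{thm:SN} and~\ref{thm:confluence}) to know that $\underline{\cdot}$ is well-defined on the normal form. Since $\odot$ and its non-confluent rules are excluded from this fragment (as noted in the footnote), confluence holds and the irreducible form is unique, so $\underline{t\,\overline{\bf u}^A}$ is unambiguous. The genuinely delicate point is ensuring the scalar arithmetic in the syntax—the interplay of $\produ$ in $(\bullet_\one)$ and $\masS$ in $(\plus_\one)$—matches the semiring operations used to define matrix-vector multiplication $M{\bf u}$; this is where the linearity properties of the $\mathcal L^{\mathcal S}$-calculus established in~\cite{DiazcaroDowekMSCS24} do the essential work, guaranteeing that the constructed abstraction acts as the intended homomorphism rather than merely agreeing on basis vectors.
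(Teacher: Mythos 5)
First, a caveat about the benchmark: this paper never proves Theorem~\ref{thm:matrices}. It is imported from~\cite[Theorem 3.10]{DiazcaroDowekMSCS24} and stated without argument, the surrounding text explicitly deferring to that reference for the explicit proof-terms. Your proposal can therefore only be compared with the cited construction, not with anything inside this paper.

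Against that benchmark, your proof is correct but organized differently. The cited proof is a \emph{single} induction on $A$, i.e.\ on the columns of $M$: when $A=\one$ the matrix is one column $\mathbf v\in\mathcal S^n$, the representing term is $\lambda x.\elimone(x,\overline{\mathbf v}^B)$, and correctness is exactly the scalar-product lemma $\underline{a\bullet u}=a\,\underline{u}$; when $A=A_1\with A_2$ the matrix splits into column blocks and the term is $\lambda x.\,(t_1\,\fst(x))\plus(t_2\,\snd(x))$, correctness being the sum lemma $\underline{u\plus v}=\underline{u}+\underline{v}$ (these two lemmas are restated in this paper as the ``Syntactic sum and scalar multiplication'' theorem). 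You instead induct first on $B$, assembling rows with $\lambda x.\pair{t_1x}{t_2x}$, and only then on $A$, so your base case is the $1\times 1$ computation $\elimone(u.\star,a.\star)\lra u\bullet a.\star\lra (u\produ a).\star$, which uses only the reduction rules. Both organizations are sound: the double occurrence of $x$ in $\pair{t_1x}{t_2x}$ and in $(t_1\,\fst(x))\plus(t_2\,\snd(x))$ is well-typed because $\with_i$ and $\plus$ are additive, context-sharing rules; a pair of irreducible proofs is itself irreducible, so the blocks stack as you claim; and your use of Theorems~\ref{thm:SN} and~\ref{thm:confluence} to make $\underline{\,\cdot\,}$ well-defined on closed proofs is exactly what the $\elimsup$-free fragment provides. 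The cited route buys brevity --- parking the whole column $\overline{\mathbf v}^B$ inside $\elimone$ handles arbitrary $B$ at once and eliminates your outer induction --- while your route makes literal the reading of an $n\times m$ matrix as an $n$-tuple of linear functionals. Two small emendations: your closing appeal to the linearity results (Theorem~\ref{thm:linearity}, Corollary~\ref{cor:linearity}) is superfluous, since your induction already establishes $\underline{t\overline{\mathbf u}^A}=M\mathbf u$ for \emph{every} $\mathbf u$ rather than merely on basis vectors; and the identification of $u\produ a$ with the matrix entry $a\produ u$ (an order-of-scalars step that arises on both routes) silently assumes $\mathcal S$ commutative, which deserves to be said once.
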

\begin{proof}
  By induction on $A$.  We reproduce here the proof of~\cite[Theorem
  3.10]{DiazcaroDowekMSCS24} in full details since it gives the explicit
  proof-terms representing the matrices. 
  \begin{itemize}

    \item If $A = \one$, then $M$ is a matrix of one column and
      $n$ lines. Hence, it is also a vector of $n$ lines.
      We take
      $$t = \lambda x. \elimone(x,\overline{M}^B)$$
      Let ${\bf u} \in {\mathcal S}^1$, ${\bf u}$ has the form
      $\left(\begin{smallmatrix}  a \end{smallmatrix}\right)$ and 
      $\overline{\bf u}^A = \escalar s.\star$.
      Then, using
      Lemma~\ref{parallelprod}, we have
      \begin{align*}
      \underline{t~\overline{\bf u}^A} 
      &= \underline{\elimone(\overline{\bf u}^A,\overline{M}^B)}
      = \underline{\elimone(\escalar s.\star,\overline{M}^B)}\\
      &= \underline{\escalar s \bullet \overline{M}^B}
      = \escalar s \underline{\overline{M}^B} = \escalar s M = M
      \left(\begin{smallmatrix}  a\end{smallmatrix}\right) =
      M {\bf u}
      \end{align*}

    \item If $A = A_1 \with  A_2$, then let $d(A_1) = m_1$
      and $d(A_2) = m_2$. Let $M_1$ and $M_2$ be the two
      blocks of $M$ of $m_1$ and $m_2$ columns, so $M =
      \left(\begin{smallmatrix} M_1 & M_2\end{smallmatrix}\right)$.

      By induction hypothesis, there exist closed proofs $t_1$ and $t_2$ of
      the propositions $A_1 \multimap B$ and $A_2 \multimap B$ such
      that, for all vectors ${\bf u}_1 \in {\mathcal S}^{m_1}$ and ${\bf
      u}_2 \in {\mathcal S}^{m_2}$, we have $\underline{t_1~\overline{{\bf
      u}_1}^{A_1}} = M_1 {\bf u}_1$ and $\underline{t_2~\overline{{\bf
      u}_2}^{A_ 2}} = M_2 {\bf u}_2$.  We take
      $$t = \lambda x. (\elimwith^1(x,y.(t_1~y)) \plus \elimwith^2(x, {z}. (t_2~z)))$$
      Let ${\bf u} \in {\mathcal S}^m$, and ${\bf u}_1$ and ${\bf u}_2$ be
      the two blocks of $m_1$ and $m_2$ lines of ${\bf u}$, so ${\bf u} =
      \left(\begin{smallmatrix} {\bf u}_1  {\bf
      u}_2 \end{smallmatrix}\right)$, and $\overline{\bf u}^A =
      \pair{\overline{{\bf u}_1}^{A_1}}{\overline{{\bf u}_2}^{A_ 2}}$.
      Then, using Lemma~\ref{parallelsum}, we have
      \begin{align*}
	\underline{t~\overline{\bf u}^A} 
	&= \underline{\elimwith^1(\pair{\overline{{\bf
	    u}_1}^{A_1}}{\overline{{\bf u}_2}^{A_ 2}}, {y}.
	  (t_1~y)) \plus \elimwith^2(\pair{\overline{{\bf
	    u}_1}^{A_1}}{\overline{{\bf u}_2}^{A_ 2}}, {z}.
	(t_2~z))}
	\\
	&= \underline{(t_1~\overline{{\bf u}_1}^{A_1}) \plus (t_2~\overline{{\bf u}_2}^{A_ 2})}
	= \underline{t_1~\overline{{\bf u}_1}^{A_1}} + \underline{t_2~\overline{{\bf u}_2}^{A_ 2}}\\
	&= M_1 {\bf u}_1 + M_2 {\bf u}_2
	= \left(\begin{smallmatrix} M_1 & M_2 \end{smallmatrix}\right)
	\left(\begin{smallmatrix} {\bf u}_1 \\ {\bf u}_2  \end{smallmatrix}\right)
	= M {\bf u}
	\qedhere
      \end{align*}
  \end{itemize}
\end{proof}

\begin{definition}[Computational equivalence~{\cite[Definition 4.1]{DiazcaroDowekMSCS24}}]\label{def:compeq}
  Two proofs of a proposition $A$ are computationally equivalent, denoted $t_1 \cong t_2$, if for all propositions $B\in {\mathcal V}$ and all proofs $u$ such that $x:A\vdash u:B$, we have $(u[t_1/x])_\downarrow = (u[t_2/x])_\downarrow$, where $(t)_\downarrow$ is the normal form of $t$.
\end{definition}

\begin{theorem}[Linearity~{\cite[Corollary 4.12]{DiazcaroDowekMSCS24}}]
  \label{thm:linearity}
  Let $A$ and $B$ be propositions, $t$ a closed proof of $A \multimap B$ and
  $u_1$ and $u_2$ be closed proofs of $A$.

	\begin{itemize}
		\item If $B \in {\mathcal V}$, we have
		\[
		 (t(u_1 \plus u_2))_\downarrow = (tu_1 \plus tu_2)_\downarrow
		 \qquad\textrm{and}\qquad
		 (t(\escalar s\bullet u_1))_\downarrow = (\escalar s\bullet tu_1)_\downarrow
		 \]
		 where $(t)_\downarrow$ is the normal form of $t$.
		\item In the general case, we have 
		\[
		t(u_1 \plus u_2)\cong tu_1 \plus tu_2
		\qquad\textrm{and}\qquad
		t(\escalar s\bullet u_1)\cong \escalar s\bullet tu_1
		\tag*{\qed}
		\]
	\end{itemize}
\end{theorem}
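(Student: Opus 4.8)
The plan is to strip the abstraction from $t$ and reduce both identities to a single linearity-of-substitution lemma, which I would then prove by induction on the proof-term. Since $t$ is a closed proof of $A\multimap B$, strong normalization (Theorem~\ref{thm:SN}) together with the introduction property (Theorem~\ref{thm:intros}) force the normal form of $t$ to be a $\lambda$-abstraction $\lambda x.r$ with $x:A\vdash r:B$; by subject reduction (Theorem~\ref{thm:SR}) and the contextual closure rule $(C)$ it suffices to treat $t=\lambda x.r$. One $\beta$-step then yields $t(u_1\plus u_2)\lra((u_1\plus u_2)/x)r$, $tu_i\lra(u_i/x)r$, $t(a\bullet u_1)\lra((a\bullet u_1)/x)r$ and $a\bullet tu_1\lra a\bullet(u_1/x)r$, so the whole statement follows once I establish
\[
  ((u_1\plus u_2)/x)r \cong (u_1/x)r\plus(u_2/x)r
  \quad\text{and}\quad
  ((a\bullet u_1)/x)r \cong a\bullet(u_1/x)r
\]
for every $r$ with $x:A\vdash r:B$, where $\cong$ is the computational equivalence of~\cite{DiazcaroDowekMSCS24}.

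I would prove this lemma by induction on the derivation of $x:A\vdash r:B$. The base case $r=x$ is immediate. In the inductive step, substitution pushes into the subterms and, by the induction hypothesis, each subterm carrying $x$ satisfies the distribution up to $\cong$; the task is to re-assemble. Where the top constructor matches a commutation rule of Figure~\ref{fig:commutationrules} — the introductions $\with_i$, $\odot_i$, $\top_i$, $\multimap_i$ (via $(\plus_\with)$, $(\plus_\odot)$, $(\plus_\top)$, $(\plus_\multimap)$), the combination of $\one_i$'s (via $(\plus_\one)$), and the eliminations $\otimes_e$, $\oplus_e$ (via $(\plus_\otimes)$, $(\plus_\oplus)$), together with their $\bullet$-analogues — the freshly created $\plus$ (resp.\ $\bullet$) actually \emph{rewrites} to the outermost position, so using that $\cong$ is a congruence containing $\lra$ the case closes. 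A variable discarded through $\top_i$ is reconciled because $\langle\rangle\plus\langle\rangle\lra\langle\rangle$, and $\zero$ contributes no case by the introduction property. This already settles the first bullet: when $B\in\mathcal V$ the top connective is $\one$ or $\with$, for which the distributions are genuine reductions; since here $\elimsup$ is absent the calculus is confluent (Theorem~\ref{thm:confluence}) and strongly normalizing, so normal forms are unique, and, $\cong$ being sound while $\mathcal V$-types are observable with rigidly classified closed normal forms (Definition~\ref{def:onetoone}, Theorem~\ref{thm:matrices}), on $\mathcal V$-typed proofs $\cong$ coincides with equality of normal forms.

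The delicate cases — and the reason the general statement is only up to $\cong$ — are the constructors that do \emph{not} appear among the commutations: the argument position of an application, the pair constructor $\otimes$, the injections $\inl,\inr$, the eliminations of the units, and the shared branch contexts of $\oplus_e$/$\odot_e$. There the two sides genuinely have distinct normal forms; for instance $(v_1\plus v_2)\otimes w$ cannot be rewritten to $(v_1\otimes w)\plus(v_2\otimes w)$, so the step must invoke the definition of $\cong$ rather than a reduction. Hence I expect the main obstacle to be foundational rather than combinatorial: one must show that $\cong$ is a congruence that contains $\lra$ and validates exactly these residual distribution equations — equivalently, that observing any such pair at a $\mathcal V$-type, where the previous paragraph applies, always returns equal results (this is where the inner linearity at type $\one$ feeding an $\one_e$ must itself be discharged through the $\mathcal V$ case). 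Once this property of $\cong$ is secured, the induction goes through uniformly, and the restriction $B\in\mathcal V$ is precisely the situation in which $\cong$ can be sharpened to equality of normal forms.
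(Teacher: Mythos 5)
First, a caveat on the comparison itself: this paper contains no proof of Theorem~\ref{thm:linearity} --- it is imported verbatim, qed mark included, from Corollary 4.12 of \cite{DiazcaroDowekMSCS24} --- so your attempt can only be judged on its own terms, not against an in-paper argument. Your skeleton (normalize $t$ to $\lambda x.r$ via Theorems~\ref{thm:SN}, \ref{thm:intros} and \ref{thm:SR}, then reduce both identities to a substitution-linearity lemma proved by induction) is a sensible outline, but the two places where you defer the work are exactly where the proof lives, and one of them is circular as written. For the argument position of an application, the ``residual distribution equation'' you would need is $f(v_1\plus v_2)\cong fv_1\plus fv_2$ for a closed $f$ --- which is literally the second bullet of the theorem, so postponing it to an unproven ``foundational property of $\cong$'' assumes the conclusion. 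That case should instead be \emph{eliminated}: since $r$ is the body of the normal form of $t$, it is normal, and in a normal term whose only free variable is $x$, an application with $x$ in the argument has a closed function part (linearity makes the two contexts disjoint), which by Theorem~\ref{thm:intros} is a $\lambda$-abstraction and hence creates a $\beta$-redex --- contradiction. You never make this observation. For the intro cases that genuinely survive when $B\notin\mathcal V$ (e.g.\ $r=x\otimes w$), unfolding your ``observe at a $\mathcal V$-type'' suggestion through an elimination context $\elimtens([\cdot],yz.s)$ produces a substitution-linearity obligation for an \emph{arbitrary} term $s$, with no decreasing measure; escaping the regress needs additional structure --- prove the $\mathcal V$-case first, prove separately that elimination contexts commute with $\plus$ on closed terms, then transport the general case along $\lambda x.K[tx]$ --- none of which appears in the proposal, nor does a proof that $\cong$ is a congruence, which for observational equivalences is normally the hard part.

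Second, your dispatch of the first bullet conflates the head connective of the \emph{type} $B$ with the head constructor of the \emph{term} $r$. Since $x$ is free in $r$, Theorem~\ref{thm:intros} does not apply to $r$: a normal proof of $\one$ with $x$ free can be $\fst(x)$ or $\elimone(x,u)$, and Figure~\ref{fig:commutationrules} has no commutation rule for $\fst$, $\snd$ or $\elimone$ applied to a sum --- the calculus deliberately commutes sums with introductions, not with these eliminations --- so the distributions there are not ``genuine reductions''. Those neutral cases are settled only by first normalizing the closed proofs $u_1,u_2$ (Theorems~\ref{thm:SN} and \ref{thm:intros} again, e.g.\ $u_1\plus u_2$ reduces to a pair before $\fst$ can fire) and then computing with semiring identities such as distributivity; that analysis of open normal forms is the real content of the cited proof. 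In short: right first move, but both load-bearing steps are missing, and the treatment of the ``delicate cases'' is circular as stated.
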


The next corollary is the converse of Theorem \ref{thm:matrices}.
\begin{corollary}[Linearity~{\cite[Corollary 4.13]{DiazcaroDowekMSCS24}}]
  \label{cor:linearity}
  Let $A, B \in {\mathcal V}$, such that $d(A) = m$ and $d(B) = n$, and  $t$ be
  a closed proof of $A \multimap B$.  Then the function ${\mathcal S}^m\xlra
  f{\mathcal S}^n$, defined as $f({\bf u}) = \underline{t\overline{\bf u}^A}$
  is linear.
	\qed
\end{corollary}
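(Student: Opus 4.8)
The plan is to reduce the statement to the already-established Theorem~\ref{thm:linearity}, by transporting the semimodule operations of $\mathcal{S}^m$ and $\mathcal{S}^n$ through the correspondence of Definition~\ref{def:onetoone} onto the interstitial operations $\plus$ and $\bullet$ of the calculus. Recall that $\underline{\cdot}$ is defined on a closed irreducible proof and extended to an arbitrary closed proof through its normal form; since the $\mathcal{L}^{\mathcal{S}}$-fragment is confluent (Theorem~\ref{thm:confluence}) and strongly normalizing (Theorem~\ref{thm:SN}), this normal form is unique, so $\underline{\cdot}$ is well defined and invariant under reduction. In particular, any two closed proofs of a $\mathcal{V}$-proposition that share a normal form have the same value under $\underline{\cdot}$, which is exactly what lets us feed the $(\cdot)_\downarrow$-equalities of Theorem~\ref{thm:linearity} into $\underline{\cdot}$.

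First I would prove two compatibility lemmas by induction on the proposition in $\mathcal{V}$. The constructor-side lemma states that $\overline{\mathbf{u}_1}^A \plus \overline{\mathbf{u}_2}^A \longrightarrow^{*} \overline{\mathbf{u}_1 + \mathbf{u}_2}^A$ and $a \bullet \overline{\mathbf{u}}^A \longrightarrow^{*} \overline{a\mathbf{u}}^A$: for $A = \one$ these are the single commutation steps $\plus_\one$ and $\bullet_\one$, which realize $+_{\mathcal{S}}$ and $\cdot_{\mathcal{S}}$ on scalars, while for $A = A_1 \with A_2$ the steps $\plus_\with$ and $\bullet_\with$ push the operation componentwise, matching the block decomposition of $\mathbf{u}$, so the induction hypothesis applies to each block. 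The destructor-side lemma states that for closed proofs $s_1, s_2$ of $B \in \mathcal{V}$ one has $\underline{s_1 \plus s_2} = \underline{s_1} + \underline{s_2}$ and $\underline{a \bullet s} = a\,\underline{s}$; here I would use the Introduction theorem (Theorem~\ref{thm:intros}) to fix the shape of the normal forms of $s_1, s_2$ (namely $a.\star$ when $B = \one$, and a $\with$-pair when $B = B_1 \with B_2$), reduce $s_1 \plus s_2$ with the same commutation rules as above, and conclude by induction.

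With these in hand the corollary assembles directly. Using the constructor-side lemma, the context-closure rule $(C)$, and confluence, $f(\mathbf{u}_1 + \mathbf{u}_2) = \underline{t\,\overline{\mathbf{u}_1 + \mathbf{u}_2}^A} = \underline{t\,(\overline{\mathbf{u}_1}^A \plus \overline{\mathbf{u}_2}^A)}$, since $t$ applied to two proofs with a common normal form yields proofs with a common normal form. Because $B \in \mathcal{V}$, Theorem~\ref{thm:linearity} gives $(t(\overline{\mathbf{u}_1}^A \plus \overline{\mathbf{u}_2}^A))_\downarrow = (t\overline{\mathbf{u}_1}^A \plus t\overline{\mathbf{u}_2}^A)_\downarrow$, so these two proofs share a normal form, and the destructor-side lemma then yields $\underline{t\overline{\mathbf{u}_1}^A \plus t\overline{\mathbf{u}_2}^A} = \underline{t\overline{\mathbf{u}_1}^A} + \underline{t\overline{\mathbf{u}_2}^A} = f(\mathbf{u}_1) + f(\mathbf{u}_2)$. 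The scalar case is identical, using the $\bullet$-halves of both lemmas and of Theorem~\ref{thm:linearity}.

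The bulk of the work is purely bookkeeping; the only genuinely delicate point is keeping the two notions of equality aligned. Theorem~\ref{thm:linearity} delivers its conclusions only up to $(\cdot)_\downarrow$-equality (resp. $\cong$), never as syntactic identities, so every appeal to it must be routed through $\underline{\cdot}$, and the correctness of that routing rests entirely on the fact that $\underline{\cdot}$ factors through the unique normal form. I expect the main obstacle to be precisely this: verifying that confluence and strong normalization of the $\odot$-free fragment make $\underline{\cdot}$ a reduction invariant on all closed proofs of a $\mathcal{V}$-proposition, so that the two compatibility lemmas and Theorem~\ref{thm:linearity} can be chained without ever leaving the equivalence class of the normal form.
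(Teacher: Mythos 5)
Your proof is correct and takes essentially the route this paper intends: the corollary is stated here without proof, imported from \cite{DiazcaroDowekMSCS24}, where it follows---exactly as you argue---from Theorem~\ref{thm:linearity} combined with the syntactic sum and scalar multiplication results (stated in this paper immediately after the corollary, citing Lemmas 3.7 and 3.8 of that reference), with $\underline{\,\cdot\,}$ well defined on closed proofs via confluence and strong normalization of the $\odot$-free fragment. Your additional constructor-side lemma is harmless but not strictly necessary, since it already follows from the destructor-side lemma together with the bijectivity of the correspondence of Definition~\ref{def:onetoone}.
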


Finally, we can prove that the sum and scalar product in the syntax represent the sum and scalar product of the elements of a semimodule.
\begin{theorem}[Syntactic sum and scalar multiplication~{\cite[Lemmas 3.7 and 3.8]{DiazcaroDowekMSCS24}}]
  Let $A \in {\mathcal V}$, and $u$ and $v$ be two closed proofs of $A$.
	Then,
		$\underline{u \plus v} = \underline{u} + \underline{v}$ and 
		$\underline{\escalar s \bullet u} =  \escalar s\underline{u}$.
		\qed
\end{theorem}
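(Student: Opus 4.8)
The plan is to prove both identities simultaneously by structural induction on the proposition $A\in\mathcal V$. Since $A$ is built only from $\one$ and $\with$, it is natural to argue inside the $\odot$-free fragment $\mathcal L^{\mathcal S}$ in which this section lives, where Theorems~\ref{thm:SN} and~\ref{thm:confluence} together guarantee that every closed proof $t$ has a unique normal form $t_\downarrow$; this is precisely what makes the extension of $\underline{\cdot}$ to arbitrary closed proofs, $\underline t=\underline{t_\downarrow}$, well defined. The two tools I would lean on are the shape of closed normal forms given by the introduction property (Theorem~\ref{thm:intros}) and the two pairs of commutation rules $(\plus_\one),(\plus_\with)$ and $(\bullet_\one),(\bullet_\with)$ of Figure~\ref{fig:commutationrules}.

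For the base case $A=\one$ we have $d(A)=1$, and by Theorem~\ref{thm:intros} the normal forms of $u$ and $v$ are $a.\star$ and $b.\star$ for some $a,b\in\mathcal S$, so $\underline u=(a)$ and $\underline v=(b)$. Reducing $u$ and $v$ first by the context rule and then firing $(\plus_\one)$ gives $u\plus v\lra^{*}(a\masS b).\star$, whence $\underline{u\plus v}=(a\masS b)=\underline u+\underline v$ in $\mathcal S^{1}=\mathcal S$. The scalar case is identical using $(\bullet_\one)$, which yields $a\bullet u\lra^{*}(a\produ b).\star$ and hence $\underline{a\bullet u}=(a\produ b)=a\,\underline u$.

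For the inductive step $A=A_1\with A_2$ with $d(A_i)=n_i$, Theorem~\ref{thm:intros} tells us the normal forms of $u$ and $v$ are $\pair{u_1}{u_2}$ and $\pair{v_1}{v_2}$ with $u_i,v_i$ closed proofs of $A_i$. Applying the context rule and then $(\plus_\with)$, I would reduce
\[
  u\plus v\;\lra^{*}\;\pair{u_1}{u_2}\plus\pair{v_1}{v_2}\;\lra\;\pair{u_1\plus v_1}{u_2\plus v_2}.
\]
Since no reduction rule has a bare pair $\pair{\cdot}{\cdot}$ as its top-level redex, the normal form of a pair is the pair of the normal forms of its components; combining this with confluence, the block clause of Definition~\ref{def:onetoone} gives
\[
  \underline{u\plus v}=\left(\begin{smallmatrix}\underline{u_1\plus v_1}\\\underline{u_2\plus v_2}\end{smallmatrix}\right).
\]
The induction hypotheses on $A_1$ and $A_2$ rewrite the two blocks as $\underline{u_i}+\underline{v_i}$, and because addition in $\mathcal S^{n_1+n_2}$ is blockwise this is exactly $\underline u+\underline v$. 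The scalar identity follows in the same way from $(\bullet_\with)$, using $\underline{a\bullet u_i}=a\,\underline{u_i}$ and the blockwise action of scalars.

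The only genuinely delicate point is the claim that the unique normal form of $u\plus v$ is the one obtained by the outside-in reduction above, that is, that pushing the commutation rules to the top commutes with reducing inside $u$ and $v$. This is where confluence (Theorem~\ref{thm:confluence}) does the real work: it makes the order of reductions irrelevant, so the normal form computed by first normalizing $u,v$ and then applying $(\plus_\with)$ coincides with the genuine normal form from which $\underline{u\plus v}$ is defined. Once this is spelled out, together with the observation that pairs carry no top-level redex, the remainder is a routine unfolding of Definition~\ref{def:onetoone}.
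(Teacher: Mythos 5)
This paper never actually proves the statement: it is imported verbatim from the companion paper (Lemmas 3.7 and 3.8 of \cite{DiazcaroDowekMSCS24}) with a citation standing in for the proof, so the only thing to compare your argument against is that external source, which your proof essentially reconstructs. On its own merits your argument is correct. The induction on $A\in\mathcal V$ is the right decomposition; Theorem~\ref{thm:intros} pins the closed normal forms down to $a.\star$ at $\one$ and to pairs at $A_1\with A_2$ (you also tacitly need subject reduction, Theorem~\ref{thm:SR}, so that the normal form of a closed proof of $A$ is again a closed proof of $A$ and Definition~\ref{def:onetoone} applies to it); the rules $(\plus_\one)$, $(\bullet_\one)$, $(\plus_\with)$, $(\bullet_\with)$ together with the context rule $(C)$ then compute the normal form outside-in; and your observation that a bare pair is never a top-level redex is exactly what lets you read off the block structure and invoke the induction hypothesis componentwise. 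You are also right that the load-bearing hypothesis is confluence of the $\elimsup$-free fragment (Theorem~\ref{thm:confluence}, combined with Theorem~\ref{thm:SN}): in the full \OC a closed proof of $\one$ containing an $\elimsup$ redex has no unique irreducible form, so $\underline{t}$ would not even be well defined; restricting to the fragment this section works in is a necessity rather than a convenience, and your proof makes that explicit where the paper leaves it implicit.
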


\subsection{Concrete examples: probabilistic choice}
In this section we present some examples of the use of the \OC[\mathbb R^{\geq 0}], that is, the \OC where the semiring $\mathcal S$ is the semiring of non-negative real numbers. In this case, the reduction relation can be seen as a probabilistic reduction relation, where the probability of a reduction $t \lra[\escalar p] u$ is $p$. We sometimes use $\one \with \one$ to represent $\mathbb R^2$, as in Definition~\ref{def:onetoone}, and other times $\one \odot \one$, depending on the encoding we want to illustrate.

\begin{example}[Biased coin toss]\label{ex:coin}
  The first example is a simple biased coin toss in the \OC[\mathbb R^{\geq 0}]. We represent the two possible outcomes, heads and tails, by the proofs $\inl(1.\star)$ and $\inr(1.\star)$ of the proposition $\one \oplus \one$. The biased coin toss itself, which returns heads with probability $\tfrac{3}{4}$ and tails with probability $\tfrac{1}{4}$, is represented by the proof
  \[
    \elimsup[\tfrac{3}{4},\tfrac{1}{4}](\super{\inl(1.\star)}{\inr(1.\star)}, x.x, y.y)
  \]
  which reduces with probability $\tfrac{3}{4}$ to $\inl(1.\star)$ and with probability $\tfrac{1}{4}$ to $\inr(1.\star)$.
\end{example}

\begin{example}[Stochastic matrix]\label{ex:stochastic}
  The second example is a simple stochastic matrix in the \OC[\mathbb R^{\geq 0}], using the encoding from Theorem~\ref{thm:matrices}. Let $M$ be the stochastic matrix
  \[
    M =
    \begin{pmatrix}
      \tfrac{3}{4} & \tfrac{1}{2}\\
      \tfrac{1}{4} & \tfrac{1}{2}
    \end{pmatrix}
  \]
  which can be encoded as follows.  
  The two columns of the matrix are represented by the two proofs
  \begin{align*}
    t_1 &= \lambda x.\elimone(x.\pair{\tfrac{3}{4}.\star}{\tfrac{1}{4}.\star})\\
    t_2 &= \lambda x.\elimone(x.\pair{\tfrac{1}{2}.\star}{\tfrac{1}{2}.\star})
  \end{align*}
  Thus, the matrix itself is represented by the proof
  \[
    t = \lambda x.(\elimwith^1(x,y.t_1~y) \plus \elimwith^2(x,z.t_2~z)).
  \]

  The action of $M$ on the vector $\bigl(\begin{smallmatrix} 1 \\ 0 \end{smallmatrix}\bigr)$ corresponds to the same biased coin toss as in Example~\ref{ex:coin}, and is represented by the proof $t~\pair{1.\star}{0.\star}$, which reduces, as expected, to $\pair{\tfrac{3}{4}.\star}{\tfrac{1}{4}.\star}$.

  More generally, the action of $M$ on the vector $\left(\begin{smallmatrix} \tfrac{1}{2} \\ \tfrac{1}{2} \end{smallmatrix}\right)$ is represented by the proof $t~\pair{\tfrac{1}{2}.\star}{\tfrac{1}{2}.\star}$, which reduces, as expected, to $\pair{\tfrac{5}{8}.\star}{\tfrac{3}{8}.\star}$.
\end{example}

Thus, Example~\ref{ex:coin} shows how to encode the probabilistic behaviour of a coin toss using the $\odot$ connective (via $\elimsup$), while Example~\ref{ex:stochastic} shows how to encode the probabilistic behaviour of a stochastic matrix using the matrix encoding. The next example shows how to express a probabilistic projection of a vector over the canonical basis in $\mathbb R^2$, which illustrates the expressiveness of the \OC[\mathbb R^{\geq 0}].

\begin{example}[Probabilistic projection]\label{ex:proj}
  Consider the projectors $\pi_1$ and $\pi_2$ over the canonical basis of $\mathbb R^2$, defined as 
  \begin{align*}
    \pi_1\begin{pmatrix} a \\ b \end{pmatrix} &= \begin{pmatrix} a \\ 0 \end{pmatrix},
    &
    \pi_2\begin{pmatrix} a \\ b \end{pmatrix} &= \begin{pmatrix} 0 \\ b \end{pmatrix}.
  \end{align*}
  We can encode a probabilistic projector, applying $\pi_1$ with probability $p$ and $\pi_2$ with probability $1-p$, as follows:
  \[
    \pi = \lambda x.\elimsup[p,1-p](x,y.\super y{0.\star},z.\super{0.\star}z).
  \]
  Indeed, $\pi~\super{a.\star}{b.\star}$ reduces with probability $p$ to $\super{a.\star}{0.\star}$ and with probability $1-p$ to $\super{0.\star}{b.\star}$.
\end{example}

\section{The categorical construction}\label{sec:seminodules}
\subsection{Some properties of categories with biproducts}

\begin{definition}\label{def:semiadditive}
A semiadditive category is a category enriched over commutative monoids, such
that composition is bilinear and the monoid unit acts as an absorbing element.
More precisely:
\begin{enumerate}
    \item \textit{Enrichment in commutative monoids:}  
    For any two objects $A, B$ in the category, the hom-set
    $\Hom AB$
    is equipped with the structure of a commutative monoid
    $
        \big( \Hom AB, +, 0_{AB} \big).
    $

    \item \textit{Bilinearity of composition:}  
    Composition respects the monoid structure, i.e.\ it is bilinear:
    \begin{align*}
      f \circ (g + h) &= f \circ g + f \circ h, & &\text{for all } B \xlra f C, \ A \xlra g B,\text{ and } A\xlra h B, \\
      (g + h) \circ f &= g \circ f + h \circ f, & &\text{for all } A \xlra f B, \ B \xlra g C,\text{ and }B\xlra h C.
    \end{align*}

    \item \textit{Units are absorbing:}  
    The monoid unit $0_{AB}$ acts as a categorical zero morphism, i.e.\ for every $A \xlra f B$,
    \[
        0_{BB} \circ f = 0_{AB}, \qquad f \circ 0_{AA} = 0_{AB}.
    \]
\end{enumerate}

\end{definition}

\begin{definition}\label{defmapaflecha}
  In a category with biproduct, we can define the following operation between maps.
\begin{center}
  \begin{tikzcd}[labels=description,column sep=2cm,row sep=1cm]
  A\ar[d,"\Delta"]\ar[r,"f+g"] & B\\
  A\oplus A\ar[r,"f\oplus g"] & B\oplus B\ar[u,"\nabla"],
\end{tikzcd}
\end{center}
for $A\xlra f B$ and $A\xlra g B$, where $\Delta=\pair{\Id}{\Id}$ and
$\nabla=\coproducto{\Id}{\Id}$.
\end{definition}

\begin{theorem}[Semiadditive structure~{\cite[Proposition 18.4]{Mitchell}}]\label{lem:monoid}
	A category with a biproduct has a unique semiadditive structure in the sense
	of Definition~\ref{def:semiadditive}, where the sum of maps is given by
	Definition~\ref{defmapaflecha}, and the unit of each monoid is given by the
	map $0_{AB}$ defined as $A\xrightarrow{!} 0\xrightarrow{!} B$ (where the zero
	object $0$ is due to the biproducts).
\end{theorem}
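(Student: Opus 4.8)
The plan is to reduce everything to the bookkeeping maps of the biproduct. Write $i_1,i_2$ for the coproduct injections and $\pi_1,\pi_2$ for the product projections of $A\oplus A$ (and of $B\oplus B$); since a zero object is available, these satisfy the matrix relations $\pi_j i_k=\Id$ when $j=k$ and $\pi_j i_k=0_{AB}$ otherwise, the off-diagonal maps factoring through $0$. From these I would first record a short toolkit, each clause being immediate from the universal property of the (co)product applied componentwise: $\nabla i_k=\Id$ and $\pi_k\Delta=\Id$; functoriality of $\oplus$, i.e. $(f'\oplus g')(f\oplus g)=(f'f)\oplus(g'g)$ and $\Id\oplus\Id=\Id$; and the two compatibility equations $h\nabla=\nabla(h\oplus h)$ and $\Delta k=(k\oplus k)\Delta$.

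For \emph{existence} I would verify each clause of Definition~\ref{def:semiadditive} for $f+g:=\nabla(f\oplus g)\Delta$ from Definition~\ref{defmapaflecha}. Commutativity follows by inserting the twist isomorphism $\sigma$ of $A\oplus A$, using $\nabla\sigma=\nabla$, $\sigma\Delta=\Delta$ and $\sigma(f\oplus g)\sigma=g\oplus f$. The unit law $f+0_{AB}=f$ follows from $f\oplus 0_{AB}=i_1 f\pi_1$ together with $\nabla i_1=\Id$ and $\pi_1\Delta=\Id$, which simultaneously identifies the map $0_{AB}$ through the zero object as the neutral element; that it is absorbing, $0_{BB}f=f0_{AA}=0_{AB}$, is then immediate from factorization through $0$. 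Bilinearity is the payoff of the compatibility lemmas: $h(f+g)=h\nabla(f\oplus g)\Delta=\nabla(h\oplus h)(f\oplus g)\Delta=\nabla((hf)\oplus(hg))\Delta=hf+hg$, and symmetrically $(f+g)k=fk+gk$ via $\Delta k=(k\oplus k)\Delta$. The fiddly point, and what I expect to be the \textbf{main obstacle}, is associativity: comparing $(f+g)+h$ with $f+(g+h)$ cleanly requires passing to the ternary biproduct $A\oplus A\oplus A$ with its diagonal and codiagonal and invoking the coherence isomorphisms relating iterated binary biproducts to the ternary one, so that both bracketings equal $\nabla_3(f\oplus g\oplus h)\Delta_3$. (An Eckmann–Hilton style argument, exhibiting each object as a compatible monoid and comonoid under $\nabla,\Delta$, is an alternative route that absorbs both commutativity and associativity at once.)

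For \emph{uniqueness}, suppose $+'$ is any operation giving a semiadditive structure for which $A\oplus B$ remains the biproduct, so that $i_k,\pi_j$ are still its structure maps. The crucial step is the identity decomposition $\Id_{B\oplus B}=i_1\pi_1+'i_2\pi_2$: comparing projections, $\pi_j(i_1\pi_1+'i_2\pi_2)=\pi_j i_1\pi_1+'\pi_j i_2\pi_2=\pi_j$ by bilinearity of $+'$, the matrix relations, and absorption, so the two maps agree by the universal property of the product. Postcomposing $\langle f,g\rangle=(f\oplus g)\Delta$ with this identity gives $\langle f,g\rangle=i_1f+'i_2g$, and applying $\nabla$ yields $f+'g=(\nabla i_1)f+'(\nabla i_2)g=\nabla\langle f,g\rangle=\nabla(f\oplus g)\Delta=f+g$. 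Hence any semiadditive structure compatible with the biproduct coincides with the one defined by Definition~\ref{defmapaflecha}, establishing uniqueness.
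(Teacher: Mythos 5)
The paper offers no internal proof of this statement to compare against: it is imported from the literature (Mitchell, Proposition~18.4). Your reconstruction is the standard argument for that result, and it is essentially sound: existence of the commutative-monoid structure via $f+g=\nabla\circ(f\oplus g)\circ\Delta$, with commutativity from the twist, the unit law from $f\oplus 0_{AB}=i_1\circ f\circ\pi_1$, bilinearity from the naturality identities $h\circ\nabla=\nabla\circ(h\oplus h)$ and $\Delta\circ k=(k\oplus k)\circ\Delta$, and associativity through the ternary biproduct; uniqueness via the decomposition $\Id_{B\oplus B}=i_1\pi_1+'i_2\pi_2$ followed by $f+'g=\nabla\circ(i_1f+'i_2g)=\nabla\circ(f\oplus g)\circ\Delta=f+g$. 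This is in substance the cited proof, and it is consistent with how the paper itself manipulates biproducts (e.g.\ the identities $\pi_1 i_1=\Id$ and $\pi_2 i_1=0$ used freely in~\ref{app:distrib}).

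Two compressed steps should be made explicit. First, in the uniqueness computation $\pi_j(i_1\pi_1+'i_2\pi_2)=\pi_ji_1\pi_1+'\pi_ji_2\pi_2=\pi_j$, the last equality requires the unit of the \emph{competing} structure $+'$ to coincide with the zero-object map; a priori the monoid $(\Hom{B\oplus B}{B},+')$ has an arbitrary unit $e'$, and ``absorption'' alone does not license dropping the second summand. The identification is a one-line lemma worth recording: by the absorption axiom, $e'_{XY}=e'_{YY}\circ f$ for every $X\xlra{f}Y$; taking $f=u_Y\circ z_X$ (where $z_X:X\to 0$ and $u_Y:0\to Y$) gives $e'_{XY}=(e'_{YY}\circ u_Y)\circ z_X=u_Y\circ z_X=0_{XY}$, since $0$ is initial. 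This is the only place where the absorption clause of Definition~\ref{def:semiadditive} genuinely earns its keep, so it should not stay implicit. Second, the matrix relations $\pi_ji_j=\Id$ and $\pi_ji_k=0_{AB}$ for $j\neq k$ are not consequences of merely having a zero object together with an object that is simultaneously a product and a coproduct; they are part of what ``biproduct'' means (and the paper uses them as assumed structure, not derived facts), so your toolkit should cite them as such rather than attribute them to the availability of $0$. With these repairs, and the routine completion of the ternary-biproduct associativity bookkeeping you yourself flagged, the proof is complete.
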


\begin{corollary}
	[Semiring]
In a category with biproduct, each $\Hom AA$ of morphisms is a semiring with $+$ given by Definition~\ref{defmapaflecha}
as additive operation, $\circ$ as product operator, where $0_{AA}$ and $\Id_{A}$ are the units of the addition and product respectively.
\end{corollary}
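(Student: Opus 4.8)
The plan is to obtain every semiring axiom by specialising the semiadditive structure of Theorem~\ref{lem:monoid} to the case where all objects coincide, $A=B=C$, and then adjoining the multiplicative monoid structure that the category axioms already supply for free. Recall that a semiring is a set carrying two operations, forming a commutative monoid under addition and a monoid under multiplication, with multiplication distributing over addition on both sides and with the additive unit absorbing for multiplication. Each of these clauses will be matched against something already proved.

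First I would record the additive structure. Theorem~\ref{lem:monoid} endows each $\Hom AB$, in particular $\Hom AA$, with an Abelian monoid structure whose operation is the sum of maps of Definition~\ref{defmapaflecha} and whose unit is $0_{AA}$. This is exactly the required commutative monoid $(\Hom AA,+,0_{AA})$: associativity and commutativity are already part of the phrase ``Abelian monoid'', so nothing further need be verified. Next I would supply the multiplicative monoid. Taking composition $\circ$ as the product, its associativity together with the identity laws $\Id_A\circ f=f=f\circ\Id_A$ are precisely the axioms of a category, so $(\Hom AA,\circ,\Id_A)$ is a monoid with no extra work.

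It then remains to connect the two operations. The two distributivity laws $f\circ(g+h)=f\circ g+f\circ h$ and $(g+h)\circ f=g\circ f+h\circ f$, for $f,g,h\in\Hom AA$, are exactly the bilinearity clauses of Definition~\ref{def:semiadditive} read with $B=C=A$, hence hold by Theorem~\ref{lem:monoid}. Finally, the absorbing condition $0_{AA}\circ f=f\circ 0_{AA}=0_{AA}$ is the last clause of that same definition specialised to $A=B$. Assembling these four observations yields precisely the semiring axioms for $(\Hom AA,+,\circ,0_{AA},\Id_A)$.

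I expect no genuine obstacle here, since every semiring axiom is already packaged inside Theorem~\ref{lem:monoid} or the ambient category axioms; the content of the corollary is organisational. The only point that demands a little care is bookkeeping over the intended definition of semiring: one must ensure addition is required to be \emph{commutative} (which the Abelian monoid provides) rather than merely a monoid, and confirm that no commutativity of $\circ$ is being claimed, as composition on $\Hom AA$ need not be commutative.
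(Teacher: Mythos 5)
Your proof is correct and is exactly the argument the paper has in mind: the paper's own proof is just ``Straightforward,'' and your write-up simply unpacks that by specialising Theorem~\ref{lem:monoid} and Definition~\ref{def:semiadditive} to $\Hom AA$ and adjoining the multiplicative monoid given by composition and $\Id_A$. Your closing caution—that commutativity is claimed only for $+$, not for $\circ$—is a sensible bookkeeping remark and consistent with the paper's definition of semiring.
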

\begin{proof}
	Straightforward.
\end{proof}




\subsection{The category \texorpdfstring{$\SM$}{C-S}}
\begin{definition}[The category $\SM$]
  \label{def:SM}
	Let $\mathcal S$ be a fixed semiring.
  Let $\SM$ be a symmetric monoidal closed category with biproduct
  where there exists a monomorphism from the semiring
  $\mathcal S$ to the semiring $\Hom II$, being $I$ the
  unit object.
\end{definition}

\begin{notation}
  We write 
  \begin{align*}
    \home AB &\textrm{ for the internal hom between $A$ and $B$, }\\
    \otimes &\textrm{ for the tensor product, }\\
    \oplus &\textrm{ for the biproduct, } \\
    I &\textrm{ for the unit object.}
  \end{align*}
  The usual coherence maps are denoted as follows.
  \begin{align*}
    A\otimes B& \xlra{\sigma_{A,B}}B\otimes A, &
    A\otimes (B\otimes C)&\xlra{\alpha_{A,B,C}} (A\otimes B)\otimes C,\\
    I\otimes A&\xlra{\lambda_A} A, &
    A\otimes I&\xlra{\rho_A} A.
  \end{align*}

  The usual maps for the biproduct are denoted as follows.
	\[
    A\oplus B \xlra{\pi_1} A,
		\qquad
    A\oplus B \xlra{\pi_1} B,
		\qquad
    A \xlra{i_1} A\oplus B,\\
		\qquad
    A \xlra{i_2} A\oplus B.
	\]

  Finally, we note $\semS{\cdot}:\mathcal S\longrightarrow\Hom II$ the monomorphism.
\end{notation}

\begin{example}
	\label{ex:injective}
  The following are examples of categories with the properties asked by Definition~\ref{def:SM}.
  \begin{enumerate}
		\item\label{ex:noRel} 
		 The category $(\Rel,\times,\{\star\},\uplus)$, 
		where 
		objects are sets,
		arrows are relations, 
		the tensor is the Cartesian product,
		and the biproduct is the disjoint union, under the condition that $\mathcal S=\{\star\}$, otherwise the map from $\mathcal S$ to $\Hom{\{\star\}}{\{\star\}}$ would not be injective.

    \item The category $(\mathbf{SM}_{\mathcal S},\otimes,\mathcal S,\oplus)$,
		where
		objects are semimodules over the semiring $\mathcal S$,
		arrows are semimodule homomorphisms, 
		the tensor is the semimodules tensor,
		and the biproduct is Cartesian product.
		The map $\semS{\escalar s}$ is $\escalar s'\mapsto \escalar s\produ\escalar s'$. 

    Our first model for the $\mathcal L^{\mathcal S}$-calculus has been given in this category in a previous draft~\cite{DiazcaroMalherbe23arXiv}.

		\item The category $(\mathbf{CPM},\otimes,1,\times)$,
		where
		objects are the lists of natural numbers,
		arrows $\vec n\xlra f\vec m$ are matrices $(f_{ij})$ of completely positive maps from $\mathbb C^{n_i\times n_i}$ to $\mathbb C^{m_j\times m_j}$,
		the tensor is the tensor of vector spaces,
		and the biproduct is the Cartesian product.
		In this category $I=1$ and $\Hom II\simeq\mathbb R^{\geq 0}$, so any monomorphism from $\mathcal S$ to $\mathbb R^{\geq 0}$ is enough. For the $\mathcal L\odot^{\mathbb R^{\geq 0}\escalar p}$-calculus, we can take the identity.

		This category has been defined in~\cite{SelingerMSCS04} and used to model quantum computing in~\cite{PaganiSelingerValironPOPL14}.

		\item The category $(\mathcal R^\Pi,\times,\{\star\},\uplus)$, 
		where
		objects are sets,
		arrows are matrices over the continuous semiring $\mathcal R$,
		the composition is the matrix product,
		the tensor is the Cartesian product,
		and the biproduct is the disjoint union.
		In this category we have $I=\{\star\}$ and $\Hom II\simeq\mathcal R$, so any monomorphism from $\mathcal S$ to $\mathcal R$ is enough. For the $\mathcal L\odot^{\mathcal R\escalar p}$-calculus, we can take the identity.

		This category has been defined and used to model PCF$^{\mathcal R}$, a probabilistic extension of PCF, in~\cite{LairdManzonettoMcCuskerPaganiLICS13}.
  \end{enumerate}

	Notice that 
	the category $(\mathbf{Pcoh},\otimes,(\{\star\},[0,1]))$
	where objects are probabilistic coherence spaces
	and arrows are given by matrices,
	used in~\cite{DanosEhrhardIC11} to model a probabilistic extension of PCF is not an example of our construction since it does not have a biproduct.
	Indeed, the interpretation of $\one$ is $(\{\star\},[0,1])$ and so both $\one\with\one$ and $\one\oplus\one$ have the same web $\{0,1\}$ but $P(\one\with\one) = [0,1]\times[0,1]$ whereas $P(\one\oplus\one) =\{(\alpha,\beta)\in[0,1]\times[0,1] : \alpha+\beta\leq 1\}$.
	
\end{example}


\begin{definition}
  A semiadditive functor is a functor preserving the monoid structure on each hom.
\end{definition}

\begin{lemma}
  \label{lem:distrib}
  Let $F:\SM\to\SM$ be a semiadditive functor. Then there is a natural
  isomorphism
  \[
    F(A)\oplus F(B)\;\cong\;F(A\oplus B),
  \]
  where the arrows are given by
  \begin{align*}
    F(A\oplus B)\xlra{f} F(A)\oplus F(B) & \quad\text{with } f = \pair{F(\pi_1)}{F(\pi_2)},\\
    F(A)\oplus F(B)\xlra{f^{-1}} F(A\oplus B) & \quad\text{with } f^{-1} = \coproducto{F(i_1)}{F(i_2)}.
  \end{align*}
\end{lemma}
\begin{proof}
  Given in~\ref{app:distrib}.
\end{proof}

\begin{corollary}[Distributions]\label{cor:maps}
  In monoidal closed categories with biproducts, there exist the following natural transformations. 
  \begin{enumerate}
    \item $(A\oplus B)\otimes C\xlra{d} (A\otimes C)\oplus (B\otimes C)$
	with $d = \pair{\pi_1\otimes\Id_C}{\pi_2\otimes\Id_C}$.
    \item $
      (A\otimes C)\oplus (B\otimes C)
      \xlra{d^{-1}}
      (A\oplus B)\otimes C
      $
	 with $d^{-1} = \coproducto{i_1\otimes\Id}{i_2\otimes\Id}$.
    \item $\home A{B\oplus C}\xlra\gamma\home AB\oplus\home AC$
	with $\gamma = \pair{\home A{\pi_1}}{\home A{\pi_2}}$.
    \item $\home AB\oplus\home AC\xlra{\gamma^{-1}}\home A{B\oplus C}$ 
	with $\gamma^{-1}=\coproducto{\home A{i_1}}{\home A{i_2}}$. 
  \end{enumerate}
\end{corollary}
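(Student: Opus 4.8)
The plan is to read the four maps as two instances of biproduct-preservation: items~1--2 assert that for fixed $C$ the functor $-\otimes C$ carries the biproduct $A\oplus B$ to the biproduct $(A\otimes C)\oplus(B\otimes C)$, and items~3--4 assert that for fixed $A$ the functor $\home A-$ carries $B\oplus C$ to $\home AB\oplus\home AC$. Accordingly I would derive the existence of the isomorphisms from Lemma~\ref{lem:distrib}, once I have checked that $-\otimes C$ and $\home A-$ are semiadditive functors, and then identify those abstract isomorphisms with the explicit maps $d,d^{-1},\gamma,\gamma^{-1}$ by a direct computation with the biproduct equations.

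First I would argue that both functors are semiadditive. Since $\SM$ is monoidal closed, for each fixed $C$ the functor $-\otimes C$ is left adjoint to $\home C-$, hence preserves all colimits; dually, for each fixed $A$ the functor $\home A-$ is right adjoint to $A\otimes-$, hence preserves all limits. Because in a semiadditive category finite products, finite coproducts, and finite biproducts coincide, the left adjoint $-\otimes C$ and the right adjoint $\home A-$ both preserve finite biproducts. By Theorem~\ref{lem:monoid} the addition on each hom-set is canonically built from $\Delta$, $\nabla$, and $\oplus$ of maps, all expressed through the biproduct, so a functor preserving biproducts necessarily preserves this addition; thus both functors are semiadditive. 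Lemma~\ref{lem:distrib} then yields $(A\oplus B)\otimes C\cong(A\otimes C)\oplus(B\otimes C)$ and $\home A{B\oplus C}\cong\home AB\oplus\home AC$.

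Next I would identify these isomorphisms with the stated maps and verify they are mutually inverse, using the biproduct identities $\pi_j\circ i_k=\Id$ for $j=k$, $\pi_j\circ i_k=0$ for $j\neq k$, and $i_1\circ\pi_1+i_2\circ\pi_2=\Id$ (valid by Theorem~\ref{lem:monoid}). For the tensor case, functoriality of $-\otimes C$ together with its semiadditivity gives
\[
  d^{-1}\circ d=(i_1\otimes\Id_C)(\pi_1\otimes\Id_C)+(i_2\otimes\Id_C)(\pi_2\otimes\Id_C)=(i_1\circ\pi_1+i_2\circ\pi_2)\otimes\Id_C=\Id,
\]
and the composite in the other order is computed the same way from $\pi_j\circ i_k=\delta_{jk}$ and the universal property of the pairing $\pair{-}{-}$. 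The internal-hom case is entirely analogous, replacing $-\otimes C$ by the semiadditive functor $\home A-$ and using that it sends $\pi_j,i_k$ to $\home A{\pi_j},\home A{i_k}$, which satisfy the same equations.

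Finally, naturality in all three variables follows formally: $d$ and $\gamma$ are assembled from the projections $\pi_j$ and $d^{-1},\gamma^{-1}$ from the injections $i_k$, which are natural for the biproduct functor, while naturality in $C$ (resp.\ $A$) is just the bifunctoriality of $\otimes$ (resp.\ of $\home{-}{-}$). I expect the main obstacle to be the semiadditivity step: everything hinges on tensoring being bilinear with respect to the biproduct-induced addition, equivalently on $-\otimes C$ preserving coproducts and $\home A-$ preserving products, which is exactly where monoidal closedness is indispensable; once this is in hand, the inverse and naturality verifications are routine biproduct bookkeeping.
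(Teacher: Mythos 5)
Your proposal is correct and takes essentially the same approach as the paper, which proves the corollary simply by instantiating Lemma~\ref{lem:distrib} at the semiadditive functors $-\otimes C$ and $\home A{-}$, whose isomorphisms $\pair{F(\pi_1)}{F(\pi_2)}$ and $\coproducto{F(i_1)}{F(i_2)}$ are exactly the stated maps $d$, $d^{-1}$, $\gamma$, $\gamma^{-1}$. Your additional work---deriving semiadditivity of the two functors from the tensor--hom adjunction and re-checking that the maps are mutually inverse and natural---is sound but already implicit in the paper, which defers the semiadditivity check (calling it straightforward in the proof of Corollary~\ref{cor:PropsF}) and contains the inverse and naturality verifications in the appendix proof of Lemma~\ref{lem:distrib}.
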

\begin{proof}
   Direct consequence of Lemma~\ref{lem:distrib}.
\end{proof}




\subsection{The map \texorpdfstring{$\hat s$}{scalar}}
\begin{definition}[Scalar map]
  For any map $I\xlra{s} I$, we define its corresponding map
  $A\xlra{\hat{s}_A}A$ by $\hat s_A= \rho_A\circ(\Id\otimes{s})\circ\rho_A^{-1}$.
\end{definition}
\begin{lemma}
  \label{lem:s-nat}
  For any map $I\xlra{s} I$, the map $\hat s_A$ is a natural transformation.
\end{lemma}
\begin{proof}
  Given in~\ref{app:s-nat}.
\end{proof}

\begin{lemma}[Some properties of the scalar map]
  \label{lem:PropsS}
 Let $s$ be any map $I\xlra{s} I$.
 Then,
  \begin{enumerate}
    \item $\hat s_{I} = s$.
    \item\label{ite:PropsSoTimes} $\hat s_{A\otimes B} = \hat s_A\otimes\Id_B$.
    \item $\hat s_{A\oplus B} = \hat s_A\oplus\hat s_B$.
  \end{enumerate}
\end{lemma}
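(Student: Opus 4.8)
The plan is to treat the three items separately, disposing quickly of the two that rest on coherence and naturality, and concentrating the effort on item~\ref{ite:PropsSoTimes}, which is the genuine obstacle.

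For item~1, I would unfold $\hat{s}_I = \rho_I\circ(\Id_I\otimes s)\circ\rho_I^{-1}$ and invoke the standard coherence $\lambda_I=\rho_I$ together with naturality of the left unitor $\lambda$ at the morphism $s:I\to I$, which gives $\lambda_I\circ(\Id_I\otimes s)=s\circ\lambda_I$. Substituting $\lambda_I=\rho_I$ turns the middle factor into $s\circ\rho_I$, so the two copies of $\rho_I$ cancel and leave $\hat{s}_I=s$. (Naturality of $\hat{s}$ alone would only give that $\hat{s}_I$ commutes with $s$, not the equality, so the unit coherence is essential here.)

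For item~3, I would exploit that $\hat{s}$ is a natural transformation by Lemma~\ref{lem:s-nat}. Precomposing the naturality squares with the biproduct injections $i_1:A\to A\oplus B$ and $i_2:B\to A\oplus B$ yields $\hat{s}_{A\oplus B}\circ i_1=i_1\circ\hat{s}_A$ and $\hat{s}_{A\oplus B}\circ i_2=i_2\circ\hat{s}_B$. But $\hat{s}_A\oplus\hat{s}_B$ is, by definition of the biproduct map, precisely the morphism with these two factorizations through the injections, so by the couniversal property of the coproduct part of the biproduct the two maps coincide, giving $\hat{s}_{A\oplus B}=\hat{s}_A\oplus\hat{s}_B$.

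Item~\ref{ite:PropsSoTimes} is where the work lies. I would first rewrite $\rho_{A\otimes B}$ via the triangle coherence $\rho_{A\otimes B}\circ\alpha_{A,B,I}=\Id_A\otimes\rho_B$, that is $\rho_{A\otimes B}=(\Id_A\otimes\rho_B)\circ\alpha_{A,B,I}^{-1}$. Plugging this and its inverse into the definition, the conjugate $\alpha_{A,B,I}^{-1}\circ(\Id_{A\otimes B}\otimes s)\circ\alpha_{A,B,I}$ reduces, by naturality of the associator $\alpha$ at $(\Id_A,\Id_B,s)$, to $\Id_A\otimes(\Id_B\otimes s)$; bifunctoriality of $\otimes$ then collapses the whole expression to $\Id_A\otimes\hat{s}_B$. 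At this point the computation delivers $\hat{s}_{A\otimes B}=\Id_A\otimes\hat{s}_B$, the ``scalar on the right factor'' form, rather than the stated $\hat{s}_A\otimes\Id_B$. The remaining and crucial step is the \emph{centrality} of the scalar map, $\Id_A\otimes\hat{s}_B=\hat{s}_A\otimes\Id_B$, and this is where the symmetric structure of $\SM$ enters. Running the same coherence computation with the factors swapped gives $\hat{s}_{B\otimes A}=\Id_B\otimes\hat{s}_A$; naturality of $\hat{s}$ at $\sigma_{A,B}:A\otimes B\to B\otimes A$ gives $(\Id_B\otimes\hat{s}_A)\circ\sigma_{A,B}=\sigma_{A,B}\circ(\Id_A\otimes\hat{s}_B)$, and naturality of the symmetry rewrites the right-hand side as $(\hat{s}_B\otimes\Id_A)\circ\sigma_{A,B}$. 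Cancelling the isomorphism $\sigma_{A,B}$ yields $\Id_B\otimes\hat{s}_A=\hat{s}_B\otimes\Id_A$ for all $A,B$, which after renaming is exactly the centrality identity; composing it with $\hat{s}_{A\otimes B}=\Id_A\otimes\hat{s}_B$ gives the stated equality. I expect centrality to be the main obstacle: the coherence calculation only reaches $\Id_A\otimes\hat{s}_B$, and bridging to $\hat{s}_A\otimes\Id_B$ is precisely where the symmetry $\sigma$ makes the identification immediate.
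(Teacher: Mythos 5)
Your proof is correct, but on two of the three items it takes a genuinely different route from the paper's. For item~1 you and the paper agree in spirit (unfold the definition and use unit coherence plus naturality); your version is in fact the more careful one, since invoking naturality of $\rho$ alone would produce $\rho_I\circ(s\otimes\Id_I)\circ\rho_I^{-1}=s$, and bridging $s\otimes\Id_I$ to $\Id_I\otimes s$ really does require the coherence $\lambda_I=\rho_I$ that you cite explicitly. For item~\ref{ite:PropsSoTimes} the paper argues in one pass using \emph{symmetric} coherence: it rewrites $\rho^{-1}_{A\otimes B}$ as $(\Id_A\otimes\sigma_{I,B})\circ(\rho_A^{-1}\otimes\Id_B)$, slides $s$ past $\sigma$ by naturality, and reads off $\hat s_A\otimes\Id_B$ directly; you instead use only plain monoidal coherence (associator plus triangle) to reach $\Id_A\otimes\hat s_B$, and then prove the centrality identity $\Id_A\otimes\hat s_B=\hat s_A\otimes\Id_B$ via naturality of $\hat s$ at $\sigma_{A,B}$ and naturality of $\sigma$. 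The ingredients are the same, but your organization isolates centrality of scalars as a standalone fact (of independent interest, in the spirit of Kelly--Laplaza), at the cost of an extra pass; the paper's diagram is more compact. For item~3 your argument is genuinely simpler than the paper's: the paper unfolds the definition and pushes the distribution map $d=\pair{\pi_1\otimes\Id}{\pi_2\otimes\Id}$ of Lemma~\ref{lem:distrib} through the square, which requires the auxiliary computations $d\circ\rho^{-1}_{A\oplus B}=\rho_A^{-1}\oplus\rho_B^{-1}$ and its inverse form, whereas you bypass $d$ entirely by restricting the naturality of $\hat s$ (Lemma~\ref{lem:s-nat}) along the injections $i_1,i_2$ and concluding by the couniversal property of the coproduct half of the biproduct---the same ``check on injections'' device the paper itself employs in the proof of Lemma~\ref{lem:PropsF}. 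What the paper's longer route buys is explicit familiarity with how $\hat s$ interacts with $d$, which it exploits later; what yours buys is brevity and reuse of already-established naturality.
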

\begin{proof}
  Given in~\ref{app:PropsS}.
\end{proof}

Property~\ref{ite:PropsSoTimes} of Lemma~\ref{lem:PropsS} can be rephrased to $F(\hat s)=\hat s$, in the particular case of $F$ being the functor $-\otimes B$.
If we change the functor to be $\home A{-}$, the property, which would be stated as $\home A{\hat s}=\hat s$ is more subtle to prove. We do this in Lemma~\ref{lem:HomeScalEqScal}, but in its proof we need to use the map $\tau$ associated with
the adjunction between the tensor product and the hom, and its naturality with respect to $I$ (Lemma~\ref{lem:tauNatural}).

\begin{lemma}[The map $\tau$]
  \label{lem:tauNatural}
	The following map in the arrows of $\SM$ is a natural transformation with respect to $I$.
  \[
    \tau=\home AB\otimes I\xlra{\varphi_{A,\home AB\otimes I,B\otimes I}(\varepsilon\otimes\Id)}\home A{B\otimes I},
  \]
  where 
  $\varphi_{A,\home AB\otimes I,B\otimes I}$ is the map given by the adjunction 
  \[
    \Hom{X\otimes Y}{Z}\overset{\varphi_{X,Y,Z}}{\underset{\varphi^{-1}_{X,Y,Z}}{\leftrightarrows}}\Hom{Y}{\home XZ},
  \]
  by taking $X = A$, $Y=\home AB\otimes I$, and $Z=B\otimes I$, and where $\varepsilon:\home AB\otimes A\xlra{\varepsilon} B$ is the counit of the adjunction.
\end{lemma}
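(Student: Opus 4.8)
The plan is to read ``natural with respect to $I$'' as the statement that, for every scalar $s\colon I\to I$, the naturality square
with horizontal arrows $\tau$ and vertical arrows $\Id_{\home AB}\otimes s$ on the source and $\home A{\Id_B\otimes s}$ on the target commutes; equivalently, that the family $\tau_X=\varphi_{A,\home AB\otimes X,B\otimes X}(\varepsilon\otimes\Id_X)$ is natural when $X$ ranges over endomorphisms of the unit object. Here $\varepsilon\colon A\otimes\home AB\to B$ is the counit (evaluation) of the adjunction $(A\otimes-)\dashv\home A-$ underlying the bijection $\varphi_{X,Y,Z}$, which is consistent with the convention $X\otimes Y$ in the stated hom-set isomorphism. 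The whole argument will rest on the naturality of $\varphi$ in its domain and codomain slots, together with the bifunctoriality of $\otimes$.

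First I would fix $A$, $B$, and a scalar $s\colon I\to I$, and rewrite the target-side composite $\home A{\Id_B\otimes s}\circ\tau$ using naturality of $\varphi$ in the codomain variable $Z$, i.e. $\home Ag\circ\varphi(h)=\varphi(g\circ h)$ taken with $g=\Id_B\otimes s$. This turns the composite into $\varphi\bigl((\Id_B\otimes s)\circ(\varepsilon\otimes\Id_I)\bigr)$, and bifunctoriality of $\otimes$ collapses the inner map to $(\Id_B\circ\varepsilon)\otimes(s\circ\Id_I)=\varepsilon\otimes s$, so the target-side route equals $\varphi(\varepsilon\otimes s)$.

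Symmetrically, I would treat the source-side composite $\tau\circ(\Id_{\home AB}\otimes s)$ by naturality of $\varphi$ in the domain variable $Y$, i.e. $\varphi(h)\circ k=\varphi\bigl(h\circ(\Id_A\otimes k)\bigr)$ taken with $k=\Id_{\home AB}\otimes s$. This gives $\varphi\bigl((\varepsilon\otimes\Id_I)\circ(\Id_A\otimes(\Id_{\home AB}\otimes s))\bigr)$, and once more bifunctoriality (with the tensor slots regrouped) reduces the inner map to $\varepsilon\otimes s$. Hence both sides of the square equal $\varphi(\varepsilon\otimes s)$, the square commutes, and $\tau$ is natural with respect to $I$.

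The one genuinely delicate point is the coherence bookkeeping. The evaluation map $\varepsilon\otimes\Id_I$ lives naturally on $(A\otimes\home AB)\otimes I$, whereas the domain $X\otimes Y$ with $Y=\home AB\otimes I$ is $A\otimes(\home AB\otimes I)$, so the associator $\alpha_{A,\home AB,I}$ must be inserted and then shown to be inert under the two applications of bifunctoriality. I expect Mac Lane coherence to guarantee that the relevant reassociations commute past all maps involved without affecting the outcome; making these passages explicit, rather than silently identifying the reassociated objects, is where the care is needed, and it is the main obstacle to a fully rigorous write-up.
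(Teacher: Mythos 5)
Your proof is correct and follows essentially the same route as the paper's: both legs of the naturality square are reduced, via the adjunction bijection $\varphi$ and bifunctoriality of $\otimes$, to the common transpose $\varphi(\varepsilon\otimes s)$. The only cosmetic difference is that where you invoke naturality of $\varphi$ in the domain variable $Y$ to handle $\tau\circ(\Id\otimes s)$, the paper unfolds that same fact by hand, computing $\varphi^{-1}(\tau\circ(\Id\otimes s))=\varepsilon\otimes s$ from the explicit counit formula for $\varphi^{-1}$ together with the identity $\varepsilon\otimes\Id=\varphi^{-1}(\tau)$.
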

\begin{proof}
	Given in~\ref{app:tauNatural}.
\end{proof}

\begin{lemma}
  \label{lem:HomeScalEqScal}
   Let $s$ be any map $I\xlra{s} I$.
   Then, for any $A$ and $B$, we have
  $\home A{\hat s_B}=\hat s_{\home AB}$.
\end{lemma}
\begin{proof}
  Consequence of the commutation of the following diagram.
  \[
    \begin{tikzcd}[labels=description,column sep=2cm,row sep=2cm,
	execute at end picture={
	  \path 
	  (\tikzcdmatrixname-1-1) -- (\tikzcdmatrixname-2-1) coordinate[pos=0.5](aux1)
	  (\tikzcdmatrixname-2-1) -- (\tikzcdmatrixname-3-1) coordinate[pos=0.5](aux2)
	  (\tikzcdmatrixname-1-2) -- (\tikzcdmatrixname-2-2) coordinate[pos=0.5](aux3)
	  (\tikzcdmatrixname-2-2) -- (\tikzcdmatrixname-3-2) coordinate[pos=0.5](aux4)
	  (aux1) -- (\tikzcdmatrixname-1-2) node[pos=0.5,red,sloped]{\small (*)}
	  (aux2) -- (aux3) node[pos=0.5,red,sloped]{\small (Lemma~\ref{lem:tauNatural})}
	  (\tikzcdmatrixname-3-1) -- (aux4) node[pos=0.5,red,sloped]{\small (**)};
	}
    ]
      {\home AB} & {\home A{B\otimes I}} \\
      {\home AB\otimes I} & {\home A{B\otimes I}} \\
      {\home AB\otimes I} & {\home AB}
      \arrow["{\home A\rho}", from=1-1, to=1-2]
      \arrow["{\home A{\Id\otimes s}}", from=1-2, to=2-2]
      \arrow["{\home A{\rho^{-1}}}", from=2-2, to=3-2]
      \arrow["\rho"', from=1-1, to=2-1]
      \arrow["{\Id\otimes s}"', from=2-1, to=3-1]
      \arrow["{\rho^{-1}}"', from=3-1, to=3-2]
      \arrow["\tau", from=2-1, to=1-2,sloped,dashed]
      \arrow["\tau", from=3-1, to=2-2,sloped,dashed]
    \end{tikzcd}
  \]
The commutation of the diagram {\color{red}(*)} is proved by an equivalent diagram, obtained through the adjunction of Lemma~\ref{lem:tauNatural}, taking $X=A$, $Y=\home AB$, and $Z=B\otimes I$. Beware, these are not the same variables taken in the definition of $\tau$.
The resulting diagram is as follows.
  \[
    \begin{tikzcd}[labels=description,column sep=2cm,row sep=1cm,
	execute at end picture={
	  \path 
	  (\tikzcdmatrixname-3-3) -- (\tikzcdmatrixname-1-3) coordinate[pos=0.5](aux)
	  (\tikzcdmatrixname-1-2) -- (\tikzcdmatrixname-2-2) node[pos=0.5,red]{\small (Naturality of $\varepsilon$)}
	  (\tikzcdmatrixname-2-2) -- (\tikzcdmatrixname-3-2) node[pos=0.2,red]{\small (Naturality of $\rho$)}
	  (\tikzcdmatrixname-1-1) -- (\tikzcdmatrixname-3-2) node[pos=0.5,red,sloped,yshift=-.4cm]{\small (Coherence)}
	  (\tikzcdmatrixname-1-1) -- (\tikzcdmatrixname-1-3) node[pos=0.5,red,yshift=4mm]{\small (Def.)}
	  (\tikzcdmatrixname-3-2) -- (\tikzcdmatrixname-3-3) node[pos=0,red,yshift=-4mm]{\small (Def.)}
	  (\tikzcdmatrixname-3-2) -- (\tikzcdmatrixname-1-3) node[pos=0.55,red,sloped,yshift=-4mm]{\small $\varepsilon\otimes\Id = \varphi^{-1}_{A,\home AB\otimes I,B\otimes I}(\tau)$};} ]
      A\otimes\home AB
      \arrow[rr, "{\varphi^{-1}_{A,\home AB,B\otimes I}(\tau\circ\rho)}" description,sloped,
	rounded corners,
	to path={[pos=0.75]
	  -- (\tikztostart.south)
	  |- ([xshift=1cm,yshift=-4.3cm]\tikztotarget.east)\tikztonodes
	|- (\tikztotarget.east)}
      ]
      \arrow[rr, "{\varphi^{-1}_{A,\home AB,B\otimes I}(\home A\rho)}" description,sloped,
	rounded corners,
	to path={[pos=0.75]
	  -- ([yshift=2mm]\tikztostart.north)
	  |- ([yshift=5mm]\tikztotarget.north)\tikztonodes
	|- (\tikztotarget.north)}
      ]
      \ar[ddr,"\Id\otimes\rho",dashed,out=285,in=180,looseness=1.1,sloped]\ar[ddr,"\rho",dashed,sloped]\ar[dr,"\varepsilon",dashed,sloped]\ar[r,dashed,"\Id\otimes\home A\rho"] & A\otimes \home A{B\otimes I}\ar[r,"\varepsilon",dashed] & B\otimes I\\
      & B\ar[ur,"\rho",dashed,sloped] \\
      & A\otimes\home AB\otimes I\ar[r,"\Id\otimes\tau",dashed]\ar[uur,"\varepsilon\otimes\Id",dashed,sloped] & A\otimes\home A{B\otimes I}\ar[uu,"\varepsilon",dashed]
    \end{tikzcd}
  \]

  The commutation of the diagram {\color{red}(**)} is also proved by an
  equivalent diagram, obtained through the adjunction of
  Lemma~\ref{lem:tauNatural}, taking this time the same variables as in
  the definition of $\tau$: $X=A$, $Y=\home AB\otimes I$, and $Z=B\otimes I$.
  \[
    \begin{tikzcd}[labels=description,column sep=2cm,row sep=1.5cm,
	execute at end picture={
	  \path 
	  (\tikzcdmatrixname-1-1) -- (\tikzcdmatrixname-1-3) coordinate[pos=0.5](aux1)
	  (\tikzcdmatrixname-1-3) -- (\tikzcdmatrixname-3-3) coordinate[pos=0.3](aux2)
	  (\tikzcdmatrixname-3-1) -- (\tikzcdmatrixname-3-3) coordinate[pos=0.5](aux3)
	  (\tikzcdmatrixname-1-1) -- (\tikzcdmatrixname-3-1) node[pos=0.5,red,sloped,yshift=.7cm]{\small (Coherence)}
	  (aux1) -- (\tikzcdmatrixname-2-2) node[pos=0.5,red]{\small ($\varepsilon\otimes\Id=\varphi^{-1}(\tau)$)}
	  (aux2) -- (\tikzcdmatrixname-2-2) node[pos=0.5,red]{\small (Naturality of $\varepsilon$)}
	  (aux3) -- (\tikzcdmatrixname-2-2) node[pos=0.5,red]{\small (Naturality of $\rho^{-1}$)}
	  (\tikzcdmatrixname-1-1) -- (\tikzcdmatrixname-1-3) node[pos=0.5,red,yshift=4mm]{\small (Def.)}
	  (\tikzcdmatrixname-3-1) -- (\tikzcdmatrixname-3-3) node[pos=0.5,red,yshift=-4mm]{\small (Def.)};
	}
      ]
      {A\otimes\home AB\otimes I}
      \arrow[rrdd, "{\varphi^{-1}\left(\home A{\rho^{-1}}\circ\tau\right)}" description,sloped,
	    rounded corners,
	    to path={[pos=0.25]
	      -- ([yshift=5mm]\tikztostart.north)
	      -| ([xshift=1.7cm]\tikztotarget.east)\tikztonodes
	    -- (\tikztotarget)}
	  ]
	  \arrow[rrdd, "{\varphi^{-1}(\rho^{-1})}" description,sloped,
	    rounded corners,
	    to path={[pos=0.75]
	      -- ([xshift=-3mm]\tikztostart.west)
	      |- ([yshift=-7mm]\tikztotarget.south)\tikztonodes
	    -- (\tikztotarget)}
	  ]
      && {A\otimes\home A{B\otimes I}} \\
      & {B\otimes I} & {A\otimes\home AB} \\
      {A\otimes\home AB} && B
      \arrow["\Id\otimes\tau", from=1-1, to=1-3]
      \arrow["{\Id\otimes\home A{\rho^{-1}}}", from=1-3, to=2-3]
      \arrow["\varepsilon", from=2-3, to=3-3]
      \arrow["{\Id\otimes\rho^{-1}}"', from=1-1, to=3-1]
      \arrow["\varepsilon"', from=3-1, to=3-3]
      \arrow["{\varepsilon\otimes \Id}", dashed, from=1-1, to=2-2,sloped]
      \arrow["{\rho^{-1}}", dashed, from=2-2, to=3-3,sloped]
      \arrow["{\rho^{-1}}", bend left=50, dashed, from=1-1, to=3-1]
      \arrow["\varepsilon", dashed, from=1-3, to=2-2,sloped]
    \end{tikzcd}
  \]
\end{proof}

\begin{remark}
  \label{rmk:NoCompactNeeded}
  In order to prove the Lemma~\ref{lem:HomeScalEqScal} we needed the natural transformation $\tau$ coming from the adjunction given by the fact that the category is assumed to be closed. Notice, however, that the property is almost trivial in the case of compact categories.
  
  Also, if $F=\home A{-}$ were a monoidal functor, the property could have been easily proven by the following diagram.
      \[
	\begin{tikzcd}[labels=description,column sep=1.8cm,row sep=1cm,
	    execute at end picture={
	      \path 
	      (\tikzcdmatrixname-1-2) -- (\tikzcdmatrixname-2-2) coordinate[pos=0.5](aux2)
	      (\tikzcdmatrixname-1-3) -- (\tikzcdmatrixname-2-3) coordinate[pos=0.5](aux3)
	      (\tikzcdmatrixname-1-1) -- (\tikzcdmatrixname-2-2) node[midway,red]{\small (Monoidality axiom)}
	      (aux2) -- (aux3) node[midway,red]{\small (Naturality of $m$)}
	      (\tikzcdmatrixname-2-2) -- (\tikzcdmatrixname-3-2) coordinate[pos=0.5](aux4)
	      (\tikzcdmatrixname-2-3) -- (\tikzcdmatrixname-3-3) coordinate[pos=0.5](aux5)
	      (aux4) -- (aux5) node[midway,red]{\small (Naturality of $m_I$)}
	      (\tikzcdmatrixname-1-4) -- (\tikzcdmatrixname-2-3) node[midway,red]{\small (Monoidality axiom)};
	    }
	  ]
	  F(A)\ar[r,"F(\rho)"]\ar[rdd,"\rho",out=270,in=180,sloped] & F(A\otimes I)\ar[r,"F(\Id\otimes s)"] & F(A\otimes I)\ar[r,"F(\rho^{-1})"] & F(A) \\
	  & F(A)\otimes F(I)\ar[u,"m",dashed]\ar[r,"\Id\otimes F(s)"] & F(A)\otimes F(I)\ar[u,"m",dashed]\\
	  & F(A)\otimes I\ar[r,"\Id\otimes s"]\ar[u,"\Id\otimes m_I",dashed] & F(A)\otimes I\ar[ruu,"\rho^{-1}\otimes\Id",sloped,out=0,in=270]\ar[u,"\Id\otimes m_I",dashed]
	\end{tikzcd}
      \]
\end{remark}

\subsection{The map \texorpdfstring{$\sumwe pq$}{weighted codiagonal}}\label{sec:nablapq}
The map $\sumwe pq$ is the key map mentioned in the introduction. Its related map $\Delta_{pq}$ is not needed for the interpretation, instead, we need the usual diagonal map $\Delta$, since $\sumwe pq$, for some particular $(p,q)$, are left inverses of $\Delta$, as shown by Lemma~\ref{lem:nablaPQInvDelta}.

\begin{lemma}[Weighted codiagonal]
  \label{lem:sumwe-nat}
  Let $I\xlra p I$ and $I\xlra q I$ be two maps.
  The map $A\oplus A\xlra{\sumwe{p}{q}} A$ defined by $\sumwe{p}{q} = \coproducto{\hat p}{\hat q}$
  is a natural transformation.
\end{lemma}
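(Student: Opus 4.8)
The plan is to verify naturality directly, reducing everything to the universal property of the biproduct. Here $\sumwe pq$ is to be read as a transformation from the endofunctor $A\mapsto A\oplus A$ (the diagonal followed by the biproduct) to the identity functor, so for each morphism $A\xlra{f} B$ I must establish the commutativity of
\[
\begin{tikzcd}[labels=description]
A \oplus A \ar[r, "\sumwe pq"] \ar[d, "f \oplus f"'] & A \ar[d, "f"] \\
B \oplus B \ar[r, "\sumwe pq"] & B
\end{tikzcd}
\]
that is, the equation $f \circ \coproducto{\hat p_A}{\hat q_A} = \coproducto{\hat p_B}{\hat q_B} \circ (f \oplus f)$ of maps $A\oplus A\to B$.

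First I would use that a map out of the coproduct $A\oplus A$ is determined by its precompositions with the injections $i_1$ and $i_2$, so it suffices to check the equation after composing with each $i_k$. Precomposing the left-hand side with $i_k$ collapses the copairing, giving $f\circ\hat p_A$ for $k=1$ and $f\circ\hat q_A$ for $k=2$. For the right-hand side I would invoke the biproduct identity $(f\oplus f)\circ i_k = i_k\circ f$ together with $\coproducto{\hat p_B}{\hat q_B}\circ i_k=\hat p_B$ (resp.\ $\hat q_B$), obtaining $\hat p_B\circ f$ and $\hat q_B\circ f$ respectively. The two resulting component equations, $f\circ\hat p_A=\hat p_B\circ f$ and $f\circ\hat q_A=\hat q_B\circ f$, are exactly the naturality of the scalar maps $\hat p$ and $\hat q$ already proved in Lemma~\ref{lem:s-nat}; invoking the universal property of the coproduct then recovers the equality of the full maps.

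This argument is essentially bookkeeping, so there is no genuine obstacle: the only point needing care is the interaction between the functorial action $f\oplus f$ and the injections, namely $(f\oplus f)\circ i_k=i_k\circ f$, which I would justify from the biproduct structure underlying Theorem~\ref{lem:monoid}. All the substantive content—that a scalar commutes with an arbitrary morphism—has already been absorbed into Lemma~\ref{lem:s-nat}, so no further manipulation of the coherence isomorphisms $\rho$ is required at this stage.
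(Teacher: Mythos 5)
Your proposal is correct and follows essentially the same route as the paper's proof: both reduce the naturality square to its two components via the universal property of the coproduct (the paper does this by the equational copairing laws, you by precomposing with the injections $i_1$, $i_2$), and both discharge the resulting component equations $f\circ\hat p_A=\hat p_B\circ f$ and $f\circ\hat q_A=\hat q_B\circ f$ by Lemma~\ref{lem:s-nat}, with the identity $f\oplus f=\coproducto{i_1\circ f}{i_2\circ f}$ (equivalently $(f\oplus f)\circ i_k=i_k\circ f$) handling the bookkeeping.
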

\begin{proof}
  Given in~\ref{app:sumwe-nat}.
\end{proof}

\begin{lemma}
	\label{lem:PropsF}
	Let $I\xlra p I$ and $I\xlra q I$ be two maps
	and let $F$ be a semiadditive functor such that $F(\hat p)=\hat p$ and $F(\hat q)=\hat q$. Then,
	\[
	\sumwe pq\circ\pair{F(\pi_1)}{F(\pi_2)} = F(\sumwe pq).
	\]
\end{lemma}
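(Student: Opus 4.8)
The plan is to prove the equality of the two maps $F(A\oplus A)\to F(A)$ by exploiting the biproduct structure, which $F$ preserves because it is semiadditive. First I would record the two decompositions that follow from the semiadditive structure of Theorem~\ref{lem:monoid}: on the biproduct $F(A)\oplus F(A)$ the copairing unfolds as $\sumwe pq=\coproducto{\hat p}{\hat q}=\hat p\circ\pi_1+\hat q\circ\pi_2$, while the pairing into it unfolds as $\pair{F(\pi_1)}{F(\pi_2)}=i_1\circ F(\pi_1)+i_2\circ F(\pi_2)$. Here the inner $\pi_1,\pi_2,i_1,i_2$ are the projections and injections of $F(A)\oplus F(A)$, and the $\hat p,\hat q$ are taken at the object $F(A)$.

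Next I would compose these two expressions and expand by bilinearity of composition (Definition~\ref{def:semiadditive}), producing four composite terms. Using the biproduct equations $\pi_j\circ i_k=\Id$ when $j=k$ and $\pi_j\circ i_k=0$ otherwise, together with the absorbing behaviour of the zero maps, the two mixed terms vanish and the composite collapses to
\[
\sumwe pq\circ\pair{F(\pi_1)}{F(\pi_2)}=\hat p\circ F(\pi_1)+\hat q\circ F(\pi_2),
\]
an identity between maps $F(A\oplus A)\to F(A)$.

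It then remains to recognise the right-hand side as $F(\sumwe pq)$, and this is exactly where the hypothesis $F(\hat s)=\hat s$ enters. Reading it at the scalars $p$ and $q$ gives $\hat p_{F(A)}=F(\hat p)$ and $\hat q_{F(A)}=F(\hat q)$, so by functoriality $\hat p\circ F(\pi_1)=F(\hat p)\circ F(\pi_1)=F(\hat p\circ\pi_1)$, and likewise for $q$. Finally, since $F$ is semiadditive it commutes with the sum of maps, whence $F(\hat p\circ\pi_1)+F(\hat q\circ\pi_2)=F(\hat p\circ\pi_1+\hat q\circ\pi_2)=F(\coproducto{\hat p}{\hat q})=F(\sumwe pq)$, as required.

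The genuinely delicate point is bookkeeping rather than depth: one must keep straight that the two occurrences of $\hat p$ live on different objects (namely $\hat p_{F(A)}$ on the codomain side, versus $F(\hat p)$ after applying $F$), so that the step identifying them is precisely the content of the assumption $F(\hat s)=\hat s$ and not a triviality. The only structural facts used are that $F$ preserves zero maps, identities, and sums, i.e. semiadditivity; this also guarantees, via $F(i_1\circ\pi_1+i_2\circ\pi_2)=\Id$, that no extra coherence between $F(A\oplus A)$ and $F(A)\oplus F(A)$ need be invoked beyond functoriality.
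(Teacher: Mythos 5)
Your proof is correct, but it follows a genuinely different route from the paper's. The paper does not attack the stated equation directly: using that $\pair{F(\pi_1)}{F(\pi_2)}$ and $\coproducto{F(i_1)}{F(i_2)}$ are mutually inverse (Lemma~\ref{lem:distrib}), it passes to the equivalent identity $F(\sumwe pq)\circ\coproducto{F(i_1)}{F(i_2)}=\sumwe pq$, which it then proves by a short chain using only the coproduct laws, functoriality ($F(\coproducto{\hat p}{\hat q})\circ F(i_k)=F(\coproducto{\hat p}{\hat q}\circ i_k)$), and the hypothesis in the form $F(\hat p)=\hat p$, $F(\hat q)=\hat q$. You instead stay with the original equation and compute in the additive calculus of Theorem~\ref{lem:monoid}: you expand $\sumwe pq=\hat p\circ\pi_1+\hat q\circ\pi_2$ and $\pair{F(\pi_1)}{F(\pi_2)}=i_1\circ F(\pi_1)+i_2\circ F(\pi_2)$, kill the mixed terms by bilinearity and the biproduct equations, and reassemble $F(\sumwe pq)$ using the scalar hypothesis together with preservation of sums by $F$. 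Both arguments consume the same two ingredients, semiadditivity of $F$ and $F(\hat s)=\hat s$, but in different places: the paper buries the additive structure inside Lemma~\ref{lem:distrib} and keeps the visible computation at the level of (co)pairings, buying brevity at the cost of the ``show equivalently'' indirection; your version is self-contained and makes explicit exactly where sum-preservation by $F$ enters, at the cost of the four-term bookkeeping and of invoking the standard biproduct identities $\coproducto fg=f\circ\pi_1+g\circ\pi_2$ and $\pair fg=i_1\circ f+i_2\circ g$ (which do follow from Theorem~\ref{lem:monoid}, and are implicitly used in the paper's own proof of Lemma~\ref{lem:distrib}). Your closing observation is also accurate: your route never needs the isomorphism $F(A\oplus A)\cong F(A)\oplus F(A)$ itself, whereas the paper's reduction step is built on it.
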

\begin{proof}
      We must show that $F(\sumwe pq)=\sumwe pq\circ \pair{F(\pi_1)}{F(\pi_2)} = F(\sumwe pq)$. We show equivalently (cf.~\ref{app:distrib}), that
      \(
		F(\sumwe pq)\circ \coproducto{F(i_1)}{F(i_2)} = \sumwe pq
		\).
      \begin{align*}
	F(\sumwe pq)\circ \coproducto{F(i_1)}{F(i_2)}
	& = F(\coproducto{\hat p}{\hat q})\circ \coproducto{F(i_1)}{F(i_2)}\\
	& = \coproducto{F(\coproducto{\hat p}{\hat q})\circ F(i_1)}{F(\coproducto{\hat p}{\hat q})\circ F(i_2)}\\
	& = \coproducto{F(\coproducto{\hat p}{\hat q}\circ i_1)}{F(\coproducto{\hat p}{\hat q}\circ i_2)}\\
	& = \coproducto{F(\hat p)}{F(\hat q)}\\
	& = \coproducto{\hat p}{\hat q}\\
	& = \sumwe pq
	\qedhere
      \end{align*}
\end{proof}

\begin{corollary}
	\label{cor:PropsF}
	For any $I\xlra p I$ and $I\xlra q I$, we have
	\begin{enumerate}
    \item\label{ite:d} $\sumwe pq\circ d = \sumwe pq\otimes B$. 
	\item\label{ite:g} $\sumwe pq\circ\gamma = \home A{\sumwe pq}$.
    \item\label{ite:cord} $\nabla\circ d = \nabla\otimes B$.
	\item\label{ite:corg} $\nabla\circ\gamma = \home A{\nabla}$.
	\end{enumerate}
	Where $d$ and $\gamma$ are the distribution maps of Corollary~\ref{cor:maps}.
\end{corollary}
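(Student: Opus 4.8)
The plan is to obtain all four identities as instances of Lemma~\ref{lem:PropsF}, applied to the two endofunctors $F=-\otimes B$ and $F=\home A{-}$. Both preserve the biproduct structure of $\SM$ (the first is left adjoint to $\home B{-}$, the second right adjoint to $-\otimes A$, and left/right adjoints preserve finite coproducts/products, which coincide with biproducts), and hence both are semiadditive, the additive structure on homs being uniquely determined by the biproduct (Theorem~\ref{lem:monoid}). This is precisely what already makes Corollary~\ref{cor:maps} applicable, and the comparison isomorphisms produced there are exactly $d$ and $\gamma$. Consequently the only hypothesis of Lemma~\ref{lem:PropsF} still to be checked in each case is $F(\hat s)=\hat s$.

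For item~\ref{ite:d} I would take $F=-\otimes B$. Then the condition $F(\hat s)=\hat s$ is exactly Lemma~\ref{lem:PropsS}(\ref{ite:PropsSoTimes}), namely $\hat s_{A\otimes B}=\hat s_A\otimes\Id_B$. Instantiating Lemma~\ref{lem:PropsF} at $A$, one reads off $\pair{F(\pi_1)}{F(\pi_2)}=\pair{\pi_1\otimes\Id_B}{\pi_2\otimes\Id_B}=d$ and $F(\sumwe pq)=\sumwe pq\otimes B$, so its conclusion $\sumwe pq\circ\pair{F(\pi_1)}{F(\pi_2)}=F(\sumwe pq)$ becomes $\sumwe pq\circ d=\sumwe pq\otimes B$.

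For item~\ref{ite:g} I would take $F=\home A{-}$. Here the condition $F(\hat s)=\hat s$ is precisely the content of Lemma~\ref{lem:HomeScalEqScal}, $\home A{\hat s_B}=\hat s_{\home AB}$ — the single genuinely delicate input, already dispatched. Now $\pair{F(\pi_1)}{F(\pi_2)}=\pair{\home A{\pi_1}}{\home A{\pi_2}}=\gamma$ and $F(\sumwe pq)=\home A{\sumwe pq}$, so Lemma~\ref{lem:PropsF} yields $\sumwe pq\circ\gamma=\home A{\sumwe pq}$.

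Items~\ref{ite:cord} and~\ref{ite:corg} follow by rerunning the identical argument with $\nabla=\coproducto{\Id}{\Id}$ in place of $\sumwe pq=\coproducto{\hat p}{\hat q}$: the computation proving Lemma~\ref{lem:PropsF} goes through verbatim, but since $\nabla$ is assembled from identities and $F(\Id)=\Id$ for every functor, the scalar hypothesis $F(\hat s)=\hat s$ is not even needed. This gives $\nabla\circ\pair{F(\pi_1)}{F(\pi_2)}=F(\nabla)$ for any semiadditive $F$, and specializing to $-\otimes B$ and $\home A{-}$ produces $\nabla\circ d=\nabla\otimes B$ and $\nabla\circ\gamma=\home A{\nabla}$. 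The proof is therefore essentially bookkeeping; the only point that demands care is correctly matching $\pair{F(\pi_1)}{F(\pi_2)}$ with the distribution maps of Corollary~\ref{cor:maps} and confirming the identifications $F(\sumwe pq)=\sumwe pq\otimes B$ and $\home A{\sumwe pq}$, all of the real difficulty having been absorbed into Lemmas~\ref{lem:PropsS}, \ref{lem:HomeScalEqScal}, and~\ref{lem:PropsF}.
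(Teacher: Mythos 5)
Your proposal is correct and takes essentially the same route as the paper: items~\ref{ite:d} and~\ref{ite:g} are obtained by applying Lemma~\ref{lem:PropsF} to the semiadditive functors $-\otimes B$ and $\home A{-}$, with the scalar conditions $F(\hat s)=\hat s$ supplied exactly by Lemma~\ref{lem:PropsS}.\ref{ite:PropsSoTimes} and Lemma~\ref{lem:HomeScalEqScal}, and with $\pair{F(\pi_1)}{F(\pi_2)}$ identified with $d$ and $\gamma$. For items~\ref{ite:cord} and~\ref{ite:corg} the paper simply specializes items~\ref{ite:d} and~\ref{ite:g} at $p=q=\Id_I$, using $\nabla=\coproducto{\Id}{\Id}=\coproducto{\hat{\Id}}{\hat{\Id}}=\sumwe{\Id}{\Id}$; this is the same observation as your re-run of the Lemma~\ref{lem:PropsF} computation without the scalar hypothesis, since $\hat{\Id}=\Id$ renders that hypothesis vacuous for $\nabla$.
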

\begin{proof}
	~
	\begin{itemize}
		\item 
	Items~\ref{ite:d} and~\ref{ite:g}:
	It is straightforward to check that both $-\otimes B$ and  $\home A{-}$ are semiadditive functors.
	Thus, by Lemmas~\ref{lem:PropsS}.\ref{ite:PropsSoTimes} and~\ref{lem:HomeScalEqScal}, these functors meet the conditions of Lemma~\ref{lem:PropsF}, which concludes the proof.

	\item
	Items~\ref{ite:cord} and~\ref{ite:corg}:
	These are particular cases of Items~\ref{ite:d} and~\ref{ite:g}, respectively, since $\nabla = \coproducto{\Id}{\Id} = \coproducto{\hat{\Id}}{\hat{\Id}} = \sumwe{\Id}{\Id}$.
	\qedhere
	\end{itemize}
\end{proof}

Analogously to Lemma~\ref{lem:PropsF}, we can state and prove the following lemma for $\Delta$, with a similar corollary to Corollary~\ref{cor:PropsF}. Remark that in Lemma~\ref{lem:PropsFinv} we do not need the hypothesis $F(\hat s)=\hat s$, since we only need to make use of the trivial property $F(\Id)=\Id$.
\begin{lemma}
	\label{lem:PropsFinv}
	Let $F$ be a semiadditive functor. Then,
	\[
	\coproducto{F(i_1)}{F(i_2)}\circ\Delta = F(\Delta).
	\]
\end{lemma}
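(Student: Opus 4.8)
The plan is to follow the pattern of the proof of Lemma~\ref{lem:PropsF}: use the isomorphism of Lemma~\ref{lem:distrib} to rewrite the claimed equation as an equivalent one that dissolves under plain functoriality. Concretely, Lemma~\ref{lem:distrib} makes $\coproducto{F(i_1)}{F(i_2)}$ and $\pair{F(\pi_1)}{F(\pi_2)}$ a pair of mutually inverse isomorphisms $F(A)\oplus F(A)\cong F(A\oplus A)$ (cf.~\ref{app:distrib}). Post-composing both sides of $\coproducto{F(i_1)}{F(i_2)}\circ\Delta = F(\Delta)$ with the inverse $\pair{F(\pi_1)}{F(\pi_2)}$, and using $\pair{F(\pi_1)}{F(\pi_2)}\circ\coproducto{F(i_1)}{F(i_2)}=\Id$ to clear the left-hand side, reduces the goal to the equivalent equation
\[
\pair{F(\pi_1)}{F(\pi_2)}\circ F(\Delta) = \Delta.
\]
This rewriting is the one genuine step: it converts a copairing applied after $\Delta$ (where no universal property applies directly) into a pairing applied after $F(\Delta)$, a shape on which the universal property of the product does apply.

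I would then compute the left-hand side in three routine moves. First, the universal property of the product distributes the pairing over the common post-factor $F(\Delta)$, giving $\pair{F(\pi_1)\circ F(\Delta)}{F(\pi_2)\circ F(\Delta)}$. Second, functoriality of $F$ contracts each component to $F(\pi_1\circ\Delta)$ and $F(\pi_2\circ\Delta)$. Third, since $\Delta=\pair{\Id}{\Id}$ obeys $\pi_1\circ\Delta=\Id$ and $\pi_2\circ\Delta=\Id$, both components become $F(\Id)$, and $F(\Id)=\Id$ collapses the expression to $\pair{\Id}{\Id}=\Delta$, as wanted.

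I do not expect any real obstacle here: the argument is purely formal, and the only nontrivial input is the isomorphism of Lemma~\ref{lem:distrib}, which itself rests on semiadditivity of $F$. The point worth flagging---mirroring the remark before the statement---is that, unlike Lemma~\ref{lem:PropsF} whose final step traded $\coproducto{F(\hat p)}{F(\hat q)}$ for $\coproducto{\hat p}{\hat q}$ via the hypothesis $F(\hat s)=\hat s$, the corresponding step here needs only the trivial identity $F(\Id)=\Id$; hence no scalar-preservation assumption on $F$ is required. As an independent sanity check one may argue directly in the semiadditive structure: writing $\Delta=i_1+i_2$, bilinearity of composition gives $\coproducto{F(i_1)}{F(i_2)}\circ(i_1+i_2)=F(i_1)+F(i_2)$, while semiadditivity of $F$ gives $F(\Delta)=F(i_1+i_2)=F(i_1)+F(i_2)$, so the two sides agree.
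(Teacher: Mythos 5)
Your proposal is correct and follows essentially the same route as the paper's proof: reduce via the isomorphism of Lemma~\ref{lem:distrib} to the equivalent equation $\pair{F(\pi_1)}{F(\pi_2)}\circ F(\Delta)=\Delta$, then conclude by the universal property of the product, functoriality of $F$, and $F(\Id)=\Id$. Your closing sanity check via the semiadditive structure ($\Delta=i_1+i_2$, bilinearity, and $F$ preserving sums) is a valid independent argument the paper does not give, but it is supplementary rather than a different proof strategy.
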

\begin{proof}
      We must show that $
	 F(\Delta) = \coproducto{F(i_1)}{F(i_2)}\circ\Delta	 
	  $. We show equivalently (cf.~\ref{app:distrib}), that
      \(
	 \pair{F(\pi_1)}{F(\pi_2)}\circ F(\Delta) = \Delta	 
		\).
      \begin{align*}
			\pair{F(\pi_1)}{F(\pi_2)}\circ F(\Delta)
			&=\pair{F(\pi_1)}{F(\pi_2)}\circ F(\pair{\Id}{\Id})\\
			&=\pair{F(\pi_1)\circ F(\pair{\Id}{\Id})}{F(\pi_2)\circ F(\pair{\Id}{\Id})}\\
			&=\pair{F(\pi_1\pair{\Id}{\Id})}{F(\pi_2\pair{\Id}{\Id})}\\
			&=\pair{F({\Id})}{F({\Id})}\\
			&=\pair{\Id}{\Id}\\
			&=\Delta
			\qedhere
		\end{align*}
\end{proof}

\begin{corollary}
  \label{cor:PropsFinv}
  ~
  \begin{enumerate}
	\item $d^{-1}\circ\Delta = \Delta\otimes\Id$.
	\item $\gamma^{-1}\circ\Delta = \home A{\Delta}$.
  \end{enumerate}
	Where $d$ and $\gamma$ are the distribution maps of Corollary~\ref{cor:maps}.
\end{corollary}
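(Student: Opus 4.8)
The plan is to proceed exactly as in the proof of Corollary~\ref{cor:PropsF}, but replacing the use of Lemma~\ref{lem:PropsF} by its analogue Lemma~\ref{lem:PropsFinv}. The key observation is that, reading off the definitions from Corollary~\ref{cor:maps}, the two distribution maps $d^{-1}$ and $\gamma^{-1}$ are precisely of the form $\coproducto{F(i_1)}{F(i_2)}$ for the functors $F = -\otimes C$ and $F = \home A{-}$ respectively: indeed $d^{-1} = \coproducto{i_1\otimes\Id}{i_2\otimes\Id}$ and $\gamma^{-1} = \coproducto{\home A{i_1}}{\home A{i_2}}$. Once this identification is made, each item becomes a single instance of Lemma~\ref{lem:PropsFinv}.

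For the first item, I would apply Lemma~\ref{lem:PropsFinv} to the functor $F = -\otimes C$, which is semiadditive (as already noted in the proof of Corollary~\ref{cor:PropsF}). Since $F(i_1) = i_1\otimes\Id$, $F(i_2) = i_2\otimes\Id$, and $F(\Delta) = \Delta\otimes\Id$, the lemma yields $\coproducto{i_1\otimes\Id}{i_2\otimes\Id}\circ\Delta = \Delta\otimes\Id$, which is exactly $d^{-1}\circ\Delta = \Delta\otimes\Id$. For the second item, applying the same lemma to the semiadditive functor $F = \home A{-}$, for which $F(i_j) = \home A{i_j}$ and $F(\Delta) = \home A{\Delta}$, gives $\coproducto{\home A{i_1}}{\home A{i_2}}\circ\Delta = \home A{\Delta}$, that is $\gamma^{-1}\circ\Delta = \home A{\Delta}$.

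I do not expect any genuine obstacle here. In contrast with Corollary~\ref{cor:PropsF}, whose second item relied on the delicate identity $\home A{\hat s} = \hat s$ of Lemma~\ref{lem:HomeScalEqScal}, Lemma~\ref{lem:PropsFinv} carries no hypothesis of the form $F(\hat s) = \hat s$; it only invokes the trivial functoriality fact $F(\Id) = \Id$. Consequently the only point requiring attention is the bookkeeping that matches $d^{-1}$ and $\gamma^{-1}$ with the codiagonal-style maps $\coproducto{F(i_1)}{F(i_2)}$, and this is immediate from the explicit descriptions in Corollary~\ref{cor:maps}.
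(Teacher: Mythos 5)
Your proof is correct and follows exactly the paper's own route: the paper likewise observes that $-\otimes B$ and $\home A{-}$ are semiadditive functors and concludes by Lemma~\ref{lem:PropsFinv}, with your write-up merely making explicit the identification of $d^{-1}$ and $\gamma^{-1}$ with $\coproducto{F(i_1)}{F(i_2)}$ and the values $F(\Delta)$. Your closing remark that no hypothesis of the form $F(\hat s)=\hat s$ is needed also matches the paper's own comment preceding Lemma~\ref{lem:PropsFinv}.
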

\begin{proof} 
	It is straightforward to check that both $-\otimes B$ and  $\home A{-}$ are semiadditive functors.
	Thus, we conclude by Lemma~\ref{lem:PropsFinv}.
\end{proof}

The usual extension of $\Delta$ and $\nabla$ to more general objects is also valid for $\sumwe pq$. The next lemma shows this, for $\Delta$ and $\sumwe pq$, which are the only cases we need.

\begin{lemma}
  \label{lem:PropsSumwe}
  \label{lem:DeltaSigma}
	For any $I\xlra p I$ and $I\xlra q I$, we have
  \begin{enumerate}
	\item $(\sumwe pq\oplus\sumwe pq)\circ(\Id\oplus\sigma\oplus\Id) = \sumwe pq$.
	\item $(\Id\oplus\sigma\oplus\Id)\circ(\Delta\oplus\Delta)=\Delta$.
  \end{enumerate}
\end{lemma}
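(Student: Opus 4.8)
The plan is to exploit the fact that, in a category with biproducts, the object $(A\oplus B)\oplus(A\oplus B)$ is simultaneously a product and a coproduct of its four atomic summands $A,B,A,B$. Hence a map out of it is determined by precomposition with the four injections, and a map into it by postcomposition with the four projections. Throughout I suppress the associativity isomorphisms of the biproduct (this is licensed by coherence and matches the block notation used throughout the paper); concretely one works with the four-fold biproduct $A\oplus B\oplus A\oplus B$ and its atomic injections and projections. Recall also the standard identities $(f\oplus g)\circ i_1=i_1\circ f$ and $\pi_1\circ(f\oplus g)=f\circ\pi_1$, which follow from $f\oplus g=\coproducto{i_1f}{i_2g}=\pair{f\pi_1}{g\pi_2}$.

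For the first identity, both sides are arrows out of the coproduct $(A\oplus B)\oplus(A\oplus B)$, so it suffices to check that they agree after precomposition with each atomic injection. Since $\sumwe pq=\coproducto{\hat p}{\hat q}$ (Lemma~\ref{lem:sumwe-nat}), the right-hand side composed with the two big injections $i_1,i_2$ gives $\hat p_{A\oplus B}$ and $\hat q_{A\oplus B}$; by the third item of Lemma~\ref{lem:PropsS} these decompose as $\hat p_A\oplus\hat p_B$ and $\hat q_A\oplus\hat q_B$, and applying $(f\oplus g)\circ i_1=i_1\circ f$ one reads off the four atomic components $i_A\circ\hat p_A$, $i_B\circ\hat p_B$, $i_A\circ\hat q_A$, $i_B\circ\hat q_B$. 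On the left, the map $\Id\oplus\sigma\oplus\Id$ permutes the atomic injections, sending the two middle ones to swapped positions, after which $\sumwe pq\oplus\sumwe pq=\coproducto{\hat p_A}{\hat q_A}\oplus\coproducto{\hat p_B}{\hat q_B}$ produces exactly the same four components. Matching them settles this case.

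The second identity is the formal dual and is in fact simpler, since it involves no scalars (mirroring the remark preceding Lemma~\ref{lem:PropsFinv}, where the hypothesis $F(\hat s)=\hat s$ was not needed). Here both sides are arrows into the product $(A\oplus B)\oplus(A\oplus B)$, so I postcompose with the four atomic projections. Using $\Delta=\pair{\Id}{\Id}$ (Definition~\ref{defmapaflecha}), hence $\pi_i\circ\Delta=\Id$, together with $\pi_1\circ(f\oplus g)=f\circ\pi_1$, the right-hand side $\Delta$ yields the four projection components $\pi_A,\pi_B,\pi_A,\pi_B$. On the left, $\Delta\oplus\Delta$ duplicates the $A$- and $B$-components while $\Id\oplus\sigma\oplus\Id$ reshuffles them into the same order, giving the identical list of projections; equality follows.

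The only delicate point, in both cases, is the bookkeeping: one must track precisely how $\sigma$ routes the two middle components and make sure the suppressed associators are handled coherently. Beyond this purely routine index-chasing I expect no real obstacle, since the argument is a direct application of the universal property of the biproduct together with the component descriptions of $\sumwe pq$ and $\Delta$ already established.
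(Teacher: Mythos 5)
Your proof is correct, but it is organized differently from the paper's. The paper handles the two items asymmetrically: for item 1 it runs a direct chain of equalities, rewriting $\sumwe pq = \nabla\circ(\hat p\oplus\hat q)$, merging the two copies of $\nabla$, sliding the permutation $\Id\oplus\sigma\oplus\Id$ past the middle scalar factors, and finishing with Lemma~\ref{lem:PropsS} ($\hat s_{A\oplus B}=\hat s_A\oplus\hat s_B$), exactly as you do; for item 2, however, it postcomposes only with the two \emph{outer} projections $\pi_1,\pi_2$ of $(A\oplus B)\oplus(A\oplus B)$ and then proves the auxiliary identity $\pi^1_{A\oplus B,A\oplus B}\circ(\Id\oplus\sigma\oplus\Id)=\pi^1_{A,A}\oplus\pi^1_{B,B}$ by a lengthy explicit calculation unwinding $\sigma=\pair{\pi_2}{\pi_1}$ and the $\pair{\cdot}{\cdot}$/$\coproducto{\cdot}{\cdot}$ descriptions of $\oplus$. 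Your uniform strategy---test both identities on all four \emph{atomic} injections (item 1) or projections (item 2) of the four-fold biproduct---buys you exactly what that long calculation in the paper establishes by hand: once you refine to atomic components, the effect of $\Id\oplus\sigma\oplus\Id$ is a transparent permutation of indices, so the bookkeeping you flag as the only delicate point is indeed all that remains. The trade-off is that the paper's item 1 is a slicker one-line-per-step equational derivation requiring no appeal to universal properties, while your approach is more systematic and treats both items by the same mechanism; both arguments lean on the same two external inputs, namely Lemma~\ref{lem:PropsS} for the scalars and the implicit coherence/associativity conventions for the four-fold biproduct.
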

\begin{proof}
	Given in~\ref{app:PropsSumwe}.
\end{proof}

\subsection{The set \texorpdfstring{$\we$}{of weights}}
\begin{definition}
  $\we = \{(p,q)\in\Hom II\times\Hom II : p + q=\Id_I\}$.
\end{definition}

\needspace{4em}
\begin{example}
  ~
  \begin{enumerate}
    \item In the category $\Rel$, $\we = \{(\emptyset,\Id),(\Id,\emptyset),(\Id,\Id)\}$, where $\emptyset$ is the empty relation.

          Indeed, there are only two elements in $\Hom II$, which are $\emptyset$ and $\Id$, and we can check that for $s_1,s_2\in\{\emptyset,\Id\}$, the equation $\nabla\circ(s_1\oplus s_2)\circ\Delta=\Id$ is non-valid in the case $(\emptyset,\emptyset)$, and it is valid in the other cases.

          First, notice that $I\oplus I=\{T,F\}$ with $T=(\star,0)$ and $F=(\star,1)$. Thus, $I\xlra\Delta I\oplus I$ is the relation $\{(\star,T),(\star,F)\}$. In the same way, $I\oplus I\xlra\nabla I$ is the relation $\{(T,\star),(F,\star)\}$.

          Now, we can analyse the four cases:
          \begin{itemize}
            \item Let $s_1=s_2=\emptyset$. In this case $\nabla\circ(\emptyset\oplus\emptyset)\circ\Delta = \nabla\circ\emptyset\circ\Delta=\emptyset\neq\Id$.

            \item Let $s_1=\emptyset$, $s_2=\Id$. In this case,
                  \begin{align*}
                    &\nabla\circ(\emptyset\oplus\Id)\circ\Delta\\
                     & =\{(T,\star),(F,\star)\}\circ\{(F,F)\}\circ\{(\star,T),(\star,F)\} \\
                     & =\{(T,\star),(F,\star)\}\circ\{(\star,F)\}                         \\
                     & =\{(\star,\star)\}                                                 \\
                     & =\Id.
                  \end{align*}

            \item Let $s_1=\Id$, $s_2=\emptyset$
                  Analogous to the previous case.
            \item Let $s_1=s_2=\Id$. In this case,
                  \begin{align*}
                    &\nabla\circ(\Id\oplus\Id)\circ\Delta\\
                     & =\{(T,\star),(F,\star)\}\circ\{(T,T),(F,F)\}\circ\{(\star,T),(\star,F)\} \\
                     & =\{(T,\star),(F,\star)\}\circ\{(\star,T),(\star,F)\}                     \\
                     & =\{(\star,\star)\}                                                       \\
                     & =\Id.
                  \end{align*}
          \end{itemize}
    \item In the category $\mathbf{SM}_{\mathcal S}$, $\we = \{(p,q)\in\mathcal S^2: p+_{\mathcal S}q = 1_{\mathcal S}\}$.
	  \item In the category $\mathbf{CPM}$, $\we = \{(f,g)\in\mathbb C^2: f,g\in\mathbb R^{\geq 0}$ and $f+g = 1\}$, that is, the stochastic vectors in $\mathbb C^2$.
		\item In the category $\mathcal R^{\Pi}$, $\we = \{(p,q)\in\mathcal R^2:p +_{\mathcal R}q =1_{\mathcal R}\}$.
  \end{enumerate}
\end{example}

The relation between $\Delta$ and $\sumwe pq$ has become evident in Section~\ref{sec:nablapq}. However, the next lemma goes a bit further, showing that in the particular cases of $(p,q)\in\we$, $\sumwe pq$ are the left inverses of $\Delta$.

\begin{lemma}
  \label{lem:nablaPQInvDelta}
  If $(p,q)\in\we$, then 
  $\sumwe pq\circ\Delta=\Id_A$.
\end{lemma}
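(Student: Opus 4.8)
The plan is to collapse the composite $\sumwe pq\circ\Delta$ to the sum $\hat p_A+\hat q_A$ of two scalar maps, and then to show that this sum is $\Id_A$ precisely because $p+q=\Id_I$. Both halves are short once the biproduct calculus of Theorem~\ref{lem:monoid} is in place.

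First I would expand the composite using the biproduct structure. Since $\Delta=\pair{\Id}{\Id}$ and $\sumwe pq=\coproducto{\hat p}{\hat q}$, and since the identity on $A\oplus A$ decomposes as $i_1\pi_1+i_2\pi_2$ (Theorem~\ref{lem:monoid}), bilinearity of composition together with the defining equations $\coproducto{\hat p}{\hat q}\circ i_1=\hat p_A$, $\coproducto{\hat p}{\hat q}\circ i_2=\hat q_A$ and $\pi_1\circ\pair{\Id}{\Id}=\pi_2\circ\pair{\Id}{\Id}=\Id$ gives
\[
  \sumwe pq\circ\Delta
  =\coproducto{\hat p}{\hat q}\circ\pair{\Id}{\Id}
  =\hat p_A+\hat q_A .
\]
This is just the standard fact that $\coproducto{f}{g}\circ\pair{h}{k}=fh+gk$ in a biproduct category, specialized to $h=k=\Id$.

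Next I would prove $\hat p_A+\hat q_A=\Id_A$. Unfolding $\hat p_A=\rho_A\circ(\Id\otimes p)\circ\rho_A^{-1}$ (Lemma~\ref{lem:s-nat}), and the analogous expression for $\hat q_A$, bilinearity of composition lets me pull $\rho_A$ and $\rho_A^{-1}$ out of the sum, so that
\[
  \hat p_A+\hat q_A
  =\rho_A\circ\bigl((\Id\otimes p)+(\Id\otimes q)\bigr)\circ\rho_A^{-1}.
\]
The decisive step is then the distributivity $(\Id\otimes p)+(\Id\otimes q)=\Id\otimes(p+q)$, i.e.\ that the endofunctor $A\otimes-$ respects the additive structure on hom-monoids; this holds because $A\otimes-$ has the right adjoint $\home A-$ in a closed category, hence preserves biproducts and is therefore semiadditive, which is exactly the observation already used in the proof of Corollary~\ref{cor:PropsF}. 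Invoking $(p,q)\in\we$, that is $p+q=\Id_I$, the bracket collapses to $\Id\otimes\Id_I=\Id_{A\otimes I}$, whence $\hat p_A+\hat q_A=\rho_A\circ\rho_A^{-1}=\Id_A$.

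I expect the only nontrivial point to be the distributivity $(\Id\otimes p)+(\Id\otimes q)=\Id\otimes(p+q)$; everything else is routine biproduct bookkeeping and the coherence identity $\rho_A\circ\rho_A^{-1}=\Id_A$. Once the semiadditivity of $A\otimes-$ is granted, the argument is the short chain of equalities above, and the hypothesis $(p,q)\in\we$ enters in exactly one place, to turn $p+q$ into $\Id_I$.
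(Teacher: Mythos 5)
Your proof is correct, and it reaches the result by a genuinely different (and more elementary) route than the paper. The paper proves the lemma with a single commuting diagram that expands $\sumwe pq=\nabla\circ(\hat p\oplus\hat q)$ and transports the computation into $A\otimes(I\oplus I)$ via the distribution map $d$, invoking Lemma~\ref{lem:distrib}, Corollaries~\ref{cor:PropsF} and~\ref{cor:PropsFinv}, and the naturality of $\Delta$ and $\nabla$, so that the hypothesis $(p,q)\in\we$ can be applied inside the tensor as $\Id\otimes(\nabla\circ(p\oplus q)\circ\Delta)=\Id\otimes\Id_I$. You instead work purely in the hom-monoids of Theorem~\ref{lem:monoid}: the identity $\coproducto fg\circ\pair hk=f\circ h+g\circ k$ (immediate from Definition~\ref{defmapaflecha} and bilinearity) collapses $\sumwe pq\circ\Delta$ to $\hat p_A+\hat q_A$, and then bilinearity plus semiadditivity of the functor $A\otimes(-)$ give $\hat p_A+\hat q_A=\rho_A\circ(\Id\otimes(p+q))\circ\rho_A^{-1}=\rho_A\circ\rho_A^{-1}=\Id_A$. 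The two arguments rest on the same two facts---additivity of tensoring (which is exactly what $d$ and Lemma~\ref{lem:distrib} encode diagrammatically) and $p+q=\Id_I$---but yours replaces the diagram chase with a short equational computation, at the price of having to justify that $A\otimes(-)$ preserves the hom-monoid structure. Your justification (left adjoint of $\home A{-}$, hence preserves coproducts, hence biproducts, hence the induced sums) is sound, and in fact the paper already asserts the semiadditivity of $-\otimes B$ as a straightforward check in the proof of Corollary~\ref{cor:PropsF}, so you may simply cite that rather than re-derive it.
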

\begin{proof}
  Consequence of the commutation of the following diagram.
  \[
    \begin{tikzcd}[labels=description,column sep=2cm,
	execute at end picture={
	  \path 
	  (\tikzcdmatrixname-2-3) -- (\tikzcdmatrixname-1-3) coordinate[pos=0.5](aux1)
	  (\tikzcdmatrixname-2-3) -- (\tikzcdmatrixname-4-3) coordinate[pos=0.5](aux2)
	  (\tikzcdmatrixname-5-3) -- (\tikzcdmatrixname-4-3) coordinate[pos=0.5](aux3)
	  (\tikzcdmatrixname-1-2) -- (\tikzcdmatrixname-3-2) node[pos=0.35,red,xshift=3mm]{\small (Naturality of $\Delta$)}
	  (\tikzcdmatrixname-3-2) -- (\tikzcdmatrixname-5-2) node[midway,red,xshift=-3mm]{\small (Naturality of $\nabla$)}
	  (\tikzcdmatrixname-1-1) -- (\tikzcdmatrixname-3-2) node[pos=0.5,red,sloped]{\small ($\Id$)}
	  (\tikzcdmatrixname-1-3) -- (\tikzcdmatrixname-5-3) node[pos=0.5,red,sloped,yshift=1.5cm]{\small (Lemma~\ref{lem:distrib})}
	  (\tikzcdmatrixname-1-2) -- (\tikzcdmatrixname-1-3) node[pos=0.5,red,yshift=.5cm]{\small ($\sumwe pq = \nabla\circ({\hat p}\oplus{\hat q})$)}
	  (\tikzcdmatrixname-3-2) -- (aux1) node[pos=0.6,red,sloped]{\small (Corollary~\ref{cor:PropsFinv})}
	  (\tikzcdmatrixname-3-2) -- (aux2) node[pos=0.5,red]{\small ($(p,q)\in\we$)}
	  (\tikzcdmatrixname-3-2) -- (aux3) node[pos=0.6,red,sloped]{\small (Corollary~\ref{cor:PropsF})};
	}
      ]
      A & {A\oplus A}
	  \arrow[ddddl, "{\sumwe pq}",
	    rounded corners,
	    to path={[pos=0.75]
	      -- ([yshift=5mm]\tikztostart.north)
	      -| ([yshift=-5mm,xshift=10.5cm]\tikztotarget.south)\tikztonodes
	    -| (\tikztotarget.south)}
	  ]
      & {(A\otimes I)\oplus(A\otimes I)} \\
      && {A\otimes (I\oplus I)} \\
      & {A\otimes I} \\
      && {A\otimes (I\oplus I)} \\
      A & {A\oplus A} & {(A\otimes I)\oplus(A\otimes I)}
      \arrow["\Id", from=1-1, to=5-1]
      \arrow["\Delta"', from=1-1, to=1-2]
      \arrow["\rho\oplus\rho"', dashed, from=1-2, to=1-3]
      \arrow["{(\Id\otimes p)\oplus(\Id\otimes q)}", out=0,in=0, looseness=0.3, sloped, dashed, from=1-3, to=5-3]
      \arrow["{\rho^{-1}\oplus\rho^{-1}}"', dashed, from=5-3, to=5-2]
      \arrow["\nabla"', dashed, from=5-2, to=5-1]
      \arrow["\rho"', dashed, from=1-1, to=3-2,sloped,bend left=15]
      \arrow["\rho^{-1}"', dashed, to=1-1, from=3-2,sloped,bend left=15]
      \arrow["\Delta"', dashed, from=3-2, to=1-3,sloped]
      \arrow["\Id\otimes\Delta", dashed, from=3-2, to=2-3,sloped]
      \arrow["d"', dashed, from=2-3, to=1-3]
      \arrow["{\Id\otimes(p\oplus q)}"', dashed, from=2-3, to=4-3]
      \arrow["d"', dashed, from=4-3, to=5-3]
      \arrow["\nabla"', dashed, from=5-3, to=3-2,sloped]
      \arrow["\Id\otimes\nabla", dashed, from=4-3, to=3-2,sloped]
    \end{tikzcd}
  \]
\end{proof}

\begin{remark}
  In the categories $\mathbf{SM}_{\mathbb R^+}$ and $(\mathbb R^+)^\Pi$, where
  $\we=\{(p,q):p+q=1\}$, we have $\sumwe pq(a_1,a_2) = p.a_1+q.a_2$, that is, a probability distribution. In this particular case, Lemma~\ref{lem:nablaPQInvDelta} simply states that $p.a+(1-p).a = a$. 
\end{remark}

\subsection{The map \texorpdfstring{$\delta$}{delta}}
In Lemma~\ref{lem:delta-nat} we introduce a natural transformation $\delta$, which shares some similarity with the map $d$ in its interaction with the map $\sumwe pq$. The property from Corollary~\ref{cor:PropsF}.\ref{ite:d} has an analogy with $\delta$ when $(p,q)\in\we$, as shown by Lemma~\ref{lem:prop-times-sumwe}. However, its proof does not use Lemma~\ref{lem:PropsF}, since the functor $F=-\oplus B$ does not satisfy the hypothesis $F(\hat s) = \hat s$ needed by Lemma~\ref{lem:PropsF}.
The same analogy applies to $\Delta$ with Corollary~\ref{cor:PropsFinv}, as shown by Lemma~\ref{lem:deltaDelta}.

\begin{lemma}
  \label{lem:delta-nat}
  The map $(A\oplus B)\oplus C\xlra\delta (A\oplus C)\oplus(B\oplus C)$ defined by
  \[
    (A\oplus B)\oplus C\xlra{\Id\oplus\Delta}(A\oplus B)\oplus (C\oplus C)\xlra{\Id\oplus\sigma\oplus\Id} (A\oplus C)\oplus (B\oplus C),
  \]
  is a natural transformation
\end{lemma}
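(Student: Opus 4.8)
The plan is to exhibit $\delta$ as the composite of two maps, each of which is manifestly natural, and then to invoke the fact that a composite of natural transformations is again natural. Writing $\delta = (\Id\oplus\sigma\oplus\Id)\circ(\Id\oplus\Delta)$, I would regard $\delta$ as a natural transformation between the two functors $\SM\times\SM\times\SM\to\SM$ given by $(A,B,C)\mapsto(A\oplus B)\oplus C$ and $(A,B,C)\mapsto(A\oplus C)\oplus(B\oplus C)$, so that the naturality square to be verified is
\[
  ((f\oplus h)\oplus(g\oplus h))\circ\delta = \delta\circ((f\oplus g)\oplus h)
\]
for all $f\colon A\to A'$, $g\colon B\to B'$, and $h\colon C\to C'$.

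First I would record that each elementary building block is natural. The identity maps are trivially natural; the diagonal $\Delta=\pair\Id\Id$ is natural because it is the diagonal of the product structure carried by the biproduct—for any $h\colon C\to C'$ the universal property gives $(h\oplus h)\circ\pair\Id\Id = \pair hh = \pair\Id\Id\circ h$; and the symmetry $\sigma$ is natural because the biproduct endows $\SM$ with a symmetric monoidal structure (with zero object as unit) whose braiding is precisely $\sigma$. Since $\oplus$ is a bifunctor, applying it to natural transformations again yields a natural transformation, naturality being preserved by whiskering and by the bifunctor $\oplus$.

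With these observations in hand, the first factor $\Id_{A\oplus B}\oplus\Delta_C$ is natural as a transformation from $(A\oplus B)\oplus C$ to $(A\oplus B)\oplus(C\oplus C)$, being the $\oplus$ of the natural identity transformation of the bifunctor $\oplus$ with the natural diagonal $\Delta$. Likewise the second factor $\Id_A\oplus\sigma\oplus\Id_C$ is natural from $(A\oplus B)\oplus(C\oplus C)$ to $(A\oplus C)\oplus(B\oplus C)$, being assembled from $\Id$, $\sigma$, and $\Id$ via $\oplus$. Composing the two then delivers the naturality of $\delta$ directly.

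I expect the only real subtlety to be the bookkeeping about the parenthesization hidden in $\Id\oplus\sigma\oplus\Id$: the intermediate object $(A\oplus B)\oplus(C\oplus C)$ must be reassociated to the flat four-fold biproduct $A\oplus B\oplus C\oplus C$ so that $\sigma$ can act on the two middle summands $B$ and $C$, and the result reassociated to $(A\oplus C)\oplus(B\oplus C)$. This is harmless, since the associativity isomorphisms of the biproduct are themselves natural and satisfy coherence, so they may be absorbed into the composite without disturbing naturality; making this fully explicit—or, alternatively, carrying out a single direct diagram chase using the universal properties of $\oplus$ as both product and coproduct—is the one place where care is needed.
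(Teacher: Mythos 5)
Your proposal is correct and follows essentially the same route as the paper, whose entire proof reads: ``The maps $\sigma$ and $\Delta$ are natural, thus, $\delta$ is natural.'' You have simply made explicit what the paper leaves implicit---that $\oplus$ is a bifunctor, that naturality is preserved under $\oplus$ and under composition, and that the associativity coherence isomorphisms absorb the parenthesization in $\Id\oplus\sigma\oplus\Id$---all of which is sound.
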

\begin{proof}
  The maps $\sigma$ and $\Delta$ are natural, thus, $\delta$ is natural.
\end{proof}

\begin{lemma}
  \label{lem:prop-times-sumwe}
  If $(p,q)\in\we$, then
  $\sumwe pq\circ\delta = \sumwe pq\oplus\Id$.
\end{lemma}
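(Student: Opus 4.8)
The plan is to give a short equational proof that, as the remark preceding the statement suggests, avoids Lemma~\ref{lem:PropsF} (the functor $-\oplus B$ does not satisfy $F(\hat s)=\hat s$) and instead reduces the whole identity to the single fact $\sumwe pq\circ\Delta=\Id$ already established in Lemma~\ref{lem:nablaPQInvDelta}. Exactly as in Corollary~\ref{cor:PropsF}.\ref{ite:d}, the outer weighted codiagonal is only defined when the two outer summands of the target of $\delta$ coincide, so I read the statement as an equality of maps $(A\oplus A)\oplus C\to A\oplus C$, taking $\delta$ to be the instance $(A\oplus A)\oplus C\to(A\oplus C)\oplus(A\oplus C)$ and $\sumwe pq\oplus\Id=\sumwe pq\oplus\Id_C$ on the right.

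First I would unfold Lemma~\ref{lem:delta-nat}, writing $\delta=s\circ(\Id_{A\oplus A}\oplus\Delta)$ with $s=\Id_A\oplus\sigma_{A,C}\oplus\Id_C$ the shuffle $(A\oplus A)\oplus(C\oplus C)\to(A\oplus C)\oplus(A\oplus C)$. The crucial observation is that $s$ is the inverse of the shuffle $\Id\oplus\sigma\oplus\Id$ appearing in Lemma~\ref{lem:PropsSumwe}(1) (indeed $\sigma_{A,C}^{-1}=\sigma_{C,A}$); composing that lemma on the right with $s$ therefore gives $\sumwe pq\circ s=\sumwe pq\oplus\sumwe pq$, where the two summands are the weighted codiagonals on $A$ and on $C$. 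Substituting and using the interchange law $(f\oplus g)\circ(f'\oplus g')=(ff')\oplus(gg')$ for the biproduct yields
\[ \sumwe pq\circ\delta=(\sumwe pq\oplus\sumwe pq)\circ(\Id_{A\oplus A}\oplus\Delta)=\sumwe pq\oplus(\sumwe pq\circ\Delta). \]
Since $(p,q)\in\we$, Lemma~\ref{lem:nablaPQInvDelta} collapses the right-hand factor, $\sumwe pq\circ\Delta=\Id_C$, and hence $\sumwe pq\circ\delta=\sumwe pq\oplus\Id_C$, the desired right-hand side.

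I expect the main obstacle to be purely bookkeeping: matching the orientation of the symmetry $\sigma$ between $\delta$ and Lemma~\ref{lem:PropsSumwe}(1), since the two shuffles run in opposite directions and one must confirm they are mutually inverse before invoking the lemma; no deeper categorical input is needed. If one prefers to sidestep the shuffle matching entirely, the same result follows by the biproduct (matrix) calculus: writing $\sumwe pq$ and $\delta$ as matrices of maps and multiplying, every entry is immediate except the $(C,C)$ slot, which equals $\hat{p}+\hat{q}$ and is again $\Id_C$ by Lemma~\ref{lem:nablaPQInvDelta} read through Definition~\ref{defmapaflecha}, where $\hat{p}+\hat{q}=\sumwe pq\circ\Delta$.
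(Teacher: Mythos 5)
Your proof is correct and is essentially the paper's own argument: the paper's diagram likewise unfolds $\delta$ as $(\Id\oplus\sigma\oplus\Id)\circ(\Id\oplus\Delta)$, uses Lemma~\ref{lem:PropsSumwe} to replace $\sumwe pq\circ(\Id\oplus\sigma\oplus\Id)$ by $\sumwe pq\oplus\sumwe pq$, and then collapses $\sumwe pq\circ\Delta$ to $\Id$ via Lemma~\ref{lem:nablaPQInvDelta} on the second summand. Your explicit check that the two shuffles are mutually inverse (so that Lemma~\ref{lem:PropsSumwe} can be composed on the right with $s$) is a detail the paper leaves implicit, but it is the same proof.
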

\begin{proof}
  Given in~\ref{app:prop-times-sumwe}.
\end{proof}

\begin{lemma}
  \label{lem:deltaDelta}
  $\Delta = \delta\circ(\Delta\oplus\Id)$.
\end{lemma}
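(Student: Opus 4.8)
The plan is to unfold the definition of $\delta$ given in Lemma~\ref{lem:delta-nat} and reduce the claim to the already-established identity in the second part of Lemma~\ref{lem:PropsSumwe}. Here the objects are fixed so that the $\Delta$ on the left-hand side is the diagonal $A\oplus C\xlra{\Delta}(A\oplus C)\oplus(A\oplus C)$, the map $\Delta\oplus\Id$ is $\Delta_A\oplus\Id_C:A\oplus C\to(A\oplus A)\oplus C$, and $\delta:(A\oplus A)\oplus C\to(A\oplus C)\oplus(A\oplus C)$ is the instance of the natural transformation of Lemma~\ref{lem:delta-nat} obtained by setting $B=A$.

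First I would expand $\delta$ as $(\Id\oplus\sigma\oplus\Id)\circ(\Id_{A\oplus A}\oplus\Delta_C)$, so that
\[
  \delta\circ(\Delta_A\oplus\Id_C)=(\Id\oplus\sigma\oplus\Id)\circ(\Id_{A\oplus A}\oplus\Delta_C)\circ(\Delta_A\oplus\Id_C).
\]
Next I would use bifunctoriality of $\oplus$ to merge the two middle factors: since
\[
  (\Id_{A\oplus A}\oplus\Delta_C)\circ(\Delta_A\oplus\Id_C)=(\Id_{A\oplus A}\circ\Delta_A)\oplus(\Delta_C\circ\Id_C)=\Delta_A\oplus\Delta_C,
\]
the composite becomes $(\Id\oplus\sigma\oplus\Id)\circ(\Delta_A\oplus\Delta_C)$.

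Finally, this is precisely the left-hand side of the second part of Lemma~\ref{lem:PropsSumwe} (for the instance $\sigma=\sigma_{A,C}$), whose right-hand side is the diagonal $\Delta$ on $A\oplus C$; this yields the claim.

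The only point requiring care—and the step I expect to be the main, if minor, obstacle—is the index bookkeeping: confirming that the symmetry $\sigma$ appearing in the unfolded $\delta$, which swaps the second and third summands of $(A\oplus A)\oplus(C\oplus C)$, is the very same instance $\sigma_{A,C}$ used in Lemma~\ref{lem:PropsSumwe}, so that the two identities line up without any hidden coherence isomorphism. Once the summand positions are matched, the argument is purely formal and requires no appeal to the scalar hypotheses, exactly as anticipated by the remark that only $F(\Id)=\Id$-type reasoning is involved.
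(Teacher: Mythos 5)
Your proposal is correct and follows essentially the same route as the paper's own proof: unfold $\delta$ as $(\Id\oplus\sigma\oplus\Id)\circ(\Id\oplus\Delta)$, use bifunctoriality of $\oplus$ to rewrite $(\Id\oplus\Delta)\circ(\Delta\oplus\Id)$ as $\Delta\oplus\Delta$, and conclude by the identity $(\Id\oplus\sigma\oplus\Id)\circ(\Delta\oplus\Delta)=\Delta$ of Lemma~\ref{lem:DeltaSigma} (the second part of Lemma~\ref{lem:PropsSumwe}, which is the same lemma under two labels). Your index-bookkeeping worry is resolved exactly as you anticipate: the $\sigma$ produced by instantiating $\delta$ at $B=A$ swaps the second and third summands and is the same instance used in that lemma, so the argument goes through verbatim.
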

\begin{proof}
	Given in~\ref{app:deltaDelta}.
\end{proof}

The next property is about the interaction of two weighted codiagonal maps.
\begin{lemma}
  \label{lem:iteration-sumwe}
  If $(p,q)\in\we$, then
  $\sumwe{p'}{q'}\circ(\sumwe pq\oplus\Id)=\sumwe pq\circ(\sumwe{p'}{q'}\oplus\sumwe{p'}{q'})\circ\delta$.
\end{lemma}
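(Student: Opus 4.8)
The plan is to prove the equality by exhibiting both sides as one and the same copairing out of the biproduct $(A\oplus A)\oplus A$, instantiating $\delta$ at $B=C=A$ (which the types force). The left-hand side is immediate: expanding $\sumwe{p'}{q'}=\coproducto{\hat{p'}}{\hat{q'}}$ and distributing the copairing over $\sumwe pq\oplus\Id$ via the rule $\coproducto fg\circ(u\oplus v)=\coproducto{f\circ u}{g\circ v}$ gives $\coproducto{\hat{p'}\circ\sumwe pq}{\hat{q'}}$, with no further work.

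For the right-hand side I would first rewrite $\delta=(\Id\oplus\sigma\oplus\Id)\circ(\Id\oplus\Delta)$ from Lemma~\ref{lem:delta-nat}. The key observation is that, once all three objects coincide, the middle-swap factor $\Id\oplus\sigma\oplus\Id$ inside $\delta$ is exactly the one occurring in Lemma~\ref{lem:PropsSumwe}.1; hence $(\sumwe{p'}{q'}\oplus\sumwe{p'}{q'})\circ(\Id\oplus\sigma\oplus\Id)$ collapses to the weighted codiagonal $\sumwe{p'}{q'}$ on $A\oplus A$, leaving $\sumwe pq\circ\sumwe{p'}{q'}\circ(\Id\oplus\Delta)$.

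Next I would expand this $\sumwe{p'}{q'}$ on $A\oplus A$ as $\coproducto{\hat{p'}\oplus\hat{p'}}{\hat{q'}\oplus\hat{q'}}$ using Lemma~\ref{lem:PropsS}.3, precompose with $\Id\oplus\Delta$ to get $\coproducto{\hat{p'}\oplus\hat{p'}}{(\hat{q'}\oplus\hat{q'})\circ\Delta}$, and simplify the second component by naturality of the scalar map (Lemma~\ref{lem:s-nat}), which yields $(\hat{q'}\oplus\hat{q'})\circ\Delta=\Delta\circ\hat{q'}$. Pushing the outer $\sumwe pq$ through the copairing, the first component becomes $\sumwe pq\circ(\hat{p'}\oplus\hat{p'})=\hat{p'}\circ\sumwe pq$ by naturality of the weighted codiagonal (Lemma~\ref{lem:sumwe-nat}), and the second becomes $\sumwe pq\circ\Delta\circ\hat{q'}=\hat{q'}$, since $\sumwe pq\circ\Delta=\Id$ by Lemma~\ref{lem:nablaPQInvDelta} (the one and only use of the hypothesis $(p,q)\in\we$). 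The right-hand side is thus $\coproducto{\hat{p'}\circ\sumwe pq}{\hat{q'}}$, matching the left-hand side.

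I expect the main subtlety to be the step $\sumwe pq\circ(\hat{p'}\oplus\hat{p'})=\hat{p'}\circ\sumwe pq$: read componentwise it demands $\hat p\circ\hat{p'}=\hat{p'}\circ\hat p$, i.e.\ that scalar maps commute, which looks as though it should require commutativity of $\Hom II$. It is in fact free here, being precisely the naturality of $\sumwe pq$ (Lemma~\ref{lem:sumwe-nat}), itself a consequence of the naturality of $\hat s$ (Lemma~\ref{lem:s-nat}). Apart from this, the only care needed is to confirm that the two occurrences of $\Id\oplus\sigma\oplus\Id$ genuinely coincide once $B$ and $C$ are set equal to $A$, so that Lemma~\ref{lem:PropsSumwe}.1 applies verbatim.
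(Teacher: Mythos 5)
Your proof is correct; it rests on the same three pillars as the paper's proof (Lemmas~\ref{lem:nablaPQInvDelta}, \ref{lem:PropsSumwe} and~\ref{lem:sumwe-nat}) but arranges them along a genuinely different decomposition, so a comparison is worthwhile. The paper transforms the left-hand side into the right-hand side at the level of whole composites: it expands $\sumwe pq\oplus\Id$ as $(\sumwe pq\oplus\sumwe pq)\circ(\Id\oplus\Delta)$ (the sole use of $(p,q)\in\we$, via Lemma~\ref{lem:nablaPQInvDelta}), rewrites $\sumwe pq\oplus\sumwe pq$ as $\sumwe pq\circ(\Id\oplus\sigma\oplus\Id)$ by Lemma~\ref{lem:PropsSumwe} applied to the pair $(p,q)$, so that the left-hand side becomes $\sumwe{p'}{q'}\circ\sumwe pq\circ\delta$, and concludes by naturality of $\sumwe pq$ with respect to $\sumwe{p'}{q'}$. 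You instead normalize both sides to the common copairing $\coproducto{\hat{p'}\circ\sumwe pq}{\hat{q'}}$, applying Lemma~\ref{lem:PropsSumwe} to $(p',q')$ rather than to $(p,q)$ --- which is legitimate, since that lemma carries no $\we$ hypothesis --- together with Lemma~\ref{lem:PropsS}.3 and elementary copairing identities, and you invoke $(p,q)\in\we$ in exactly one place, the same one as the paper: $\sumwe pq\circ\Delta=\Id$. What each buys: the paper's chase is shorter and never opens up the weighted codiagonals; yours is more computational, with every step an explicit biproduct calculation, and it exhibits concretely what both sides are. Your two flagged subtleties are also handled soundly: once $B=C=A$, the swap inside $\delta$ and the swap in Lemma~\ref{lem:PropsSumwe} are literally the same map $\Id\oplus\sigma\oplus\Id$ (modulo the associativity coherences the paper suppresses throughout); and $\sumwe pq\circ(\hat{p'}\oplus\hat{p'})=\hat{p'}\circ\sumwe pq$ is indeed just Lemma~\ref{lem:sumwe-nat}, the componentwise commutation $\hat p\circ\hat{p'}=\hat{p'}\circ\hat p$ it encodes being itself free of charge --- it is naturality of $\hat{p'}$ (Lemma~\ref{lem:s-nat}) applied to the morphism $\hat p$ --- so no commutativity assumption on $\Hom II$ is smuggled in.
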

\begin{proof}
  Consequence of the commutation of the following diagram.
  \[
    \begin{tikzcd}[labels=description,column sep=1.5cm,row sep=2cm,
	execute at end picture={
	  \path 
	  (\tikzcdmatrixname-1-2) -- (\tikzcdmatrixname-2-1) coordinate[pos=0.5](aux1)
	  (\tikzcdmatrixname-1-2) -- (\tikzcdmatrixname-1-3) coordinate[pos=0.3](aux2)
	  (\tikzcdmatrixname-1-3) -- (\tikzcdmatrixname-2-3) coordinate[pos=0.5](aux3)
	  (\tikzcdmatrixname-1-1) -- (aux1) node[pos=0.5,red]{\small (Lemma~\ref{lem:nablaPQInvDelta})}
	  (\tikzcdmatrixname-2-1) -- (aux2) node[pos=0.7,red]{\small (Lemma~\ref{lem:PropsSumwe})}
	  (\tikzcdmatrixname-2-1) -- (aux3) node[pos=0.7,red]{\small (Lemma~\ref{lem:sumwe-nat})};
	}
      ]
      {(A\oplus A)\oplus A}
      \arrow[rr, "{\delta}" description,sloped,
	rounded corners,
	to path={[pos=0.75]
	  -- ([yshift=2mm]\tikztostart.north)
	  |- ([yshift=5mm]\tikztotarget.north)\tikztonodes
	|- (\tikztotarget.north)}
      ]
      & {(A\oplus A)\oplus (A\oplus A)} &[4.5mm] {(A\oplus A)\oplus (A\oplus A)} \\
      {A\oplus A} & A & {A\oplus A}
      \arrow["\Id\oplus\Delta", dashed, from=1-1, to=1-2]
      \arrow["\Id\oplus\sigma\oplus\Id", dashed, from=1-2, to=1-3]
      \arrow["{\sumwe pq\oplus\Id}", from=1-1, to=2-1]
      \arrow["{\sumwe{p'}{q'}}", from=2-1, to=2-2]
      \arrow["{\sumwe{p'}{q'}\oplus \sumwe{p'}{q'}}", from=1-3, to=2-3]
      \arrow["{\sumwe pq}", from=2-3, to=2-2]
      \arrow["{\sumwe pq\oplus\sumwe pq}", dashed, from=1-2, to=2-1,sloped]
      \arrow["{\sumwe pq}", dashed, from=1-3, to=2-1,sloped]
    \end{tikzcd}
  \]
\end{proof}

\section{Denotational semantics}\label{sec:semantics}
\subsection{Definitions and properties}
In this section, we give an interpretation of the \OC in the category $\SM$.
The interpretation of types and contexts are standard, interpreting the $\odot$ as the biproduct.

\begin{definition}[Interpretation of propositions]\label{def:semTypes}
  We consider the following interpretation of propositions in the objects of $\SM$.
  \begin{align*}
    \sem{\one} &= I &
    \sem{\zero} &= 0\\
    \sem{A\otimes B} &=\sem{A}\otimes \sem{B} &
    \sem{A\with B} &= \sem{A}\oplus \sem{B}\\
    \sem{A\multimap B} &= \home{\sem{A}}{\sem{B}}  &
    \sem{A\oplus B} &={\sem{A}}\oplus{\sem{B}}\\
    \sem{\top} &= 0 &
    \sem{A\odot B} &=\sem{A}\oplus \sem{B}
  \end{align*}
\end{definition}

\begin{remark}[Conjunction, disjunction, sup, and biproducts]
\label{rmk:biproduct}
It is worth noting how the biproduct can serve as an interpretation for three
different connectives: $\with$, $\oplus$, and $\odot$. 

For $\with$ and $\oplus$, this is a classical fact in categorical logic
\cite{seely1989linear}. Intuitively, the biproduct is simultaneously a product
and a coproduct, as it satisfies both universal properties. Concretely, by
taking the upper part of the diagram we obtain the product (interpreting
$\with$), while the lower part yields the coproduct (interpreting $\oplus$).
\[
  \begin{tikzcd}[labels=description,column sep=2cm,row sep=1cm]
    &  C\ar[d,"\pair{f_1}{f_2}"]\ar[dr,"f_2"]\ar[dl,"f_1"] & \\
    {A}\ar[r,"i_1",yshift=-1.5mm]\ar[dr,"g_1"] & A\oplus B\ar[d,"{[g_1,g_2]}"]\ar[l,"\pi_1",yshift=1.5mm]\ar[r,"\pi_2",yshift=1.5mm] & {B}\ar[dl,"g_2"]\ar[l,"i_2",yshift=-1.5mm] \\
    &  D &
  \end{tikzcd}
\]
Here, $\pair{f_1}{f_2}$ interprets the introduction rule of $\with$, while
$\pi_1$ and $\pi_2$ correspond to its elimination rules. Similarly, $i_1$ and
$i_2$ interpret the introduction rules of $\oplus$, and $[g_1,g_2]$ its
elimination rule. Observe that $[g_1,g_2]$ can be factorised as
$[g_1,g_2]=\nabla\circ(g_1\oplus g_2)$, yielding the following extended
diagram:
\[
  \begin{tikzcd}[labels=description,column sep=2cm,row sep=1cm]
    &  C\ar[d,"\pair{f_1}{f_2}"]\ar[dr,"f_2"]\ar[dl,"f_1"] & \\
    {A}\ar[r,"i_1",yshift=-1.5mm]\ar[ddr,"g_1"] & A\oplus B\ar[d,"{g_1\oplus g_2}"]\ar[l,"\pi_1",yshift=1.5mm]\ar[r,"\pi_2",yshift=1.5mm] & {B}\ar[ddl,"g_2"]\ar[l,"i_2",yshift=-1.5mm] \\
    & D\oplus D\ar[d,"\nabla"] & \\
    &  D &
  \end{tikzcd}
\]

The case of $\sup$ is subtler, as it combines both aspects of the diagram.
Indeed, the introduction rule is again interpreted by $\pair{f_1}{f_2}$, but
there are now two forms of elimination: a deterministic one, using $\pi_1$ and
$\pi_2$, and a non-deterministic one, inherited from the coproduct but relying
on $\sumwe pq$ instead:
\[
  \begin{tikzcd}[labels=description,column sep=2cm,row sep=1cm]
    &  C\ar[d,"\pair{f_1}{f_2}"]\ar[dr,"f_2"]\ar[dl,"f_1"] & \\
    {A} & A\oplus B\ar[d,"{g_1\oplus g_2}"]\ar[l,"\pi_1"]\ar[r,"\pi_2"] & {B} \\
    & D\oplus D\ar[d,"\sumwe pq"] & \\
    &  D &
  \end{tikzcd}
\]

We will return to this distinction after presenting the interpretation of
deduction rules in Remark~\ref{rmk:semTerms}.
\end{remark}

\begin{definition}[Interpretation of contexts]
  \label{def:semGamma}
  \begin{align*}
    \sem{\emptyset} &= I &
    \sem{\Gamma,x:A}&=\sem{\Gamma}\otimes\sem{A}
  \end{align*}
\end{definition}

\begin{definition}[Interpretation of deduction rules]\label{def:semTerms}
  We consider the following interpretation of proof-terms in the arrows of $\SM$,
	 where $\semS{\escalar s}$ is the interpretation of the scalar $\escalar s$ in $\Hom II$ (see Definition~\ref{def:SM}).

  Since the deduction system is syntax directed
  (cf.~Figure~\ref{fig:typingrules}), we give instead an interpretation for each
  deduction rule.

    \item $\sem{\vcenter{\infer[ax]{x:A\vdash x:A}{}}} 
      =\sem{A}\xlra{\Id}\sem A$

    \item $\sem{\vcenter{\infer[\one_i(\escalar s)]{\vdash \escalar s.\star:\one}{}}}
      =I\xlra{\semS{\escalar s}}I$

    \item $\sem{\vcenter{\infer[\plus]{\Gamma\vdash t\plus u:A}{\Gamma\vdash t:A & \Gamma\vdash u:A}}}
      =\sem\Gamma\xlra{t + u}\sem A$

    \item $\sem{\vcenter{\infer[\bullet(\escalar s)]{\Gamma\vdash\escalar s\bullet t:A}{\Gamma\vdash t:A}}}
      =\sem\Gamma\xlra{t}\sem A\xlra{\widehat{\semS{\escalar s}}}\sem A$

    \item
      $\sem{\vcenter{\infer[\one_e]{\Gamma,\Theta\vdash\elimone(t,u):A}{\Gamma\vdash t:\one & \Theta\vdash u:A}}}
      =
      \sem{\Gamma}\otimes\sem{\Theta}\xlra{t\otimes u} I\otimes\sem A\xlra{\lambda}\sem A
      $

    \item
      \(
	\sem{\vcenter{\infer[\otimes_i]{\Gamma, \Theta\vdash t \otimes u:A \otimes B}{\Gamma \vdash t:A & \Theta\vdash u:B}}}
	=
	\sem\Gamma\otimes\sem\Theta\xlra{t\otimes u} \sem A\otimes\sem B
      \)

    \item
      \(
	\sem{\vcenter{\infer[\otimes_e]{\Gamma, \Theta\vdash \elimtens(t,{x y}.u):C}{\Gamma, x:A, y:B \vdash u:C & \Theta\vdash t:A \otimes B}}}
      \)

      \hfill\(
	=
	\sem\Gamma\otimes\sem\Theta
	\xlra{\Id\otimes t} \sem\Gamma\otimes \sem{A}\otimes \sem{B} \xlra{u}\sem C
      \)

    \item\(
	\sem{
	  \vcenter{\infer[\multimap_i]{\Gamma\vdash\lambda x.t:A\multimap B}{\Gamma,x:A\vdash t:B}}
	}
	=\sem\Gamma\xlra{\eta^{\sem{A}}}\home{\sem{A}}{\sem\Gamma\otimes \sem{A}}
	\xlra{\home {\sem{A}}t} \home{\sem{A}}{\sem B}
      \)

    \item\(
	\sem{
	  \vcenter{\infer[\multimap_e]{\Gamma,\Theta\vdash tu:B}{\Gamma\vdash t:A\multimap B & \Theta\vdash u:A}}
	}
      \)

      \hfill\(
	=
	\sem\Gamma\otimes\sem\Theta
	\xlra{t\otimes u}
	\home{\sem{A}}{\sem{B}}\otimes \sem{A} \xlra{\varepsilon} \sem B
      \)

    \item \(
	\sem{\vcenter{\infer[\top_i]{\Gamma \vdash \langle \rangle:\top}{}}}
	=\sem\Gamma\xlra{!}0
      \)

    \item $\sem{\vcenter{\infer[\zero_e]{\Gamma,\Theta\vdash\elimzero(t):C}{\Gamma\vdash t:\zero}}}
      =\sem{\Gamma}\otimes\sem{\Theta}\xlra{t\otimes\Id} 0 \otimes\sem\Theta\xlra{0}\sem C$

  \item
    \(
      \sem{\vcenter{\infer[\with_i]{\Gamma \vdash \pair tu:A \with B}{\Gamma \vdash t:A & \Gamma \vdash u:B}}}
      =
      \sem\Gamma\xlra{\Delta}\sem\Gamma\oplus\sem\Gamma\xlra{t\oplus u} 
      \sem A\oplus\sem B
    \)

  \item
    $\sem{\vcenter{\infer[\with_{e1}]{\Gamma\vdash\pi_1(t):A}{\Gamma\vdash t:A\with B}}}
    =\sem\Gamma\xlra t \sem A\oplus\sem B\xlra{\pi_1}\sem A$

  \item
    $\sem{\vcenter{ \infer[\with_{e2}]{\Gamma\vdash\pi_2(t):B}{\Gamma\vdash t:A\with B} }}
    =\sem\Gamma\xlra t \sem A\oplus\sem B\xlra{\pi_2}\sem B$

  \item
    $\sem{\vcenter{ \infer[\oplus_{i1}]{\Gamma\vdash\inl\ t:A\oplus B}{\Gamma\vdash t:A} }}
    =\sem\Gamma\xlra{t}\sem A\xlra{i_1}\sem A\oplus\sem B$

  \item
    $\sem{\vcenter{ \infer[\oplus_{i2}]{\Gamma\vdash\inr\ t:A\oplus B}{\Gamma\vdash t:B}
    }}
    =\sem\Gamma\xlra{t}\sem B\xlra{i_2}\sem A\oplus\sem B$

  \item
    $\sem{\vcenter{\infer[\oplus_e]{\Gamma,\Theta\vdash \elimoplus(t,{x}.u,{y}.v):C}{\Gamma \vdash t:A\oplus B & x:A,\Theta\vdash u:C & y:B,\Theta\vdash v:C}}}$

    \hfill$
      =
      \sem\Gamma\otimes\sem\Theta
      \xlra{t\otimes\Id}
      (\sem A\oplus \sem B)\otimes \sem\Theta
      \xlra{d}
      (\sem{A}\otimes\sem\Theta)\oplus (\sem{B}\otimes\sem\Theta)
      \xlra{\coproducto uv}
      \sem C$

    \item\(
    \sem{\vcenter{
	\infer[\odot_i]{\Gamma\vdash\super tu:A\odot B}{\Gamma\vdash t:A & \Gamma\vdash u:B}
    }}
    =
    \sem\Gamma\xlra{\Delta}\sem\Gamma\oplus\sem\Gamma\xlra{t\oplus u} 
    \sem A\oplus\sem  B
  \)

\item\(
    \sem{\vcenter{
	\infer[\odot_{e1}]{\Gamma\vdash\fstsup(t):A}{\Gamma\vdash t:A\odot B}
    }}
    =\sem\Gamma\xlra t \sem A\oplus\sem B\xlra{\pi_1}\sem A
  \)

\item\(
    \sem{\vcenter{
	\infer[\odot_{e2}]{\Gamma\vdash\sndsup(t):B}{\Gamma\vdash t:A\odot B}
    }}
    =\sem\Gamma\xlra t \sem A\oplus\sem B\xlra{\pi_2}\sem B
  \)

\item\(
    \sem{\vcenter{
	\infer[\odot_e]{\Gamma,\Theta\vdash\elimsup(t,x.u,y.v):C}
	{\Gamma\vdash t:A\odot B & x:A,\Theta\vdash u:C & y:B,\Theta\vdash v:C}
  }}\)

  \hfill\(\begin{aligned}[t]
      = & \sem\Gamma\otimes\sem\Theta
      \xlra{t\otimes\Id}(\sem{A}\oplus \sem{B})\otimes\sem\Theta
      \xlra{d}(\sem{A}\otimes\sem\Theta)\oplus (\sem{B}\otimes\sem\Theta)\\
      & \xlra{u\oplus v} \sem C\oplus\sem C
    \xlra{\sumwe{\semS{\escalar p}}{\semS{\escalar q}}} \sem C
  \end{aligned}\)
\end{definition}

\begin{remark}\label{rmk:semTerms}
	The interpretation of deduction rules is mostly standard, having $\odot_i$,
	$\odot_{e1}$, and $\odot_{e2}$ interpreted as if they were the rules for the additive conjunction $\with_i$,
	$\with_{e1}$, and $\with_{e2}$ respectively. However, even if the rule $\odot_e$ looks quite
	similar to the elimination of the additive disjunction, $\oplus_e$, its interpretation has a slight but important
	difference: instead of applying the mediating arrow of the coproduct $\sem A\oplus\sem B\xlra{\coproducto uv}\sem C$, which is equivalent to 
	\[
	  \sem A\oplus\sem B\xlra{u\oplus v}\sem C\oplus\sem C\xlra{\nabla}\sem C,
	\]
	we use 
	\[
	  \sem A\oplus\sem B\xlra{u\oplus v}\sem C\oplus\sem C\xlra{\sumwe pq}\sem C.
	\]
\end{remark}

\subsection{Soundness}
Our interpretation is sound (Theorem~\ref{thm:soundness}) with respect to
reduction.

\begin{lemma}
  [Substitution]
  \label{lem:subst}
  If $\Gamma,x:A\vdash t:B$ and $\Theta\vdash v:A$, then
  $\sem{\Gamma\vdash (v/x)t:B}=\sem{\Gamma,x:A\vdash t:B}\circ (\Id\otimes\sem{\Gamma\vdash v:A})$.
  That is, the following diagram commutes.
  \begin{center}
    \begin{tikzcd}[labels=description,row sep=5mm,column sep=2cm]
      \sem\Gamma\otimes\sem\Theta\ar[dr,"\Id\otimes v",sloped]\ar[rr,"(v/x)t"] && \sem B\\
      &\sem\Gamma\otimes \sem A\ar[ru,"t",sloped]
    \end{tikzcd}
  \end{center}
\end{lemma}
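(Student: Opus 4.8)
The plan is to argue by structural induction on the derivation of $\Gamma,x:A\vdash t:B$ which, since the deduction system is syntax-directed (Figure~\ref{fig:typingrules}), amounts to induction on the proof-term $t$. For each deduction rule I must check that the triangle in the statement commutes, and the recurring tools will be: the naturality of the structural maps (the unitors $\lambda,\rho$, the associator, the symmetry $\sigma$, the diagonal $\Delta$, the distribution $d$ of Corollary~\ref{cor:maps}, and the weighted codiagonal $\sumwe{\semS{\escalar p}}{\semS{\escalar q}}$ of Lemma~\ref{lem:sumwe-nat}); the bifunctoriality of $\otimes$ and $\oplus$; and the induction hypothesis (IH) applied to the immediate subterms.

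First I would dispatch the base and degenerate cases. In the axiom case $t=x$ we have $\Gamma=\emptyset$ and $(v/x)x=v$, so the claim reduces to the coherence identity relating $\Id_{\sem A}$ precomposed with $\Id_I\otimes v$ to $v$ via the left unitor. The rules $\top_i$ and $\zero_e$ are immediate from the universal property of the zero object $0$: both composites are the unique arrow into (respectively, out of) $0$ with the prescribed domain and codomain, so they coincide.

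Next come the structure-preserving cases. For the single-premise rules $\bullet(\escalar s)$, $\with_{e1}$, $\with_{e2}$, $\oplus_{i1}$, $\oplus_{i2}$, $\odot_{e1}$, $\odot_{e2}$ the substituted variable lives in the sole premise, so commutation follows by postcomposing the inductive square with the corresponding map ($\widehat{\semS{\escalar s}}$, $\pi_i$, or $i_i$) and invoking the IH; the rule $\multimap_i$ is similar but additionally uses the naturality of the unit $\eta$ of the tensor--hom adjunction to slide $\Id\otimes v$ through the currying. For the additive rules that share the context between premises ($\plus$, $\with_i$, $\odot_i$) the variable $x$ occurs in every premise, and here the decisive step is the naturality of the diagonal, $\Delta\circ(\Id\otimes v)=((\Id\otimes v)\oplus(\Id\otimes v))\circ\Delta$; after this, $(t_1\oplus t_2)\circ((\Id\otimes v)\oplus(\Id\otimes v))$ splits as $(t_1\circ(\Id\otimes v))\oplus(t_2\circ(\Id\otimes v))$ by bifunctoriality of $\oplus$, and the IH on each branch closes the case.

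The genuinely delicate cases are the context-splitting rules $\otimes_i$, $\one_e$, $\otimes_e$, $\multimap_e$, $\oplus_e$, $\odot_e$: there the derivation writes $\Gamma,x:A=\Gamma_1,\Gamma_2$ with $x$ belonging to exactly one side, so $v$ together with its own context $\Delta$ must be threaded into that tensor factor only. The \emph{main obstacle} is thus the coherence bookkeeping: one has to insert the associator and symmetry to move $\sem\Delta$ next to the factor holding $x$, use their naturality to slide $\Id\otimes v$ past the tensor and splitting maps, and apply the IH to the affected premise while leaving the other untouched. For $\oplus_e$ and $\odot_e$ this additionally requires the naturality of $d$ and, in the $\odot_e$ case, that of $\sumwe{\semS{\escalar p}}{\semS{\escalar q}}$, so that $\Id\otimes v$ can be commuted past the distribution and the (weighted) codiagonal before the IH is invoked; the bound variables introduced by these eliminators are kept distinct from $x$ by the usual variable convention. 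None of these steps is conceptually hard, but the correct placement of unitors, associators, and symmetries is where essentially all the care lies.
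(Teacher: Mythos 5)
Your proposal is correct and follows essentially the same route as the paper's proof in~\ref{app:subst}: induction on the syntax-directed typing derivation, with naturality of $\Delta$, $\eta$, $d$, and $\sumwe{\semS{\escalar p}}{\semS{\escalar q}}$, together with bifunctoriality of $\otimes$ and $\oplus$ and coherence bookkeeping, handling the shared-context and context-splitting cases exactly as the paper does. The only deviations are cosmetic: your zero-object shortcut for $\zero_e$ (valid, since $0\otimes X$ is again a zero object in a monoidal closed category with biproducts) replaces the paper's appeal to the induction hypothesis in the sub-case where the substituted variable occurs in the premise of type $\zero$, and in the $\oplus_e$/$\odot_e$ cases where the variable lies in the shared context $\Delta$ the induction hypothesis must be applied to \emph{both} branch premises rather than to a single ``affected premise''.
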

\begin{proof}
  By induction on $t$. Given in~\ref{app:subst}.
\end{proof}

\needspace{4em}
\begin{theorem}[Soundness]
  \label{thm:soundness}
  Let $\Gamma\vdash t:A$.
  \begin{itemize}
    \item If $t\lra r$, by any rule but $({\relimsup \ell})$ and $({\relimsup r})$,  then
      \[
	\sem{\Gamma\vdash t:A}=\sem{\Gamma\vdash r:A}.
      \]
    \item If $t\lra[\escalar p] r_1$ by rule $({\relimsup \ell})$ and $t\lra[\escalar q] r_2$ by rule $({\relimsup r})$, 
       then
      \[
	\sem{\Gamma\vdash t:A} = \sumwe{\semS{\escalar p}}{\semS{\escalar q}}\circ(\sem{\Gamma\vdash r_1:A}\oplus\sem{\Gamma\vdash r_2:A})\circ\Delta,
      \]
      that is,
      \begin{center}
	\begin{tikzcd}[labels=description,row sep=8mm,column sep=1.5cm]
	  \sem\Gamma\ar[r,"t"]\ar[d,"\Delta"] & \sem A\\
	  \sem\Gamma\oplus\sem\Gamma\ar[r,"r_1\oplus r_2"] & \sem A\oplus\sem A.\ar[u,"{\sumwe{\semS{\escalar p}}{\semS{\escalar q}}}"]
	\end{tikzcd}
      \end{center}
  \end{itemize}
\end{theorem}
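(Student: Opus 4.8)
The plan is to proceed by induction on the derivation of $t\lra[\escalar p] r$, with a case analysis on the last rule applied. The base cases are the individual reduction rules of Figures~\ref{fig:betarules} and~\ref{fig:commutationrules}, and the only inductive case is the congruence rule $(C)$. I would split the argument into the non-probabilistic rules, for which the goal is $\sem{\Gamma\vdash t:A}=\sem{\Gamma\vdash r:A}$, and the two probabilistic rules $({\relimsup \ell})$, $({\relimsup r})$, for which the goal is the weighted form. Throughout, subject reduction (Theorem~\ref{thm:SR}) guarantees that every term involved is well typed, so each interpretation that appears is defined.

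Among the beta group, the rules that fire a substitution---$(\beta)$, $(\elimtens)$, $({\elimoplus}_1)$, $({\elimoplus}_2)$---reduce, after unfolding the interpretation of the relevant elimination, to one application of the Substitution Lemma (Lemma~\ref{lem:subst}). The projection rules $(\pi_1)$, $(\pi_2)$, $(\pi^\odot_1)$, $(\pi^\odot_2)$ follow from the biproduct equations $\pi_1\circ\Delta=\Id$ and $\pi_1\circ(f\oplus g)=f\circ\pi_1$, as in $\sem{\pi_1\pair t u}=\pi_1\circ(\sem t\oplus\sem u)\circ\Delta=\sem t$. The rule $(\elimone)$ is handled by the definition of the scalar map and the coherence of $\lambda$ and $\rho$. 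For the commutation group of Figure~\ref{fig:commutationrules}, I would use the semiadditive structure (Theorem~\ref{lem:monoid}) for the $\plus$ rules and the properties of the scalar map (Lemma~\ref{lem:PropsS}, notably $\hat s_{A\oplus B}=\hat s_A\oplus\hat s_B$ and $\hat s_{A\otimes B}=\hat s_A\otimes\Id$) together with the distribution maps of Corollary~\ref{cor:maps} for the $\bullet$ rules; the rules at the unit, $(\plus_\one)$ and $(\bullet_\one)$, additionally invoke $\hat s_I=s$ (Lemma~\ref{lem:PropsS}) and that $\semS\cdot$ is a semiring homomorphism. Each rule then reduces to a short diagram chase matching the two sides.

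The two probabilistic rules are treated together, since they share the redex $\elimsup(\super{t_1}{t_2},x.r,y.s)$. I would unfold the interpretation of $\elimsup$ (which places $\sumwe{\semS{\escalar p}}{\semS{\escalar q}}$ at the end) and of $\super{t_1}{t_2}$ (which places $\Delta$ at the front), then collapse the biproduct composite built from $d$ and $t_1\oplus t_2$ via the biproduct laws. After one application of the Substitution Lemma to each branch, I expect to reach exactly $\sumwe{\semS{\escalar p}}{\semS{\escalar q}}\circ(\sem{(t_1/x)r}\oplus\sem{(t_2/y)s})\circ\Delta$, which is the asserted weighted form with $r_1=(t_1/x)r$ and $r_2=(t_2/y)s$.

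The main obstacle is the congruence rule $(C)$. Here I would argue by induction on the shape of the context $K$, using that the interpretation is compositional, so that $\sem{K[t]}$ is obtained from $\sem{t}$ by pre/post-composition and tensoring with maps fixed by $K$. When the inner reduction is non-probabilistic, $\sem t=\sem r$ propagates immediately. When it is probabilistic, so $\sem t=\sumwe pq\circ(\sem{r_1}\oplus\sem{r_2})\circ\Delta$, I must show that this weighted structure commutes outward through every context former; this is precisely what the machinery of Section~\ref{sec:seminodules} supplies---the naturality of $\sumwe pq$ (Lemma~\ref{lem:sumwe-nat}) and of $\Delta$, the commutation identities $\sumwe pq\circ d=\sumwe pq\otimes B$ and $\sumwe pq\circ\gamma=\home A{\sumwe pq}$ (Corollary~\ref{cor:PropsF}), their duals $d^{-1}\circ\Delta=\Delta\otimes\Id$ and $\gamma^{-1}\circ\Delta=\home A\Delta$ (Corollary~\ref{cor:PropsFinv}), and, where an $\elimsup$ sits above another, the iteration law of Lemma~\ref{lem:iteration-sumwe}. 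The delicate point is ensuring that the $\Delta$ at the front and the $\sumwe pq$ at the back can always be slid past the intervening interpretation clauses, so that the weighted form is recovered for $K[r_1]$ and $K[r_2]$.
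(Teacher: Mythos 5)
Your proposal is correct and follows essentially the same route as the paper's own proof: induction on the reduction relation, with the beta-group rules discharged by the Substitution Lemma and biproduct/scalar-map identities, the two probabilistic rules treated jointly to produce the weighted form, and the congruence rule handled by an inner induction on the shape of $K$ that slides $\Delta$ and $\sumwe{\semS{\escalar p}}{\semS{\escalar q}}$ outward using the naturality and distribution lemmas of Section~\ref{sec:seminodules}. The only detail worth adding is that for the context formers that duplicate the environment ($K'[\cdot]\plus u$, $\pair{K'[\cdot]}{u}$, $\super{K'[\cdot]}{u}$, and the branches of $\elimoplus$/$\elimsup$), the paper also needs the $\delta$-map machinery (Lemmas~\ref{lem:delta-nat}, \ref{lem:prop-times-sumwe}, and~\ref{lem:deltaDelta}) beyond the lemmas you cite, which falls under the ``delicate point'' you correctly flagged.
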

\begin{proof}
  By induction on the relation $\lra[\escalar p]$, using the properties proven in the previous section.
	The full details are given in~\ref{app:soundness}.
\end{proof}

\subsection{Adequacy and lack of full abstraction}
As usual, our interpretation is not complete with respect to the reduction
relation because we do not consider eta-rules. For example, $\vdash\lambda
x.\pair{\fst x}{\snd x}:A\with B\multimap A\with B$ has the same interpretation
as $\vdash\lambda x.x:A\with B\multimap A\with B$, but one does not reduce to
the other.
Thus, we can only expect it to be {\em adequate} with respect to an observational equivalence\footnote{This equivalence is not the same as that from Definition~\ref{def:compeq}, which was defined only for the fragment without $\odot$.}.

Note also that soundness with respect to observational equivalence---and hence
full abstraction---fails in our setting. For instance, the proof-terms
$1\bullet\lambda x.\pi_1(x)$ and $2\bullet\lambda x.\pi_1(x)$ of the
proposition $(1\with 0)\multimap 1$ are observationally equivalent, since there
is proof of $1\with 0$ to distinguish them. However, their denotations differ,
because $\sem{\zero}$ is not empty.

In addition, we have chosen to not distinguish in the semantics certain
situations. For example, let $S=\mathbb{R}^{\geq 0}$,
$t = \pelimsupof{\frac{1}{2}}{\frac{1}{2}}{\super{\frac{1}{2}.\star}
{\frac{1}{2}.\star}}{x}{x}{y}{y}$
and
$u = \pelimsupof{\frac{1}{2}}{\frac{1}{2}}{\super{\frac{3}{4}.\star}
{\frac{1}{4}.\star}}{x}{x}{y}{y}$.
Then $t$ reduces with probability $1$ to $\frac{1}{2}.\star$, while $u$
reduces with probability $1/2$ to $\frac{3}{4}.\star$ and with probability
$1/2$ to $\frac{1}{4}.\star$.
However, both terms have the same semantics:
\[
  \sem{\vdash t:\one} = \frac{1}{2}\bullet\hat{\frac{1}{2}} \plus
  \frac{1}{2}\bullet\hat{\frac{1}{2}} = \hat{\frac{1}{2}}
  = \frac{1}{2}\bullet\hat{\frac{3}{4}} \plus
  \frac{1}{2}\bullet\hat{\frac{1}{4}} = \sem{\vdash u:\one}.
\]

This design choice clearly draws the line between probabilistic and
quantum behaviour. While the syntax and the idea of indistinguishability
are conceptually inspired by the density matrix formalism in quantum
physics---where different ensembles of pure states can yield the same mixed
state---our calculus remains strictly probabilistic. There are no quantum
phases, amplitudes, or interference phenomena. Instead, we adopt the idea of
representing probabilistic combinations of outcomes as weighted sums of
terms. We consider both the scalars within the terms and the probabilities
arising from reductions uniformly as classical probabilities.

Consequently, we must establish adequacy (Theorem~\ref{thm:adequacy}) with
respect to a ``mixed computational equivalence'' (Definition~\ref{def:mce}),
which equates terms such as $t$ and $u$. Through the Curry-Howard
correspondence, terms represent proofs, and this operational equivalence
becomes the natural notion of equivalence for proofs in our setting. It
identifies two proofs if they observationally yield the same probabilistic
distribution of canonical proofs, effectively internalising probabilistic
reasoning within the proof system itself.

\begin{definition}[Elimination context]
  \label{def:eliminationContextIMALL}
  An elimination context is a typed proof-term context (cf.~Definition~\ref{def:term-context})
  produced by the following grammar.
  \begin{align*}
    K & := [\cdot]\mid\elimtens(K,xy.t)\mid Kt\mid\fst(K)\mid\snd(K)       
     \mid\elimoplus(K,x.t,y.u)\mid\fstsup(K)\mid\sndsup(K),
  \end{align*}
  where
  in $\elimtens(K,xy.t)$, and
  $\elimoplus(K,x.t,y.u)$, 
  the proposition proved by $K$ is strictly bigger than that proved by $t$ and $u$.
\end{definition}

To distinguish between two programs, we can require that these programs, when
placed in any elimination context of a certain type, produce the same outputs. In
particular, that type must admit more than one closed value, which is the case
with $\one$, when there are more than one element in $\mathcal S$, as all the
proof-terms $\escalar s.\star$ are proofs of $\one$. This renders our adequacy
result unsuitable for the category $\Rel$, which serves as a model of
the \OC only in the case of $\mathcal S=\{\star\}$ (see
Example~\ref{ex:injective}.\ref{ex:noRel}). However, the case $\mathcal
S=\{\star\}$ is a degenerate case, as all the rules from
Figure~\ref{fig:commutationrules} become trivial, and we may be able to find a
simpler model for that particular case than the one presented in this paper.

\begin{definition}
  \label{def:basicTypes}
  A basic proposition $\tau$ is either
  $\one$ or $\top$.
\end{definition}



\begin{definition}
	Let $P = [t_0,\ldots,t_n]$ be a list of terms. 

	We write $t\twoheadrightarrow_{\escalar p}^P v$ if 
	\[
	t = t_0\lra[\escalar p_1] t_1\lra[\escalar p_2] \cdots \lra[\escalar p_n] t_n = v,
	\]
	where $n\geq 0$ and $\prod_{i=1}^n \escalar p_i=\escalar p$.

 	That is, the product of the probabilities of all the reductions along the path, give us the probability of the path.
\end{definition}
\begin{notation}
	We write $P_v$ for the list $P$ if $v$ is the last element of the list.
\end{notation}




\begin{definition}
  [Probabilistic computational equivalence]
  \label{def:ndequivIMALL}
  Let $\vdash t:A$ and $\vdash u:A$. We say that $t$ and $u$ are
  probabilistically equivalent, notation $t\sim u$, if for every elimination
  context $[\cdot]:A\vdash K:\tau$ such that $\tau$ is a basic proposition, we have that
	$\forall P$, $K[t]\twoheadrightarrow_{\escalar p}^P v$ iff $K[u]\twoheadrightarrow_{\escalar p}^P v$.
\end{definition}

\begin{definition}
  [Multiset of probability distribution of values of a term]
  The multiset of probability distributions of values of a term $t$ is the following multiset of terms,
  \[
    \spdv[t]
    =\{\escalar p\bullet v : t\twoheadrightarrow_{\escalar p}^P v\}.
  \]
\end{definition}

\begin{notation}
We write $\sum_{t\in T} t$ for the term produced by the constructor $\plus$ with the terms of the set $T$ taken in a lexicographical order, and associating the parenthesis to the left.
For example,  let $T=\{ t_1,t_2,t_3 \}$ with $t_1<t_2<t_3$, then,
\[
  \sum_{t\in T} t = (t_1\plus t_2)\plus t_3.
\]
\end{notation}

\begin{definition}
  [Mixed computational equivalence]
  \label{def:mce}
  Let $\vdash t:A$ and $\vdash u:A$. We say that $t$ and $u$ are
  mixed computational equivalent, notation $t\approx u$, if 
  \[
    \sum_{t'\in \spdv[t]} t'
    \sim
    \sum_{u'\in \spdv[u]} u'.
  \]
\end{definition}

In order to prove adequacy (Theorem~\ref{thm:adequacy}), we need the following alternative formulation of soundness.
\begin{corollary}
  [Soundness]
  \label{cor:soundness}
  $\sem{\vdash t:A}=\sem{\vdash \sum_{t'\in \spdv[t]} t':A}$.
\end{corollary}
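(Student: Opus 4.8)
The plan is to reduce the corollary to an equality of morphisms and then prove that equality by well-founded induction on reduction, dispatching the two cases with the two clauses of Theorem~\ref{thm:soundness}. First I would note that $\spdv[t]$ is a \emph{finite} multiset: by strong normalization (Theorem~\ref{thm:SN}) there is no infinite reduction from $t$, and since every term has finitely many redexes and each step has at most two outcomes, the reduction tree of $t$ is finitely branching, hence finite by König's lemma. Thus $\sum_{t'\in\spdv[t]}t'$ is a well-defined finite iterated $\plus$, and because the interpretation of $\plus$ is the sum $+$ of Definition~\ref{defmapaflecha}, which is associative, commutative and unital by the semiadditive structure (Theorem~\ref{lem:monoid}), we get $\sem{\vdash\sum_{t'\in\spdv[t]}t':A}=\sum_{t'\in\spdv[t]}\sem{\vdash t':A}$. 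It therefore suffices to prove
\[
  \sem{\vdash t:A}=\sum_{t'\in\spdv[t]}\sem{\vdash t':A}.
\]

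I would prove this by induction on the length of the longest reduction from $t$, which is well-founded by Theorem~\ref{thm:SN}. If $t$ is normal, the only path is the empty one, so $\spdv[t]=\{1_{\mathcal S}\bullet t\}$ and the right-hand side is $\widehat{\semS{1_{\mathcal S}}}\circ\sem{\vdash t:A}=\sem{\vdash t:A}$, using that $\semS\cdot$ is a semiring monomorphism, whence $\semS{1_{\mathcal S}}=\Id_I$, together with $\widehat{\Id_I}=\Id$ from the definition of the scalar map. If $t$ is not normal, I would pick a redex. In the deterministic case $t\lra r$ (by a rule other than ${\relimsup \ell}$ and ${\relimsup r}$) one has $\spdv[t]=\spdv[r]$, and combining Theorem~\ref{thm:soundness} ($\sem{t}=\sem{r}$) with the induction hypothesis applied to $r$ (whose longest reduction is strictly shorter) closes this case.

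In the probabilistic case the chosen redex is an $\elimsup$ giving $t\lra[\escalar p]r_1$ and $t\lra[\escalar q]r_2$, and the maximal paths of $t$ split according to the branch taken, so $\spdv[t]$ is the multiset union of the $\escalar p$-scaled copy of $\spdv[r_1]$ and the $\escalar q$-scaled copy of $\spdv[r_2]$, where scaling by $\escalar p$ replaces each $\escalar s\bullet v$ by $(\escalar p\produ\escalar s)\bullet v$. Unfolding $\sumwe{\semS{\escalar p}}{\semS{\escalar q}}=\coproducto{\widehat{\semS{\escalar p}}}{\widehat{\semS{\escalar q}}}=\nabla\circ(\widehat{\semS{\escalar p}}\oplus\widehat{\semS{\escalar q}})$ and using Definition~\ref{defmapaflecha}, the right-hand side of Theorem~\ref{thm:soundness} rewrites to $\widehat{\semS{\escalar p}}\circ\sem{r_1}+\widehat{\semS{\escalar q}}\circ\sem{r_2}$; by bilinearity of composition and the induction hypothesis on $r_1$ and $r_2$, this equals $\sum_{t'\in\spdv[t]}\sem{t'}$, using $\widehat{\semS{\escalar s}}\circ\sem{v}=\sem{\escalar s\bullet v}$ and $\widehat{\semS{\escalar p\produ\escalar s}}=\widehat{\semS{\escalar p}}\circ\widehat{\semS{\escalar s}}$.

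The main obstacle is the claim that $\spdv[t]$ decomposes according to the chosen first redex, i.e.\ that the multiset of value/probability pairs collected over all maximal paths does not depend on the order in which redexes are contracted. For deterministic redexes this is confluence of the non-probabilistic fragment (Theorem~\ref{thm:confluence}), which lets one permute any deterministic step to the front of every path without altering its final value or cumulative probability; for the probabilistic $\elimsup$ redex one additionally needs that its two-way branching commutes with deterministic contraction occurring elsewhere, so that every maximal path can be rearranged to fire the chosen redex first. Making this commutation precise---essentially a Church--Rosser-modulo argument for the probabilistic reduction---is the delicate point, and once it is established the two soundness clauses do all the remaining work.
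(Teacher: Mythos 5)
Your proposal follows essentially the same route as the paper: both arguments amount to applying Theorem~\ref{thm:soundness} recursively down the reduction tree and then using the algebra of the weighted codiagonal---$\sumwe{\semS{\escalar p}}{\semS{\escalar q}}=\nabla\circ(\widehat{\semS{\escalar p}}\oplus\widehat{\semS{\escalar q}})$, composition of scalar maps $\widehat{\semS{\escalar p}}\circ\widehat{\semS{\escalar s}}=\widehat{\semS{\escalar p\produ\escalar s}}$, and the semiadditive structure---to flatten the nested weighted combinations into a single weighted sum over $\spdv[t]$. The differences are organizational. The paper avoids your three-way case analysis by two WLOG normalizations: every deterministic step $t\lra r$ is doubled into a binary branching $t\lra r$, $t\lra[0_{\mathcal S}] r$ (sound because $f=\sumwe{\semS{1_{\mathcal S}}}{\semS{0_{\mathcal S}}}\circ(f\oplus f)\circ\Delta$), and values are made to reduce to themselves so that all leaves of the tree sit at the same depth; the corollary is then proved by unfolding this complete binary tree level by level, the inductive step being one commuting diagram (naturality of $\nabla\oplus\nabla$ plus composition of scalar maps), which is exactly your probabilistic case in disguise. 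Your induction on the length of the longest reduction, justified by Theorem~\ref{thm:SN}, is equivalent and arguably more elementary, and your treatment of the normal-form base case (via $\semS{1_{\mathcal S}}=\Id_I$ and $\widehat{\Id_I}=\Id$) is correct. Finally, the ``delicate point'' you flag---that $\spdv[t]$ decomposes according to the chosen first redex, independently of the order in which redexes are fired---is not resolved by the paper either: its assumptions implicitly fix one reduction tree for $t$ (one redex contracted per step), no Church--Rosser-modulo argument is given, and Theorem~\ref{thm:confluence} explicitly excludes the rules ${\relimsup \ell}$ and ${\relimsup r}$. So your proof is no less complete than the paper's; you have simply made explicit a gap that the paper silently steps over.
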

\begin{proof}
  Without lost of generality, we make the following assumptions.
  \begin{itemize}
    \item In order to represent the reductions, we make the following modification: for each reduction rule of the form $t\lra[1_{\mathcal S}] r$ we add a new reduction new rule of the form $t\lra[0_{\mathcal S}] r$. It is easy to see that Theorem~\ref{thm:soundness} still stands, since 
      $f 
      = \nabla\circ({\Id_I}\times{0_I})\circ(f\times f)\circ\Delta
      = \sumwe{\semS{1_{\mathcal S}}}{\semS{0_{\mathcal S}}}\circ(f\times f)\circ\Delta
      $.
    \item In addition, to make the analysis easier, we also consider that the reduction tree for a term has all its leaves at the same level, by simply continue reducing the values to themselves until all the branches had reached its values. This does not alter the analysis, because the interpretation of a term is the same as its reduct when it reduces with ``probability'' $1_{\mathcal S}$.
    \item Finally, we use this notation: The first reducts of $t$ are the terms $r_0$ and $r_1$. The next level is as follows.
      The reducts of $r_0$ are $r_{00}$ and $r_{01}$ and the reducts of $r_1$ are $r_{10}$ and $r_{11}$. 
      The next level will add one more bit, and so on. The scalars associated to each reduction follows the same pattern.

      Thus, the term $r_{b_1\cdots b_n}$ is the one reached by the following path 
      \[
	t\lra[\escalar p_{b_1}] r_{b_1}\lra[\escalar p_{b_1b_2}]r_{b_1b_2}\lra[\escalar p_{b_1b_2b_3}]\cdots\lra[\escalar p_{b_1\cdots b_n}]r_{b_1\cdots b_n}.
      \]
  \end{itemize}
  By Theorem~\ref{thm:soundness}, we know that
  \begin{equation}
    \label{eq:known}
    \sem{\vdash t:A} = \sumwe{\semS{\escalar p_0}}{\semS{\escalar p_1}}\circ(\sem{\vdash r_0:A}\times\sem{\vdash r_1:A})\circ\Delta.
  \end{equation}
  Using the same Theorem~\ref{thm:soundness}, we also have
  \begin{align*}
    \sem{\vdash r_0:A} &= \sumwe{\semS{\escalar p_{00}}}{\semS{\escalar p_{01}}}\circ(\sem{\vdash r_{00}:A}\times\sem{\vdash r_{01}:A})\circ\Delta\textrm{, and}\\
    \sem{\vdash r_1:A} &= \sumwe{\semS{\escalar p_{10}}}{\semS{\escalar p_{11}}}\circ(\sem{\vdash r_{10}:A}\times\sem{\vdash r_{11}:A})\circ\Delta.
  \end{align*}

  Let $\overline{\escalar p_{b_1\cdots b_n}} = \escalar p_{b_1}\escalar p_{b_1b_2}\dots \escalar p_{b_1\cdots b_n}$.
  Then, we have
  \begin{align}\nonumber
    &\sem{\vdash \sum_{t'\in \spdv[t]} t':A}
    =
    \sem{\vdash
      \overline{\escalar p_{0\cdots 0}}\bullet r_{0\cdots 0}
      \plus\cdots\plus
    \overline{\escalar p_{1\cdots 1}}\bullet r_{1\cdots 1}:A}
    \\
    \label{eq:toshow}
    &=
    \overline{\nabla}_n\circ(\widehat{\semS{\overline{{\escalar p_{0\cdots 0}}}}}\times\dots\times\widehat{\semS{\overline{\escalar p_{1\cdots 1}}}})\circ
    (\sem{\vdash r_{0\cdots 0}:A}
      \times\cdots\times
    \sem{\vdash r_{1\cdots 1}:A})\circ\overline{\Delta}_n,
  \end{align}
  where
  \begin{align*}
  \overline{\Delta}_1 &= \Delta  &
  \overline{\nabla}_1 &= \nabla \\
  \overline{\Delta}_{n+1} &=\Delta^{2^n}\circ\overline{\Delta}_n &
  \overline{\nabla}_{n+1} &=\nabla^{2^n}\circ\overline{\nabla}_n.
  \end{align*}

  We must prove that \eqref{eq:known} = \eqref{eq:toshow}.

  From equation \eqref{eq:known}, we have:
  \begin{align*}
    &\sem{\vdash t:A}\\
    &= \sumwe{\semS{\escalar p_0}}{\semS{\escalar p_1}}\circ(\sem{\vdash r_0:A}\times\sem{\vdash r_1:A})\circ\Delta\\
    &=
    \sumwe{\semS{\escalar p_0}}{\semS{\escalar p_1}}
    \circ
    \big(
      \begin{aligned}[t]
	&(\sumwe{\semS{\escalar p_{00}}}{\semS{\escalar p_{01}}}\circ(\sem{\vdash r_{00}:A}\times\sem{\vdash r_{01}:A})\circ\Delta)
	\times\\
	&(\sumwe{\semS{\escalar p_{10}}}{\semS{\escalar p_{11}}}\circ(\sem{\vdash r_{10}:A}\times\sem{\vdash r_{11}:A})\circ\Delta)
      \big)
      \circ\Delta
    \end{aligned}
    \\
    &=
    \sumwe{\semS{\escalar p_0}}{\semS{\escalar p_1}}
    \circ
    \big(
      \begin{aligned}[t]
	&(\sumwe{\semS{\escalar p_{00}}}{\semS{\escalar p_{01}}}\circ(\sem{\vdash r_{00}:A}\times\sem{\vdash r_{01}:A}))
	\times\\
	&(\sumwe{\semS{\escalar p_{10}}}{\semS{\escalar p_{11}}}\circ(\sem{\vdash r_{10}:A}\times\sem{\vdash r_{11}:A}))
      \big)
      \circ\overline{\Delta}_1
    \end{aligned}
    \\
    &=
      \begin{aligned}[t]
    &\nabla\circ(\hat{\semS{\escalar p_0}}\times\hat{\semS{\escalar p_1}})
    \circ
	\\
    \big(
	&
	(\nabla\circ(\widehat{\semS{\escalar p_{00}}}\times\widehat{\semS{\escalar p_{01}}})\circ(\sem{\vdash r_{00}:A}\times\sem{\vdash r_{01}:A}))
	\times\\
	&(\nabla\circ(\widehat{\semS{\escalar p_{10}}}\times\widehat{\semS{\escalar p_{11}}})\circ(\sem{\vdash r_{10}:A}\times\sem{\vdash r_{11}:A}))
      \big)
      \circ\overline{\Delta}_1
    \end{aligned}
    \\
    {\color{red}(*)}&=
      \begin{aligned}[t]
   & \overline{\nabla}_1\circ(\widehat{\semS{\overline{\escalar p_{00}}}}\times\widehat{\semS{\overline{\escalar p_{01}}}}\times\widehat{\semS{\overline{\escalar p_{10}}}}\times\widehat{\semS{\overline{\escalar p_{11}}}})
    \circ
	\\
	&
    \big(
	(\sem{\vdash r_{00}:A}\times\sem{\vdash r_{01}:A})
	\times
	(\sem{\vdash r_{10}:A}\times\sem{\vdash r_{11}:A})
      \big)
      \circ\overline{\Delta}_1
    \end{aligned}
  \end{align*}
  Where the equality ${\color{red}(*)}$ is justified by the following commuting diagram.
  \[
    \begin{tikzcd}[labels=description,row sep=1cm,column sep=1cm,
	execute at end picture={
	  \path
	  (\tikzcdmatrixname-2-1) -- (\tikzcdmatrixname-2-3) node[yshift=5mm,pos=0.8,red]{\small (Compositionality)}
	  (\tikzcdmatrixname-4-1) -- (\tikzcdmatrixname-2-3) node[yshift=5mm,midway,red,sloped]{\small (Naturality of $\nabla\times\nabla$)}
	  (\tikzcdmatrixname-4-1) -- (\tikzcdmatrixname-2-3) node[yshift=-5mm,midway,red,sloped]{\small (Same map)}
	  ;
	}
      ]
      {\mathcal S} &[-1.4cm] {(\mathcal S\times \mathcal S)\times (\mathcal S\times \mathcal S)} 
      &[2.4cm] {(A\times A)\times (A\times A)} \\
      {(A\times A)\times (A\times A)} && {(A\times A)\times (A\times A)} \\
      {A\times A} & & {A\times A} \\
      {A\times A} && A 
      \arrow["{\overline{\Delta}_1}", from=1-1, to=1-2]
      \arrow["{(r_{00}\times r_{01})\times (r_{10}\times r_{11})}", from=1-2, to=1-3]
      \arrow["{f}",sloped, from=1-3, to=2-1]
      \arrow["\nabla\times\nabla", from=2-1, to=3-1]
      \arrow["{\hat{p_0}\times\hat{p_1}}", from=3-1, to=4-1]
      \arrow["\nabla", from=4-1, to=4-3]
      \arrow["{g}", from=1-3, to=2-3]
      \arrow["\nabla\times\nabla", from=2-3, to=3-3]
      \arrow["\nabla"{description}, from=3-3, to=4-3]
      \arrow["{(\widehat{\semS{\escalar p_0}}\times \widehat{\semS{\escalar p_0}})\times(\widehat{\semS{\escalar p_1}}\times \widehat{\semS{\escalar p_1}})}"', dashed, from=2-1, to=2-3]
      \arrow["\nabla\times\nabla", dashed, from=2-3, to=4-1,sloped]
    \end{tikzcd}
	\]
	\begin{align*}
		\textrm{where}\qquad
	f&=(\widehat{\semS{\escalar p_{00}}}\times\widehat{\semS{\escalar p_{01}}})\times(\widehat{\semS{\escalar p_{10}}}\times\widehat{\semS{\escalar p_{11}}})\\
	g&=(\widehat{\semS{\escalar p_0\escalar p_{00}}}\times \widehat{\semS{\escalar p_0\escalar p_{01}}})\times(\widehat{\semS{\escalar p_1\escalar p_{10}}}\times \widehat{\semS{\escalar p_1\escalar p_{11}}})
	\qedhere
	\end{align*}
\end{proof}

\begin{theorem}[Adequacy of $\approx$]
  \label{thm:adequacy}
  If $\sem{\vdash t:A}=\sem{\vdash u:A}$ then $t\approx u$.
\end{theorem}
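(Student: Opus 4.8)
The plan is to unfold the definition of $\approx$, use the reformulated soundness of Corollary~\ref{cor:soundness} to replace $t$ and $u$ by the ``mixed'' terms they determine, and then reduce the whole question to an observation at a basic proposition, where the monomorphism $\semS{\cdot}$ of Definition~\ref{def:SM} lets us read the syntax off the semantics. Concretely, I would set $T := \sum_{t'\in\spdv[t]} t'$ and $U := \sum_{u'\in\spdv[u]} u'$. By Corollary~\ref{cor:soundness} we have $\sem{\vdash t:A}=\sem{\vdash T:A}$ and $\sem{\vdash u:A}=\sem{\vdash U:A}$, so the hypothesis $\sem{\vdash t:A}=\sem{\vdash u:A}$ gives $\sem{\vdash T:A}=\sem{\vdash U:A}$. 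By Definition~\ref{def:mce} it then remains to prove $T\sim U$, i.e.\ (Definition~\ref{def:ndequivIMALL}) that for every elimination context $[\cdot]:A\vdash K:\tau$ with $\tau$ a basic proposition (Definition~\ref{def:basicTypes}), the probabilistic reductions of $K[T]$ and $K[U]$ yield the same observations. So I fix such a $K$.

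The second step is a compositionality lemma, proved by induction on the grammar of Definition~\ref{def:eliminationContextIMALL}: the interpretation of an elimination context factors as a fixed morphism precomposed with the interpretation of its hole, so that $\sem{\vdash T:A}=\sem{\vdash U:A}$ entails $\sem{\vdash K[T]:\tau}=\sem{\vdash K[U]:\tau}$. Each clause of the grammar places the hole in the principal (eliminated) position, and since $T$ and $U$ are closed, the substitution Lemma~\ref{lem:subst} together with the syntax-directed clauses of Definition~\ref{def:semTerms} yields the desired factorisation through $\sem{\vdash T:A}$ and $\sem{\vdash U:A}$.

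The third step is the base case, where the denotational and operational pictures must be matched. For $\tau=\top$ the only closed value is $\langle\rangle$ and the claim is immediate. For $\tau=\one$, I would invoke strong normalisation (Theorem~\ref{thm:SN}) and the introduction property (Theorem~\ref{thm:intros}) to see that the values reached by $K[T]$ are of the form $\escalar s.\star$; Corollary~\ref{cor:soundness} applied to $K[T]$ then shows that the aggregate scalar of its reduction multiset is the unique $\escalar s\in\mathcal S$ with $\semS{\escalar s}=\sem{\vdash K[T]:\one}$, which exists and is unique precisely because $\semS{\cdot}$ is a monomorphism. The same holds for $K[U]$, and since $\sem{\vdash K[T]:\one}=\sem{\vdash K[U]:\one}$ by the previous step, the two aggregate (``mixed-state'') observations coincide, which is exactly what $\sim$ compares at a basic type. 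This gives $T\sim U$, that is, $t\approx u$.

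The hard part will be the third step: making rigorous the identification of the observable compared by $\sim$ with the denotational aggregate. The subtlety is that substituting a mixture into an elimination context can resurrect genuine probabilistic (sup-elimination) redexes, so $K[T]$ need not reduce deterministically; two terms with equal semantics may therefore exhibit different reduction ensembles that nonetheless share the same mixed state, exactly as in the density-matrix reading motivating Definition~\ref{def:mce}. The crux is thus to prove that, at a basic type, $\sim$ sees only this mixed state (the aggregate scalar), and that the monomorphism $\semS{\cdot}$ turns the denotational equality into the equality of these aggregates; the auxiliary bookkeeping is supplied by Corollary~\ref{cor:soundness} and by confluence (Theorem~\ref{thm:confluence}) for the sup-elimination-free steps.
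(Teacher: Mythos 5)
Your overall skeleton is the same as the paper's: pass from $t,u$ to the aggregates $T=\sum_{t'\in\spdv[t]}t'$ and $U=\sum_{u'\in\spdv[u]}u'$ via Corollary~\ref{cor:soundness}, transport the hypothesis through the elimination context by compositionality of the interpretation, and conclude at a basic proposition using Theorem~\ref{thm:intros} and the fact that $\semS{\cdot}$ is a monomorphism. The genuine gap is in your third step, precisely at what you call the crux: you plan to prove that, at a basic type, $\sim$ ``sees only the mixed state''. That statement is false, and the paper's own motivating example before Definition~\ref{def:mce} refutes it: $t=\pelimsupof{\frac 12}{\frac 12}{\super{\frac 12.\star}{\frac 12.\star}}xxyy$ and $u=\pelimsupof{\frac 12}{\frac 12}{\super{\frac 34.\star}{\frac 14.\star}}xxyy$ are closed terms of the basic type $\one$ with the same interpretation and the same aggregate, yet $t\not\sim u$, because $t$ reaches only $\frac 12.\star$ while $u$ reaches $\frac 34.\star$ and $\frac 14.\star$. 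By Definition~\ref{def:ndequivIMALL}, $\sim$ compares entire reduction ensembles (which values, along which paths, with which probabilities); no lemma can make it compare less, and this is exactly why the coarser relation $\approx$ had to be introduced. So equality of the aggregate scalars of $K[T]$ and $K[U]$ --- which you do obtain correctly from Corollary~\ref{cor:soundness} and the monomorphism --- does not by itself give $K[T]\sim K[U]$, hence does not give $T\sim U$.

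The paper closes this gap with a claim in the opposite direction to your premise: it asserts that every reduction path issuing from $K[T]$ or $K[U]$ --- an aggregate, i.e.\ a $\plus$/$\bullet$-combination of closed values, plugged into an elimination context (whose grammar excludes $\elimsup$) --- uses only deterministic steps $\lra$. Granting this, strong normalization and confluence of the $\elimsup$-free fragment (Theorems~\ref{thm:SN} and~\ref{thm:confluence}) make the ensemble of $K[T]$ a singleton, namely its unique normal form, so equality of aggregates does then coincide with $\sim$. (The paper also organizes the argument as an induction on $K$, invoking the theorem at the smaller type $\tau$ as induction hypothesis to obtain $K[T]\approx K[U]$ before this determinism step; your direct appeal to Corollary~\ref{cor:soundness} at type $\one$ can legitimately replace that induction, but nothing in your plan replaces the determinism argument.) As written, your proof would therefore fail at the step you yourself flagged as hard: you would be trying to prove a statement that the definitions make false, instead of proving the syntactic determinism property that actually reconciles aggregates with $\sim$.
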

\begin{proof}
  To prove $t\approx u$ we need to prove that
  \[
    \sum_{t'\in \spdv[t]} t'
    \sim
    \sum_{u'\in \spdv[u]} u'.
  \]
  That is, for every elimination context $[\cdot]:A\vdash K:\tau$, we have
  \[
    \forall P,\quad
    {K[\sum_{t'\in \spdv[t]} t']}\twoheadrightarrow_{\escalar p}^P v
    \quad\textrm{iff}\quad
    {K[\sum_{u'\in \spdv[u]} u']}\twoheadrightarrow_{\escalar p}^P v.
  \]
  Or, since the only elimination context $[\cdot]:\tau\vdash K':\tau$ is $[\cdot]$,
  \begin{equation}
    \label{eq:goal}
    K[\sum_{t'\in \spdv[t]} t']
    \sim
    K[\sum_{u'\in \spdv[u]} u'].
  \end{equation}

  We proceed by induction on $K$.
  \begin{itemize}
    \item Let $K = [\cdot]$, we have two cases.
          \begin{itemize}
            \item If $A=\one$, then
                  by Theorem~\ref{thm:intros},
                  $\spdv[t]=\{\escalar p_i\bullet\escalar s_i.\star\}_{i\in I}$ and
                  $\spdv[u]=\{\escalar p'_j\bullet\escalar s'_j.\star\}_{j\in J}$.
                  Then,
                  $\sum_{t'\in \spdv[t]} t'\lra^*(\sum_i\escalar p_i\produ[\mathcal S]\escalar s_i).\star$ and
                  $\sum_{u'\in \spdv[u]} u'\lra^*(\sum_j\escalar p'_j\produ[\mathcal S]\escalar s'_j).\star$.

                  Then, using Corollary~\ref{cor:soundness}, we have
                  \begin{align*}
                    {\semS{\sum_i\escalar p_i\produ[\mathcal S]\escalar s_i}} 
										& =\sem{\vdash (\sum_i\escalar p_i\produ[\mathcal S]\escalar s_i).\star:\one}
                    = \sem{\vdash t:\one} 
                    =\sem{\vdash u:\one} \\
                    &=\sem{\vdash (\sum_j\escalar p'_j\produ[\mathcal S]\escalar s'_j).\star:\one}
                    =\semS{\sum_j\escalar p'_j\produ[\mathcal S]\escalar s'_j}. 
									 \end{align*}
                  Therefore, since $\semS{\cdot}$ is a monomorphism, $(\sum_i\escalar p_i\produ[\mathcal S]\escalar s_i).\star = (\sum_j\escalar p'_j\produ[\mathcal S]\escalar s'_j).\star$, thus $t\approx u$.
            \item
                  If $A=\top$, then by Theorem~\ref{thm:intros}, $t\lra^*\langle\rangle$ and $u\lra^*\langle\rangle$, and thus, $t\sim u$ and so $t\approx u$.
          \end{itemize}
    \item Let $K\neq[\cdot]$.
          By Corollary~\ref{cor:soundness}, we have
          \[
            \sem{\vdash\sum_{t'\in \spdv[t]} t':A} =\sem{\vdash t:A} = \sem{\vdash u:A} =\sem{\vdash\sum_{u'\in \spdv[u]} u':A}.
          \]
          Hence, by composition,
          \[
            \sem{\vdash K[\sum_{t'\in \spdv[t]} t']:\tau} =\sem{\vdash K[\sum_{u'\in \spdv[u]} u']:\tau}.
          \]
          Since $\tau$ is smaller than $A$, otherwise $K$ would have been $[\cdot]$, we can apply the induction hypothesis to conclude that
          \begin{equation}
            \label{eq:conc}
            K[\sum_{t'\in \spdv[t]} t'] \approx K[\sum_{u'\in \spdv[u]} u'].
          \end{equation}

          Since any reduction path started from $K[\sum_{t'\in \spdv[t]} t']$ or from $K[\sum_{u'\in \spdv[u]} u']$ use only reductions $\lra$, Equation~\eqref{eq:conc} implies
          $1_{\mathcal S}\bullet K[\sum_{t'\in \spdv[t]} t'] \sim 1_{\mathcal S}\bullet K[\sum_{u'\in \spdv[u]} u']$, which implies
          the Equation~\eqref{eq:goal}.
          \qedhere
  \end{itemize}
\end{proof}

\section{Conclusion}\label{sec:conclusion}
In this paper, we have presented a categorical characterisation for the proof language \OC, an extension of IMALL with the generalised probabilistic connective $\odot$. We have shown that the essential structure of a symmetric monoidal closed category with biproducts suffices for modelling the \OC, when there exists a monomorphism from the semiring of scalars to the semiring $\Hom II$. A key element in our approach was the abstract definition of the weighted codiagonal map, which underpins the representation of generalised probabilities.  We established soundness and adequacy proofs for this model.

In particular, Corollary~\ref{cor:soundness} gives a summary of the approach: the interpretation of a term is the same as the interpretation of the weighted linear combination of the values it achieves. The map $\sumwe pq$ gives us an abstract representation for this.

Our work offers an alternative approach to existing models relying on probabilistic coherence spaces, cones, or compactness requirements. 
It also generalises the model $\mathcal R^\Pi$ of PCF$^{\mathcal R}$ given in~\cite{LairdManzonettoMcCuskerPaganiLICS13} in two ways: we give a categorical characterisation, and we give a language where the sums and scalar product not only serve as a way to represent probabilities but also as a way to represent vectors and matrices. Furthermore, the categorical model for \OC paves the way for future investigations into the connections between linear logic, verifiable quantum algorithms, and the development of probabilistic programming languages.

\section*{Acknowledgments}
We thank Damiano Mazza for help us understand why the probabilistic coherence spaces are not an example of our model.

\bibliographystyle{elsarticle-harv} 
\bibliography{biblio.bib}

\begin{thebibliography}{34}
\expandafter\ifx\csname natexlab\endcsname\relax\def\natexlab#1{#1}\fi
\providecommand{\url}[1]{\texttt{#1}}
\providecommand{\href}[2]{#2}
\providecommand{\path}[1]{#1}
\providecommand{\DOIprefix}{doi:}
\providecommand{\ArXivprefix}{arXiv:}
\providecommand{\URLprefix}{URL: }
\providecommand{\Pubmedprefix}{pmid:}
\providecommand{\doi}[1]{\href{http://dx.doi.org/#1}{\path{#1}}}
\providecommand{\Pubmed}[1]{\href{pmid:#1}{\path{#1}}}
\providecommand{\bibinfo}[2]{#2}
\ifx\xfnm\relax \def\xfnm[#1]{\unskip,\space#1}\fi
\bibitem[{Abramsky and Coecke(2004)}]{AbramskyCoeckeLICS04}
\bibinfo{author}{Abramsky, S.}, \bibinfo{author}{Coecke, B.},
  \bibinfo{year}{2004}.
\newblock \bibinfo{title}{A categorical semantics of quantum protocols}, in:
  \bibinfo{booktitle}{Proceedings of the 19th Annual IEEE Symposium on Logic in
  Computer Science, 2004.}, pp. \bibinfo{pages}{415--425}.
\bibitem[{Altenkirch and Grattage(2005)}]{AltenkirchGrattageLICS05}
\bibinfo{author}{Altenkirch, T.}, \bibinfo{author}{Grattage, J.},
  \bibinfo{year}{2005}.
\newblock \bibinfo{title}{A functional quantum programming language}, in:
  \bibinfo{booktitle}{Proceedings of the 20th Annual IEEE Symposium on Logic in
  Computer Science (LICS)}, \bibinfo{publisher}{IEEE}. pp.
  \bibinfo{pages}{249--258}.
\bibitem[{Arrighi and D{\'\i}az-Caro(2012)}]{ArrighiDiazcaroLMCS12}
\bibinfo{author}{Arrighi, P.}, \bibinfo{author}{D{\'\i}az-Caro, A.},
  \bibinfo{year}{2012}.
\newblock \bibinfo{title}{A {S}ystem {F} accounting for scalars}.
\newblock \bibinfo{journal}{Logical Methods in Computer Science}
  \bibinfo{volume}{8(1:11)}.
\bibitem[{Arrighi et~al.(2017)Arrighi, D{\'\i}az-Caro and
  Valiron}]{ArrighiDiazcaroValironIC17}
\bibinfo{author}{Arrighi, P.}, \bibinfo{author}{D{\'\i}az-Caro, A.},
  \bibinfo{author}{Valiron, B.}, \bibinfo{year}{2017}.
\newblock \bibinfo{title}{The vectorial lambda-calculus}.
\newblock \bibinfo{journal}{Information and Computation} \bibinfo{volume}{254},
  \bibinfo{pages}{105--139}.
\bibitem[{Arrighi and Dowek(2017)}]{Lineal}
\bibinfo{author}{Arrighi, P.}, \bibinfo{author}{Dowek, G.},
  \bibinfo{year}{2017}.
\newblock \bibinfo{title}{Lineal: A linear-algebraic lambda-calculus}.
\newblock \bibinfo{journal}{Logical Methods in Computer Science}
  \bibinfo{volume}{13}.
\bibitem[{Assaf et~al.(2014)Assaf, D{\'\i}az-Caro, Perdrix, Tasson and
  Valiron}]{AssafDiazcaroPerdrixTassonValironLMCS14}
\bibinfo{author}{Assaf, A.}, \bibinfo{author}{D{\'\i}az-Caro, A.},
  \bibinfo{author}{Perdrix, S.}, \bibinfo{author}{Tasson, C.},
  \bibinfo{author}{Valiron, B.}, \bibinfo{year}{2014}.
\newblock \bibinfo{title}{Call-by-value, call-by-name and the vectorial
  behaviour of the algebraic $\lambda$-calculus}.
\newblock \bibinfo{journal}{Logical Methods in Computer Science}
  \bibinfo{volume}{10(4:8)}.
\bibitem[{Crubillé(2018)}]{CrubilleLICS18}
\bibinfo{author}{Crubillé, R.}, \bibinfo{year}{2018}.
\newblock \bibinfo{title}{Probabilistic stable functions on discrete cones are
  power series}, in: \bibinfo{booktitle}{Proceedings of the 33rd Annual
  ACM/IEEE Symposium on Logic in Computer Science (LICS 2018)},
  \bibinfo{publisher}{ACM}. pp. \bibinfo{pages}{275--284}.
\bibitem[{Danos and Ehrhard(2011)}]{DanosEhrhardIC11}
\bibinfo{author}{Danos, V.}, \bibinfo{author}{Ehrhard, T.},
  \bibinfo{year}{2011}.
\newblock \bibinfo{title}{Probabilistic coherence spaces as a model of
  higher-order probabilistic computation}.
\newblock \bibinfo{journal}{Information and Computation} \bibinfo{volume}{209},
  \bibinfo{pages}{966--991}.
\bibitem[{Dave et~al.(2025)Dave, D\'{\i}az-Caro and
  Zamdzhiev}]{DaveDiazcaroZamdzhievAPLAS25}
\bibinfo{author}{Dave, K.}, \bibinfo{author}{D\'{\i}az-Caro, A.},
  \bibinfo{author}{Zamdzhiev, V.}, \bibinfo{year}{2025}.
\newblock \bibinfo{title}{{IMALL} with a mixed-state modality: A logical
  approach to quantum computation}, in: \bibinfo{booktitle}{APLAS 2025: 23rd
  Asian Symposium on Programming Languages and Systems}.
\bibitem[{D\'{\i}az-Caro and Dowek(2023)}]{DiazcaroDowekTCS23}
\bibinfo{author}{D\'{\i}az-Caro, A.}, \bibinfo{author}{Dowek, G.},
  \bibinfo{year}{2023}.
\newblock \bibinfo{title}{A new connective in natural deduction, and its
  application to quantum computing}.
\newblock \bibinfo{journal}{Theoretical Computer Science}
  \bibinfo{volume}{957}, \bibinfo{pages}{113840}.
\bibitem[{D\'{\i}az-Caro and Dowek(2024)}]{DiazcaroDowekMSCS24}
\bibinfo{author}{D\'{\i}az-Caro, A.}, \bibinfo{author}{Dowek, G.},
  \bibinfo{year}{2024}.
\newblock \bibinfo{title}{A linear linear lambda-calculus}.
\newblock \bibinfo{journal}{Mathematical Structures in Computer Science}
  \bibinfo{volume}{34}, \bibinfo{pages}{1103--1137}.
\bibitem[{D{\'\i}az-Caro et~al.(2019a)D{\'\i}az-Caro, Dowek and
  Rinaldi}]{LambdaS}
\bibinfo{author}{D{\'\i}az-Caro, A.}, \bibinfo{author}{Dowek, G.},
  \bibinfo{author}{Rinaldi, J.P.}, \bibinfo{year}{2019}a.
\newblock \bibinfo{title}{Two linearities for quantum computing in the lambda
  calculus}.
\newblock \bibinfo{journal}{BioSystems} \bibinfo{volume}{186},
  \bibinfo{pages}{104012}.
\newblock \bibinfo{note}{Post-proceedings of TPNC 2017}.
\bibitem[{D{\'\i}az-Caro et~al.(2019b)D{\'\i}az-Caro, Guillermo, Miquel and
  Valiron}]{DiazcaroGuillermoMiquelValironLICS19}
\bibinfo{author}{D{\'\i}az-Caro, A.}, \bibinfo{author}{Guillermo, M.},
  \bibinfo{author}{Miquel, A.}, \bibinfo{author}{Valiron, B.},
  \bibinfo{year}{2019}b.
\newblock \bibinfo{title}{Realizability in the unitary sphere}, in:
  \bibinfo{booktitle}{Proceedings of the 34th Annual ACM/IEEE Symposium on
  Logic in Computer Science (LICS 2019)}, pp. \bibinfo{pages}{1--13}.
\bibitem[{D{\'\i}az-Caro and Malherbe(2019)}]{DiazcaroMalherbeLSFA18}
\bibinfo{author}{D{\'\i}az-Caro, A.}, \bibinfo{author}{Malherbe, O.},
  \bibinfo{year}{2019}.
\newblock \bibinfo{title}{A concrete categorical semantics for {Lambda-S}}, in:
  \bibinfo{editor}{Accattoli, B.}, \bibinfo{editor}{Olarte, C.} (Eds.),
  \bibinfo{booktitle}{Proceedings of the 13th Workshop on Logical and Semantic
  Frameworks with Applications (LSFA 2018)}, \bibinfo{publisher}{Elsevier}. pp.
  \bibinfo{pages}{83--100}.
\bibitem[{D{\'\i}az-Caro and Malherbe(2020)}]{DiazcaroMalherbeACS20}
\bibinfo{author}{D{\'\i}az-Caro, A.}, \bibinfo{author}{Malherbe, O.},
  \bibinfo{year}{2020}.
\newblock \bibinfo{title}{A categorical construction for the computational
  definition of vector spaces}.
\newblock \bibinfo{journal}{Applied Categorical Structures}
  \bibinfo{volume}{28}, \bibinfo{pages}{807--844}.
\bibitem[{D\'{\i}az-Caro and Malherbe(2022)}]{DiazcaroMalherbeLMCS22}
\bibinfo{author}{D\'{\i}az-Caro, A.}, \bibinfo{author}{Malherbe, O.},
  \bibinfo{year}{2022}.
\newblock \bibinfo{title}{Quantum control in the unitary sphere: Lambda-{$S_1$}
  and its categorical model}.
\newblock \bibinfo{journal}{Logical Methods in Computer Science}
  \bibinfo{volume}{18}, \bibinfo{pages}{32:1--32:36}.
\bibitem[{D{\'\i}az-Caro and Malherbe(2022)}]{DiazcaroMalherbe23arXiv}
\bibinfo{author}{D{\'\i}az-Caro, A.}, \bibinfo{author}{Malherbe, O.},
  \bibinfo{year}{2022}.
\newblock \bibinfo{title}{Semimodules and the \mbox{(syntactically-)}linear
  lambda calculus}.
\newblock \bibinfo{howpublished}{{\tt arXiv:2205.02142v2}}.
\bibitem[{Ehrhard et~al.(2017)Ehrhard, Pagani and
  Tasson}]{EhrhardPaganiTasonPOPL17}
\bibinfo{author}{Ehrhard, T.}, \bibinfo{author}{Pagani, M.},
  \bibinfo{author}{Tasson, C.}, \bibinfo{year}{2017}.
\newblock \bibinfo{title}{Measurable cones and stable, measurable functions: a
  model for probabilistic higher-order programming}, in:
  \bibinfo{booktitle}{Proceedings of the 44th ACM SIGPLAN Symposium on
  Principles of Programming Languages (POPL 2017)}, \bibinfo{publisher}{ACM}.
  pp. \bibinfo{pages}{1--28}.
\bibitem[{Ehrnard(2020)}]{EhrhardLICS20}
\bibinfo{author}{Ehrnard, T.}, \bibinfo{year}{2020}.
\newblock \bibinfo{title}{Cones as a model of intuitionistic linear logic}, in:
  \bibinfo{booktitle}{Proceedings of the 35th Annual ACM/IEEE Symposium on
  Logic in Computer Science (LICS 2020)}, \bibinfo{publisher}{ACM}. pp.
  \bibinfo{pages}{370--383}.
\bibitem[{Girard(2004)}]{Girard04}
\bibinfo{author}{Girard, J.Y.}, \bibinfo{year}{2004}.
\newblock \bibinfo{title}{Between Logic and Quantic: a Tract}.
  \bibinfo{publisher}{Cambridge University Press},
  \bibinfo{address}{Cambridge}. chapter~\bibinfo{chapter}{10}.
\newblock London Mathematical Society Lecture Note Series, pp.
  \bibinfo{pages}{346--381}.
\bibitem[{Kelly and Laplaza(1980)}]{KellyLaplaza80}
\bibinfo{author}{Kelly, G.M.}, \bibinfo{author}{Laplaza, M.L.},
  \bibinfo{year}{1980}.
\newblock \bibinfo{title}{Coherence for compact closed categories}.
\newblock \bibinfo{journal}{Journal of Pure and Applied Algebra}
  \bibinfo{volume}{19}, \bibinfo{pages}{193--213}.
\bibitem[{Laird(2016)}]{LairdLiCS16}
\bibinfo{author}{Laird, J.}, \bibinfo{year}{2016}.
\newblock \bibinfo{title}{Fixed points in quantitative semantics}, in:
  \bibinfo{booktitle}{Proceedings of the 31st Annual ACM/IEEE Symposium on
  Logic in Computer Science (LICS 2016)}, \bibinfo{publisher}{ACM}. pp.
  \bibinfo{pages}{347--356}.
\bibitem[{Laird et~al.(2013)Laird, Manzonetto, McCusker and
  Pagani}]{LairdManzonettoMcCuskerPaganiLICS13}
\bibinfo{author}{Laird, J.}, \bibinfo{author}{Manzonetto, G.},
  \bibinfo{author}{McCusker, G.}, \bibinfo{author}{Pagani, M.},
  \bibinfo{year}{2013}.
\newblock \bibinfo{title}{Weighted relational models of typed lambda-calculi},
  in: \bibinfo{booktitle}{Proceedings of the 28rd Annual ACM/IEEE Symposium on
  Logic in Computer Science (LICS 2013)}, \bibinfo{publisher}{ACM}. pp.
  \bibinfo{pages}{301--310}.
\bibitem[{Mitchell(1965)}]{Mitchell}
\bibinfo{author}{Mitchell, B.}, \bibinfo{year}{1965}.
\newblock \bibinfo{title}{Theory of categories}.
\newblock \bibinfo{publisher}{Academic Press}.
\bibitem[{Moggi(1991)}]{MoggiIC91}
\bibinfo{author}{Moggi, E.}, \bibinfo{year}{1991}.
\newblock \bibinfo{title}{Notions of computation and monads}.
\newblock \bibinfo{journal}{Information and Computation} \bibinfo{volume}{93},
  \bibinfo{pages}{55--92}.
\bibitem[{Nielsen and Chuang(2010)}]{NC}
\bibinfo{author}{Nielsen, M.}, \bibinfo{author}{Chuang, I.},
  \bibinfo{year}{2010}.
\newblock \bibinfo{title}{Quantum Computation and Quantum Information}.
\newblock \bibinfo{edition}{10th years anniversary} ed.,
  \bibinfo{publisher}{Cambridge University Press}.
\bibitem[{Pagani et~al.(2014)Pagani, Selinger and
  Valiron}]{PaganiSelingerValironPOPL14}
\bibinfo{author}{Pagani, M.}, \bibinfo{author}{Selinger, P.},
  \bibinfo{author}{Valiron, B.}, \bibinfo{year}{2014}.
\newblock \bibinfo{title}{Applying quantitative semantics to higher-order
  quantum computing}, in: \bibinfo{booktitle}{Proceedings of the 41st Annual
  ACM SIGPLAN-SIGACT Symposium on Principles of Programming Languages (POPL
  2014)}, \bibinfo{publisher}{ACM}. pp. \bibinfo{pages}{647--658}.
\bibitem[{Seely(1989)}]{seely1989linear}
\bibinfo{author}{Seely, R.A.}, \bibinfo{year}{1989}.
\newblock \bibinfo{title}{Linear logic, *-autonomous categories and cofree
  coalgebras}.
\newblock \bibinfo{journal}{Categories in computer science and logic} ,
  \bibinfo{pages}{371--382}.
\bibitem[{Selinger(2004a)}]{SelingerQPL04}
\bibinfo{author}{Selinger, P.}, \bibinfo{year}{2004}a.
\newblock \bibinfo{title}{Toward a semantics for higher-order quantum
  computation}, in: \bibinfo{editor}{Selinger, P.} (Ed.),
  \bibinfo{booktitle}{2nd International Workshop on Quantum Programming
  Languages (QPL 2004)}, \bibinfo{publisher}{Turku Centre for Computer
  Science}. pp. \bibinfo{pages}{127--143}.
\bibitem[{Selinger(2004b)}]{SelingerMSCS04}
\bibinfo{author}{Selinger, P.}, \bibinfo{year}{2004}b.
\newblock \bibinfo{title}{Towards a quantum programming language}.
\newblock \bibinfo{journal}{Mathematical Structures in Computer Science}
  \bibinfo{volume}{14}, \bibinfo{pages}{527--586}.
\bibitem[{Selinger and Valiron(2006)}]{SelingerValironMSCS06}
\bibinfo{author}{Selinger, P.}, \bibinfo{author}{Valiron, B.},
  \bibinfo{year}{2006}.
\newblock \bibinfo{title}{A lambda calculus for quantum computation with
  classical control}.
\newblock \bibinfo{journal}{Mathematical Structures in Computer Science}
  \bibinfo{volume}{16}, \bibinfo{pages}{527--552}.
\bibitem[{Vaux(2009)}]{Vaux2009}
\bibinfo{author}{Vaux, L.}, \bibinfo{year}{2009}.
\newblock \bibinfo{title}{The algebraic lambda calculus}.
\newblock \bibinfo{journal}{Mathematical Structures in Computer Science}
  \bibinfo{volume}{19}, \bibinfo{pages}{1029–1059}.
\bibitem[{Wootters and Zurek(1982)}]{Nocloning}
\bibinfo{author}{Wootters, W.}, \bibinfo{author}{Zurek, W.H.},
  \bibinfo{year}{1982}.
\newblock \bibinfo{title}{A single quantum cannot be cloned}.
\newblock \bibinfo{journal}{Nature} \bibinfo{volume}{1982}.
\bibitem[{Zorzi(2016)}]{ZorziMSCS16}
\bibinfo{author}{Zorzi, M.}, \bibinfo{year}{2016}.
\newblock \bibinfo{title}{On quantum lambda calculi: a foundational
  perspective}.
\newblock \bibinfo{journal}{Mathematical Structures in Computer Science}
  \bibinfo{volume}{26}, \bibinfo{pages}{1107--1195}.

\end{thebibliography}

\appendix
\section{Proof of Theorem~\ref{thm:intros}}\label{app:intros}
\xrecap{Theorem}{Introduction}{thm:intros}
{
  Let $\vdash t:A$ and $t$ irreducible.
  \begin{itemize}
    \item If $A=\one$, then $t=\star$.
    \item If $A=B\otimes C$, then $t=u\otimes v$.
    \item If $A=B\multimap C$, then $t=\lambda x.u$.
    \item If $A=\top$, then $t=\langle\rangle$.
    \item $A$ cannot be equal to $\zero$.
    \item If $A=B\with C$, then $t=\pair uv$.
    \item If $A=B\oplus C$, then $t=\inl(l)$, $t=\inr(r)$.
    \item If $A=B\odot C$, then $t=\super uv$.
  \end{itemize}
}
\begin{proof}
  By induction on $t$. If $t$ is one of $\star$, $u\otimes v$, $\lambda x.u$, $\langle\rangle$, $\pair uv$, $\inl(l)$, $\inr(r)$, or $\super uv$, then we are done. 
  \begin{itemize}
    \item $t$ cannot be a variable or $\elimzero(u)$ since it is closed.
    \item Let $t=\elimone(u,v)$, then $\vdash u:\one$. Thus, by the induction hypothesis, $u=\star$, but then $\elimone(u,v)$ is reducible, which is absurd.

    \item Let $t=\elimtens(u,xy.v)$, then $\vdash u:A\otimes B$. Thus, by the induction hypothesis, $u=u_1\otimes u_2$, but then $\elimtens(u,xy.v)$ is reducible, which is absurd.

    \item Let $t=uv$, then $\vdash u:B\multimap A$ and $\vdash v:B$. Thus, by the induction hypothesis, $u=\lambda x.s$, but then $uv$ is reducible, which is absurd.

    \item Let $t=\fst(u)$, then $\vdash u:A\with B$. Thus, by the induction hypothesis, $u=\pair{s_1}{s_2}$, but then $\fst(u)$ is reducible, which is absurd.
    \item Let $t=\snd(u)$. This case is analogous to the previous one.
    \item Let $t=\elimoplus(u,x.s_1,y.s_2)$, then $\vdash u:B\oplus C$. Thus, by the induction hypothesis, $u=\inl(r)$ or $u=\inr(r)$, but then $\elimoplus(u,x.s_1,y.s_2)$ is reducible, which is absurd.
    \item Let $t=\fstsup(u)$, then $\vdash u:A\odot B$. Thus, by the induction hypothesis, $u=\super{s_1}{s_2}$, but then $\fstsup(u)$ is reducible, which is absurd.
    \item Let $t=\sndsup(u)$. This case is analogous to the previous one.

    \item Let $t=\elimsup(u,x.s_1,y.s_2)$, then $\vdash u:B\odot C$. Thus, by the induction hypothesis, $u=\super rv$, but then $\elimsup(u,x.s_1,y.s_2)$ is reducible, which is absurd.
      \qedhere
  \end{itemize}
\end{proof}

\section{Proof of Lemma~\ref{lem:distrib}}\label{app:distrib}
\recap{Lemma}{lem:distrib}{
  Let $F:\SM\to\SM$ be a semiadditive functor. Then there is a natural
  isomorphism
  \[
    F(A)\oplus F(B)\;\cong\;F(A\oplus B),
  \]
  where the arrows are given by
  \begin{align*}
    F(A\oplus B)\xlra{f} F(A)\oplus F(B) & \quad\text{with } f = \pair{F(\pi_1)}{F(\pi_2)},\\
    F(A)\oplus F(B)\xlra{f^{-1}} F(A\oplus B) & \quad\text{with } f^{-1} = \coproducto{F(i_1)}{F(i_2)}.
  \end{align*}
}

\begin{proof}~

  \begin{itemize}
      \item First we check that $f^{-1}\circ f=\Id$
	\begin{align*}
	  f^{-1}\circ f
	  &=f^{-1}\circ\Id\circ f\\
	  &=f^{-1}\circ(i_1\circ\pi_1\oplus i_2\circ\pi_2)\circ \pair{{F(\pi_1)}}{{F(\pi_2)}}\\
	  &=f^{-1}\circ(i_1\circ\pi_1\circ \pair{{F(\pi_1)}}{{F(\pi_2)}}\oplus i_2\circ\pi_2\circ \pair{{F(\pi_1)}}{{F(\pi_2)}})\\
	  &=f^{-1}\circ(i_1\circ{{F(\pi_1)}}\oplus i_2\circ{{F(\pi_2)}})\\
	  &=f^{-1}\circ i_1\circ{{F(\pi_1)}}\oplus f^{-1}\circ i_2\circ{{F(\pi_2)}} \\
	  &=\coproducto{{F(i_1)}}{{F(i_2)}}\circ i_1\circ F(\pi_1) \oplus \coproducto{{F(i_1)}}{{F(i_2)}}\circ i_2\circ F(\pi_2)\\
	  &={{F(i_1)}}\circ F(\pi_1) \oplus {{F(i_2)}}\circ F(\pi_2)\\
	  &=F(i_1\circ \pi_1) \oplus F(i_2\circ \pi_2)\\
	  &=F(i_1\circ \pi_1 \oplus  i_2\circ \pi_2)\\
	  &=F(\Id)\\
	  &=\Id
	\end{align*}
      \item To check $f\circ f^{-1}=\Id$ we check instead
	\[
	  \left\{
	    \begin{array}{l}
	      f\circ f^{-1}\circ i_1 = i_1\\
	      f\circ f^{-1}\circ i_2 = i_2
	    \end{array}
	  \right.
	\]
	For the first equation, we have
	\begin{align*}
	  f\circ f^{-1}\circ i_1
	  &=\pair{F(\pi_1)}{F(\pi_2)}\circ\coproducto{{F(i_1)}}{{F(i_2)}}\circ i_1\\
	  &=\pair{F(\pi_1)}{F(\pi_2)}\circ{{F(i_1)}}\\
	  &=\pair{F(\pi_1)\circ{{F(i_1)}}}{F(\pi_2)\circ{{F(i_1)}}}\\
	  &=\pair{F(\pi_1\circ i_1)}{F(\pi_2\circ i_1)}\\
	  &=\pair{F(\Id)}{F(0)}\\
	  &=\pair{\Id}{0}\\
	  (*) &=i_1
	\end{align*}
	Where the equality $(*)$ is given by the fact that $\pi_1\circ i_1 = \Id = \pi_1\circ\pair{\Id}{0}$ and $\pi_2\circ i_1 = 0 = \pi_2\circ\pair{\Id}0$.

	The second equation is analogous.
      \item We check that $f$ is a natural transformation.
	\[
	  \begin{tikzcd}[labels=description,row sep=1cm,column sep=2cm,
	      ]
	      F(A\oplus B)\ar[r,"f"]\ar[d,"F(g\oplus h)"] & F(A)\oplus F(B)\ar[d,"F(g)\oplus F(h)"]\\
	      F(C\oplus D)\ar[r,"f"] & F(C)\oplus F(D)
	  \end{tikzcd}
	\]
	\begin{align*}
	  f\circ F(g\oplus h)
	  &=\pair{F(\pi_1)}{F(\pi_2)}\circ F(g\oplus h)\\
	  &=\pair{F(\pi_1\circ(g\oplus h))}{F(\pi_2\circ(g\oplus h))}\\
	  &=\pair{F(g\circ\pi_1)}{F(h\circ\pi_2)}\\
	  &=\pair{F(g)\circ F(\pi_1)}{F(h)\circ F(\pi_2)}\\
	  &=\pair{F(g)\circ \pi_1\circ\pair{F(\pi_1)}{F(\pi_2)}}{F(h)\circ \pi_2\circ\pair{F(\pi_1)}{F(\pi_2)}}\\
	  &=\pair{F(g)\circ \pi_1\circ f}{F(h)\circ \pi_2\circ f}\\
	  &=\pair{F(g)\circ \pi_1}{F(h)\circ \pi_2}\circ f\\
	  &=(F(g)\oplus F(h))\circ f
	\end{align*}
      \item Finally, $f^{-1}$ is also a natural transformation, since it is the inverse of a natural transformation.
	\qedhere
  \end{itemize}
\end{proof}

\section{Proof of Lemma~\ref{lem:s-nat}}\label{app:s-nat}
\xrecap{Lemma}{Scalar}{lem:s-nat}
{
  For any map $I\xlra{s} I$, the map $\hat s_A$ is a natural transformation.
}

\begin{proof}
      Consequence of the commutation of the following diagram.
      \[
	\begin{tikzcd}[labels=description,row sep=1cm,column sep=3cm,
	    execute at end picture={
	      \path 
	      (\tikzcdmatrixname-1-1) -- (\tikzcdmatrixname-2-1) coordinate[pos=0.5](aux1)
	      (\tikzcdmatrixname-1-2) -- (\tikzcdmatrixname-2-2) coordinate[pos=0.5](aux2)
	      (\tikzcdmatrixname-1-3) -- (\tikzcdmatrixname-2-3) coordinate[pos=0.5](aux3)
	      (\tikzcdmatrixname-1-4) -- (\tikzcdmatrixname-2-4) coordinate[pos=0.5](aux4)
	      (aux1) -- (aux2) node[midway,red]{\small (Naturality of $\rho$)}
	      (aux2) -- (aux3) node[midway,red]{\small (Functoriality of $\otimes$)}
	      (aux3) -- (aux4) node[midway,red]{\small (Naturality of $\rho^{-1}$)};
	    }
	  ]
	  A\ar[r,"\rho"]\ar[d,"f"] & A\otimes I\ar[r,"\Id\otimes{s}"]\ar[d,dashed,"f\otimes\Id"] & A\otimes I\ar[r,"\rho^{-1}"]\ar[d,dashed,"f\otimes\Id"] & A\ar[d,"f"]\\
	  B\ar[r,"\rho"] & B\otimes I\ar[r,"\Id\otimes{s}"] & B\otimes I\ar[r,"\rho^{-1}"] & B\\
	\end{tikzcd}
      \]
\end{proof}

\section{Proof of Lemma~\ref{lem:PropsS}}\label{app:PropsS}
\xrecap{Lemma}{Properties of the scalar map}{lem:PropsS}
{
 Let $s$ be any map $I\xlra{s} I$.
 Then,
  \begin{enumerate}
    \item $\hat s_{I} = s$.
    \item $\hat s_{A\otimes B} = \hat s_A\otimes\Id_B$.
    \item $\hat s_{A\oplus B} = \hat s_A\oplus\hat s_B$.
  \end{enumerate}
}

\begin{proof}
  ~
  \begin{enumerate}
    \item Consequence of the commutation of the following diagram.
      \[
	\begin{tikzcd}[labels=description,column sep=2cm,
	    execute at end picture={
	      \path 
	      (\tikzcdmatrixname-1-1) -- (\tikzcdmatrixname-1-4) node[midway,red,yshift=4mm]{\small (Def.)}
	      (\tikzcdmatrixname-1-1) -- (\tikzcdmatrixname-1-4) node[midway,red,yshift=-4mm]{\small (Naturality of $\rho^{-1}$)};
	    }
	  ]
	  I
	  \arrow[rrr, "\hat s_I" description,sloped,
	    rounded corners,
	    to path={[pos=0.75]
	      -- ([yshift=2mm]\tikztostart.north)
	      |- ([yshift=5mm]\tikztotarget.north)\tikztonodes
	    -- (\tikztotarget.north)}
	  ]
	  \arrow[rrr, "s" description,sloped,
	    rounded corners,
	    to path={[pos=0.75]
	      -- ([yshift=-2mm]\tikztostart.south)
	      |- ([yshift=-5mm]\tikztotarget.south)\tikztonodes
	    -- (\tikztotarget.south)}
	  ]
	  \ar[r,dashed,"\rho",bend left=10]
	  & I\otimes I\ar[r,dashed,"\Id\otimes s"]\ar[l,dashed,"\rho^{-1}",bend left=10]
	  & I\otimes I\ar[r,dashed,"\rho^{-1}"]
	  & I
	\end{tikzcd}
      \]
    \item Consequence of the commutation of the following diagram.
      \[
	\begin{tikzcd}[labels=description,column sep=2cm,row sep=1cm,
	    execute at end picture={
	      \path 
	      (\tikzcdmatrixname-2-1) -- (\tikzcdmatrixname-3-1) coordinate[pos=0.5](aux2)
	      (\tikzcdmatrixname-2-2) -- (\tikzcdmatrixname-3-2) coordinate[pos=0.5](aux3)
	      (\tikzcdmatrixname-1-1) -- (\tikzcdmatrixname-3-1) node[sloped,pos=0.5,red,yshift=-1.1cm]{\small (Coherence)}
	      (aux2) -- (aux3) node[midway,red]{\small (Naturality of $\sigma$)}
	      (\tikzcdmatrixname-1-2) -- (\tikzcdmatrixname-3-2) node[sloped,pos=0.5,red,yshift=1.1cm]{\small (Coherence)};
	    }
	  ]
	  A\otimes B\ar[d,"\rho_{A\otimes B}"]\ar[dd,"\rho_A\otimes\Id",sloped,out=180,in=180] & A\otimes B\\
	  A\otimes B\otimes I\ar[r,"\Id\otimes\Id\otimes s"] & A\otimes B\otimes I\ar[u,"\rho^{-1}_{A\otimes B}"]  \\
	  A\otimes I\otimes B\ar[r,"\Id\otimes s\otimes\Id"]\ar[u,"\Id\otimes\sigma",dashed] & A\otimes I\otimes B\ar[uu,"\rho^{-1}\otimes\Id",sloped,out=0,in=0]\ar[u,"\Id\otimes\sigma",dashed]
	\end{tikzcd}
      \]
    \item Consequence of the commutation of the following diagram.
      \[
		\begin{tikzcd}[labels=description,column sep=2.3cm,row sep=1cm,
	    execute at end picture={
	      \path 
	      (\tikzcdmatrixname-2-1) -- (\tikzcdmatrixname-3-1) coordinate[pos=0.5](aux2)
	      (\tikzcdmatrixname-2-2) -- (\tikzcdmatrixname-3-2) coordinate[pos=0.5](aux3)
	      (\tikzcdmatrixname-1-1) -- (\tikzcdmatrixname-3-1) node[pos=0.75,red,xshift=-1cm]{\small (*)}
	      (aux2) -- (aux3) node[midway,red]{\small (Lemma~\ref{lem:distrib})}
	      (\tikzcdmatrixname-1-2) -- (\tikzcdmatrixname-3-2) node[pos=0.75,red,xshift=1cm]{\small (**)};
	    }
			]
		  A\oplus B\ar[dd,"\rho_A\oplus\rho_B",sloped,out=180,in=180]\ar[d,"\rho_{A\oplus B}"] & A\oplus B \\
		  (A\oplus B)\otimes I\ar[r,"\Id\otimes s"]\ar[d,"d",dashed] & (A\oplus B)\otimes I\ar[u,"\rho^{-1}_{A\oplus B}"]\ar[d,"d",dashed] &  \\
		  (A\otimes I)\oplus (B\otimes I)\ar[r,"(\Id\otimes s)\oplus(\Id\otimes s)"] & (A\otimes I)\oplus (B\otimes I)\ar[uu,"\rho_A^{-1}\oplus\rho_B^{-1}",sloped,out=0,in=0]
		\end{tikzcd}
      \]
      Where the commutation of the diagram ${\color{red}(*)}$ is proved as follows.
      \begin{align*}
	d\circ\rho_{A\oplus B}
	&=\pair{\pi_1\otimes\Id}{\pi_2\otimes\Id}\circ\rho_{A\oplus B}\\
	&=\pair{(\pi_1\otimes\Id)\circ\rho_{A\oplus B}}{(\pi_2\otimes\Id)\circ\rho_{A\oplus B}}\\
	\textrm{(Naturality of $\rho$)}&=\pair{\rho_A\circ\pi_1}{\rho_B\circ\pi_2}\\
	&=\rho_A\oplus\rho_B
      \end{align*}
      The commutation of the diagram ${\color{red}(**)}$ is a direct consequence of the commutation of the diagram ${\color{red}(*)}$.
      Indeed, since $d\circ\rho_{A\oplus B} = \rho_A\oplus\rho_B$, we have $(\rho_A^{-1}\oplus\rho_B^{-1})\circ d\circ\rho_{A\oplus B} = \Id$, thus, $(\rho_A^{-1}\oplus\rho_B^{-1})\circ d=\rho_{A\oplus B}^{-1}$.
      \qedhere
  \end{enumerate}
\end{proof}

\section{Proof of Lemma~\ref{lem:tauNatural}}\label{app:tauNatural}
\xrecap{Lemma}{The map $\tau$}{lem:tauNatural}
{
	The following map in the arrows of $\SM$ is a natural transformation with respect to $I$.
  \[
    \tau=\home AB\otimes I\xlra{\varphi_{A,\home AB\otimes I,B\otimes I}(\varepsilon\otimes\Id)}\home A{B\otimes I}
  \]
  where $\varphi_{A,\home AB\otimes I,B\otimes I}$ is the map given by the adjunction 
  \[
    \Hom{X\otimes Y}{Z}\overset{\varphi_{X,Y,Z}}{\underset{\varphi^{-1}_{X,Y,Z}}{\leftrightarrows}}\Hom{Y}{\home XZ}
  \]
by taking $X = A$, $Y=\home AB\otimes I$, and $Z=B\otimes I$.
}
\begin{proof}
  We need to prove the commutation of the following diagram.
  \[
    \begin{tikzcd}[labels=description,column sep=2cm,row sep=1cm]
      \home AB\otimes I\ar[r,"\tau"]\ar[d,"\Id\otimes s"] & \home A{B\otimes I}\ar[d,"\home A{\Id\otimes s}"]\\
      \home AB\otimes I\ar[r,"\tau"] & \home A{B\otimes I}
    \end{tikzcd}
  \]

  Since $\varphi$ is natural, the following diagram commutes.
  \[
    \begin{tikzcd}[labels=description,column sep=1.5cm,row sep=1.5cm]
      \Hom{A\otimes\home AB\otimes I}{B\otimes I}\ar[r,"\varphi"]\ar[d,"\Hom{\Id}{\Id\otimes s}"] & \Hom{\home AB\otimes I}{\home A{B\otimes I}}\ar[d,"\Hom{\Id}{\home A{\Id\otimes s}}"]\\
      \Hom{A\otimes\home AB\otimes I}{B\otimes I}\ar[r,"\varphi"] & \Hom{\home AB\otimes I}{\home A{B\otimes I}}
    \end{tikzcd}
  \]
  Thus, 
  $\home A{\Id\otimes s}\circ\tau=\home A{\Id\otimes s}\circ \varphi(\varepsilon\otimes\Id) = \varphi(\Hom{\Id}{\Id\otimes s}(\varepsilon\otimes\Id))$.

  Therefore, we must prove that
  \[
    \varphi(\Hom{\Id}{\Id\otimes s}(\varepsilon\otimes\Id)) = \varphi(\varepsilon\otimes\Id)\circ(\Id\otimes s) = \tau\circ(\Id\otimes s)
  \]
  We prove instead that
  \[
    \Hom{\Id}{\Id\otimes s}(\varepsilon\otimes\Id) =\varphi^{-1}(\tau\circ(\Id\otimes s))
  \]
  where $\varphi^{-1}_{X,Y,Z}(g) = X\otimes Y\xlra{\Id\otimes g} X\otimes\home XZ\xlra{\varepsilon}Z$.

  We have
  \begin{align*}
    \Hom{\Id}{\Id\otimes s}(\varepsilon\otimes\Id)
    &=(\Id\otimes s)\circ(\varepsilon\otimes\Id)\\
    &=\varepsilon\otimes s\\
    (*) &=\varepsilon\circ((\Id\otimes\tau)\circ(\Id\otimes\Id\otimes s))\\
    &=\varepsilon\circ(\Id\otimes(\tau\circ(\Id\otimes s)))\\
    &=\varphi^{-1}(\tau\circ(\Id\otimes s))
  \end{align*}
  where the equality $(*)$ is justified by the commutation of the following diagram.
  \[
    \begin{tikzcd}[labels=description,column sep=3cm,row sep=2cm,
	execute at end picture={
	  \path 
	  (\tikzcdmatrixname-2-1) -- (\tikzcdmatrixname-1-2) coordinate[pos=0.5](aux)
	  (\tikzcdmatrixname-1-1) -- (aux) node[pos=0.5,red]{\small (Functoriality of $\otimes$)}
	  (aux) -- (\tikzcdmatrixname-2-2) node[pos=0.5,red]{\small ($\varepsilon\otimes\Id=\varphi^{-1}(\tau)$)};
	}
      ]
      A\otimes \home AB\otimes I\ar[d,"\Id\otimes\Id\otimes s"]\ar[r,"\varepsilon\otimes s"] & B\otimes I\\
      A\otimes\home AB\otimes I\ar[ru,"\varepsilon\otimes\Id",sloped]\ar[r,"\Id\otimes\tau"] & A\otimes \home A{B\otimes I}\ar[u,"\varepsilon"]
    \end{tikzcd}
  \]
\end{proof}

%

\section{Proof of Lemma~\ref{lem:sumwe-nat}}\label{app:sumwe-nat}
\xrecap{Lemma}{Weighted codiagonal}{lem:sumwe-nat}
{
  Let $I\xlra p I$ and $I\xlra q I$ be two maps.
  The map
    $A\oplus A\xlra{\sumwe{p}{q}} A$ defined by $\sumwe{p}{q} = \coproducto{\hat p}{\hat q}$
	is a natural transformation.
}

\begin{proof}
      Consequence of the commutation of the following diagram.
      \[
	\begin{tikzcd}[labels=description,column sep=2cm,row sep=1cm]
	  A\oplus A\ar[d,"f\oplus f"]\ar[r,"\sumwe pq"] & A\ar[d,"f"]\\
	  B\oplus B\ar[r,"\sumwe pq"] & B
	\end{tikzcd}
      \]
      We have
      \begin{align*}
	f\circ\sumwe pq
	&=f\circ\coproducto{\hat p}{\hat q}\\
	&=\coproducto{f\circ\hat p}{f\circ\hat q}\\
	\textrm{(Lemma~\ref{lem:s-nat})}&=\coproducto{\hat p\circ f}{\hat q\circ f}\\
	&= \coproducto{\coproducto{\hat p}{\hat q}\circ i_1\circ f}{\coproducto{\hat p}{\hat q}\circ i_2\circ f}\\
	&= \coproducto{\hat p}{\hat q}\circ\coproducto{i_1\circ f}{i_2\circ f}\\
	& = \sumwe pq\circ (f\oplus f)
	\qedhere
      \end{align*}
\end{proof}
\section{Proof of Lemma~\ref{lem:PropsSumwe}}\label{app:PropsSumwe}
\recap{Lemma}{lem:PropsSumwe}
{
	For any $I\xlra p I$ and $I\xlra q I$, we have
  \begin{enumerate}
	\item $(\sumwe pq\oplus\sumwe pq)\circ(\Id\oplus\sigma\oplus\Id) = \sumwe pq$.
	\item 
  $(\Id\oplus\sigma\oplus\Id)\circ(\Delta\oplus\Delta)=\Delta$
  \end{enumerate}
}

\begin{proof}
	~
	\begin{enumerate}
		\item 
    Consequence of the commutation of the following diagram.
      \[
	\begin{tikzcd}[labels=description,row sep=1cm,column sep=1.5cm]
	  (A\oplus A)\oplus(B\oplus B)\ar[rd,"\sumwe pq\oplus\sumwe pq",sloped]&&\ar[ll,"\Id\oplus\sigma\oplus\Id"] (A\oplus B)\oplus(A\oplus B)\ar[dl,"\sumwe pq",sloped]\\
	  &A\oplus B
	\end{tikzcd}
      \]
      We have
	  \begin{align*}
		(\sumwe{p}{q}\oplus\sumwe{p}{q})
		\circ(\Id\oplus\sigma\oplus\Id)
	&=\coproducto{\hat p_A}{\hat q_A}\oplus\coproducto{\hat p_B}{\hat q_B}
		\circ(\Id\oplus\sigma\oplus\Id)
		\\
	&= (\nabla\circ({\hat p_A}\oplus{\hat q_A}))\oplus (\nabla\circ({\hat p_B}\oplus{\hat q_B}))
		\circ(\Id\oplus\sigma\oplus\Id)
		\\
	 &= \nabla\circ({\hat p_A}\oplus{\hat q_A}\oplus {\hat p_B}\oplus{\hat q_B})\circ(\Id\oplus\sigma\oplus\Id)
	 \\
	 &= \nabla\circ({\hat p_A}\oplus{\hat p_B}\oplus {\hat q_A}\oplus{\hat q_B})
		\\
	\textrm{(Lemma~\ref{lem:PropsS})}&=\nabla\circ({\hat p_{A\oplus B}}\oplus{\hat q_{A\oplus B}})\\
	&=\coproducto{\hat p_{A\oplus B}}{\hat q_{A\oplus B}}\\
	&=\sumwe pq
	  \end{align*}
	\item 
  Consequence of the commutation of the following diagram
  \[
    \begin{tikzcd}[labels=description,column sep=1cm,row sep=1cm]
	A\oplus B\ar[rr,"\Delta\oplus\Delta"]\ar[dr,"\Delta",sloped]&& (A\oplus A)\oplus (B\oplus B)\ar[dl,"\Id\oplus\sigma\oplus\Id",sloped]\\
	&(A\oplus B)\oplus (A\oplus B)
    \end{tikzcd}
  \]
  To check
  $(\Id\oplus\sigma\oplus\Id)\circ(\Delta\oplus\Delta)=\Delta$,
   we check instead
  \[
    \left\{
      \begin{array}{l}
	\pi_1\circ(\Id\oplus\sigma\oplus\Id)\circ(\Delta\oplus\Delta)=\pi_1\circ\Delta\\
	\pi_2\circ(\Id\oplus\sigma\oplus\Id)\circ(\Delta\oplus\Delta)=\pi_2\circ\Delta
      \end{array}
    \right.
  \]
  We have
  \begin{align*}
    &\pi_{A\oplus B,A\oplus B}^1\circ(\Id\oplus\sigma\oplus\Id)\circ(\Delta\oplus\Delta)\\
    (*)&=(\pi_{A,A}^1\oplus \pi_{B,B}^1)\circ(\Delta\oplus\Delta)\\
    &=(\pi_{A,A}^1\circ\Delta)\oplus (\pi_{B,B}^1\circ\Delta)\\
    &=\Id_A\oplus\Id_A\\
    &=\Id_{A\oplus B}\\
    &=\pi_{A\oplus B,A\oplus B}^1\circ\Delta
  \end{align*}
  where the equality $(*)$ is justified as follows, using the fact that 
  \begin{equation}
    \label{eq}
    f_A\oplus g_B = \pair{f_A\circ\pi^1_{A\oplus B}}{g_B\circ \pi^2_{A\oplus B}}
  \end{equation}
  \begin{align*}
    &\pi_{A\oplus B,A\oplus B}^1\circ(\Id_A\oplus\sigma_{A\oplus B}^{B\oplus A}\oplus\Id_B)\\
    &=\pi_{A\oplus B,A\oplus B}^1\circ(\Id_A\oplus\pair{\pi^2_{A,B}}{\pi^1_{A,B}}\oplus\Id_B)\\
   \eqref{eq}
	&=\pi_{A\oplus B,A\oplus B}^1\circ(
      \pair{\Id_A\circ\pi^1_{A,A\oplus B}}
      {\pair{\pi^2_{A,B}}{\pi^1_{A,B}}\circ\pi^2_{A,A\oplus B}}
    \oplus\Id_B)\\
    &=\pi_{A\oplus B,A\oplus B}^1\circ(
      \pair{\pi^1_{A,A\oplus B}}
      {\pair{\pi^2_{A,B}}{\pi^1_{A,B}}\circ\pi^2_{A,A\oplus B}}
    \oplus\Id_B)\\
    \eqref{eq}
	&=
	\begin{aligned}[t]
	&\pi_{A\oplus B,A\oplus B}^1\circ
	\\
	&
      \pair{
	\pair{\pi^1_{A,A\oplus B}}{\pair{\pi^2_{A,B}}{\pi^1_{A,B}}\circ\pi^2_{A,A\oplus B}}
	\circ\pi^1_{A\oplus A\oplus B,B}
      }
      {
	\Id_B
	\circ\pi^2_{A\oplus A\oplus B,B}
      }
	\end{aligned}
    \\
    &=
	\begin{aligned}[t]
	&\pi_{A\oplus B,A\oplus B}^1\circ
	\\
	&
      \pair{
	\pair{\pi^1_{A,A\oplus B}}{\pair{\pi^2_{A,B}}{\pi^1_{A,B}}\circ\pi^2_{A,A\oplus B}}
      \circ\pi^1_{A\oplus A\oplus B,B}
      }
      {
	\pi^2_{A\oplus A\oplus B,B}
      }
	\end{aligned}
    \\
    &=
	\begin{aligned}[t]
	&\pi_{A\oplus B,A\oplus B}^1\circ
	\\
	&
      \pair{
	\pair
	{\pi^1_{A,A\oplus B}\circ\pi^1_{A\oplus A\oplus B,B}}
	{\pair{\pi^2_{A,B}}{\pi^1_{A,B}}\circ\pi^2_{A,A\oplus B}\circ\pi^1_{A\oplus A\oplus B,B}}
      }
      {
	\pi^2_{A\oplus A\oplus B,B}
      }
	\end{aligned}
    \\
    &=
	\begin{aligned}[t]
	&\pi_{A\oplus B,A\oplus B}^1\circ
	\\
	\langle
	\langle
	&
	{\pi^1_{A,A\oplus B}\circ\pi^1_{A\oplus A\oplus B,B}}
	,
	\\
	&
	{
	  \pair
	  {\pi^2_{A,B}\circ\pi^2_{A,A\oplus B}\circ\pi^1_{A\oplus A\oplus B,B}}
	  {\pi^1_{A,B}\circ\pi^2_{A,A\oplus B}\circ\pi^1_{A\oplus A\oplus B,B}}
	}
	\rangle
	  ,
	  \\
	  &
      {
	\pi^2_{A\oplus A\oplus B,B}
      }
	  \rangle
	\end{aligned}
    \\
    &=
    \pair
    {\pi^1_{A,A\oplus B}\circ\pi^1_{A\oplus A\oplus B,B}}
    {\pi^2_{A,B}\circ\pi^2_{A,A\oplus B}\circ\pi^1_{A\oplus A\oplus B,B}}
    \\
    &=
    \pair
    {\pi^1_{A,A\oplus B\oplus B}}
    {\pi^2_{A,B}\circ\pi^2_{A\oplus A,B\oplus B}}
    \\
    &=\pair
    {
      \pi_{A,A}^1\circ\pi^1_{A\oplus A,B\oplus B}
    }
    {
      \pi_{B,B}^1\circ\pi^2_{A\oplus A,B\oplus B}
    }\\
    &=\pi_{A,A}^1\oplus \pi_{B,B}^1
  \end{align*}
  The case with $\pi^2$ is analogous.
  \qedhere
	\end{enumerate}
\end{proof}

\section{Proof of Lemma~\ref{lem:prop-times-sumwe}}\label{app:prop-times-sumwe}

\recap{Lemma}{lem:prop-times-sumwe}
{
  If $(p,q)\in\we$, then
  $\sumwe pq\circ\delta = \sumwe pq\oplus\Id$.
}

\begin{proof}
  Consequence of the commutation of the following diagram.
  \[
    \begin{tikzcd}[labels=description,row sep=1cm,column sep=2cm,
	execute at end picture={
	  \path 
	  (\tikzcdmatrixname-1-2) -- (\tikzcdmatrixname-2-2) coordinate[pos=0.5](aux)
	  (\tikzcdmatrixname-1-1) -- (aux) node[pos=0.5,red,sloped]{\small (Lemma~\ref{lem:nablaPQInvDelta})}
	  (aux) -- (\tikzcdmatrixname-1-3) node[pos=0.45,red,sloped]{\small (Lemma~\ref{lem:PropsSumwe})}
	  (\tikzcdmatrixname-1-1) -- (\tikzcdmatrixname-1-3) node[pos=0.5,red,sloped,yshift=5mm]{\small (Def.)};
	}
      ]
      (A\oplus A)\oplus B\ar[dr,sloped,"\sumwe pq\oplus\Id"]\ar[rr,"\delta",bend left=15]\ar[r,dashed,"\Id\oplus\Delta"] & (A\oplus A)\oplus(B\oplus B)\ar[r,dashed,"\Id\oplus\sigma\oplus\Id"],\ar[d,"\sumwe pq\oplus\sumwe pq",dashed]& (A\oplus B)\oplus(A\oplus B)\ar[dl,"\sumwe pq",sloped]\\
      &A\oplus B
    \end{tikzcd}
  \]
\end{proof}

\section{Proof of Lemma~\ref{lem:deltaDelta}}\label{app:deltaDelta}
\recap{Lemma}{lem:deltaDelta}
{
  $\Delta = \delta\circ(\Delta\oplus\Id)$.
}

\begin{proof}
  Consequence of the commutation of the following diagram.
  \[
    \begin{tikzcd}[labels=description,column sep=1cm,row sep=1cm]
      A\oplus B\ar[rr,"\Delta"]\ar[dr,"\Delta\oplus\Id",sloped]&& (A\oplus B)\oplus (A\oplus B)\\
      &(A\oplus A)\oplus B\ar[ur,"\delta",sloped]
    \end{tikzcd}
  \]
  We have
  \begin{align*}
    \delta\circ(\Delta\oplus\Id)
    &=(\Id\oplus\sigma\oplus\Id)\circ(\Id\oplus\Delta)\circ(\Delta\oplus\Id)\\
    &=(\Id\oplus\sigma\oplus\Id)\circ(\Delta\oplus\Delta)\\
    \textrm{(Lemma~\ref{lem:DeltaSigma})}&=\Delta
    \qedhere
  \end{align*}
\end{proof}
\section{Proof of Lemma~\ref{lem:subst}}\label{app:subst}
\noindent{\bf Lemma~\ref{lem:subst} (Substitution)}{\bf .}
{\em
  If $\Gamma,x:A\vdash t:B$ and $\Theta\vdash v:A$, then
  $\sem{\Gamma\vdash (v/x)t:B}=\sem{\Gamma,x:A\vdash t:B}\circ (\Id\otimes\sem{\Gamma\vdash v:A})$.
  That is, the following diagram commutes.
  \begin{center}
    \begin{tikzcd}[labels=description,row sep=5mm,column sep=2cm]
      \sem\Gamma\otimes\sem\Theta\ar[dr,"\Id\otimes v",sloped]\ar[rr,"(v/x)t"] && \sem B\\
      &\sem\Gamma\otimes \sem A\ar[ru,"t",sloped]
    \end{tikzcd}
  \end{center}
}

\begin{proof}
  By induction on $t$. 
  To avoid cumbersome notation, we write $A$ instead of $\sem A$.
  \begin{itemize}
    \item 
      Let $t=x$. Then, $\Gamma=\emptyset$, and $A=B$. Then, the commuting diagram is the following.
      \[
	\begin{tikzcd}[labels=description,row sep=1cm,column sep=2cm]
	   I\otimes\Theta\cong\Theta\ar[dr,"\Id\otimes v",sloped]\ar[rr,"(v/x)x=v"] &&  A\\
	  & I\otimes  A= A\ar[ru,"\Id",sloped]
	\end{tikzcd}
      \]

    \item The case $t=y\neq x$ is no possible since $\Gamma,x:A\neq y:B$.
    \item Let $\escalar s.\star$, it is not possible since $\Gamma,x:A\neq\emptyset$.
    \item Let $t=r\plus u$.
      Then, the commuting diagram is the following.
      \[
	\begin{tikzcd}[labels=description,column sep=3.5cm,row sep=1cm,
	    execute at end picture={
	      \path (\tikzcdmatrixname-1-1) -- (\tikzcdmatrixname-1-4) coordinate[pos=0.5](aux)
	      (aux) -- (\tikzcdmatrixname-2-2) node[midway,red]{\small (Def.)};
	      \path (\tikzcdmatrixname-1-1) -- (\tikzcdmatrixname-4-2) node[pos=0.3,red,sloped]{\small (Naturality of $\Delta$)};
	      \path (\tikzcdmatrixname-3-2) -- (\tikzcdmatrixname-2-3) node[pos=0.3,red,sloped,yshift=3mm]{\small (IH and funct.~of $\oplus$)};
	      \path (\tikzcdmatrixname-4-2) -- (\tikzcdmatrixname-2-3) node[midway,red]{\small (Def.)};
	    }
	  ]
	  \Gamma\otimes\Theta&[-2.5cm]&&[-2.5cm] B \\
	  & {(\Gamma\otimes\Theta)\oplus(\Gamma\otimes\Theta)} & {B\oplus B} \\
	  & {(\Gamma\otimes A)\oplus(\Gamma\otimes A)} \\
	  & {\Gamma\otimes A}
	  \arrow["\Delta", dashed, from=4-2, to=3-2]
	  \arrow["{\Id\otimes v}", from=1-1, to=4-2,sloped,bend right]
	  \arrow["{(\Id\otimes v)\oplus(\Id\otimes v)}"', dashed, from=2-2, to=3-2]
	  \arrow["\Delta"', dashed, from=1-1, to=2-2,sloped]
	  \arrow["{\pair ru}"', from=4-2, to=1-4,sloped,bend right]
	  \arrow["\nabla", dashed, from=2-3, to=1-4,sloped]
	  \arrow["{(v/x)r\oplus (v/x)u}", dashed, from=2-2, to=2-3]
	  \arrow["{r\oplus u}", dashed, from=3-2, to=2-3,sloped]
	  \arrow["{(v/x)\pair ru}", from=1-1, to=1-4]
	\end{tikzcd}
      \]

    \item Let $t=\escalar s\bullet r$.
      Then, the commuting diagram is the following.

      \[
	\begin{tikzcd}[labels=description,column sep=3.5cm,row sep=1cm,
	    execute at end picture={
	      \path 
	      (\tikzcdmatrixname-1-1) -- (\tikzcdmatrixname-1-3) coordinate[midway](aux1)
	      (\tikzcdmatrixname-2-2) -- (\tikzcdmatrixname-3-2) coordinate[midway](aux2)
	      (aux1) -- (\tikzcdmatrixname-2-2) node[midway,red]{\small (Def.)}
	      (\tikzcdmatrixname-1-1) -- (aux2) node[midway,red,sloped]{\small (IH)}
	      (aux2) -- (\tikzcdmatrixname-1-3) node[midway,red,sloped]{\small (Def.)};
	    }
	  ]
	  \Gamma\otimes\Theta&& B \\
	  & B \\
	  & {\Gamma\otimes A}
	  \arrow["{\Id\otimes v}", from=1-1, to=3-2,sloped]
	  \arrow["{\escalar s\bullet r}"', from=3-2, to=1-3,sloped]
	  \arrow["{(v/x)(\escalar s\bullet r)}"{description}, from=1-1, to=1-3]
	  \arrow["r", dashed, from=3-2, to=2-2]
	  \arrow["{\hat{\escalar s}}", dashed, from=2-2, to=1-3,sloped]
	  \arrow["{(v/x)r}", dashed, from=1-1, to=2-2,sloped]
	\end{tikzcd}
      \]
    \item Let $t=\elimone(r,u)$, so $\Gamma=\Gamma_1,\Gamma_2$.

	\begin{itemize}
	  \item Let $x\in FV(r)$, then, the commuting diagram is the following.
	    \[
	      \begin{tikzcd}[labels=description,row sep=1cm,column sep=2cm,
		  execute at end picture={
		    \path (\tikzcdmatrixname-1-1) -- (\tikzcdmatrixname-1-3) coordinate[pos=0.5](aux)
		    (aux) -- (\tikzcdmatrixname-2-2) node[midway,red]{\small (Def.)};
		    \path (\tikzcdmatrixname-2-2) -- (\tikzcdmatrixname-3-2) coordinate[pos=0.5](aux)
		    (aux) -- (\tikzcdmatrixname-1-3) node[midway,red,sloped]{\small (Def.)};
		    \path (\tikzcdmatrixname-1-1) -- (aux) node[pos=0.55,red,sloped]{\small (IH and funct.~of $\otimes$)};
		  }
		]
		{{\Gamma_1}\otimes{\Theta}\otimes{\Gamma_2}} && { A} \\
		& { I\otimes{A}} \\
		& {{\Gamma_1}\otimes A\otimes{\Gamma_2}}
		\arrow["{\elimone((v/x)r,u)}", from=1-1, to=1-3]
		\arrow["{\Id\otimes v\otimes\Id}"', from=1-1, to=3-2,sloped]
		\arrow["{\elimone(r,u)}"', from=3-2, to=1-3,sloped]
		\arrow["{(v/x)r\otimes u}"', from=1-1, to=2-2,sloped,dashed]
		\arrow["\lambda"', from=2-2, to=1-3,sloped,dashed]
		\arrow["{r\otimes u}"{description}, from=3-2, to=2-2,dashed]
	      \end{tikzcd}
	    \]

	  \item
	    Let $x\in FV(u)$, then, the commuting diagram is the following.

	    \[
	      \begin{tikzcd}[labels=description,row sep=1cm,column sep=2cm,
		  execute at end picture={
		    \path (\tikzcdmatrixname-1-1) -- (\tikzcdmatrixname-1-3) coordinate[pos=0.5](aux)
		    (aux) -- (\tikzcdmatrixname-2-2) node[midway,red]{\small (Def.)};
		    \path (\tikzcdmatrixname-2-2) -- (\tikzcdmatrixname-3-2) coordinate[pos=0.5](aux)
		    (aux) -- (\tikzcdmatrixname-1-3) node[midway,red,sloped]{\small (Def.)};
		    \path (\tikzcdmatrixname-1-1) -- (aux) node[pos=0.55,red,sloped]{\small (IH and funct.~of $\otimes$)};
		  }
		]
		{{\Gamma_1}\otimes{\Gamma_2}\otimes{\Theta}} && { A} \\
		& { I\otimes{A}} \\
		& {{\Gamma_1}\otimes{\Gamma_2}\otimes A}
		\arrow["{\elimone(r,(v/x)u)}", from=1-1, to=1-3]
		\arrow["{\Id\otimes v}"', from=1-1, to=3-2,sloped]
		\arrow["{\elimone(r,u)}"', from=3-2, to=1-3,sloped]
		\arrow["{r\otimes (v/x)u}"', from=1-1, to=2-2,sloped,dashed]
		\arrow["\lambda"', from=2-2, to=1-3,sloped,dashed]
		\arrow["{r\otimes u}"{description}, from=3-2, to=2-2,dashed]
	      \end{tikzcd}
	    \]
	\end{itemize}

      \item Let $t=r\otimes u$.
	  \begin{itemize}
	    \item Let $x\in FV(u)$, so $\Gamma=\Gamma_1,\Gamma_2$. Then, the commuting diagram is the following.
	      \[
		\begin{tikzcd}[labels=description,column sep=2cm,row sep=8mm,
		    execute at end picture={
		      \path (\tikzcdmatrixname-1-1) -- (\tikzcdmatrixname-1-3) coordinate[midway](aux)
		      (aux) -- (\tikzcdmatrixname-2-2) node[pos=0.3,red]{\small (IH and functoriality of $\otimes$)};
		    }
		  ]
		  {{\Gamma_1}\otimes{\Theta}\otimes{\Gamma_2}} && {{B_1}\otimes{B_2}} \\
		  & {{\Gamma_1}\otimes{A}\otimes{\Gamma_2}}
		  \arrow["{t_1\otimes t_2}", from=2-2, to=1-3,sloped]
		  \arrow["{\Id\otimes v\otimes\Id}", from=1-1, to=2-2,sloped]
		  \arrow["{(v/x)t_1\otimes t_2}", from=1-1, to=1-3]
		\end{tikzcd}
	      \]

	    \item Let $x\in FV(r)$. This case is analogous to the previous one.
	  \end{itemize}
	
      \item Let $t=\elimtens(r,yz.u)$. Then $\Gamma=\Gamma_1,\Gamma_2$.
	  \begin{itemize}
	    \item Let $x\in FV(r)$. Then, the commuting diagram is the following
	      \[
		\begin{tikzcd}[labels=description,row sep=1cm,column sep=2.2cm,
		  execute at end picture={
		    \path (\tikzcdmatrixname-1-1) -- (\tikzcdmatrixname-1-3) coordinate[pos=0.5](aux)
		    (aux) -- (\tikzcdmatrixname-2-2) node[midway,red]{\small (Def.)};
		    \path (\tikzcdmatrixname-2-2) -- (\tikzcdmatrixname-3-2) coordinate[pos=0.5](aux)
		    (aux) -- (\tikzcdmatrixname-1-3) node[midway,red,sloped]{\small (Def.)};
		    \path (\tikzcdmatrixname-1-1) -- (aux) node[pos=0.55,red,sloped]{\small (IH and funct.~of $\otimes$)};
		  }
		]
		  {{\Gamma_1}\otimes{\Gamma_2}\otimes{\Theta}} && {{B}} \\
		  & {{\Gamma_1}\otimes{C}\otimes{D}} \\
		  & {{\Gamma_1}\otimes{\Gamma_2}\otimes{A}}
		  \arrow["{\elimtens(r,yz.u)}", from=3-2, to=1-3,sloped]
		  \arrow["{\Id\otimes v}", from=1-1, to=3-2,sloped]
		  \arrow["{\elimtens((v/x)r,yz.u)}", from=1-1, to=1-3]
		  \arrow["{\Id\otimes (v/x)r}", from=1-1, to=2-2,sloped,dashed]
		  \arrow["u", from=2-2, to=1-3,sloped,dashed]
		  \arrow["{\Id\otimes r}", from=3-2, to=2-2,dashed]
		\end{tikzcd}
	      \]

	    \item 
	      Let $x\in FV(u)$.
	      Then, the commuting diagram is the following
	      \[
		\begin{tikzcd}[labels=description,column sep=2cm,row sep=1cm,
		  execute at end picture={
		    \path (\tikzcdmatrixname-1-1) -- (\tikzcdmatrixname-1-3) coordinate[pos=0.5](aux)
		    (aux) -- (\tikzcdmatrixname-2-2) node[midway,red]{\small (Def.)};
		    \path (\tikzcdmatrixname-2-2) -- (\tikzcdmatrixname-3-2) coordinate[pos=0.5](aux)
		    (aux) -- (\tikzcdmatrixname-1-3) node[midway,red,sloped]{\small (IH)};
		    \path (\tikzcdmatrixname-4-2) -- (\tikzcdmatrixname-1-3) node[midway,red,sloped]{\small (Def.)};
		    \path (\tikzcdmatrixname-1-1) -- (\tikzcdmatrixname-4-2) node[pos=0.4,red,sloped]{\small (Functoriality of $\otimes$)};
		  }
		  ]
		  {{\Gamma_1}\otimes{\Gamma_2}\otimes{\Theta}} && B \\
		  & {{C}\otimes{D}\otimes{\Gamma_2}\otimes{\Theta}} \\
		  & {{C}\otimes{D}\otimes{\Gamma_2}\otimes{A}} \\
		  & {{\Gamma_1}\otimes{\Gamma_2}\otimes{A}}
		  \arrow["{\elimtens(r,yz.u)}"{description}, from=4-2, to=1-3,sloped,bend right]
		  \arrow["{\Id\otimes v}"{description}, from=1-1, to=4-2,sloped,bend right]
		  \arrow["{\elimtens(r,yz.(v/x)u)}"{description}, from=1-1, to=1-3]
		  \arrow["r\otimes\Id"{description}, from=1-1, to=2-2,sloped,dashed]
		  \arrow["{(v/x)u}"{description}, from=2-2, to=1-3,sloped,dashed]
		  \arrow["r\otimes\Id"{description}, from=4-2, to=3-2,dashed]
		  \arrow["{\Id\otimes v}"{description}, from=2-2, to=3-2,dashed]
		  \arrow["u"{description}, from=3-2, to=1-3,sloped,dashed]
		\end{tikzcd}
	      \]
	  \end{itemize}

    \item 
      Let $t=\lambda{y}.r$, so $B=C\multimap D$. Then, the commuting diagram is the following.
      \[
	\begin{tikzcd}[labels=description,column sep=2.8cm,row sep=1cm,
	    execute at end picture={
	      \path (\tikzcdmatrixname-1-1) -- (\tikzcdmatrixname-1-3) coordinate[pos=0.5](aux)
	      (aux) -- (\tikzcdmatrixname-2-2) node[midway,red]{\small (Def.)};
	      \path (\tikzcdmatrixname-1-1) -- (\tikzcdmatrixname-3-2) node[midway,red,sloped,yshift=1mm]{\small (Naturality of $\eta$)};
	      \path (\tikzcdmatrixname-3-2) -- (\tikzcdmatrixname-1-3) node[midway,red,sloped,yshift=2.5mm,xshift=-3mm]{\small (IH and funct.~of hom)};
	      \path (\tikzcdmatrixname-4-2) -- (\tikzcdmatrixname-1-3) node[midway,red,sloped,yshift=4mm]{\small (Def.)};
	    }
	  ]
	  \Gamma\otimes\Theta\ar[dddr,"\Id\otimes v",sloped]\ar[rr,"(v/x)(\lambda y.r)"]\arrow[rd,"\eta^{  C}",dashed,sloped] & &\home{  C}{  D}\\
	  & \home{  C}{\Gamma\otimes\Theta\otimes  C}\arrow[ur,dashed,"\home{  C}{(v/x)r}",sloped]\ar[d,dashed,"\home{  C}{\Id\otimes v}"] & \\
	  & \home{  C}{\Gamma\otimes  A\otimes  C}\ar[uur,"\home{  C}r",dashed,sloped]& \\
	  & \Gamma\otimes  A\ar[u,"\eta^{  C}",dashed]\ar[uuur,"\lambda y.r",sloped]&
	\end{tikzcd}
      \]

    \item
      Let $t=ru$, so $\Gamma=\Gamma_1,\Gamma_2$.
      \begin{itemize}
	\item Let $x\in FV(r)$, so $\Gamma_1\vdash u:C$ and $\Gamma_2,x:A\vdash r:C\multimap B$.
	  Then, the commuting diagram is the following
	  \[
	    \begin{tikzcd}[labels=description,column sep=2.9cm,row sep=1cm,
		execute at end picture={
		  \path (\tikzcdmatrixname-1-1) -- (\tikzcdmatrixname-1-3) coordinate[pos=0.5](aux)
		  (aux) -- (\tikzcdmatrixname-2-2) node[midway,red]{\small (Def.)};
		  \path (\tikzcdmatrixname-1-1) -- (\tikzcdmatrixname-3-2) node[midway,red,sloped,yshift=3mm,xshift=4mm]{\small (IH and funct.~of $\otimes$)};
		  \path (\tikzcdmatrixname-3-2) -- (\tikzcdmatrixname-1-3) node[midway,red,sloped,yshift=4mm]{\small (Def.)};
		}
	      ]
	      {{\Gamma_1}}\otimes{{\Gamma_2}}\otimes\Theta\ar[ddr,"\Id\otimes v",sloped]\ar[rr,"(v/x)ru"]\arrow[rd,"u\otimes(v/x)r",dashed,sloped] & &  B\\
	      &   C\otimes\home{  C}{  B}\arrow[ur,dashed,"\varepsilon",sloped] & \\
	      & {{\Gamma_1}}\otimes{{\Gamma_2}}\otimes  A\ar[u,"u\otimes r",dashed]\ar[uur,"ru",sloped]&
	    \end{tikzcd}
	  \]

	\item Let $x\in FV(u)$, so $\Gamma_1,x:A\vdash u:C$ and $\Gamma_2\vdash r:C\multimap B$. 
	  Then, the commuting diagram is the following
	  \[
	    \begin{tikzcd}[labels=description,column sep=2.9cm,row sep=1cm,
		execute at end picture={
		  \path (\tikzcdmatrixname-1-1) -- (\tikzcdmatrixname-1-3) coordinate[pos=0.5](aux)
		  (aux) -- (\tikzcdmatrixname-2-2) node[midway,red]{\small (Def.)};
		  \path (\tikzcdmatrixname-1-1) -- (\tikzcdmatrixname-3-2) node[midway,red,sloped,yshift=3mm,xshift=4mm]{\small (IH and funct.~of $\otimes$)};
		  \path (\tikzcdmatrixname-3-2) -- (\tikzcdmatrixname-1-3) node[midway,red,sloped,yshift=4mm]{\small (Def.)};
		}
	      ]
	      {{\Gamma_1}}\otimes\Theta\otimes{{\Gamma_2}}\ar[ddr,"\Id\otimes v\otimes\Id",sloped]\ar[rr,"r(v/x)u"]\arrow[rd,"(v/x)u\otimes r",dashed,sloped] & &  B\\
	      &   C\otimes\home{  C}{  B}\arrow[ur,dashed,"\varepsilon",sloped] & \\
	      & {{\Gamma_1}}\otimes  A\otimes{{\Gamma_2}}\ar[u,"u\otimes r",dashed]\ar[uur,"ru",sloped]&
	    \end{tikzcd}
	  \]

      \end{itemize}

  \item Let $t=\langle\rangle$. Then, the commuting diagram is the following.
      \[
	\begin{tikzcd}[labels=description,column sep=2cm]
	  \Gamma\otimes\Theta&& { 0 } \\
	  & {\Gamma\otimes A}
	  \arrow["{(v/x)\langle\rangle=\ !}"{description}, from=1-1, to=1-3]
	  \arrow["{\Id\otimes v}"{description}, from=1-1, to=2-2,sloped]
	  \arrow["{!}"{description}, from=2-2, to=1-3,sloped]
	\end{tikzcd}
      \]

  \item Let $t = \elimzero(r)$.
    \begin{itemize}
      \item Let $x\in FV(r)$, so $\Gamma=\Gamma_1,\Gamma_2$ and $\Gamma_1,x:A\vdash r:\zero$. Then, the commuting diagram is the following
	\[
	  \begin{tikzcd}[labels=description,column sep=2cm,row sep=1cm,
	      execute at end picture={
		\path (\tikzcdmatrixname-1-1) -- (\tikzcdmatrixname-1-3) coordinate[pos=0.5](aux)
		(aux) -- (\tikzcdmatrixname-2-2) node[midway,red]{\small (Def.)};
		\path (\tikzcdmatrixname-1-1) -- (\tikzcdmatrixname-3-2) node[midway,red,sloped,yshift=4mm]{\small (IH)};
		\path (\tikzcdmatrixname-3-2) -- (\tikzcdmatrixname-1-3) node[midway,red,sloped,yshift=4mm]{\small (Def.)};
	      }
	    ]
	    {{\Gamma_1}}\otimes\Theta\otimes{{\Gamma_2}}\ar[ddr,"\Id\otimes v\otimes\Id",sloped]\ar[rr,"(v/x){\elimzero(r)}"]\arrow[rd,"(v/x)(r\otimes\Id)",dashed,sloped] & &  B\\
	    & \zero\otimes{{\Gamma_2}}\arrow[ur,dashed,"0",sloped] & \\
	    &{{\Gamma_1}}\otimes  A\otimes{{\Gamma_2}}\ar[u,"r\otimes \Id",dashed]\ar[uur,"{\elimzero(r)}",sloped]&
	  \end{tikzcd}
	\]
      \item Let $x\notin FV(r)$, so $\Gamma={\Gamma_1},{\Gamma_2}$ and ${\Gamma_1}\vdash r:\zero$. Then, the commuting diagram is the following
	\[
	  \begin{tikzcd}[labels=description,column sep=2.5cm,row sep=1cm,
	      execute at end picture={
		\path (\tikzcdmatrixname-1-1) -- (\tikzcdmatrixname-1-3) coordinate[pos=0.5](aux)
		(aux) -- (\tikzcdmatrixname-2-2) node[midway,red]{\small (Def.)};
		\path (\tikzcdmatrixname-1-1) -- (\tikzcdmatrixname-3-2) node[midway,red,sloped]{\small (Functoriality of $\otimes$)};
		\path (\tikzcdmatrixname-3-2) -- (\tikzcdmatrixname-1-3) node[pos=0.4,red,sloped,yshift=3mm]{\small ($0$ arrow)};
		\path (\tikzcdmatrixname-4-2) -- (\tikzcdmatrixname-1-3) node[midway,red,sloped,yshift=4mm]{\small (Def.)};
	      }
	    ]
	    {{\Gamma_1}\otimes{\Gamma_2}\otimes{\Theta}} && { B} \\
	    & { 0 \otimes{\Gamma_2}\otimes{\Theta}} \\
	    & { 0 \otimes{\Gamma_2}\otimes{A}} \\
	    & {{\Gamma_1}\otimes{\Gamma_2}\otimes{A}}
	    \arrow["{\Id\otimes v}"{description}, from=1-1, to=4-2,sloped]
	    \arrow["{(v/x)\elimzero(r)=\elimzero(r)}"{description}, from=1-1, to=1-3]
	    \arrow["{\elimzero(r)}"{description}, from=4-2, to=1-3,sloped]
	    \arrow["r\otimes\Id"{description}, from=1-1, to=2-2,sloped,dashed]
	    \arrow["r\otimes\Id"{description}, from=4-2, to=3-2,dashed]
	    \arrow["{\Id\otimes v}"{description}, from=2-2, to=3-2,dashed]
	    \arrow["0"{description}, from=2-2, to=1-3,sloped,dashed]
	    \arrow["0"{description}, from=3-2, to=1-3,sloped,dashed]
	  \end{tikzcd}
	\]
    \end{itemize}

  \item 
    Let $t=\pair ru$, so $B=C\odot D$. Then, the commuting diagram is the following.
      \[
	\begin{tikzcd}[labels=description,column sep=2.8cm,row sep=1cm,
	    execute at end picture={
	      \path (\tikzcdmatrixname-1-1) -- (\tikzcdmatrixname-1-3) coordinate[pos=0.5](aux)
	      (aux) -- (\tikzcdmatrixname-2-2) node[midway,red]{\small (Def.)};
	      \path (\tikzcdmatrixname-1-1) -- (\tikzcdmatrixname-3-2) node[midway,red,sloped,yshift=1mm]{\small (Naturality of $\Delta$)};
	      \path (\tikzcdmatrixname-3-2) -- (\tikzcdmatrixname-1-3) node[midway,red,sloped,yshift=2.5mm,xshift=-3mm]{\small (IH and funct. of $\oplus$)};
	      \path (\tikzcdmatrixname-4-2) -- (\tikzcdmatrixname-1-3) node[midway,red,sloped,yshift=4mm]{\small (Def.)};
	    }
	  ]
	  \Gamma\otimes\Theta\ar[dddr,"\Id\otimes v",sloped]\ar[rr,"(v/x)\pair ru"]\arrow[rd,"\Delta",dashed,sloped] & &{  C}\oplus{ D}\\
	  & (\Gamma\otimes\Theta)\oplus(\Gamma\otimes\Theta)\arrow[ur,dashed,"(v/x)r\oplus(v/x)u",sloped]\ar[d,dashed,"(\Id\otimes v)\oplus(\Id\otimes v)"] & \\
	  & (\Gamma\otimes  A)\oplus(\Gamma\otimes  A)\ar[uur,"r\oplus u",dashed,sloped]& \\
	  & \Gamma\otimes  A\ar[u,"\Delta",dashed]\ar[uuur,"\pair ru",sloped]&
	\end{tikzcd}
      \]

  \item
    Let $t=\fst(r)$.  Then, the commuting diagram is the following
      \[
	\begin{tikzcd}[labels=description,column sep=3cm,row sep=1cm,
	    execute at end picture={
	      \path (\tikzcdmatrixname-1-1) -- (\tikzcdmatrixname-1-3) coordinate[pos=0.5](aux)
	      (aux) -- (\tikzcdmatrixname-2-2) node[midway,red]{\small (Def.)};
	      \path (\tikzcdmatrixname-1-1) -- (\tikzcdmatrixname-3-2) node[midway,red,sloped,yshift=4mm]{\small (IH)};
	      \path (\tikzcdmatrixname-3-2) -- (\tikzcdmatrixname-1-3) node[midway,red,sloped,yshift=4mm]{\small (Def.)};
	    }
	  ]
	  \Gamma\otimes\Theta\ar[ddr,"\Id\otimes v",sloped]\ar[rr,"(v/x)\fst(r)"]\arrow[rd,"(v/x)r",dashed,sloped] & &  B\\
	  &   B\oplus  C\arrow[ur,dashed,"\fst",sloped] & \\
	  & \Gamma\otimes  A\ar[u,"r",dashed]\ar[uur,"\fst(r)",sloped]&
	\end{tikzcd}
      \]
  \item
    Let $t=\snd(r)$. This case is analogous to the previous case.

  \item 
      Let $t=\inl(r)$, so $B=C+D$.  Then, the commuting diagram is the following
      \[
	\begin{tikzcd}[labels=description,column sep=3cm,row sep=1cm,
	    execute at end picture={
	      \path (\tikzcdmatrixname-1-1) -- (\tikzcdmatrixname-1-3) coordinate[pos=0.5](aux)
	      (aux) -- (\tikzcdmatrixname-2-2) node[midway,red]{\small (Def.)};
	      \path (\tikzcdmatrixname-1-1) -- (\tikzcdmatrixname-3-2) node[midway,red,sloped,yshift=4mm]{\small (IH)};
	      \path (\tikzcdmatrixname-3-2) -- (\tikzcdmatrixname-1-3) node[midway,red,sloped,yshift=4mm]{\small (Def.)};
	    }
	  ]
	  \Gamma\otimes\Theta\ar[ddr,"\Id\otimes v",sloped]\ar[rr,"(v/x)\inl(r)"]\arrow[rd,"(v/x)r",dashed,sloped] & & C\oplus D\\
	  &   C\arrow[ur,dashed,"i_1",sloped] & \\
	  & \Gamma\otimes  A\ar[u,"r",dashed]\ar[uur,"\inl(r)",sloped]&
	\end{tikzcd}
      \]

  \item Let $t=\inr(r)$. This case is analogous to the previous case.

  \item Let $t=\elimoplus(r,{y}.s_1,{z}.s_2)$. Then, $\Gamma=\Gamma_1,\Gamma_2$.
      \begin{itemize}
	\item Let $x\in FV(r)$, so $\Gamma_1,x:A\vdash r:C\vee D$,
	  $y:C,\Gamma_2\vdash s_1:B$, and $z:D,\Gamma_2\vdash s_2:B$.  Then,
	  the commuting diagram is the following
	  \[
	    \begin{tikzcd}[labels=description,column sep=1cm,row sep=8mm,
		execute at end picture={
		  \path (\tikzcdmatrixname-1-1) -- (\tikzcdmatrixname-1-3) coordinate[midway](aux)
		  (aux) -- (\tikzcdmatrixname-2-2) node[midway,red]{\small (Def.)};
		  \path (\tikzcdmatrixname-4-2) -- (\tikzcdmatrixname-1-3) node[midway,red,yshift=-5mm]{\small (Def.)};
		  \path (\tikzcdmatrixname-1-1) -- (\tikzcdmatrixname-4-2) node[midway,red,sloped,yshift=-5mm]{\small (IH and funct.~of $\otimes$)};
		}
	      ]
	      { {\Gamma_1}}\otimes \Theta\otimes{ {\Gamma_2}}\arrow[rdd,bend right=15,"(v/x)r\otimes\Id",dashed,sloped]\ar[rddd,"\Id\otimes v\otimes\Id",sloped,bend right]\ar[rr,"(v/x){\elimoplus(r,y.s_1,z.s_2)}"] & &   B\\
	      & ( C\otimes{ {\Gamma_2}})\oplus(D\otimes{ {\Gamma_2}})\arrow[ur,dashed,"\coproducto{s_1}{s_2}",sloped]&\\
	      & ( C\oplus D)\otimes{ {\Gamma_2}}\ar[u,"d",dashed] & \\
	      & { {\Gamma_1}}\otimes   A\otimes{ {\Gamma_2}}\ar[u,"r\otimes\Id",dashed]\ar[uuur,"{\elimoplus(r,y.s_1,z.s_2)}",sloped,bend right] & 
	    \end{tikzcd}
	  \]

	\item
	  Let $x\in FV(s_1)\cup FV(s_2)$, so $\Gamma_1\vdash r:C\vee D$,
	  $y:C,\Gamma_2,x:A\vdash s_1:B$, and $z:D,\Gamma_2,x:A\vdash s_2:B$.  Then,
	  the commuting diagram is the following
	  \[
	    \begin{tikzcd}[labels=description,column sep=-4mm,row sep=1.3cm,
		execute at end picture={
		  \path (\tikzcdmatrixname-1-1) -- (\tikzcdmatrixname-1-5) coordinate[pos=0.5](aux)
		  (\tikzcdmatrixname-2-2) -- (\tikzcdmatrixname-2-4) coordinate[pos=0.5](aux2)
		  (aux) -- (aux2) node[midway,red]{\small (Def.)};
		  \path (\tikzcdmatrixname-3-4) -- (\tikzcdmatrixname-4-3) node[midway,red]{\small (Def.)};
		  \path (\tikzcdmatrixname-2-2) -- (\tikzcdmatrixname-2-4) coordinate[pos=0.5](aux)
		  (\tikzcdmatrixname-3-2) -- (\tikzcdmatrixname-3-4) coordinate[pos=0.5](aux2)
		  (aux) -- (aux2) node[midway,red]{\small (Lemma~\ref{lem:distrib})};
		  \path (\tikzcdmatrixname-2-4) -- (\tikzcdmatrixname-3-4) coordinate[pos=0.5](aux)
		  (aux) -- (\tikzcdmatrixname-1-5) node[midway,red,yshift=-8mm,xshift=-4mm]{\small (IH)};
		  \path (\tikzcdmatrixname-2-2) -- (\tikzcdmatrixname-3-2) node[midway,red,sloped,yshift=-1cm]{\small (Funct.~of $\otimes$)};
		}
	      ]
	      {{\Gamma_1}}\otimes{{\Gamma_2}}\otimes\Theta\arrow[rrddd,"\Id\otimes v",
		rounded corners,
		to path={[pos=0.5]
		  -- ([yshift=-4.5cm]\tikztostart.south)\tikztonodes
		  -- ([xshift=-1cm]\tikztotarget.west)
		-- (\tikztotarget.west)}
	      ] \arrow[rrrr, "{(v/x){\elimoplus(r,y.s_1,z.s_2)}}"] \arrow[rd, dashed,sloped,"r\otimes\Id"] &[-9mm] &[-1cm] &  &[5mm]   B \\
	      & (  C\oplus  D)\otimes{{\Gamma_2}}\otimes\Theta\arrow[rr, dashed,sloped,"d"] \arrow[d, dashed,"\Id\otimes v"] & & (  C\otimes{{\Gamma_2}}\otimes\Theta)\oplus ( D\otimes{{\Gamma_2}}\otimes\Theta) \arrow[d, dashed,"\Id\otimes v\oplus\Id\otimes v"] \arrow[ru, dashed,sloped,"{\coproducto{(v/x)s_1}{(v/x)s_2}}"] &   \\[1cm]
	      & (  C\oplus  D)\otimes{{\Gamma_2}}\otimes   A \arrow[rr, dashed,"d"] & & (  C\otimes{{\Gamma_2}}\otimes   A)\oplus ( D\otimes{{\Gamma_2}}\otimes   A) \arrow[ruu, dashed,bend right,looseness=0.8,sloped,"{\coproducto{s_1}{s_2}}"] &   \\
	      & & {{\Gamma_1}}\otimes{{\Gamma_2}}\otimes\Theta\arrow[lu, dashed,"r\otimes\Id",sloped] \arrow[rruuu, "{{\elimoplus(r,y.s_1,z.s_2)}}" description,sloped,
		rounded corners,
		to path={[pos=0.5]
		  -- ([xshift=4.2cm]\tikztostart.east)
		  -- ([yshift=-4.5cm]\tikztotarget.south)
		-- (\tikztotarget.south)\tikztonodes}
	      ]
	      & &
	    \end{tikzcd}
	  \]
      \end{itemize}
    
    \item Let $t=\super ru$. Analogous to case $t=\pair ru$.

    \item Let $t=\fstsup(r)$. Analogous to case $t=\fst(r)$.

    \item Let $t=\sndsup(r)$. Analogous to case $t=\snd(r)$.

  \item
      Let $t=\elimsup(r,{y}.s_1,{z}.s_2)$. Then, $\Gamma=\Gamma_1,\Gamma_2$.
      \begin{itemize}
	\item Let $x\in FV(r)$, so 
	  $\Gamma_1,x:A\vdash r:C\odot D$,
	  $y:C,\Gamma_2\vdash s_1:B$, and $z:D,\Gamma_2\vdash s_2:B$. 
	  Then, the commuting diagram is the following
	  \[
	    \begin{tikzcd}[labels=description,column sep=2.5cm,row sep=1cm,
		execute at end picture={
		  \path (\tikzcdmatrixname-1-1) -- (\tikzcdmatrixname-1-3) coordinate[pos=0.5](aux)
		  (aux) -- (\tikzcdmatrixname-2-2) node[midway,red]{\small (Def.)};
		  \path (\tikzcdmatrixname-3-2) -- (\tikzcdmatrixname-1-3) node[pos=0.3,red]{\small (Def.)};
		  \path (\tikzcdmatrixname-1-1) -- (\tikzcdmatrixname-3-2) node[midway,red,sloped,yshift=-5mm]{\small (IH and funct.~of $\oplus$)};
		}
	      ]
	      {{\Gamma_1}}\otimes{{\Theta}}\otimes{\Gamma_2}\ar[ddr,"\Id\otimes v\otimes\Id",sloped,out=270,in=180]\ar[rd,"(v/x)r\otimes\Id",dashed,sloped]\ar[rr,"{\elimsup((v/x)r,{y}.s_1,{z}.s_2)}"] & &  B\\
	      & ({C}\oplus{D})\otimes {{\Gamma_2}}\arrow[ur,dashed,"\sumwe{\semS{\escalar p}}{\semS{\escalar q}}\circ(s_1\oplus s_2)\circ d",sloped] & \\
	      & {{\Gamma_1}}\otimes{ A}\otimes {\Gamma_2}\ar[uur,"{\elimsup(r,{y}.s_1,{z}.s_2)}",sloped,out=0,in=270]\ar[u,dashed,"r\otimes\Id"]&
	    \end{tikzcd}
	  \]
	\item Let $x\in FV(s_1)\cup FV(s_2)$, so 
	  $\Gamma_1\vdash r:C\odot D$,
	  $y:C,\Gamma_2,x:A\vdash s_1:B$, and $z:D,\Gamma_2,x:A\vdash s_2:B$.
	  Then, the commuting diagram is the following
	  \[
	    \hspace{-1cm}\begin{tikzcd}[labels=description,row sep=1cm,column sep=1.5cm,
		execute at end picture={
		  \path (\tikzcdmatrixname-1-1) -- (\tikzcdmatrixname-3-2) node[midway,red,sloped,yshift=-3mm]{\small (Lemma~\ref{lem:distrib})};
		  \path (\tikzcdmatrixname-1-2) -- (\tikzcdmatrixname-2-2) coordinate[pos=0.5](aux)
		  (aux) -- (\tikzcdmatrixname-2-3) node[pos=0.4,red,yshift=-2mm]{\small (IH and funct.~of $\oplus$)};
		  \path (\tikzcdmatrixname-3-1) -- (\tikzcdmatrixname-3-2) node[pos=0.33,red]{\small (Funct.~of $\otimes$)};
		  \path (\tikzcdmatrixname-1-1) -- (\tikzcdmatrixname-1-3) node[pos=0.42,red,yshift=5mm]{\small (Def.)};
		  \path (\tikzcdmatrixname-3-2) -- (\tikzcdmatrixname-2-2) node[pos=-1,red]{\small (Def.)};
		}
	      ]
	      {({C}\oplus{D})\otimes{\Gamma_2}\otimes{\Theta}} &[-5mm] {({C}\otimes{\Gamma_2}\otimes{\Theta})\oplus({D}\otimes{\Gamma_2}\otimes{\Theta})} & {\ } \\
	      & {({C}\otimes{\Gamma_2}\otimes{A})\oplus({D}\otimes{\Gamma_2}\otimes{A})}& { B\oplus B} \\
	      {{\Gamma_1}\otimes{\Gamma_2}\otimes{\Theta}}
	      \arrow[rrd, "{\elimsup(r,y.(v/x)s_1,z.(v/x)s_2)}" description,sloped,
		rounded corners,
		to path={[pos=0.75]
		  -- ([xshift=-5mm]\tikztostart.west)
		  |- ([xshift=4mm,yshift=6.2cm]\tikztotarget.east)\tikztonodes
		|- (\tikztotarget.east)}
	      ]
	      & {({C}\oplus{D})\otimes{\Gamma_2}\otimes{A}} \\
	      {{\Gamma_1}\otimes{\Gamma_2}\otimes{A}} && { B}
	      \arrow["{\Id\otimes v}", from=3-1, to=4-1]
	      \arrow["{\elimsup(r,y.s_1,z.s_2)}", from=4-1, to=4-3]
	      \arrow["r\otimes\Id"', from=3-1, to=1-1,dashed]
	      \arrow["d"', from=1-1, to=1-2,dashed]
	      \arrow["{(v/x)s_1\oplus(v/x)s_2}"', from=1-2, to=2-3,dashed,sloped]
	      \arrow["\sumwe{\semS{\escalar p}}{\semS{\escalar q}}"', from=2-3, to=4-3,dashed]
	      \arrow["r\otimes\Id"', from=4-1, to=3-2,sloped,dashed]
	      \arrow["d"', from=3-2, to=2-2,dashed]
	      \arrow["{s_1\oplus s_2}"', from=2-2, to=2-3,dashed]
	      \arrow["{(\Id\otimes v)\oplus(\Id\otimes v)}"{description}, from=1-2, to=2-2,dashed]
	      \arrow["{\Id\otimes v}"{description}, from=1-1, to=3-2,sloped,dashed,bend right]
	    \end{tikzcd}
	  \]
	  \qedhere
      \end{itemize}
\end{itemize}
  \end{proof}

\section{Proof of Theorem~\ref{thm:soundness}}\label{app:soundness}

\noindent {\bf Theorem \ref{thm:soundness}} (Soundness){\bf .} 
\emph{Let $\Gamma\vdash t:A$.}
\begin{itemize}
  \item\emph{
      If $t\lra r$, by any rule but $({\relimsup \ell})$ and $({\relimsup r})$,  then
      \[
	\sem{\Gamma\vdash t:A}=\sem{\Gamma\vdash r:A}
      \]
    }
  \item\emph{
    If $t\lra[\escalar p] r_1$ by rule $({\relimsup \ell})$ and $t\lra[\escalar q] r_2$ by rule $({\relimsup r})$, 
       then.
      \[
	\sem{\Gamma\vdash t:A} = \sumwe{\semS{\escalar p}}{\semS{\escalar q}}\circ(\sem{\Gamma\vdash r_1:A}\oplus\sem{\Gamma\vdash r_2:A})\circ\Delta
      \]
    }

    \emph{
      That is,
    }
    \[
      \begin{tikzcd}[labels=description,row sep=1cm,column sep=1.5cm]
	  \sem\Gamma\ar[r,"t"]\ar[d,"\Delta"] & \sem A\\
	  \sem\Gamma\oplus\sem\Gamma\ar[r,"r_1\oplus r_2"] & \sem A\oplus\sem A\ar[u,"{\sumwe{\semS{\escalar p}}{\semS{\escalar q}}}"]
      \end{tikzcd}
    \]
\end{itemize}
\begin{proof}
  By induction on the relation $\lra[\escalar p]$, using implicitly the coherence maps when needed.
  To avoid cumbersome notation, we write $A$ instead of $\sem A$.

  {\bf Basic cases:} 

  \begin{itemize}
    \item 
      \(
	\vcenter{
	  \infer{\Gamma\vdash\elimone(\escalar s.\star,t):A}
	  {
	    \infer{\vdash\escalar s.\star:\one}{}
	    &
	    \Gamma\vdash t:A
	  }
	}
	\qquad\lra\qquad
	\vcenter{
	  \infer{\Gamma\vdash\escalar s\bullet t:A}
	  {\Gamma\vdash t:A}
	}
      \)
      \[
	\begin{tikzcd}[labels=description,row sep=1cm,column sep=1.9cm,
	    execute at end picture={
	      \path (\tikzcdmatrixname-1-1) -- (\tikzcdmatrixname-1-5) node[midway,red,yshift=6mm]{\small (Def.)};
	      \path (\tikzcdmatrixname-1-1) -- (\tikzcdmatrixname-1-5) node[midway,red,yshift=-3.8cm]{\small (Def.)};
	      \path (\tikzcdmatrixname-1-2) -- (\tikzcdmatrixname-1-4) coordinate[pos=0.5](aux)
	      (aux) -- (\tikzcdmatrixname-2-3) node[pos=0.4,red]{\small (Funct.~of $\otimes$)};
	      \path
	      (\tikzcdmatrixname-1-1) -- (\tikzcdmatrixname-1-2) coordinate[pos=0.5](aux1)
	      (\tikzcdmatrixname-2-3) -- (\tikzcdmatrixname-3-3) coordinate[pos=0.5](aux2)
	      (aux1) -- (aux2) node[midway,red,sloped]{\small (Naturality of $\lambda^{-1}$)}
	      (\tikzcdmatrixname-1-4) -- (\tikzcdmatrixname-1-5) coordinate[pos=0.5](aux1)
	      (aux2) -- (aux1) node[midway,red,sloped]{\small (Naturality of $\lambda$)};
	    }
	  ]
	  \Gamma \arrow[r, "\lambda^{-1}",dashed] \arrow[rrrr, bend left=15,sloped,"{\elimone(\escalar s.\star,t)}"] 
	  \arrow[rrrr, "{\escalar s\bullet t}" description,sloped,
	    rounded corners,
	    to path={[pos=0.75]
	      -- ([yshift=-3mm]\tikztostart.south)
	      |- ([yshift=-4cm]\tikztotarget.south)\tikztonodes
	    -- (\tikztotarget.south)}
	  ]
	  \arrow[rrdd, dashed,"t",sloped] &  I\otimes\Gamma \arrow[rr, dashed,"\hat{\escalar s}\otimes t"] \arrow[dr, dashed,"\Id\otimes t",sloped] & &  I\otimes A \arrow[r, dashed,"\lambda"]  & A \\
	  &{\ }     &  I\otimes A \arrow[ru, dashed,"\hat{\escalar s}\otimes\Id",sloped]  & {\ }  &   \\
	  & & A \arrow[rruu, dashed,"\hat{\escalar s}",sloped] \arrow[u, "\lambda^{-1}",dashed] 
	\end{tikzcd}
      \]

    \item
      \(
      \vcenter{
	\infer{\Gamma,\Theta_1,\Theta_2\vdash\elimtens(t\otimes u,xy.r):C}
	{
	  \Gamma,x:A,y:B\vdash r:C
	  &
	  \infer{\Theta_1,\Theta_2\vdash t\otimes u:A\otimes B}
	  {
	    \Theta_1\vdash t:A & \Theta_2\vdash u:B
	  }
	}
      }
      \)

      \hfill\(
      \lra\qquad
      \Gamma,\Theta_1,\Theta_2\vdash (t/x,u/y)r:C
      \)
      \[
	\begin{tikzcd}[column sep=1cm,row sep=1cm,labels=description,
	    execute at end picture={
	      \path (\tikzcdmatrixname-1-1) -- (\tikzcdmatrixname-1-3) coordinate[midway](aux)
	      (aux) -- (\tikzcdmatrixname-2-2) node[midway,red]{\small (Lemma~\ref{lem:subst})};
	      \path (\tikzcdmatrixname-3-2) -- (\tikzcdmatrixname-1-3) node[midway,red,sloped]{\small (Lemma~\ref{lem:subst})};
	      \path  (\tikzcdmatrixname-1-1) -- (\tikzcdmatrixname-3-2) node[midway,red,sloped]{\small (Funct.~of $\oplus$)};
	      \path (\tikzcdmatrixname-1-1) -- (\tikzcdmatrixname-3-2) node[midway,red,xshift=-2cm]{\small (Def.)};
	    }
	    ]
	    {\Gamma\otimes{\Theta_1}\otimes{\Theta_2}}
	    \arrow[rr, "{\elimtens(t\otimes u,xy.r)}" description,sloped,
	    rounded corners,
	    to path={[pos=0.75]
	    -- ([xshift=-1mm]\tikztostart.west)
	    |- ([xshift=3mm,yshift=-4cm]\tikztotarget.east)\tikztonodes
	    |- (\tikztotarget.east)}
	    ]
	    && { C} \\
	    & {\Gamma\otimes{A}\otimes{\Theta_2}} \\
	    & {\Gamma\otimes{A}\otimes{B}}
	    \arrow["{(t/x,u/y)r}", from=1-1, to=1-3]
	    \arrow["{\Id\otimes t\otimes u}"', from=1-1, to=3-2,sloped,dashed,bend right]
	    \arrow["r"', from=3-2, to=1-3,sloped,dashed,bend right]
	    \arrow["{\Id\otimes t\otimes\Id}"', from=1-1, to=2-2,sloped,dashed]
	    \arrow["{\Id\otimes u}"', from=2-2, to=3-2,dashed]
	    \arrow["{(u/y)r}", from=2-2, to=1-3,sloped,dashed]
	\end{tikzcd}
      \]

    \item 
      \(
      \vcenter{
	\infer{\Gamma,\Theta\vdash(\lambda{x}.t)u:B}
	{
	  \infer{\Gamma\vdash\lambda{x}.t:A\multimap B}{\Gamma,x:A\vdash u:B}
	  \qquad
	  \Theta\vdash u:A
	}
      }
      \qquad\lra\qquad
      \vcenter{
	$\Gamma,\Theta\vdash(u/x)t:B$
      }
      \)
      \[
	\begin{tikzcd}[column sep=1.3cm,row sep=2cm,labels=description,
	    execute at end picture={
	      \path (\tikzcdmatrixname-1-1) -- (\tikzcdmatrixname-1-5) node[midway,red,yshift=6mm]{\small (Def.)};
	      \path (\tikzcdmatrixname-1-1) -- (\tikzcdmatrixname-1-5) node[midway,red,yshift=-3cm]{\small (Lemma~\ref{lem:subst})};
	      \path (\tikzcdmatrixname-1-1) -- (\tikzcdmatrixname-2-3) coordinate[pos=0.5](aux)
	      (aux) -- (\tikzcdmatrixname-1-2) node[midway,red]{\small (Funct.~of $\otimes$)};
	      \path (\tikzcdmatrixname-2-3) -- (\tikzcdmatrixname-1-5) coordinate[pos=0.5](aux)
	      (aux) -- (\tikzcdmatrixname-1-2) node[midway,red,xshift=5mm]{\small (Naturaliry of $\varepsilon$)};
	      \path (\tikzcdmatrixname-1-2) -- (\tikzcdmatrixname-2-3) node[midway,red,sloped,yshift=3mm]{\small (Adj.~ax.)};
	    }
	    ]
	    \Gamma\otimes\Theta \arrow[rrrr, "(\lambda x.t)u", bend left=15,sloped] \arrow[r, "\eta^{ B}\otimes u", dashed] \arrow[rrd, "\Id\otimes u", dashed,sloped]
	    \arrow[rrrr, "{(u/x)t}" description,sloped,
	    rounded corners,
	    to path={[pos=0.75]
	    -- ([yshift=-3mm]\tikztostart.south)
	    |- ([yshift=-3.5cm]\tikztotarget.south)\tikztonodes
	    -- (\tikztotarget.south)}
	    ]
	    & \home{ B}{\Gamma\otimes B}\otimes B \arrow[rr, "\home{ B}{t}\otimes\Id", dashed] \arrow[rd, "\varepsilon", dashed, bend left=40,sloped] &[-2.1cm]  & \home{ B}{ A}\otimes B\arrow[r, "\varepsilon", dashed] &  A \\
	    & & \Gamma\otimes B\arrow[rru, "t", dashed,sloped] \arrow[lu, "\eta^{ B}\otimes\Id", dashed,sloped] & &
	\end{tikzcd}
      \]

    \item
      \(
      \vcenter{
	\infer{\Gamma\vdash\fst\pair tu:A}
	{
	  \infer{\Gamma\vdash\pair tu:A\odot B}
	  {
	    \Gamma\vdash t:A 
	    &
	    \Gamma\vdash u:B
	  }
	}
      }
      \qquad\lra\qquad
      \vcenter{
	$\Gamma\vdash t:A$
      }
      \)
      \[
	\begin{tikzcd}[column sep=2.3cm,row sep=3em,labels=description,
	    execute at end picture={
	      \path (\tikzcdmatrixname-1-1) -- (\tikzcdmatrixname-2-1) node[midway,red]{\small (*)};
	      \path (\tikzcdmatrixname-1-1) -- (\tikzcdmatrixname-1-2) coordinate[pos=0.5](aux)
	      (\tikzcdmatrixname-2-1) -- (\tikzcdmatrixname-2-2) coordinate[pos=0.5](aux2)
	      (aux) -- (aux2) node[midway,red]{\small (Coherence)};
	      \path (\tikzcdmatrixname-1-1) -- (\tikzcdmatrixname-1-2) node[midway,red,yshift=-2.2cm]{\small (Def.)};
	    }
	    ]
	    \Gamma \arrow[r, "t"] \arrow[d, "\Delta", dashed, bend right]
	    \arrow[r, "{\fst\pair tu}" description,sloped,
	    rounded corners,
	    to path={[pos=0.75]
	    -- ([xshift=-1cm]\tikztostart.west)
	    |- ([xshift=1cm,yshift=-2.5cm]\tikztotarget.east)\tikztonodes
	    |- (\tikztotarget.east)}
	    ]
	    &   A   \\
	    \Gamma\oplus\Gamma \arrow[r, "t\oplus u", dashed] \arrow[u, "\pi_1", dashed, bend right] &   A\oplus   B  \arrow[u, "\pi_1", dashed] & 
	\end{tikzcd}
      \]
      {\color{red}(*) $\pi_1\circ\Delta = \Id_\Gamma$}
      \medskip

    \item 
      \(
      \vcenter{
	\infer{\Gamma\vdash\snd\pair tu:B}
	{
	  \infer{\Gamma\vdash\pair tu:A\odot B}
	  {
	    \Gamma\vdash t:A 
	    &
	    \Gamma\vdash u:B
	  }
	}
      }
      \qquad\lra\qquad
      \vcenter{
	$\Gamma\vdash u:B$
      }
      \)\\

      Analogous to the previous case.

    \item
      \(
      \vcenter{
	\infer{\Gamma,\Theta\vdash\elimoplus(\inl(t),{x}.u,{y}.v):C}
	{
	  \infer{\Gamma\vdash\inl(t):A\oplus B}
	  {
	    \Gamma\vdash t:A 
	  }
	  &
	  \Theta,x:A\vdash u:C
	  &
	  \Theta,y:B\vdash v:C
	}
      }
      \)

      \hfill $\lra\qquad \Gamma,\Theta\vdash (t/x)u:C$

      \[
	\begin{tikzcd}[column sep=1cm,row sep=3em,labels=description,
	    execute at end picture={
	      \path (\tikzcdmatrixname-1-1) -- (\tikzcdmatrixname-1-4) node[midway,red,yshift=-3.5cm]{\small (Def.)};
	      \path (\tikzcdmatrixname-1-1) -- (\tikzcdmatrixname-2-1) coordinate[pos=0.5](aux)
	      (aux) -- (\tikzcdmatrixname-1-4) node[midway,red,xshift=-1cm]{\small (Lemma~\ref{lem:subst})};
	      \path (\tikzcdmatrixname-1-4) -- (\tikzcdmatrixname-2-4) coordinate[pos=0.5](aux)
	      (\tikzcdmatrixname-2-1) -- (aux) node[midway,red,xshift=1cm]{\small (Def.~of coproduct)};
	      \path (\tikzcdmatrixname-2-1) -- (\tikzcdmatrixname-3-1) coordinate[pos=0.5](aux)
	      (aux) -- (\tikzcdmatrixname-2-4) node[midway,red]{\small (*)};
	    }
	    ]
	    \Gamma\otimes\Theta \arrow[rrr, "(t/x)u"] \arrow[d, "t\otimes\Id", dashed] 
	    \arrow[rrr, "{\elimoplus(\inl(t),x.u,y.v)}" description,sloped,
	    rounded corners,
	    to path={[pos=0.75]
	    -- ([xshift=-0.8cm]\tikztostart.west)
	    |- ([xshift=1.8cm,yshift=-4.5cm]\tikztotarget.east)\tikztonodes
	    |- (\tikztotarget.east)}
	    ]
	    & &[1.5cm] & C \\
	    A\otimes\Theta \arrow[d, "i_1\otimes\Id", dashed] \arrow[sloped,rrru, "u", dashed] \arrow[rrr, "i_1", dashed]  & & &( A\otimes\Theta)\oplus( B\otimes\Theta) \arrow[u, "\coproducto uv", dashed] \\
	    ( A\oplus B)\otimes\Theta \arrow[rrru, "d", dashed, sloped] & & & 
	\end{tikzcd}
      \]
      The commutation of the diagram {\color{red}(*)} is justified as follows.
      \begin{align*}
	d\circ (i_1\otimes\Id)
	&=\pair{\pi_1\otimes\Id}{\pi_2\otimes\Id}\circ (i_1\otimes\Id)\\
	&=\pair{(\pi_1\otimes\Id)\circ (i_1\otimes\Id)}{(\pi_2\otimes\Id)\circ (i_1\otimes\Id)}\\
	&=\pair{(\pi_1\circ i_1)\otimes\Id)}{(\pi_2\circ i_1)\otimes\Id}\\
	&=\pair{\Id\otimes\Id}{0\otimes\Id}\\
	&=\pair{\Id}{0}\\
	&=i_1
      \end{align*}

    \item
      \(
      \vcenter{
	\infer{\Gamma,\Theta\vdash\elimoplus(\inr(t),{x}.u,{y}.v):C}
	{
	  \infer{\Gamma\vdash\inr(t):A\oplus B}
	  {
	    \Gamma\vdash t:B 
	  }
	  &
	  \Theta,x:A\vdash u:C
	  &
	  \Theta,y:B\vdash v:C
	}
      }
      \)

      \hfill $\lra\qquad\Gamma,\Theta\vdash (t/y)v:C$

      Analogous to the previous case.

    \item
      \(
      \vcenter{
	\infer{\Gamma\vdash\pi^\odot_1\super tu:A}
	{
	  \infer{\Gamma\vdash\super tu:{A\odot B}}
	  {
	    \Gamma\vdash t:A
	    &
	    \Gamma\vdash u:B
	  }
	}
      }
      \qquad\lra\qquad
      \vcenter{
	$\Gamma\vdash t:A$
      }
      \)
      Since the interpretation of the derivations are the same to the case of $\pi_1\pair tu$, this case is analogous to that one.

    \item 
      \(
      \vcenter{
	\infer{\Gamma\vdash\pi^\odot_2\super tu:B}
	{
	  \infer{\Gamma\vdash\super tu:{A\odot B}}
	  {
	    \Gamma\vdash t:A
	    &
	    \Gamma\vdash u:B
	  }
	}
      }
      \qquad\lra\qquad
      \vcenter{
	$\Gamma\vdash u:B$
      }
      \)

      Analogous to the previous case.

    \item
      \begin{tikzcd}[column sep=-3cm,row sep=1.5cm]
	& \vcenter{
	  \infer{\Gamma,\Theta\vdash\elimsup(\super{t_1}{t_2},{x}.u,{y}.v):C}
	  {
	    \infer{\Gamma\vdash\super{t_1}{t_2}:A\odot B}
	    {
	      \Gamma\vdash t_1:A 
	      &
	      \Gamma\vdash t_2:B
	    }
	    &
	    x:A,\Theta\vdash u:C
	    &
	    y:B,\Theta\vdash v:C
	  }
	}
	\ar[dl, start anchor={south}, end anchor={north}, pos=.9,"\semS{\escalar p}"']\ar[dr, start anchor={south}, end anchor={north}, pos=.9,"\semS{\escalar q}"] & \\
	\Gamma,\Theta\vdash ({t_1}/x)r:C & &  \Gamma,\Theta\vdash ({t_2}/y)s:C
      \end{tikzcd}

      \[
	\begin{tikzcd}[column sep=1.1cm,row sep=3em,labels=description,
	    execute at end picture={
	      \path (\tikzcdmatrixname-1-1) -- (\tikzcdmatrixname-2-1) coordinate[pos=0.5](aux)
	      (aux) -- (\tikzcdmatrixname-1-3) node[midway,red]{\small (Corollary~\ref{cor:PropsFinv})};
	      \path (\tikzcdmatrixname-3-1) -- (\tikzcdmatrixname-1-3) node[midway,red]{\small (Lemma~\ref{lem:distrib})};
	      \path (\tikzcdmatrixname-1-3) -- (\tikzcdmatrixname-2-3) coordinate[pos=0.5](aux)
	      (\tikzcdmatrixname-3-2) -- (aux) node[midway,red,sloped]{\small (Lemma~\ref{lem:subst})};
	      \path (\tikzcdmatrixname-3-2) -- (\tikzcdmatrixname-3-3) node[midway,red]{\small (Def.)};
	    }
	    ]
	    \Gamma\otimes\Theta
	    \arrow[rrdd, "{\elimsup(\super{t_1}{t_2},x.u,y.v)}" description,sloped,
	    rounded corners,
	    to path={[pos=0.75]
	    -- ([xshift=-1cm]\tikztostart.west)
	    |- ([yshift=-5mm]\tikztotarget.south)\tikztonodes
	    -- (\tikztotarget)}
	    ]
	    && {(\Gamma\otimes\Theta)\oplus(\Gamma\otimes\Theta)} \\
	    {(\Gamma\oplus\Gamma)\otimes\Theta} && { C\oplus C} \\
	    {( A\oplus B)\otimes\Theta} & {( A\otimes\Theta)\oplus( B\otimes\Theta)} & { C}
	    \arrow["\Delta", from=1-1, to=1-3]
	    \arrow["\sumwe{\semS{\escalar p}}{\semS{\escalar q}}", from=2-3, to=3-3]
	    \arrow["{(t_1/x)r\oplus (t_2/y)s}", from=1-3, to=2-3]
	    \arrow["\Delta\otimes\Id"', from=1-1, to=2-1,dashed]
	    \arrow["{(t_1\oplus t_2)\otimes\Id}"', from=2-1, to=3-1,dashed]
	    \arrow["d"', from=3-1, to=3-2,dashed]
	    \arrow["{r\oplus s}"', from=3-2, to=2-3,sloped,dashed]
	    \arrow["d", from=2-1, to=1-3,sloped,dashed]
	    \arrow["{(t_1\otimes\Id)\oplus(t_2\otimes \Id)}", from=1-3, to=3-2,sloped,dashed]
	\end{tikzcd}
      \]
    \item 
      \(
	\vcenter{
	  \infer{\vdash \escalar s_1.\star\plus\ \escalar s_2.\star:\one}
	  {
	    \vdash \escalar s_1.\star:\one 
	    &
	    \vdash \escalar s_2.\star:\one 
	  }
	}
	\qquad\lra\qquad
	\vcenter{
	  $\vdash (\escalar s_1\masS[\mathcal S] \escalar s_2).\star:\one$
	}
      \)

      Since $\llparenthesis\cdot\rrparenthesis$ is a homomorphism, we have
      \[
	\sem{\vdash(\escalar s_1\masS[\mathcal S] \escalar s_2).\star:\one } = {\semS{\escalar s_1\masS[\mathcal S]\escalar s_2}}=\semS{\escalar s_1}+\semS{\escalar s_2}
	=\sem{\vdash \escalar s_1.\star\plus\ \escalar s_2.\star:\one}
      \]

    \item 
      \(
	\vcenter{
	  \infer
	  {\Gamma,\Theta\vdash\elimtens(t\plus u,xy.v):A}
	  {
	    \infer
	    {\Theta\vdash t\plus u:B\otimes C}
	    {\Theta\vdash t:B\otimes C\quad\Theta\vdash u:B\otimes C}
	    &
	    \Gamma,x:B,y:C\vdash v:A
	  }
	}
	\qquad\lra
      \)

      \hfill\(
	\vcenter{
	  \infer
	  {\Gamma,\Theta\vdash\elimtens(t,xy.v)\plus\elimtens(u,xy.v):A}
	  {
	    \infer
	    {\Gamma,\Theta\vdash\elimtens(t,xy.v):A}
	    {\Theta\vdash t:B\otimes C & \Gamma,x:B,y:C\vdash v:A}
	    &
	    \infer
	    {\Gamma,\Theta\vdash\elimtens(u,xy.v):A}
	    {\Theta\vdash u:B\otimes C & \Gamma,x:B,y:C\vdash v:A}
	  }
	}
      \)

      \[
	\begin{tikzcd}[labels=description,column sep=5mm,
	    execute at end picture={
	      \path (\tikzcdmatrixname-1-1) -- (\tikzcdmatrixname-2-1) coordinate[midway](aux)
	      (aux) -- (\tikzcdmatrixname-1-4) node[pos=0.4,sloped,red]{\small (Corollary~\ref{cor:PropsFinv})};
	      \path 
	      (\tikzcdmatrixname-2-1) -- (\tikzcdmatrixname-3-1) coordinate[midway](aux1)
	      (\tikzcdmatrixname-1-4) -- (\tikzcdmatrixname-3-4) coordinate[midway](aux2)
	      (aux1) -- (aux2) node[midway,red]{\small (Lemma~\ref{lem:distrib})};
	      \path 
	      (\tikzcdmatrixname-3-1) -- (\tikzcdmatrixname-7-1) coordinate[pos=0.875](aux1)
	      (\tikzcdmatrixname-6-4) -- (\tikzcdmatrixname-7-4) coordinate[midway](aux2)
	      (aux1) -- (aux2) node[midway,red]{\small (Naturality of $\nabla$)};
	      \path (\tikzcdmatrixname-1-1) -- (\tikzcdmatrixname-1-4) node[midway,red,yshift=5mm]{\small (Def.)}; 
	      \path (\tikzcdmatrixname-7-1) -- (\tikzcdmatrixname-7-4) node[midway,red,yshift=-5mm]{\small (Def.)}; 
	      \path (\tikzcdmatrixname-3-1) -- (\tikzcdmatrixname-6-4) node[midway,red]{\small (Corollary~\ref{cor:PropsF})}; 
	    }
	  ]
	  \Gamma\otimes\Theta
	  \arrow[rrrdddddd, "{\elimtens(t\plus u,xy.v)}" description,sloped,
	    rounded corners,
	    to path={[pos=0.25]
	      -- ([yshift=5mm]\tikztostart.north)
	      -| ([xshift=2.2cm]\tikztotarget.east)\tikztonodes
	    -- (\tikztotarget)}
	  ]
	  \arrow[rrrdddddd, "{\elimtens(t,xy.v)\plus\elimtens(u,xy.v)}" description,sloped,
	    rounded corners,
	    to path={[pos=0.75]
	      -- ([xshift=-2cm]\tikztostart.west)
	      |- ([yshift=-7mm]\tikztotarget.south)\tikztonodes
	    -- (\tikztotarget)}
	  ]
	  &&& {\Gamma\otimes(\Theta\oplus\Theta)} \\
	  {(\Gamma\otimes\Theta)\oplus (\Gamma\otimes\Theta)} \\
	  {(\Gamma\otimes B\otimes C)\oplus (\Gamma\otimes B\otimes C)} &&& {\Gamma\otimes((B\otimes C)\oplus (B\otimes C))} \\
	  &  &  \\
	  &  &  \\
	  &&& {\Gamma\otimes B\otimes C} \\
	  {A\oplus A} &&& A
	  \arrow["v", dashed, from=6-4, to=7-4]
	  \arrow["\Id\otimes\Delta", dashed, from=1-1, to=1-4]
	  \arrow["{\Id\otimes (t\oplus u)}", dashed, from=1-4, to=3-4]
	  \arrow["\Id\otimes\nabla", dashed, from=3-4, to=6-4]
	  \arrow["\Delta", dashed, from=1-1, to=2-1]
	  \arrow["{(\Id\otimes t)\oplus (\Id\otimes u)}", dashed, from=2-1, to=3-1]
	  \arrow["{v\oplus v}", dashed, from=3-1, to=7-1]
	  \arrow["\nabla", dashed, from=7-1, to=7-4]
	  \arrow["\nabla", out=270,in=180, looseness=1.2, sloped, dashed, from=3-1, to=6-4]
	  \arrow["d", dashed, from=1-4, to=2-1,sloped]
	  \arrow["d", dashed, from=3-4, to=3-1]
	\end{tikzcd}
      \]

    \item 
      \(
	\vcenter{
	  \infer{\Gamma\vdash \lambda{x}.t\plus\lambda{x}.u:A\multimap B}
	  {
	    \infer{\Gamma\vdash\lambda{x}.t:A\multimap B}{
	      \Gamma,x:A\vdash t:B
	    }
	    &
	    \infer{\Gamma\vdash\lambda{x}.u:A\multimap B}{
	      \Gamma,x:A\vdash u:B
	    }
	  }
	}
      \)

      \hfill\(
	\lra\qquad
	\vcenter{
	  \infer{\Gamma\vdash \lambda{x}.(t\plus u):A\multimap B}
	  {
	    \infer{\Gamma,x:A\vdash t\plus u:B}{
	      \Gamma,x:A\vdash t:B
	      &
	      \Gamma,x:A\vdash u:B
	    }
	  }
	}
      \)

      \[
	\hspace{-1cm}
	\begin{tikzcd}[column sep=5mm,row sep=3em,labels=description,
	    execute at end picture={
	      \path (\tikzcdmatrixname-1-1) -- (\tikzcdmatrixname-2-1) coordinate[pos=0.5](aux)
	      (aux) -- (\tikzcdmatrixname-1-3) node[midway,red,xshift=-5mm]{\small (Naturality of $\Delta$)};
	      \path (\tikzcdmatrixname-1-1) -- (\tikzcdmatrixname-1-3) node[midway,red,yshift=4mm]{\small (Def.)};
	      \path (\tikzcdmatrixname-4-1) -- (\tikzcdmatrixname-4-3) node[midway,red,yshift=-4mm]{\small (Def.)};
	      \path (\tikzcdmatrixname-3-1) -- (\tikzcdmatrixname-4-1) coordinate[pos=0.5](aux1)
	      (\tikzcdmatrixname-1-3) -- (\tikzcdmatrixname-2-3) coordinate[pos=0.5](aux2)
	      (aux1) -- (aux2) node[midway,red,sloped]{\small (Lemma~\ref{lem:distrib})};
	      \path (\tikzcdmatrixname-2-1) -- (\tikzcdmatrixname-3-1) coordinate[pos=0.5](aux)
	      (aux) -- (\tikzcdmatrixname-1-3) node[midway,red,xshift=-5mm,sloped]{\small (Corollary~\ref{cor:PropsFinv})};
	      \path (\tikzcdmatrixname-4-1) -- (\tikzcdmatrixname-2-3) coordinate[pos=0.5](aux)
	      (aux) -- (\tikzcdmatrixname-4-3) node[midway,red]{\small (Corollary~\ref{cor:PropsF})};
	    }
	  ]
	  \Gamma
	  \arrow[rrddd, "{\lambda{x}.t\plus\lambda{x}.r}" description,sloped,
	    rounded corners,
	    to path={[pos=0.25]
	      -- ([yshift=5mm]\tikztostart.north)
	      -| ([xshift=2cm]\tikztotarget.east)\tikztonodes
	    -- (\tikztotarget)}
	  ]
	  \arrow[rrddd, "{\lambda{x}.(t\plus r)}" description,sloped,
	    rounded corners,
	    to path={[pos=0.75]
	      -- ([xshift=-23mm]\tikztostart.west)
	      |- ([yshift=-5mm]\tikztotarget.south)\tikztonodes
	    -- (\tikztotarget)}
	  ]
	  \arrow[r, "\Delta", dashed] \arrow[d, "\eta^{ A}", dashed] &[5mm] \Gamma\oplus\Gamma \arrow[r, "\eta^{ A}\oplus\eta^{ A}", dashed] &[1cm] \home{ A}{\Gamma\otimes  A}\oplus \home{ A}{\Gamma\otimes  A} \arrow[d, "\home{ A}{t}\oplus \home{ A}{u}", dashed] \\
	  \home{ A}{\Gamma\otimes  A}\arrow[rru,"\Delta",dashed,sloped] \arrow[d, "\home{ A}{\Delta}", dashed] & & \home{ A}{ B}\oplus \home{ A}{ B}\ar[dd,"{\nabla}",dashed] \\
	  \home{ A}{(\Gamma\otimes  A)\oplus(\Gamma\otimes  A)} \arrow[d, "\home{ A}{t\oplus u}", dashed] \arrow[rruu,sloped, "\gamma", dashed] &&   \\
	  \home{ A}{  B\oplus   B}\arrow[rr,dashed,"\home{ A}{\nabla}",sloped]  \arrow[rruu, "\gamma",sloped, dashed] & & \home{ A}{ B} 
	\end{tikzcd}
      \]

    \item 
      \(
	\vcenter{
	  \infer{\Gamma\vdash\langle\rangle\plus\langle\rangle:\top}
	  {
	    \infer{\Gamma\vdash\langle\rangle:\top}{}
	    &
	    \infer{\Gamma\vdash\langle\rangle:\top}{}
	  }
	}
	\qquad\lra\qquad
	\infer{\Gamma\vdash\langle\rangle:\top}{}
      \)
      \[
	\begin{tikzcd}[labels=description,column sep=2cm,row sep=1cm,
	    execute at end picture={
	      \path
	      (\tikzcdmatrixname-1-1) -- (\tikzcdmatrixname-2-1) coordinate[midway](aux1)
	      (\tikzcdmatrixname-1-2) -- (\tikzcdmatrixname-2-2) coordinate[midway](aux2)
	      (aux1) -- (aux2) node[midway,red]{\small (Terminal object)};
	      \path (\tikzcdmatrixname-1-1) -- (\tikzcdmatrixname-1-2) node[midway,red,yshift=4mm]{\small (Def.)};
	      \path (\tikzcdmatrixname-2-1) -- (\tikzcdmatrixname-2-2) node[midway,red,yshift=-4mm]{\small (Def.)};
	    }
	  ]
	  \Gamma
	  \arrow[r, "{\langle\rangle\plus\langle\rangle}" description,sloped,
	    rounded corners,
	    to path={[pos=0.75]
	      -- ([xshift=-5mm]\tikztostart.west)
	      |- ([xshift=5mm,yshift=-2.5cm]\tikztotarget.east)\tikztonodes
	    |- (\tikztotarget.east)}
	  ]
	  \arrow[r, "{\langle\rangle}" description,sloped,
	    rounded corners,
	    to path={[pos=0.75]
	      -- ([yshift=3mm]\tikztostart.north)
	      |- ([yshift=5mm]\tikztotarget.north)\tikztonodes
	    |- (\tikztotarget.north)}
	  ]
	  & { 0 } \\
	  \Gamma\oplus\Gamma & { 0 \oplus 0 }
	  \arrow["{!}", from=1-1, to=1-2,dashed]
	  \arrow["\nabla"', from=2-2, to=1-2,dashed]
	  \arrow["\Delta"', from=1-1, to=2-1,dashed]
	  \arrow["{{!}\oplus{!}}"', from=2-1, to=2-2,dashed]
	\end{tikzcd}
      \]

    \item
      \(
	\vcenter{
	  \infer{\Gamma\vdash \pair{t_1}{t_2}\plus\pair{u_1}{u_2}:A\with B}
	  {
	    \infer{\Gamma\vdash\pair{t_1}{t_2}:A\with B}{
	      \Gamma\vdash t_1:A & \Gamma\vdash t_2:B
	    }
	    &
	    \infer{\Gamma\vdash\pair{u_1}{u_2}:A\with B}{
	      \Gamma\vdash u_1:A & \Gamma\vdash u_2:B
	    }
	  }
	}
      \)

      \hfill\(
	\lra\qquad
	\vcenter{
	  \infer{\Gamma\vdash \pair{t_1\plus u_1}{t_2\plus u_2}:A\with B}
	  {
	    \infer{\Gamma\vdash t_1\plus u_1:A}
	    {
	      \Gamma\vdash t_1:A & \Gamma\vdash u_1:A
	    }
	    &
	    \infer{\Gamma\vdash t_2\plus u_2:A}
	    {
	      \Gamma\vdash t_2:A & \Gamma\vdash u_2:A
	    }
	  }
	}
      \)
      \[
	\begin{tikzcd}[column sep=3cm,row sep=1cm,labels=description,
	    execute at end picture={
	      \path (\tikzcdmatrixname-1-1) -- (\tikzcdmatrixname-2-1) node[midway,red,xshift=-1.2cm]{\small (Def.)};
	      \path (\tikzcdmatrixname-1-1) -- (\tikzcdmatrixname-2-1) node[midway,red,xshift=3cm]{\small (Def.)};
	      \path (\tikzcdmatrixname-2-1) -- (\tikzcdmatrixname-3-1) coordinate[pos=0.5](aux)
	      (aux) -- (\tikzcdmatrixname-2-2) node[midway,red,red,sloped]{\small (Coherence)};
	      \path (\tikzcdmatrixname-2-2) -- (\tikzcdmatrixname-3-2) coordinate[pos=0.5](aux)
	      (aux) -- (\tikzcdmatrixname-3-1) node[midway,red,sloped]{\small (Lemma~\ref{lem:PropsSumwe})};
	    }
	  ]
	  \Gamma\oplus\Gamma\arrow[dashed,d,"\Delta\oplus\Delta"] 
	  & \Gamma
	  \arrow[dd, "{\pair{t_1\plus u_1}{t_2\plus u_2}}" description,sloped,
	    rounded corners,
	    to path={[pos=0.75]
	      -- ([yshift=3mm]\tikztostart.north)
	      -| ([yshift=-5mm,xshift=-8.5cm]\tikztotarget.south)\tikztonodes
	    -| (\tikztotarget.south)}
	  ]
	  \arrow[dd, "{\pair{t_1}{t_2}\plus\pair{u_1}{u_2}}" description,sloped,
	    rounded corners,
	    to path={[pos=0.75]
	      -- ([xshift=3mm]\tikztostart.east)
	      -| ([xshift=1cm]\tikztotarget.east)\tikztonodes
	    |- (\tikztotarget.east)}
	  ]
	  \arrow[dashed,l,"\Delta"] \\
	  \Gamma\oplus\Gamma\oplus\Gamma\oplus\Gamma\arrow[dashed,r,"t_1\oplus t_2\oplus u_1\oplus u_2"]\arrow[dashed,d,"t_1\oplus u_1\oplus t_2\oplus u_2"] &  A\oplus B\oplus A\oplus B\arrow[d,dashed,"{\nabla}"]\arrow[dashed,dl,"\Id\oplus\sigma\oplus\Id",sloped] \\
	  A\oplus A\oplus B\oplus B\arrow[dashed,r,"{\nabla}\oplus{\nabla}"]  &  A\oplus B
	\end{tikzcd}
      \]

    \item 
      \(
	\vcenter{\infer{\Gamma,\Theta\vdash\elimoplus(t\plus u,x.v,y.w):A}{\infer{\Gamma\vdash t\plus u:B\oplus C}{\Gamma\vdash t:B\oplus C & \Gamma\vdash u:B\oplus C} &  x:B,\Theta\vdash v:A &  y:C,\Theta\vdash w:A}}
      \)

      \hfill\(
	\lra\qquad
	\vcenter{\infer{\Gamma,\Theta\vdash\elimoplus(t,x.v,y.w)\plus\elimoplus(u,x.v,y.w):A}{D_1 & D_2}}
      \)

      where
      \begin{align*}
	D_1&= \vcenter{\infer{\Gamma,\Theta\vdash\elimoplus(t,x.v,y.w):A}{\Gamma\vdash t:B\oplus C &  x:B,\Theta\vdash v:A &  y:C,\Theta\vdash w:A}}\\
	D_2&=\vcenter{\infer{\Gamma,\Theta\vdash\elimoplus(u,x.v,y.w):A}{\Gamma\vdash u:B\oplus C &  x:B,\Theta\vdash v:A &  y:C,\Theta\vdash w:A}}
      \end{align*}

      \[
	\begin{tikzcd}[labels=description,column sep=5mm,row sep=1cm,
	    execute at end picture={
	      \path (\tikzcdmatrixname-1-1) -- (\tikzcdmatrixname-1-2) node[midway,red,yshift=5mm]{\small (Def.)};
	      \path (\tikzcdmatrixname-1-1) -- (\tikzcdmatrixname-2-1) coordinate[midway](aux)
	      (aux) -- (\tikzcdmatrixname-1-2) node[midway,sloped,red]{\small (Corollary~\ref{cor:PropsFinv})};
	      \path
	      (\tikzcdmatrixname-2-1) -- (\tikzcdmatrixname-3-1) coordinate[midway](aux1)
	      (\tikzcdmatrixname-1-2) -- (\tikzcdmatrixname-2-2) coordinate[midway](aux2)
	      (aux1) -- (aux2) node[midway,sloped,red]{\small (Lemma~\ref{lem:distrib})};
	      \path (\tikzcdmatrixname-2-2) -- (\tikzcdmatrixname-3-2) coordinate[pos=0.7](aux)
	      (\tikzcdmatrixname-3-1) -- (aux)  node[pos=0.4,red]{\small (Corollary~\ref{cor:PropsF})};
	      \path
	      (\tikzcdmatrixname-3-1) -- (\tikzcdmatrixname-4-1) coordinate[midway](aux1)
	      (\tikzcdmatrixname-3-2) -- (\tikzcdmatrixname-4-2) coordinate[midway](aux2)
	      (aux1) -- (aux2) node[midway,red]{\small (Naturality of $\nabla$)};
	      \path
	      (\tikzcdmatrixname-4-1) -- (\tikzcdmatrixname-5-1) coordinate[midway](aux1)
	      (\tikzcdmatrixname-4-2) -- (\tikzcdmatrixname-5-2) coordinate[midway](aux2)
	      (aux1) -- (aux2) node[midway,red]{\small (Naturality of $\nabla$)};
	      \path (\tikzcdmatrixname-5-1) -- (\tikzcdmatrixname-5-2) node[midway,red,yshift=-5mm]{\small (Def.)};
	    }
	  ]
	  \Gamma\otimes\Theta
	  \arrow[rdddd, "{\elimoplus(t\plus u,x.v,y.w)}",sloped,
	    rounded corners,
	    to path={[pos=0.25]
	      -- ([yshift=5mm]\tikztostart.north)
	      -| ([xshift=2.2cm]\tikztotarget.east)\tikztonodes
	    -- (\tikztotarget)}
	  ]
	  \arrow[rdddd, "{\elimoplus(t,x.v,y.w)\plus\elimoplus(u,x.v,y.w)}",sloped,
	    rounded corners,
	    to path={[pos=0.75]
	      -- ([xshift=-2.9cm]\tikztostart.west)
	      |- ([yshift=-5mm]\tikztotarget.south)\tikztonodes
	    -- (\tikztotarget)}
	  ]
	  & {(\Gamma\oplus\Gamma)\otimes\Theta} \\
	  {(\Gamma\otimes\Theta)\oplus(\Gamma\otimes\Theta)} & {((B\oplus C)\oplus(B\oplus C))\otimes\Theta} \\
	  {((B\oplus C)\otimes\Theta)\oplus((B\oplus C)\otimes\Theta)} & {(B\oplus C)\otimes\Theta} \\
	  {(B\otimes\Theta\oplus C\otimes\Theta)\oplus(B\otimes\Theta\oplus C\otimes\Theta)} & {B\otimes\Theta\oplus C\otimes\Theta} \\
	  {A\oplus A} & A
	  \arrow["{\coproducto vw}"', from=4-2, to=5-2,dashed]
	  \arrow["d"', from=3-2, to=4-2,dashed]
	  \arrow["\Delta\otimes\Id"', from=1-1, to=1-2,dashed]
	  \arrow["{(t\oplus u)\otimes\Delta}"', from=1-2, to=2-2,dashed]
	  \arrow["\Delta", from=1-1, to=2-1,dashed]
	  \arrow["{(t\otimes\Id)\oplus(u\otimes\Id)}", from=2-1, to=3-1,dashed]
	  \arrow["{d\oplus d}", from=3-1, to=4-1,dashed]
	  \arrow["{\coproducto vw\oplus\coproducto vw}", from=4-1, to=5-1,dashed]
	  \arrow["\nabla", from=5-1, to=5-2,dashed]
	  \arrow["d", from=1-2, to=2-1,dashed,sloped]
	  \arrow["d", from=2-2, to=3-1,dashed,sloped]
	  \arrow["\nabla", from=3-1, to=3-2,dashed]
	  \arrow["\nabla\otimes\Id",from=2-2, to=3-2,dashed]
	  \arrow["\nabla", from=4-1, to=4-2,dashed]
	\end{tikzcd}
      \]

    \item 
      \(
	\vcenter{
	  \infer{\Gamma\vdash \super{t_1}{t_2}\plus\super{u_1}{u_2}:A\odot B}
	  {
	    \infer{\Gamma\vdash\super{t_1}{t_2}:A\odot B}{
	      \Gamma\vdash t_1:A & \Gamma\vdash t_2:B
	    }
	    &
	    \infer{\Gamma\vdash\super{u_1}{u_2}:A\odot B}{
	      \Gamma\vdash u_1:A & \Gamma\vdash u_2:B
	    }
	  }
	}
      \)

      \hfill\(
	\lra\qquad
	\vcenter{
	  \infer{\Gamma\vdash \super{t_1\plus u_1}{t_2\plus u_2}:A\odot B}
	  {
	    \infer{\Gamma\vdash t_1\plus u_1:A}
	    {
	      \Gamma\vdash t_1:A & \Gamma\vdash u_1:A
	    }
	    &
	    \infer{\Gamma\vdash t_2\plus u_2:A}
	    {
	      \Gamma\vdash t_2:A & \Gamma\vdash u_2:A
	    }
	  }
	}
      \)

      This case is analogous to that of pairs.

    \item 
      \(
	\vcenter{\infer{\vdash \escalar s_1\bullet \escalar s_2.\star:\one}{\infer{\vdash \escalar s_2.\star:\one}{}}}
	\qquad\lra\qquad
	\vcenter{\infer{\vdash (\escalar s_1\produ[\mathcal S]\escalar s_2).\star:\one}{}}
      \)

      By Lemma~\ref{lem:PropsS}, we have that for any $I\xlra s I$, $\hat s=s$. In addition, we have that $\llparenthesis\cdot\rrparenthesis$ is a homomorphism. Thus,
      \[
	\sem{\vdash(\escalar s_1\produ[\mathcal S] \escalar s_2).\star:\one } = {\semS{\escalar s_1\produ[\mathcal S]\escalar s_2}}
	=\semS{\escalar s_1}\circ\semS{\escalar s_2}
	=\widehat{\semS{\escalar s_1}}\circ\semS{\escalar s_2}
	=\sem{\vdash \escalar s_1\bullet \escalar s_2.\star:\one}
      \]

    \item 
      \(
	\infer{\Gamma,\Theta\vdash\elimtens(\escalar s\bullet t,xy.v):A}{\Gamma,x:B,y:C\vdash v:A &  \infer{\Theta\vdash\escalar s\bullet t:B\otimes C}{\Theta\vdash t:B\otimes C}}
      \)

      \hfill\(
	\lra\qquad
	\infer{\Gamma,\Theta\vdash\escalar s\bullet \elimtens(t,xy.v):A}{\infer{\Gamma,\Theta\vdash\elimtens(t,xy.v):A}{\Gamma,x:B,y:C\vdash v:A &  {\Theta\vdash t:B\otimes C}}}
      \)

      \[
	\begin{tikzcd}[labels=description,column sep=1cm,
	    execute at end picture={
	      \path (\tikzcdmatrixname-2-1) -- (\tikzcdmatrixname-2-4) node[midway,yshift=1cm,red]{\small (Def.)};
	      \path (\tikzcdmatrixname-2-1) -- (\tikzcdmatrixname-2-4) node[midway,red]{\small (Lemma~\ref{lem:PropsS})};
	      \path
	      (\tikzcdmatrixname-2-1) -- (\tikzcdmatrixname-3-1) coordinate[pos=0.7](aux1)
	      (\tikzcdmatrixname-2-4) -- (\tikzcdmatrixname-3-4) coordinate[pos=0.7](aux2)
	      (aux1) -- (aux2) node[midway,red]{\small (Lemma~\ref{lem:s-nat})};
	      \path (\tikzcdmatrixname-3-1) -- (\tikzcdmatrixname-3-4) node[midway,yshift=-4mm,red]{\small (Def.)};
	    }
	    ]
	    \Gamma\otimes\Theta 
	    \arrow[rrrdd, "{\elimtens(\escalar s\bullet t,xy.v)}",sloped,
	    rounded corners,
	    to path={[pos=0.25]
	    -| ([xshift=8mm]\tikztotarget.east)\tikztonodes
	    -- (\tikztotarget)}
	    ]
	    \arrow[rrrdd, "{\escalar s\bullet \elimtens(t,xy.v)}",sloped,
	    rounded corners,
	    to path={[pos=0.75]
	    -- ([xshift=-4mm]\tikztostart.west)
	    |- ([yshift=-5mm]\tikztotarget.south)\tikztonodes
	    -- (\tikztotarget)}
	    ]
	    \\
	    {\Gamma\otimes B\otimes C} &&[1.5cm]& {\Gamma\otimes B\otimes C} \\
	    A &&& A
	    \arrow["{\Id\otimes t}", dashed, from=1-1, to=2-1]
	    \arrow["{\Id\otimes\widehat{\semS{\escalar s}}}", dashed, from=2-1, to=2-4,bend left=10]
	    \arrow["\widehat{\semS{s}}", dashed, from=2-1, to=2-4,bend right=10]
	    \arrow["v", dashed, from=2-4, to=3-4]
	    \arrow["v", dashed, from=2-1, to=3-1]
	    \arrow["{\widehat{\semS{\escalar s}}}"{description}, dashed, from=3-1, to=3-4]
	\end{tikzcd}
      \]

    \item 
      \(
	\vcenter{
	  \infer{\Gamma\vdash{\escalar s}\bullet\lambda{x}.t:A\multimap B}
	  {
	    \infer{\Gamma\vdash\lambda{x}.t:A\multimap B}
	    {
	      \Gamma,x:A\vdash t:B
	    }
	  }
	}
	\qquad\lra\qquad
	\vcenter{
	  \infer{\Gamma\vdash\lambda{x}.{\escalar s}\bullet t:A\multimap B}
	  {
	    \infer{\Gamma,x:A\vdash{\escalar s}\bullet t:B}
	    {
	      \Gamma,x:A\vdash t:B
	    }
	  }
	}
      \)
      \[
	\begin{tikzcd}[row sep=1.5cm,column sep=3cm,labels=description,
	    execute at end picture={
	      \path (\tikzcdmatrixname-2-1) -- (\tikzcdmatrixname-2-2) node[midway,red]{\small Lemma~\ref{lem:HomeScalEqScal}};
	      \path (\tikzcdmatrixname-1-1) -- (\tikzcdmatrixname-1-2) node[midway,red,yshift=5mm]{\small (Def.)};
	      \path (\tikzcdmatrixname-1-1) -- (\tikzcdmatrixname-1-2) node[midway,red,yshift=-7mm]{\small (Def.)};
	    }
	  ]
	  \Gamma 
	  \arrow[d, "{\lambda x.\escalar s\bullet t}",
	    rounded corners,
	    to path={[pos=0.75]
	      -- ([yshift=5mm]\tikztostart.north)
	      -| ([yshift=-1cm,xshift=6.8cm]\tikztotarget.south)\tikztonodes
	    -| (\tikztotarget.south)}
	  ]
	  \arrow[d, "{{\escalar s}\bullet\lambda x.t}"]
	  \arrow[r, "\eta^{ A}", dashed] & \home{ A}{\Gamma\otimes  A} \arrow[d, "\home{ A}t", dashed]                                           \\
	  \home{ A}{ B}  & \home{ A}{ B} \arrow[l, "\home{ A}{\hat s}", dashed, bend left] \arrow[l, "\hat s", dashed, bend right]
	\end{tikzcd}
      \]

    \item 
      \(
	\vcenter{
	  \infer{\Gamma\vdash \escalar s\bullet \langle\rangle:\top}
	  {
	    \infer{\Gamma\vdash\langle\rangle:\top}{}
	  }
	}
	\qquad\lra\qquad
	\infer{\Gamma\vdash\langle\rangle:\top}{}
      \)
      \[
	\begin{tikzcd}[labels=description,column sep=2cm,row sep=1cm,
	    execute at end picture={
	      \path
	      (\tikzcdmatrixname-1-1) -- (\tikzcdmatrixname-1-3) coordinate[midway](aux)
	      (aux) -- (\tikzcdmatrixname-2-2) node[pos=0.3,red]{\small (Terminal object)};
	      \path (\tikzcdmatrixname-1-1) -- (\tikzcdmatrixname-1-3) node[midway,red,yshift=4mm]{\small (Def.)};
	      \path (\tikzcdmatrixname-1-1) -- (\tikzcdmatrixname-2-2) node[midway,red,xshift=-1cm]{\small (Def.)};
	    }
	  ]
	  \Gamma
	  \arrow[rr, "{\escalar s\bullet\langle\rangle}",
	    rounded corners,
	    to path={[pos=0.75]
	      -- ([xshift=-5mm]\tikztostart.west)
	      |- ([xshift=5mm,yshift=-2.5cm]\tikztotarget.east)\tikztonodes
	    |- (\tikztotarget.east)}
	  ]
	  \arrow[rr, "{\langle\rangle}",
	    rounded corners,
	    to path={[pos=0.75]
	      -- ([yshift=3mm]\tikztostart.north)
	      |- ([yshift=5mm]\tikztotarget.north)\tikztonodes
	    |- (\tikztotarget.north)}
	  ]
	  && { 0 } \\
	  & { 0 } &
	  \arrow["{!}", from=1-1, to=1-3,dashed]
	  \arrow["{!}"', from=1-1, to=2-2,dashed,sloped]
	  \arrow["{\hat{\escalar s}}"', from=2-2, to=1-3,dashed,sloped]
	\end{tikzcd}
      \]

    \item 
      \(
	\vcenter{
	  \infer{\Gamma\vdash{s}\bullet\pair tu:A\with B}
	  {
	    \infer{\Gamma\vdash\pair tu:A\with B}
	    {
	      \Gamma\vdash t:A
	      &
	      \Gamma\vdash u:B
	    }
	  }
	}
	\qquad\lra\qquad
	\vcenter{
	  \infer{\Gamma\vdash\pair{{s}\bullet t}{{s}\bullet u}:A\with B}
	  {
	    \infer{\Gamma\vdash{s}\bullet t:A}
	    {
	      \Gamma\vdash t:A
	    }
	    &
	    \infer{\Gamma\vdash{s}\bullet u:B}
	    {
	      \Gamma\vdash u:B
	    }
	  }
	}
      \)
      \[
	\begin{tikzcd}[row sep=2cm,column sep=4cm,labels=description,
	    execute at end picture={
	      \path (\tikzcdmatrixname-1-1) -- (\tikzcdmatrixname-1-2) node[midway,red,yshift=5mm]{\small (Def.)};
	      \path (\tikzcdmatrixname-1-1) -- (\tikzcdmatrixname-1-2) node[midway,red,yshift=-8mm]{\small (Def.)};
	      \path (\tikzcdmatrixname-2-1) -- (\tikzcdmatrixname-2-2) node[midway,red]{\small Lemma~\ref{lem:PropsS}};
	    }
	  ]
	 \Gamma 
	  \arrow[d, "{s\bullet \pair{t}{u}}",
	    rounded corners,
	    to path={[pos=0.75]
	      -- ([yshift=5mm]\tikztostart.north)
	      -| ([yshift=-1.2cm,xshift=6.5cm]\tikztotarget.south)\tikztonodes
	    -| (\tikztotarget.south)}
	  ]
	  \arrow[d, "{\pair{s\bullet t}{s\bullet u}}"]
	  \arrow[r, "\Delta", dashed] & \Gamma\oplus\Gamma \arrow[d, "t\oplus u", dashed]                                           \\
	  A\oplus  B                          &  A\oplus  B \arrow[l, "\hat s\oplus\hat s", dashed, bend left=20] \arrow[l, "\hat s", dashed, bend right=20]
	\end{tikzcd}
      \]

    \item 
      \(
	\infer{\Gamma,\Theta\vdash\elimoplus(\escalar s\bullet t,x.v,y.w):A}{\infer{\Gamma\vdash\escalar s\bullet t:B\oplus C}{\Gamma\vdash t:B\oplus C} & \Theta,x:B\vdash v:A & \Theta,y:C\vdash w:A}
      \)

      \hfill\(
	\lra\qquad
	\vcenter{\infer{\Gamma,\Theta\vdash\escalar s\bullet \elimoplus(t,x.v,y.w):A}{\infer{\Gamma,\Theta\vdash\elimoplus(t,x.v,y.w):A}{{\Gamma\vdash t:B\oplus C} & \Theta,x:B\vdash v:A & \Theta,y:C\vdash w:A }}}
      \)
      \[
	\begin{tikzcd}[labels=description,column sep=4cm,row sep=1cm,
	    execute at end picture={
	      \path (\tikzcdmatrixname-2-1) -- (\tikzcdmatrixname-2-2) node[midway,yshift=1cm,red]{\small (Def.)};
	      \path (\tikzcdmatrixname-4-1) -- (\tikzcdmatrixname-4-2) node[midway,yshift=-4mm,red]{\small (Def.)};
	      \path (\tikzcdmatrixname-2-1) -- (\tikzcdmatrixname-2-2) node[midway,red]{\small (Lemma~\ref{lem:PropsS})};
	      \path 
	      (\tikzcdmatrixname-2-1) -- (\tikzcdmatrixname-3-1) coordinate[pos=0.6](aux1)
	      (\tikzcdmatrixname-2-2) -- (\tikzcdmatrixname-3-2) coordinate[pos=0.6](aux2)
	      (aux1) -- (aux2) node[midway,red]{\small (Lemma~\ref{lem:s-nat})};
	      \path 
	      (\tikzcdmatrixname-3-1) -- (\tikzcdmatrixname-4-1) coordinate[pos=0.5](aux1)
	      (\tikzcdmatrixname-3-2) -- (\tikzcdmatrixname-4-2) coordinate[pos=0.5](aux2)
	      (aux1) -- (aux2) node[midway,red]{\small (Lemma~\ref{lem:s-nat})};
	    }
	  ]
	  \Gamma\otimes\Theta 
	  \arrow[rddd, "{\elimoplus(\escalar s\bullet t,x.v,y.w)}",sloped,
	    rounded corners,
	    to path={[pos=0.25]
	      -| ([xshift=15mm]\tikztotarget.east)\tikztonodes
	    -- (\tikztotarget)}
	  ]
	  \arrow[rddd, "{\escalar s\bullet \elimoplus(t,x.v,y.w)}",sloped,
	    rounded corners,
	    to path={[pos=0.75]
	      -- ([xshift=-10mm]\tikztostart.west)
	      |- ([yshift=-5mm]\tikztotarget.south)\tikztonodes
	    -- (\tikztotarget)}
	  ]
	  \\
	  {(B\oplus C)\otimes\Theta} & {(B\oplus C)\otimes\Theta} \\
	    {(B\otimes\Theta)\oplus (C\otimes\Theta)} & {(B\otimes\Theta)\oplus (C\otimes\Theta)} \\[1cm]
	  A & A
	  \arrow["t\otimes\Id", from=1-1, to=2-1,dashed]
	  \arrow["{\widehat{\semS{\escalar s}}\otimes\Id}", bend left=10,dashed, from=2-1, to=2-2]
	  \arrow["{\widehat{\semS{\escalar s}}}",bend right=10,dashed, from=2-1, to=2-2]
	  \arrow["d", from=2-2, to=3-2,dashed]
	  \arrow["{\coproducto vw}", from=3-2, to=4-2,dashed]
	  \arrow["d", from=2-1, to=3-1,dashed]
	  \arrow["{\coproducto vw}", from=3-1, to=4-1,dashed]
	  \arrow["{\widehat{\semS{\escalar s}}}", dashed, from=4-1, to=4-2]
	  \arrow["{\widehat{\semS{\escalar s}}}", dashed, from=3-1, to=3-2]
	\end{tikzcd}
      \]

    \item 
      \(
	\vcenter{
	  \infer{\Gamma\vdash{s}\bullet\super tu:A\odot B}
	  {
	    \infer{\Gamma\vdash\super tu:A\odot B}
	    {
	      \Gamma\vdash t:A
	      &
	      \Gamma\vdash u:B
	    }
	  }
	}
	\qquad\lra\qquad
	\vcenter{
	  \infer{\Gamma\vdash\super{{s}\bullet t}{{s}\bullet u}:A\odot B}
	  {
	    \infer{\Gamma\vdash{s}\bullet t:A}
	    {
	      \Gamma\vdash t:A
	    }
	    &
	    \infer{\Gamma\vdash{s}\bullet u:B}
	    {
	      \Gamma\vdash u:B
	    }
	  }
	}
      \)

      This case is analogous to that of pairs.

  \end{itemize}
  {\bf  Inductive cases:}
  The cases where the reduction is determinist (that is, any case but those due to rules ${\relimsup \ell}$ and ${\relimsup r}$), are trivial by composition. Therefore, the only interesting case is that of the non-deterministic rules. 

  The interesting case corresponds to the reductions
  \[
    \infer{K[t]\lra[\escalar p] K[r_1]}{t\lra[p] r_1}
    \qquad
    \infer{K[t]\lra[\escalar q] K[r_2]}{t\lra[q] r_2}
  \]

  We must show that
  \[
    \sem{\Gamma\vdash K[t]:A} = \sumwe{\semS{\escalar p}}{\semS{\escalar q}}\circ(\sem{\Gamma\vdash K[r_1]:A}\oplus\sem{\Gamma\vdash K[r_2]:A})\circ\Delta
  \]
  We proceed by induction on the shape of $K[]$.
  \begin{itemize}
    \item If $K[]=[]$, then this is the basic case of the non-deterministic rules.
      \item If $K[]=K'[]\plus u$. Let 
	\begin{align*}
	  f_1& =\sem{\Gamma\vdash K'[r_1]:A} 
	  & f_2 &=\sem{\Gamma\vdash K'[r_2]:A}\\
	  g&=\sem{\Gamma\vdash u:A}
	\end{align*}
	Then,
	\begin{align*}
	  \sem{\Gamma\vdash K[t]:A}
	  &= \sem{\Gamma\vdash K'[t]\plus u:A}\\
	  &= \nabla\circ(\sem{\Gamma\vdash K'[t]:A}\oplus g)\circ\Delta\\
	  \textrm{(by IH)} &= \nabla\circ((\sumwe{\semS{\escalar p}}{\semS{\escalar q}}\circ(f_1\oplus f_2)\circ\Delta)\oplus( g))\circ\Delta\\
	  \textrm{\color{red}(*)}&= \sumwe{\semS{\escalar p}}{\semS{\escalar q}}\circ((\nabla\circ(f_1\oplus g)\circ\Delta)\oplus(\nabla\circ(f_2\oplus g)\circ\Delta))\circ\Delta\\
	  &= \sumwe{\semS{\escalar p}}{\semS{\escalar q}}\circ(\sem{\Gamma\vdash K[r_1]:A}\oplus\sem{\Gamma\vdash K[r_2]:A})\circ\Delta
	\end{align*}
	Where the equality {\color{red}(*)} is justified by the following commuting diagram.
	\[
	  \begin{tikzcd}[labels=description,column sep=4cm,row sep=1cm,
	    execute at end picture={
	      \path 
	      (\tikzcdmatrixname-1-1) -- (\tikzcdmatrixname-2-1) coordinate[pos=0.5](aux1)
	      (\tikzcdmatrixname-1-2) -- (\tikzcdmatrixname-2-2) coordinate[pos=0.5](aux2)
	      (aux1) -- (aux2) node[midway,red,sloped]{\small Naturality of $\Delta$)};
	      \path (\tikzcdmatrixname-3-1) -- (\tikzcdmatrixname-2-2) node[midway,red,yshift=4mm,xshift=-7mm,sloped]{\small (Lemma~\ref{lem:deltaDelta})};
	      \path 
	      (\tikzcdmatrixname-3-1) -- (\tikzcdmatrixname-4-1) coordinate[pos=0.5](aux1)
	      (\tikzcdmatrixname-2-2) -- (\tikzcdmatrixname-3-2) coordinate[pos=0.5](aux2)
	      (aux1) -- (aux2) node[midway,red,sloped]{\small (Lemma~\ref{lem:delta-nat})};
	      \path (\tikzcdmatrixname-5-1) -- (\tikzcdmatrixname-3-2) node[midway,red,yshift=4mm,xshift=-4mm,sloped]{\small (Lemma~\ref{lem:prop-times-sumwe})};
	      \path (\tikzcdmatrixname-5-1) -- (\tikzcdmatrixname-4-2) node[pos=0.7,red]{\small (Lemma~\ref{lem:sumwe-nat})};
	    }
	  ]
	  \Gamma\ar[d,"\Delta"]\ar[r,"\Delta"] & \Gamma\oplus\Gamma\ar[d,"\Delta\oplus\Delta"] \\
	  \Gamma\oplus\Gamma\ar[r,"\Delta",dashed]\ar[d,"\Delta\oplus\Id"] & (\Gamma\oplus\Gamma)\oplus(\Gamma\oplus\Gamma)\ar[d,"(f_1\oplus g)\oplus(f_2\oplus g)"] \\
	    (\Gamma\oplus\Gamma)\oplus\Gamma\ar[ru,"\delta",dashed]\ar[d,"(f_1\oplus f_2)\oplus g"] &( A\oplus A)\oplus( A\oplus A)\ar[d,"\nabla\oplus\nabla"]\ar[ddl,"\sumwe{\semS{\escalar p}}{\semS{\escalar q}}",sloped,dashed]\\
	  ( A\oplus A)\oplus A\ar[ru,"\delta",dashed]\ar[d,"\sumwe{\semS{\escalar p}}{\semS{\escalar q}}\oplus\Id"] &  A\oplus A\ar[d,"\sumwe{\semS{\escalar p}}{\semS{\escalar q}}"]\\
	   A\oplus A\ar[r,"\nabla"]  & A
	  \end{tikzcd}
	\]

      \item If $K[]=u\plus K'[]$. This case is analogous to the case $K'[]\plus s$.
      \item If $K[]=\escalar s\bullet K'[]$. Let 
	\begin{align*}
	  f_1 &=\sem{\Gamma\vdash K'[r_1]:A} 
	  & f_2&=\sem{\Gamma\vdash K'[r_2]:A}
	\end{align*}
	Then,
	\begin{align*}
	  \sem{\Gamma\vdash K[t]:A}
	  &= \sem{\Gamma\vdash \escalar \escalar s\bullet K'[t]}\\
	  &= \widehat{\semS{\escalar s}}\circ\sem{\Gamma\vdash K'[t]:A}\\
	  \textrm{(by IH)} &= \widehat{\semS{\escalar s}}\circ(\sumwe{\semS{\escalar p}}{\semS{\escalar q}}\circ(f_1\oplus f_2)\circ\Delta)\\
	  \textrm{\color{red}(*)}&= \sumwe{\semS{\escalar p}}{\semS{\escalar q}}\circ((\widehat{\semS{\escalar s}}\circ f_1)\oplus(\widehat{\semS{\escalar s}}\circ f_2))\circ\Delta\\
	  &= \sumwe{\semS{\escalar p}}{\semS{\escalar q}}\circ(\sem{\Gamma\vdash \escalar s\bullet K'[r_1]:A}\oplus\sem{\Gamma\vdash \escalar s\bullet K'[r_2]:A})\circ\Delta\\
	  &= \sumwe{\semS{\escalar p}}{\semS{\escalar q}}\circ(\sem{\Gamma\vdash K[r_1]:A}\oplus\sem{\Gamma\vdash K[r_2]:A})\circ\Delta
	\end{align*}
	Where the equality {\color{red}(*)} is justified by the following commuting diagram.
	\[
	  \begin{tikzcd}[labels=description,column sep=4cm,row sep=1cm,
	      execute at end picture={
		\path
		(\tikzcdmatrixname-2-1) -- (\tikzcdmatrixname-3-1) coordinate[pos=0.5](aux1)
		(\tikzcdmatrixname-2-2) -- (\tikzcdmatrixname-3-2) coordinate[pos=0.5](aux2)
		(aux1) -- (aux2) node[midway,red]{\small (Lemma~\ref{lem:sumwe-nat})};
	      }
	    ]
	    \Gamma\oplus\Gamma\ar[d,"f_1\oplus f_2"] & \Gamma\ar[l,"\Delta"]\\
	    A\oplus A\ar[r,"\sumwe{\semS{\escalar p}}{\semS{\escalar q}}"]\ar[d,"\widehat{\semS s}\oplus\widehat{\semS s}"] & A\ar[d,"\widehat{\semS s}"]\\
	    A\oplus A\ar[r,"\sumwe{\semS{\escalar p}}{\semS{\escalar q}}"] & A
	  \end{tikzcd}
	\]
      \qedhere
    \item If $K[]=\elimone(K'[],u)$.
      Then $\Gamma=\Gamma_1,\Gamma_2$.
      Let 
      \begin{align*}
	f_1 & =\sem{\Gamma_1\vdash K'[r_1]:\one} & f_2&=\sem{\Gamma_1\vdash K'[r_2]:\one}\\
	g&=\sem{\Gamma_2\vdash u:A}
      \end{align*}
      Then,
      \begin{align*}
	\sem{\Gamma\vdash K[t]:A} 
	&=\sem{\Gamma\vdash\elimone(K'[t],u):A}\\
	&=\sem{\Gamma_1,\Gamma_2\vdash\elimone(K'[t],u):A}\\
	&=\lambda\circ(\sem{\Gamma_1\vdash K'[t]:\one}\otimes g)\\
	\textrm{(by IH)} &= \lambda\circ((\sumwe{\semS{\escalar p}}{\semS{\escalar q}}\circ (f_1\oplus f_2)\circ\Delta)\otimes g)\\
	\textrm{\color{red}(*)}&=\sumwe{\semS{\escalar p}}{\semS{\escalar q}}\circ(
	(\lambda\circ(f_1\otimes g))
	\oplus
	(\lambda\circ(f_2\otimes g))
	)\circ\Delta\\
	&=
	\begin{aligned}[t]
	  \sumwe{\semS{\escalar p}}{\semS{\escalar q}}\circ(&\sem{\Gamma\vdash \elimone(K'[r_1],u):A}\\
	  \oplus &\sem{\Gamma\vdash \elimone(K'[r_2],u):A})\circ\Delta
	\end{aligned}
	\\
	&=\sumwe{\semS{\escalar p}}{\semS{\escalar q}}\circ(\sem{\Gamma\vdash K[r_1]:A}\oplus\sem{\Gamma\vdash K[r_2]:A})\circ\Delta
      \end{align*}
      Where the equality {\color{red}(*)} is justified by the following commuting diagram.
      \[
	\begin{tikzcd}[labels=description,column sep=4cm,row sep=1cm,
	    execute at end picture={
	      \path (\tikzcdmatrixname-2-1) -- (\tikzcdmatrixname-1-2) node[midway,red,yshift=4mm,xshift=-6mm,sloped]{\small (Corollary~\ref{cor:PropsFinv})};
	      \path 
	      (\tikzcdmatrixname-2-1) -- (\tikzcdmatrixname-3-1) coordinate[pos=0.5](aux1)
	      (\tikzcdmatrixname-1-2) -- (\tikzcdmatrixname-2-2) coordinate[pos=0.5](aux2)
	      (aux1) -- (aux2) node[midway,red,sloped]{\small (Lemma~\ref{lem:distrib})};
	      \path (\tikzcdmatrixname-4-1) -- (\tikzcdmatrixname-2-2) node[midway,red,yshift=4mm,xshift=-4mm,sloped]{\small ({Corollary~\ref{cor:PropsF}})};
	      \path (\tikzcdmatrixname-4-1) -- (\tikzcdmatrixname-2-2) node[midway,red,yshift=-4mm,xshift=-4mm,sloped]{\small {(Lemma~\ref{lem:sumwe-nat})}};
	    }
	    ]
	    {\Gamma_1}\otimes{\Gamma_2}\ar[d,"\Delta\otimes\Id"]\ar[r,"\Delta"] & ({\Gamma_1}\otimes{\Gamma_2})\oplus({\Gamma_1}\otimes{\Gamma_2})\ar[d,"(f_1\otimes g)\oplus(f_2\otimes g)"]\\
	    ({\Gamma_1}\oplus{\Gamma_1})\otimes{\Gamma_2}\ar[ru,sloped,"d",dashed]\ar[d,"(f_1\oplus f_2)\otimes g"]& ( I\otimes  A)\oplus( I\otimes  A)\ar[d,"\lambda\oplus\lambda"]\ar[ddl,"\sumwe{\semS{\escalar p}}{\semS{\escalar q}}",sloped,dashed]\\
	    ( I\oplus I)\otimes  A\ar[d,"\sumwe{\semS{\escalar p}}{\semS{\escalar q}}\otimes\Id"]\ar[ru,sloped,"d",dashed]&  A\oplus  A\ar[d,"\sumwe{\semS{\escalar p}}{\semS{\escalar q}}"]\\
	    I\otimes  A\ar[r,"\lambda"]&  A
	\end{tikzcd}
      \]

    \item If $K[]=\elimone(u,K'[])$.
      Then $\Gamma=\Gamma_1,\Gamma_2$.
      Let 
      \begin{align*}
	f_1&=\sem{\Gamma_2\vdash K'[r_1]:A}  & f_2&=\sem{\Gamma_2\vdash K'[r_2]:A}\\
	g&=\sem{\Gamma_1\vdash u:\one}
      \end{align*}
      Then,
      \begin{align*}
	\sem{\Gamma\vdash K[t]:A} 
	&=\sem{\Gamma\vdash\elimone(u,K'[t]):A}\\
	&=\sem{\Gamma_1,\Gamma_2\vdash\elimone(u,K'[t]):A}\\
	&=\lambda\circ(g\otimes\sem{\Gamma_2\vdash K'[t]:A})\\
	\textrm{(by IH)} &= \lambda\circ(g\otimes(\sumwe{\semS{\escalar p}}{\semS{\escalar q}}\circ (f_1\oplus f_2)\circ\Delta))\\
	\textrm{\color{red}(*)}&=\sumwe{\semS{\escalar p}}{\semS{\escalar q}}\circ(
	(\lambda\circ(g\otimes f_1))
	\oplus
	(\lambda\circ(g\otimes f_2))
	)\circ\Delta\\
	&=
	\begin{aligned}[t]
	  \sumwe{\semS{\escalar p}}{\semS{\escalar q}}\circ(&\sem{\Gamma\vdash \elimone(u,K'[r_1]):A} \\
	  \oplus&\sem{\Gamma\vdash \elimone(u,K'[r_2]):A})\circ\Delta
	\end{aligned}
	\\
	&=\sumwe{\semS{\escalar p}}{\semS{\escalar q}}\circ(\sem{\Gamma\vdash K[r_1]:A}\oplus\sem{\Gamma\vdash K[r_2]:A})\circ\Delta
      \end{align*}
      Where the equality {\color{red}(*)} is justified by the following commuting diagram.
      \[
	\begin{tikzcd}[labels=description,column sep=4cm,row sep=1cm,
	    execute at end picture={
	      \path (\tikzcdmatrixname-2-1) -- (\tikzcdmatrixname-1-2) node[midway,red,yshift=4mm,xshift=-6mm,sloped]{\small (Corollary~\ref{cor:PropsFinv})};
	      \path 
	      (\tikzcdmatrixname-2-1) -- (\tikzcdmatrixname-3-1) coordinate[pos=0.5](aux1)
	      (\tikzcdmatrixname-1-2) -- (\tikzcdmatrixname-2-2) coordinate[pos=0.5](aux2)
	      (aux1) -- (aux2) node[midway,red,sloped]{\small (Lemma~\ref{lem:distrib})};
	      \path (\tikzcdmatrixname-4-1) -- (\tikzcdmatrixname-2-2) node[midway,red,yshift=4mm,xshift=-4mm,sloped]{\small ({Lemma \ref{lem:PropsF}})};
	      \path (\tikzcdmatrixname-4-1) -- (\tikzcdmatrixname-2-2) node[midway,red,yshift=-4mm,xshift=-4mm,sloped]{\small {(Lemma~\ref{lem:sumwe-nat})}};
	    }
	    ]
	    {\Gamma_1}\otimes{\Gamma_2}\ar[d,"\Id\otimes\Delta"]\ar[r,"\Delta"] & ({\Gamma_1}\otimes{\Gamma_2})\oplus({\Gamma_1}\otimes{\Gamma_2})\ar[d,"(g\otimes f_1)\oplus(g\otimes f_2)"]\\
	    {\Gamma_1}\otimes({\Gamma_2}\oplus{\Gamma_2})\ar[ru,"d_r",dashed]\ar[d,"g\otimes(f_1\oplus f_2)"]& ( I\otimes  A)\oplus( I\otimes  A)\ar[d,"\lambda\oplus\lambda"]\ar[ddl,"\sumwe{\semS{\escalar p}}{\semS{\escalar q}}",dashed,sloped]\\
	    I\otimes( A\oplus  A)\ar[d,"\Id\otimes\sumwe{\semS{\escalar p}}{\semS{\escalar q}}"]\ar[ru,"d_r",dashed]&  A\oplus  A\ar[d,"\sumwe{\semS{\escalar p}}{\semS{\escalar q}}"]\\
	    I\otimes  A\ar[r,"\lambda"]&  A
	\end{tikzcd}
      \]

    \item
      If $K[]=K'[]\otimes u$. Then $A=B\otimes C$ and $\Gamma=\Gamma_1,\Gamma_2$.
      Let 
      \begin{align*}
	f_1&=\sem{\Gamma_1\vdash K'[r_1]:B}   &  f_2&=\sem{\Gamma_2\vdash K'[r_2]:B}\\
	g&=\sem{\Gamma_2\vdash u:C}
      \end{align*}
      Then,
      \begin{align*}
	\sem{\Gamma\vdash K[t]:A} 
	&=\sem{\Gamma_1,\Gamma_2\vdash K'[t]\otimes u:B\otimes C}\\
	&=\sem{\Gamma_1\vdash K'[t]:B}\otimes g\\
	\textrm{(by IH)}&=(\sumwe{\semS{\escalar p}}{\semS{\escalar q}}\circ(f_1\oplus f_2)\circ\Delta)\otimes g\\
	\textrm{\color{red}(*)}&=\sumwe{\semS{\escalar p}}{\semS{\escalar q}}\circ(
	(f_1\otimes g)
	\oplus
	(f_2\otimes g)
	)\circ\Delta\\
	&=
	\begin{aligned}[t]
	  \sumwe{\semS{\escalar p}}{\semS{\escalar q}}\circ(
	  &\sem{\Gamma_1,\Gamma_2\vdash K'[r_1]\otimes u:B\otimes C}\\
	  \oplus 
	  &\sem{\Gamma_1,\Gamma_2\vdash K'[r_2]\otimes u:B\otimes C}
	  )\circ\Delta
	\end{aligned}
	\\
	&=\sumwe{\semS{\escalar p}}{\semS{\escalar q}}\circ(\sem{\Gamma\vdash K[r_1]:A}\oplus\sem{\Gamma\vdash K[r_2]:A})\circ\Delta
      \end{align*}

      Where the equality {\color{red}(*)} is justified by the following commuting diagram.
      \[
	\begin{tikzcd}[labels=description,column sep=3.2cm,row sep=2cm,
	    execute at end picture={
	      \path (\tikzcdmatrixname-2-1) -- (\tikzcdmatrixname-1-2) coordinate[pos=0.5](aux)
	      (\tikzcdmatrixname-1-1) -- (aux) node[midway,red]{\small (Corollary~\ref{cor:PropsFinv})};
	      \path (\tikzcdmatrixname-3-1) -- (\tikzcdmatrixname-2-2) coordinate[pos=0.5](aux)
	      (\tikzcdmatrixname-3-2) -- (aux) node[midway,red]{\small ({Lemma \ref{lem:PropsF}})};
	      \path (\tikzcdmatrixname-2-1) -- (\tikzcdmatrixname-2-2) node[midway,red]{\small (Lemma~\ref{lem:distrib})};
	    }]
	    {{\Gamma_1}\otimes{\Gamma_2}} & {({\Gamma_1}\otimes{\Gamma_2})\oplus({\Gamma_1}\otimes{\Gamma_2})} \\
	    {({\Gamma_1}\oplus{\Gamma_1})\otimes {\Gamma_2}} & {( B\otimes  C)\oplus( B\otimes  C)} \\
	    {( B\oplus  B)\otimes  C} & { B\otimes  C}
	    \arrow["\Delta\otimes\Id", from=1-1, to=2-1]
	    \arrow["{(f_1\oplus f_2)\otimes g}", from=2-1, to=3-1]
	    \arrow["{\sumwe{\semS{\escalar p}}{\semS{\escalar q}}\otimes\Id}", from=3-1, to=3-2]
	    \arrow["\Delta", from=1-1, to=1-2]
	    \arrow["{(f_1\otimes g)\oplus(f_2\otimes g)}", from=1-2, to=2-2]
	    \arrow["{\sumwe{\semS{\escalar p}}{\semS{\escalar q}}}", from=2-2, to=3-2]
	    \arrow["d",sloped, dashed, from=3-1, to=2-2]
	    \arrow["d",sloped, dashed, from=2-1, to=1-2]
	\end{tikzcd}
      \]

    \item If $K[]=u\otimes K'[]$. This case is analogous to the case $K'[]\otimes u$.

    \item 
      If $K[]=\elimtens(K'[],xy.u)$. Then $\Gamma=\Gamma_1,\Gamma_2$.
      Let 
      \begin{align*}
	f_1&=\sem{\Gamma_2\vdash K'[r_1]:B\otimes C} & f_2&=\sem{\Gamma_2\vdash K'[r_2]:B\otimes C}\\
	g&=\sem{\Gamma_1,x:B,y:C\vdash u:A}
      \end{align*}
      Then,
      \begin{align*}
	\sem{\Gamma\vdash K[t]:A} 
	&= \sem{\Gamma_1,\Gamma_2\vdash \elimtens(K'[t],xy.u):A} \\
	&= g\circ(\Id\otimes\sem{\Gamma_2\vdash K'[t]:B\otimes C}) \\
	\textrm{(by IH)}&= g\circ(\Id\otimes(\sumwe{\semS{\escalar p}}{\semS{\escalar q}}\circ(f_1\oplus f_2)\circ\Delta)) \\
	\textrm{\color{red}(*)}&=\sumwe{\semS{\escalar p}}{\semS{\escalar q}}\circ(
	(g\circ(\Id\otimes f_1))
	\oplus
	(g\circ(\Id\otimes f_2))
	)\circ\Delta\\
	&=
	\begin{aligned}[t]
	  \sumwe{\semS{\escalar p}}{\semS{\escalar q}}\circ(&\sem{\Gamma_1,\Gamma_2\vdash \elimtens(K'[r_1],xy.u):A}\\
	  \oplus&\sem{\Gamma_1,\Gamma_2\vdash \elimtens(K'[r_2],xy.u):A})\circ\Delta
	\end{aligned}
	\\
	&=\sumwe{\semS{\escalar p}}{\semS{\escalar q}}\circ(\sem{\Gamma\vdash K[r_1]:A}\oplus\sem{\Gamma\vdash K[r_2]:A})\circ\Delta
      \end{align*}
      Where the equality {\color{red}(*)} is justified by the following commuting diagram.

      \[
	\begin{tikzcd}[labels=description,column sep=2.5cm,row sep=huge,
	    execute at end picture={
	      \path (\tikzcdmatrixname-2-1) -- (\tikzcdmatrixname-1-2) coordinate[pos=0.5](aux)
	      (\tikzcdmatrixname-1-1) -- (aux) node[midway,red]{\small (Corollary~\ref{cor:PropsFinv})};
	      \path (\tikzcdmatrixname-3-1) -- (\tikzcdmatrixname-2-2) coordinate[pos=0.5](aux)
	      (\tikzcdmatrixname-3-2) -- (aux) node[midway,red]{\small ({Lemma \ref{lem:PropsF}})};
	      \path (\tikzcdmatrixname-2-1) -- (\tikzcdmatrixname-2-2) node[midway,red]{\small (Lemma~\ref{lem:distrib})};
	      \path (\tikzcdmatrixname-3-1) -- (\tikzcdmatrixname-4-1) coordinate[pos=0.5](aux1)
	      (\tikzcdmatrixname-3-2) -- (\tikzcdmatrixname-4-2) coordinate[pos=0.5](aux2)
	      (aux1) -- (aux2) node[midway,red]{\small (Lemma~\ref{lem:sumwe-nat})};
	    }]
	    {{\Gamma_1}\otimes{\Gamma_2}} & {{\Gamma_1}\otimes({\Gamma_2}\oplus{\Gamma_2})} \\
	    {({\Gamma_1}\otimes{\Gamma_2})\oplus({\Gamma_1}\otimes{\Gamma_2})} & {{\Gamma_1}\otimes(( B\otimes  C)\oplus( B\otimes  C))} \\
	    {({\Gamma_1}\otimes  B\otimes  C)\oplus({\Gamma_2}\otimes  B\otimes  C)} & {{\Gamma_1}\otimes  B\otimes  C} \\
	    { C\oplus  C} &  C
	    \arrow["\Id\otimes\Delta", from=1-1, to=1-2]
	    \arrow["{\Id\otimes(f_1\oplus f_2)}", from=1-2, to=2-2]
	    \arrow["{\Id\otimes\sumwe{\semS{\escalar p}}{\semS{\escalar q}}}", from=2-2, to=3-2]
	    \arrow["g", from=3-2, to=4-2]
	    \arrow["\Delta"', from=1-1, to=2-1]
	    \arrow["{(\Id\otimes f_1)\oplus(\Id\otimes f_2)}"', from=2-1, to=3-1]
	    \arrow["{g\oplus g}"', from=3-1, to=4-1]
	    \arrow["{\sumwe{\semS{\escalar p}}{\semS{\escalar q}}}"', from=4-1, to=4-2]
	    \arrow["d"{description}, dashed, from=1-2, to=2-1]
	    \arrow["d"{description}, dashed, from=2-2, to=3-1]
	    \arrow["{\sumwe{\semS{\escalar p}}{\semS{\escalar q}}}"{description}, dashed, from=3-1, to=3-2]
	\end{tikzcd}
      \]

    \item 
      If $K[]=\elimtens(u,xy.K'[])$.
      Then $\Gamma=\Gamma_1,\Gamma_2$.
      Let 
      \begin{align*}
	f_1&=\sem{\Gamma_1,x:B,y:C\vdash K'[r_1]:A} & f_2&=\sem{\Gamma_1,x:B,y:C\vdash K'[r_2]:A}\\
	g&=\sem{\Gamma_2\vdash u:B\otimes C}
      \end{align*}
      \begin{align*}
	\sem{\Gamma\vdash K[t]:A} 
	&= \sem{\Gamma_1,\Gamma_2\vdash \elimtens(u,xy.K'[t]):A} \\
	&= \sem{\Gamma_1,x:B,y:C\vdash K'[t]:A}\circ(\Id\otimes g) \\
	\textrm{(by IH)}&= \sumwe{\semS{\escalar p}}{\semS{\escalar q}}\circ(f_1\oplus f_2)\circ\Delta\circ (\Id\otimes g) \\
	\textrm{\color{red}(*)}
	&=\sumwe{\semS{\escalar p}}{\semS{\escalar q}}\circ(f_1\oplus f_2)\circ(
	(\Id\otimes g)
	\oplus
	(\Id\otimes g)
	)\circ\Delta\\
	&=\sumwe{\semS{\escalar p}}{\semS{\escalar q}}\circ(
	(f_1\circ(\Id\otimes g))
	\oplus
	(f_2\circ(\Id\otimes g))
	)\circ\Delta\\
	&=
	\begin{aligned}[t]
	  \sumwe{\semS{\escalar p}}{\semS{\escalar q}}\circ(&\sem{\Gamma_1,\Gamma_2\vdash \elimtens(u,xy.K'[r_1]):A}\\
	  \oplus&\sem{\Gamma_1,\Gamma_2\vdash \elimtens(u,xy.K'[r_2]):A})\circ\Delta
	\end{aligned}
	\\
	&=\sumwe{\semS{\escalar p}}{\semS{\escalar q}}\circ(\sem{\Gamma\vdash K[r_1]:A}\oplus\sem{\Gamma\vdash K[r_2]:A})\circ\Delta
      \end{align*}
      Where the equality {\color{red}(*)} is justified by the following commuting diagram.
      \[
	\begin{tikzcd}[labels=description,column sep=3.5cm,row sep=1cm,
	    execute at end picture={
	      \path (\tikzcdmatrixname-1-1) -- (\tikzcdmatrixname-2-1) coordinate[pos=0.5](aux1)
	      (\tikzcdmatrixname-1-2) -- (\tikzcdmatrixname-2-2) coordinate[pos=0.5](aux2)
	      (aux1) -- (aux2) node[midway,red]{\small (Naturality of $\Delta$)};
	    }]
	    {{\Gamma_1}\otimes{\Gamma_2}} & {{\Gamma_1}\otimes  B\otimes  C} \\
	    {({\Gamma_1}\otimes{\Gamma_2})\oplus({\Gamma_1}\otimes{\Gamma_2})} & {({\Gamma_1}\otimes  B\otimes  C)\oplus({\Gamma_1}\otimes  B\otimes  C)}\ar[d,"f_1\oplus f_2"] \\
	    A & A\oplus A\ar[l,"\sumwe{\semS{\escalar p}}{\semS{\escalar q}}"]
	    \arrow["{\Id\otimes g}", from=1-1, to=1-2]
	    \arrow["\Delta", from=1-2, to=2-2]
	    \arrow["\Delta"', from=1-1, to=2-1]
	    \arrow["{(\Id\otimes g)\oplus(\Id\otimes g)}"{description}, from=2-1, to=2-2]
	\end{tikzcd}
      \]

    \item 
      Let $K[]=\lambda x.K'[]$.
      Then $A=B\multimap C$.
      Let 
      \begin{align*}
	f_1&=\sem{\Gamma,x:B\vdash K'[r_1]:C} & f_2 & =\sem{\Gamma,x:B\vdash K'[r_2]:C}
      \end{align*}
      Then,
      \begin{align*}
	\sem{\Gamma\vdash K[t]:A}
	&=\sem{\Gamma\vdash \lambda x.K'[t]:B\multimap C}\\
	&=\home{ B}{\sem{\Gamma,x:B\vdash K'[t]:C}}\circ\eta^{ B} \\
	\textrm{(by IH)} &=\home{ B}{\sumwe{\semS{\escalar p}}{\semS{\escalar q}}\circ(f_1\oplus f_2)\circ\Delta}\circ\eta^{ B}
	\\
	\textrm{\color{red}(*)}  & =
	\sumwe{\semS{\escalar p}}{\semS{\escalar q}}\circ(
	\home{ B}{f_1}\circ\eta^{ B}
	\oplus
	\home{ B}{f_2}\circ\eta^{ B}
	)\circ\Delta\\
	& = 
	\begin{aligned}[t]
	  \sumwe{\semS{\escalar p}}{\semS{\escalar q}}\circ(&\sem{\Gamma\vdash \lambda x.K'[r_1]:B\multimap C}\\
	  \oplus&\sem{\Gamma\vdash\lambda x.K'[r_2]:B\multimap C})\circ\Delta
	\end{aligned}
	\\
	& = \sumwe{\semS{\escalar p}}{\semS{\escalar q}}\circ(\sem{\Gamma\vdash K[r_1]:A}\oplus\sem{\Gamma\vdash K[r_2]:A})\circ\Delta
      \end{align*}
      Where the equality {\color{red}(*)} is justified by the following commuting diagram.
      \[
	\begin{tikzcd}[labels=description,column sep=2cm,row sep=1.5cm,
	    execute at end picture={
	      \path (\tikzcdmatrixname-1-1) -- (\tikzcdmatrixname-3-1) coordinate[pos=0.5](aux)
	      (aux) -- (\tikzcdmatrixname-1-2) node[midway,red,xshift=-5mm,sloped]{\small (Naturality of $\Delta$)};
	      \path (\tikzcdmatrixname-1-2) -- (\tikzcdmatrixname-2-2) coordinate[pos=0.5](aux)
	      (\tikzcdmatrixname-3-1) -- (aux) node[midway,red,sloped,xshift=-1cm]{\small (Corollary~\ref{cor:PropsFinv})};
	      \path (\tikzcdmatrixname-3-1) -- (\tikzcdmatrixname-4-1) coordinate[pos=0.5](aux1)
	      (\tikzcdmatrixname-2-2) -- (\tikzcdmatrixname-3-2) coordinate[pos=0.5](aux2)
	      (aux1) -- (aux2) node[midway,red,sloped,yshift=-2mm]{\small (Lemma~\ref{lem:distrib})};
	      \path (\tikzcdmatrixname-3-2) -- (\tikzcdmatrixname-4-2) coordinate[pos=0.5](aux2)
	      (\tikzcdmatrixname-4-1) -- (aux2) node[midway,red,sloped]{\small (Corollary~\ref{cor:PropsF})};
	    }
	    ]
	    \Gamma\ar[d,"\Delta"]\ar[r,"\eta^{ B}"] & \home{ B}{\Gamma\otimes  B}\ar[d,"\home{ B}{\Delta}"]\ar[ddl,"\Delta",dashed,sloped] \\
	    \Gamma\oplus\Gamma\ar[d,"\eta^{ B}\oplus\eta^{ B}"]& \home{ B}{(\Gamma\otimes  B)\oplus(\Gamma\otimes  B)}\ar[d,"{\home{ B}{f_1\oplus f_2}}"]\ar[dl,"\gamma",dashed,sloped]  \\
	    \home{ B}{\Gamma\otimes  B}\oplus\home{ B}{\Gamma\otimes  B}\ar[d,"\home{ B}{f_1}\oplus\home{ B}{f_2}"]& \home{ B}{ C\oplus  C}\ar[d,"{\home{ B}{\sumwe{\semS{\escalar p}}{\semS{\escalar q}}}}"]\ar[dl,"\gamma",dashed] \\
	    \home{ B}{ C}\oplus\home{ B}{ C}\ar[r,"\sumwe{\semS{\escalar p}}{\semS{\escalar q}}",sloped]&  \home{ B}{ C} 
	\end{tikzcd}
      \]

    \item 
      If $K[]=K'[]u$. Then $\Gamma=\Gamma_1,\Gamma_2$. Let 
      \begin{align*}
	f_1&=\sem{\Gamma_1\vdash K'[r_1]:B\multimap A} & f_2 & =\sem{\Gamma_1\vdash K'[r_2]:B\multimap A}\\
	g&=\sem{\Gamma_2\vdash u:B}
      \end{align*}
      Then,
      \begin{align*}
	\sem{\Gamma\vdash K[t]:A} 
	&=\sem{\Gamma\vdash K'[t]u:A}\\
	&=\varepsilon\circ(\sem{\Gamma_1\vdash K'[t]:B\multimap A}\otimes g)\\
	\textrm{(by IH)} &= \varepsilon\circ((\sumwe{\semS{\escalar p}}{\semS{\escalar q}}\circ(f_1\oplus f_2)\circ\Delta)\otimes g)\\
	\textrm{\color{red}(*)}&=\sumwe{\semS{\escalar p}}{\semS{\escalar q}}\circ(
	(\varepsilon\circ(f_1\otimes g))
	\oplus
	(\varepsilon\circ(f_2\otimes g))
	)\circ\Delta\\
	&=\sumwe{\semS{\escalar p}}{\semS{\escalar q}}\circ(\sem{\Gamma\vdash K'[r_1]u:A}\oplus\sem{\Gamma\vdash K'[r_2]u:A})\circ\Delta\\
	&=\sumwe{\semS{\escalar p}}{\semS{\escalar q}}\circ(\sem{\Gamma\vdash K[r_1]:A}\oplus\sem{\Gamma\vdash K[r_2]:A})\circ\Delta
      \end{align*}
      Where the equality {\color{red}(*)} is justified by the following commuting diagram.
      \[
	\begin{tikzcd}[column sep=2cm,row sep=1cm,labels=description,
	    execute at end picture={
	      \path (\tikzcdmatrixname-2-1) -- (\tikzcdmatrixname-1-2) node[midway,red,yshift=4mm,xshift=-4mm,sloped]{\small (Corollary~\ref{cor:PropsFinv})};
	      \path 
	      (\tikzcdmatrixname-2-1) -- (\tikzcdmatrixname-3-1) coordinate[pos=0.5](aux1)
	      (\tikzcdmatrixname-1-2) -- (\tikzcdmatrixname-2-2) coordinate[pos=0.5](aux2)
	      (aux1) -- (aux2) node[pos=0.4,red,sloped]{\small (Lemma~\ref{lem:distrib})};
	      \path (\tikzcdmatrixname-4-1) -- (\tikzcdmatrixname-2-2) node[midway,red,yshift=4mm,xshift=-3mm,sloped]{\small ({Lemma \ref{lem:PropsF}})};
	      \path (\tikzcdmatrixname-4-1) -- (\tikzcdmatrixname-2-2) node[midway,red,yshift=-4mm,xshift=-4mm,sloped]{\small {(Lemma~\ref{lem:sumwe-nat})}};
	    }
	    ]
	    {\Gamma_1}\otimes{\Gamma_2}\ar[r,"\Delta"]\ar[d,"\Delta\otimes\Id"] & ({\Gamma_1}\otimes{\Gamma_2})\oplus({\Gamma_1}\otimes{\Gamma_2})\ar[d,"(f_1\otimes g)\oplus(f_2\otimes g)"]\\
	    ({\Gamma_1}\oplus{\Gamma_1})\otimes{\Gamma_2}\ar[d,"(f_1\oplus f_2)\otimes g"]\ar[ur,"d",dashed,sloped]& (\home{ B}{ A}\otimes  B)\oplus(\home{ B}{ A}\otimes  B)\ar[d,"\varepsilon\oplus\varepsilon"]\ar[ddl,"\sumwe{\semS{\escalar p}}{\semS{\escalar q}}",sloped,dashed]\\
	    (\home{ B}{ A}\oplus\home{ B}{ A})\otimes  B\ar[d,"\sumwe{\semS{\escalar p}}{\semS{\escalar q}}\otimes\Id"]\ar[ur,"d",dashed,sloped] &  A\oplus  A\ar[d,"\sumwe{\semS{\escalar p}}{\semS{\escalar q}}"]\\
	    \home{ B}{ A}\otimes  B\ar[r,"\varepsilon"]&  A
	\end{tikzcd}
      \]

    \item 
      If $K[]=uK'[]$. Then $\Gamma=\Gamma_1,\Gamma_2$. Let 
      \begin{align*}
	f_1&=\sem{\Gamma_2\vdash K'[r_1]:B} & f_2=\sem{\Gamma_2\vdash K'[r_2]:B}\\
	g&=\sem{\Gamma_1\vdash u:B\multimap A}
      \end{align*}
      Then,
      \begin{align*}
	\sem{\Gamma\vdash K[t]:A} 
	&=\sem{\Gamma\vdash uK'[t]:A}\\
	&=\varepsilon\circ(g\otimes\sem{\Gamma_2\vdash K'[t]:B})\\
	\textrm{(by IH)} &= \varepsilon\circ(g\otimes(\sumwe{\semS{\escalar p}}{\semS{\escalar q}}\circ(f_1\oplus f_2)\circ\Delta))\\
	\textrm{\color{red}(*)}&=\sumwe{\semS{\escalar p}}{\semS{\escalar q}}\circ(
	(\varepsilon\circ(g\otimes f_1))
	\oplus
	(\varepsilon\circ(g\otimes f_2))
	)\circ\Delta\\
	&=\sumwe{\semS{\escalar p}}{\semS{\escalar q}}\circ(\sem{\Gamma\vdash uK'[r_1]:A}\oplus\sem{\Gamma\vdash uK'[r_2]:A})\circ\Delta\\
	&=\sumwe{\semS{\escalar p}}{\semS{\escalar q}}\circ(\sem{\Gamma\vdash K[r_1]:A}\oplus\sem{\Gamma\vdash K[r_2]:A})\circ\Delta
      \end{align*}
      Where the equality {\color{red}(*)} is justified by the following commuting diagram.
      \[
	\begin{tikzcd}[column sep=3cm,row sep=1cm,labels=description,
	    execute at end picture={
	      \path (\tikzcdmatrixname-2-1) -- (\tikzcdmatrixname-1-2) node[midway,red,yshift=4mm,xshift=-4mm,sloped]{\small (Corollary~\ref{cor:PropsFinv})};
	      \path 
	      (\tikzcdmatrixname-2-1) -- (\tikzcdmatrixname-3-1) coordinate[pos=0.5](aux1)
	      (\tikzcdmatrixname-1-2) -- (\tikzcdmatrixname-2-2) coordinate[pos=0.5](aux2)
	      (aux1) -- (aux2) node[midway,red,sloped]{\small (Lemma~\ref{lem:distrib})};
	      \path (\tikzcdmatrixname-4-1) -- (\tikzcdmatrixname-2-2) node[midway,red,yshift=4mm,xshift=-4mm,sloped]{\small ({Lemma \ref{lem:PropsF}})};
	      \path (\tikzcdmatrixname-4-1) -- (\tikzcdmatrixname-2-2) node[midway,red,yshift=-4mm,xshift=-4mm,sloped]{\small ({Lemma~\ref{lem:sumwe-nat}})};
	    }
	    ]
	    {\Gamma_1}\otimes{\Gamma_2}\ar[r,"\Delta"]\ar[d,"\Id\otimes\Delta"] & ({\Gamma_1}\otimes{\Gamma_2})\oplus({\Gamma_1}\otimes{\Gamma_2})\ar[d,"(g\otimes f_1)\oplus(g\otimes f_2)"]\\
	    {\Gamma_1}\otimes({\Gamma_2}\oplus{\Gamma_2})\ar[d,"g\otimes (f_1\oplus f_2)"]\ar[ur,"(\sigma\oplus\sigma)\circ d\circ\sigma",dashed,sloped]& (\home{ B}{ A}\otimes B)\oplus(\home{ B}{ A}\otimes B)\ar[d,"\varepsilon\oplus\varepsilon"]\ar[ddl,"\sumwe{\semS{\escalar p}}{\semS{\escalar q}}",sloped,dashed]\\
	    \home{ B}{ A}\otimes( B\oplus B)\ar[d,"\Id\otimes\sumwe{\semS{\escalar p}}{\semS{\escalar q}}"]\ar[ur,"(\sigma\oplus\sigma)\circ d\circ\sigma",dashed,sloped] &  A\oplus A\ar[d,"\sumwe{\semS{\escalar p}}{\semS{\escalar q}}"]\\
	    \home{ B}{ A}\otimes B\ar[r,"\varepsilon"]&  A
	\end{tikzcd}
      \]

    \item 
      If $K[]=\elimzero(K'[])$. Then $\Gamma=\Gamma_1,\Gamma_2$.
      Let
      \begin{align*}
	f_1&=\sem{\Gamma_1\vdash K'[r_2]:\zero} & f_2&=\sem{\Gamma_1\vdash K'[r_2]:\zero}
      \end{align*}
      Then,
      \begin{align*}
	\sem{\Gamma\vdash K[t]:A} &= \sem{\Gamma_1,\Gamma_2\vdash\elimzero(K'[t]):A}\\
	&=0\circ(\sem{\Gamma_1\vdash K'[t]:\zero}\otimes\Id)\\
	\textrm{(by IH)}&= 0\circ(\sumwe{\semS{\escalar p}}{\semS{\escalar q}}\circ (f_1\oplus f_2)\circ\Delta)\otimes\Id)\\
	\textrm{\color{red}(*)}&=\sumwe{\semS{\escalar p}}{\semS{\escalar q}}\circ((0\circ(f_1\otimes\Id))\oplus(0\circ(f_2\otimes\Id)))\circ\Delta\\
	&=\sumwe{\semS{\escalar p}}{\semS{\escalar q}}\circ(\sem{\Gamma\vdash\elimzero(K'[r_1]):A}\oplus\sem{\Gamma\vdash\elimzero(K'[r_2]):A})\circ\Delta\\
	&=\sumwe{\semS{\escalar p}}{\semS{\escalar q}}\circ(\sem{\Gamma\vdash K[r_1]:A}\oplus\sem{\Gamma\vdash K[r_2]:A})\circ\Delta
      \end{align*}
      Where the equality {\color{red}(*)} is justified by the following commuting diagram.
      \[
	\begin{tikzcd}[labels=description,column sep=2cm,row sep=1cm,
	    execute at end picture={
	      \path (\tikzcdmatrixname-1-1) -- (\tikzcdmatrixname-4-2) node[midway,red,yshift=4mm,xshift=-4mm,sloped]{\small ($0\circ f=0$)};
	      \path (\tikzcdmatrixname-1-1) -- (\tikzcdmatrixname-4-2) node[midway,red,yshift=-4mm,xshift=-4mm,sloped]{\small ($0\circ f=0$)};
	    }
	    ]
	    {\Gamma_1}\otimes{\Gamma_2}\ar[dddr,dashed,"0",sloped]\ar[d,"\Delta\otimes\Id"]\ar[r,"\Delta"] & ({\Gamma_1}\otimes{\Gamma_2})\oplus({\Gamma_1}\otimes{\Gamma_2})\ar[d,"(f_1\otimes\Id)\oplus(f_2\otimes\Id)"]\\
	    ({\Gamma_1}\oplus{\Gamma_1})\otimes{\Gamma_2}\ar[d,"(f_1\oplus f_2)\otimes\Id"]& ( 0 \otimes{\Gamma_2})\oplus( 0 \otimes{\Gamma_2})\ar[d,"0\oplus 0"]\ar[dd,bend right=50,dashed,"0"]\\
	    ( 0 \oplus 0 )\otimes{\Gamma_2}\ar[d,"\sumwe{\semS{\escalar p}}{\semS{\escalar q}}\otimes\Id"]&  A\oplus  A\ar[d,"\sumwe{\semS{\escalar p}}{\semS{\escalar q}}"]\\
	    0 \otimes  A\ar[r,"0"]&  A
	\end{tikzcd}
      \]

    \item 
      If $K[]=\pair{K'[]}u$, then $A=B\with C$. 
      Let 
      \begin{align*}
	f_1&=\sem{\Gamma\vdash K'[r_1]:B} &
	f_2&=\sem{\Gamma\vdash K'[r_2]:B} \\
	g&=\sem{\Gamma\vdash u:C}
      \end{align*}
      Then,
      \begin{align*}
	\sem{\Gamma\vdash K[t]:A} 
	& = \sem{\Gamma\vdash\super{K'[t]}u:B\with C} \\
	& = (\sem{\Gamma\vdash K'[t]:B}\oplus g)\circ\Delta\\
	\textrm{(by IH)}&= ((\sumwe{\semS{\escalar p}}{\semS{\escalar q}}\circ(f_1\oplus f_2)\circ\Delta)\oplus g)\circ\Delta\\
	\textrm{\color{red}(*)}&=\sumwe{\semS{\escalar p}}{\semS{\escalar q}}\circ(
	(
	(f_1\oplus g)\circ\Delta
	)
	\oplus
	(
	(f_2\oplus g)\circ\Delta
	)
	)\circ\Delta\\
	&=
	\begin{aligned}
	  \sumwe{\semS{\escalar p}}{\semS{\escalar q}}\circ(&\sem{\Gamma\vdash\super{K'[r_1]}u:B\with C}\\
	  \oplus&\sem{\Gamma\vdash\super{K'[r_2]}u:B\with C})\circ\Delta
	\end{aligned}
	\\
	&=\sumwe{\semS{\escalar p}}{\semS{\escalar q}}\circ(\sem{\Gamma\vdash K[r_1]:A}\oplus\sem{\Gamma\vdash K[r_2]:A})\circ\Delta
      \end{align*}
      Where the equality {\color{red}(*)} is justified by the following commuting diagram.
      \[
	\begin{tikzcd}[labels=description,column sep=4cm,row sep=1cm,
	    execute at end picture={
	      \path
	      (\tikzcdmatrixname-1-2) -- (\tikzcdmatrixname-2-2) coordinate[pos=0.5](aux)
	      (\tikzcdmatrixname-1-1) -- (\tikzcdmatrixname-2-1) coordinate[pos=0.5](aux1)
	      (aux) -- (\tikzcdmatrixname-1-3) node[midway,red,sloped]{\small (Lemma~\ref{lem:deltaDelta})}
	      (aux1) -- (aux) node[midway,red]{\small (Naturality of $\Delta$)};
	      \path 
	      (\tikzcdmatrixname-2-2) -- (\tikzcdmatrixname-3-2) coordinate[pos=0.5](aux1)
	      (\tikzcdmatrixname-1-3) -- (\tikzcdmatrixname-2-3) coordinate[pos=0.5](aux2)
	      (aux1) -- (aux2) node[midway,red,sloped]{\small (Lemma~\ref{lem:delta-nat})};
	      \path 
	      (\tikzcdmatrixname-2-3) -- (\tikzcdmatrixname-3-3) coordinate[pos=0.5](aux1)
	      (\tikzcdmatrixname-3-2) -- (aux1) node[midway,red,sloped]{\small (Lemma~\ref{lem:prop-times-sumwe})};
	    }
	    ]
	    \Gamma\ar[d,"\Delta"] &[-2.5cm]\Gamma\oplus\Gamma & {(\Gamma\oplus\Gamma)\oplus\Gamma} \\
	    \Gamma\oplus\Gamma\ar[r,"\Delta\oplus\Delta"]&{(\Gamma\oplus\Gamma)\oplus(\Gamma\oplus\Gamma)} & {(B\oplus B)\oplus C} \\
	    &{(B\oplus C)\oplus(B\oplus C)} & {(B\oplus C)}
	    \arrow["\Delta", from=1-1, to=1-2]
	    \arrow["\Delta", from=1-2, to=2-2,dashed]
	    \arrow["{(f_1\oplus g)\oplus(f_2\oplus g)}", from=2-2, to=3-2]
	    \arrow["\Delta\oplus\Id", from=1-2, to=1-3]
	    \arrow["{(f_1\oplus f_2)\oplus g}", from=1-3, to=2-3]
	    \arrow["{\sumwe{\semS{\escalar p}}{\semS{\escalar q}}\oplus\Id}", from=2-3, to=3-3]
	    \arrow["{\sumwe{\semS{\escalar p}}{\semS{\escalar q}}}", from=3-2, to=3-3]
	    \arrow["\delta", dashed, from=2-3, to=3-2]
	    \arrow["\delta", dashed, from=1-3, to=2-2]
	\end{tikzcd}
      \]

    \item 
      If $K[]=\pair s{K'[]}$. This case is analogous to the case $\pair{K'[]}s$.

    \item
      If $K[]=\fst(K'[])$.
      Let 
      \begin{align*}
	f_1&=\sem{\Gamma\vdash K'[r_1]:A\with B} & f_2 &=\sem{\Gamma\vdash K'[r_2]:A\with B}
      \end{align*}
      Then,
      \begin{align*}
	\sem{\Gamma\vdash K[t]:A}
	&=\sem{\Gamma\vdash\fst(K'[t]):A}\\
	&=\pi_1\circ\sem{\Gamma\vdash K'[t]:A\with B}\\
	\textrm{(by IH)}&=\pi_1\circ\sumwe{\semS{\escalar p}}{\semS{\escalar q}}\circ(f_1\oplus f_2)\circ\Delta\\
	\textrm{\color{red}(*)}&
	=\sumwe{\semS{\escalar p}}{\semS{\escalar q}}\circ( \pi_1\oplus \pi_1)\circ(f_1\oplus f_2)\circ\Delta\\
	& =\sumwe{\semS{\escalar p}}{\semS{\escalar q}}\circ(
	(\pi_1\circ f_1)
	\oplus
	(\pi_1\circ f_2)
	)\circ\Delta\\
	&=\sumwe{\semS{\escalar p}}{\semS{\escalar q}}\circ(\sem{\Gamma\vdash \fst(K'[r_1]):A}\oplus\sem{\Gamma\vdash\fst(K'[r_2]):A})\circ\Delta\\
	&=\sumwe{\semS{\escalar p}}{\semS{\escalar q}}\circ(\sem{\Gamma\vdash K[r_1]:A}\oplus\sem{\Gamma\vdash K[r_2]:A})\circ\Delta
      \end{align*}
      Where the equality {\color{red}(*)} is justified by the following commuting diagram.
      \[
	\begin{tikzcd}[labels=description,column sep=1cm,row sep=8mm,
	    execute at end picture={
	      \path
	      (\tikzcdmatrixname-1-3) -- (\tikzcdmatrixname-2-3) coordinate[pos=0.5](aux1)
	      (\tikzcdmatrixname-1-4) -- (\tikzcdmatrixname-2-4) coordinate[pos=0.5](aux2)
	      (aux1) -- (aux2) node[midway,red]{\small ({Lemma~\ref{lem:sumwe-nat}})};
	    }
	    ]
	    \Gamma\ar[r,"\Delta"] & \Gamma\oplus\Gamma\ar[r,"f_1\oplus f_2"] &[5mm]    ( A\oplus  B)\oplus( A\oplus  B)\ar[r,"\pi_1\oplus\pi_1"]\ar[d,"\sumwe{\semS{\escalar p}}{\semS{\escalar q}}"]&[1cm]  A\oplus  A\ar[d,"\sumwe{\semS{\escalar p}}{\semS{\escalar q}}"]\\
	    & &     A\oplus  B\ar[r,"\pi_1"]&  A
	\end{tikzcd}
      \]

    \item
      If $K[]=\snd(K'[])$. This case is analogous to the case $\fst(K'[])$.

    \item 
      If $K[]=\inl(K'[])$, then $A=B\oplus C$.
      Let 
      \begin{align*}
	f_1&=\sem{\Gamma\vdash K'[r_1]:B} &^f_2&=\sem{\Gamma\vdash K'[r_2]:B}
      \end{align*}
      Then,
      \begin{align*}
	\sem{\Gamma\vdash K[t]:A}
	&=\sem{\Gamma\vdash\inl(K'[t]):B\oplus C}\\
	&=i_1\circ\sem{\Gamma\vdash K'[t]:B}\\
	\textrm{(by IH)}&=i_1\circ\sumwe{\semS{\escalar p}}{\semS{\escalar q}}\circ(f_1\oplus f_2)\circ\Delta\\
	\textrm{\color{red}(*)}
	&=\sumwe{\semS{\escalar p}}{\semS{\escalar q}}\circ( i_1 \oplus i_1)\circ(f_1\oplus f_2)\circ\Delta\\
	&=\sumwe{\semS{\escalar p}}{\semS{\escalar q}}\circ(
	(i_1\circ f_1)
	\oplus
	(i_1\circ f_2)
	)\circ\Delta\\
	&=\sumwe{\semS{\escalar p}}{\semS{\escalar q}}\circ(\sem{\Gamma\vdash i_1(K'[r_1]):B}\oplus\sem{\Gamma\vdash i_1(K'[r_2]):B})\circ\Delta\\
	&=\sumwe{\semS{\escalar p}}{\semS{\escalar q}}\circ(\sem{\Gamma\vdash K[r_1]:A}\oplus\sem{\Gamma\vdash K[r_2]:A})\circ\Delta
      \end{align*}
      Where the equality {\color{red}(*)} is justified by the following commuting diagram.
      \[
	\begin{tikzcd}[labels=description,column sep=3cm,
	    execute at end picture={
	      \path
	      (\tikzcdmatrixname-2-1) -- (\tikzcdmatrixname-3-1) coordinate[pos=0.5](aux1)
	      (\tikzcdmatrixname-2-2) -- (\tikzcdmatrixname-3-2) coordinate[pos=0.5](aux2)
	      (aux1) -- (aux2) node[midway,red]{\small ({Lemma~\ref{lem:sumwe-nat}})};
	    }
	    ]
	    \Gamma\oplus\Gamma\ar[d,"f_1\oplus f_2"] & \Gamma\ar[l,"\Delta"]\\
	    B\oplus  B\ar[r,"i_1\oplus i_1"]\ar[d,"\sumwe{\semS{\escalar p}}{\semS{\escalar q}}"]& ( B\oplus  C)\oplus ( B\oplus  C)\ar[d,"\sumwe{\semS{\escalar p}}{\semS{\escalar q}}"]\\
	    B\oplus  B\ar[r,"i_1"]&  B\oplus  C
	\end{tikzcd}
      \]

    \item If $K[]=\inr(K'[])$. This case is analogous to the case $\inl(K'[])$.

    \item 
      If $K[]=\elimoplus(K'[],x.u_1,y.u_2)$.
      Then $\Gamma=\Gamma_1,\Gamma_2$.
      Let 
      \begin{align*}
	f_1&=\sem{\Gamma_1\vdash K'[r_1]:B\oplus C} &
	f_2&=\sem{\Gamma_1\vdash K'[r_2]:B\oplus C}\\
	g_1&=\sem{x:B,\Gamma_2\vdash u_1:A} &
	g_2&=\sem{y:C,\Gamma_2\vdash u_2:A} 
      \end{align*}
      Then,
      \begin{align*}
	\sem{\Gamma\vdash K[t]:A}
	&=\sem{\Gamma_1,\Gamma_2\vdash\elimoplus(K'[t],x.u_1,y.u_2):A}\\
	&=\coproducto{g_1}{g_2}\circ d\circ(\sem{\Gamma\vdash K'[t]:B\oplus C}\otimes\Id)\\
	\textrm{(by IH)}
	&=\coproducto{g_1}{g_2}\circ d\circ( (\sumwe{\semS{\escalar p}}{\semS{\escalar q}}\circ(f_1\oplus f_2)\circ\Delta) \otimes\Id)\\
	\textrm{\color{red}(*)}&=\sumwe{\semS{\escalar p}}{\semS{\escalar q}}\circ(
	(\coproducto{g_1}{g_2}\circ d\circ (f_1\otimes\Id))
	\oplus
	(\coproducto{g_1}{g_2}\circ d\circ (f_2\otimes\Id))
	)\circ\Delta\\
	&=
	\begin{aligned}[t]
	  \sumwe{\semS{\escalar p}}{\semS{\escalar q}}\circ(&\sem{\Gamma_1,\Gamma_2\vdash\elimoplus(K'[r_1],x.u_1,y.u_2):A}\\
	  \oplus&\sem{\Gamma_1,\Gamma_2\vdash\elimoplus(K'[r_2],x.u_1,y.u_2):A})\circ\Delta
	\end{aligned}
	\\
	&=\sumwe{\semS{\escalar p}}{\semS{\escalar q}}\circ(\sem{\Gamma\vdash K[r_1]:A}\oplus\sem{\Gamma\vdash K[r_2]:A})\circ\Delta
      \end{align*}
      Where the equality {\color{red}(*)} is justified by the following commuting diagram.
      \[
	\begin{tikzcd}[labels=description,column sep=2.5cm,row sep=1cm,
	    execute at end picture={
	      \path (\tikzcdmatrixname-2-1) -- (\tikzcdmatrixname-1-2) node[midway,red,yshift=4mm,xshift=-4mm,sloped]{\small (Corollary~\ref{cor:PropsFinv})};
	      \path 
	      (\tikzcdmatrixname-2-1) -- (\tikzcdmatrixname-3-1) coordinate[pos=0.5](aux1)
	      (\tikzcdmatrixname-1-2) -- (\tikzcdmatrixname-2-2) coordinate[pos=0.5](aux2)
	      (aux1) -- (aux2) node[midway,red,sloped]{\small (Lemma~\ref{lem:distrib})};
	      \path (\tikzcdmatrixname-4-1) -- (\tikzcdmatrixname-2-2) node[midway,red,yshift=4mm,xshift=-4mm,sloped]{\small ({Lemma \ref{lem:PropsF}})};
	      \path (\tikzcdmatrixname-4-1) -- (\tikzcdmatrixname-4-2) node[midway,red]{\small ({Lemma~\ref{lem:sumwe-nat}})};
	    }
	    ]
	    {\Gamma_1}\otimes{\Gamma_2}\ar[d,"\Delta\otimes\Id"]\ar[r,"\Delta"] & ({\Gamma_1}\otimes{\Gamma_2})\oplus({\Gamma_1}\otimes{\Gamma_2})\ar[d,"(f_1\otimes\Id)\oplus(f_2\otimes\Id)"]\\
	    ({\Gamma_1}\oplus{\Gamma_1})\otimes{\Gamma_2}\ar[d,"(f_1\oplus f_2)\otimes\Id"]\ar[ur,"d",dashed,sloped]& (( B\oplus  C)\otimes{\Gamma_2})\oplus(( B\oplus  C)\otimes{\Gamma_2})\ar[d,"d\oplus d"]\ar[ddl,"\sumwe{\semS{\escalar p}}{\semS{\escalar q}}",sloped,dashed]\\
	    (( B\oplus  C)\oplus( B\oplus  C))\otimes{\Gamma_2}\ar[d,"\sumwe{\semS{\escalar p}}{\semS{\escalar q}}\otimes\Id"]\ar[ur,"d",dashed,sloped]& ( B\otimes{\Gamma_2}\oplus  C\otimes{\Gamma_2})\oplus( B\otimes{\Gamma_2}\oplus  C\otimes{\Gamma_2})\ar[d,"\coproducto{g_1}{g_2}\oplus\coproducto{g_1}{g_2}"]\\
	    ( B\oplus  C)\otimes{\Gamma_2}\ar[d,"d"] &  A\oplus  A\ar[d,"\sumwe{\semS{\escalar p}}{\semS{\escalar q}}"]\\
	    B\otimes{\Gamma_2}\oplus  C\otimes{\Gamma_2}\ar[r,"\coproducto{g_1}{g_2}"]&  A
	\end{tikzcd}
      \]

    \item 
      If $K[]=\elimoplus(v,x.K'[],y.u_2)$.
      Then $\Gamma=\Gamma_1,\Gamma_2$.
      Let 
      \begin{align*}
	f_1&=\sem{x:B,\Gamma_2\vdash K'[r_1]:A} &
	f_2&=\sem{x:B,\Gamma_2\vdash K'[r_2]:A} \\
	g&=\sem{\Gamma_1\vdash v:B\oplus C} &
	h&=\sem{y:C,\Gamma_2\vdash u_2:A}
      \end{align*}
      Then,
      \begin{align*}
	\sem{\Gamma\vdash K[t]:A}
	&=\sem{\Gamma_1,\Gamma_2\vdash\elimoplus(v,x.K'[t],y.u_2):A}\\
	&=\coproducto{\sem{x:B,\Gamma_2\vdash K'[t]:A}}{h}\circ d\circ(g\otimes\Id)\\
	\textrm{(by IH)}
	&=\coproducto{\sumwe{\semS{\escalar p}}{\semS{\escalar q}}\circ(f_1\oplus f_2)\circ\Delta}{h}\circ d\circ( g \otimes\Id)\\
	\textrm{\color{red}(*)}&=\sumwe{\semS{\escalar p}}{\semS{\escalar q}}\circ(
	(\coproducto{f_1}{h}\circ d\circ (g\otimes\Id))
	\oplus
	(\coproducto{f_2}{h}\circ d\circ (g\otimes\Id))
	)\circ\Delta\\
	&=
	\begin{aligned}[t]
	  \sumwe{\semS{\escalar p}}{\semS{\escalar q}}\circ(&\sem{\Gamma_1,\Gamma_2\vdash\elimoplus(v,x.K'[r_1],y.u_2):A}\\
	  \oplus&\sem{\Gamma_1,\Gamma_2\vdash\elimoplus(v,x.K'[r_2],y.u_2):A})\circ\Delta
	\end{aligned}
	\\
	&=\sumwe{\semS{\escalar p}}{\semS{\escalar q}}\circ(\sem{\Gamma\vdash K[r_1]:A}\oplus\sem{\Gamma\vdash K[r_2]:A})\circ\Delta
      \end{align*}
      Where the equality {\color{red}(*)} is justified by the following commuting diagram.
      \[
	\begin{tikzcd}[labels=description,column sep=2cm,row sep=1.5cm,
	    execute at end picture={
	      \path 
	      (\tikzcdmatrixname-1-1) -- (\tikzcdmatrixname-1-2) coordinate[pos=0.5](aux1)
	      (\tikzcdmatrixname-2-1) -- (\tikzcdmatrixname-2-2) coordinate[pos=0.5](aux2)
	      (aux1) -- (aux2) node[midway,red]{\small (Naturality of $\Delta$)};
	      \path 
	      (\tikzcdmatrixname-3-1) -- (\tikzcdmatrixname-3-2) coordinate[pos=0.5](aux1)
	      (aux2) -- (aux1) node[midway,red]{\small (Naturality of $\Delta$)};
	      \path 
	      (\tikzcdmatrixname-4-1) -- (\tikzcdmatrixname-4-2) coordinate[pos=0.5](aux2)
	      (aux1) -- (aux2) node[midway,red]{\small (Naturality of $\Delta$)};
	      \path 
	      (\tikzcdmatrixname-5-1) -- (\tikzcdmatrixname-5-2) coordinate[pos=0.5](aux1)
	      (aux2) -- (aux1) node[midway,red]{\small (**)};
	    }
	    ]
	    {\Gamma_1}\otimes {\Gamma_2}\ar[d,"g\otimes\Id"]\ar[r,"\Delta"] & ( {\Gamma_1}\otimes {\Gamma_2})\oplus( {\Gamma_1}\otimes {\Gamma_2})\ar[d,"(g\otimes\Id)\oplus(g\otimes\Id)"]\\
	    ( I\oplus  I)\otimes {\Gamma_2}\ar[d,"d"]\ar[r,"\Delta",dashed]& (( I\oplus  I)\otimes {\Gamma_2})\oplus(( I\oplus  I)\otimes {\Gamma_2})\ar[d,"d\oplus d"]\\
	    I\otimes {\Gamma_2}\oplus  I\otimes {\Gamma_2}\ar[r,"\Delta",dashed]\ar[d,"\lambda\oplus \lambda"]& ( I\otimes {\Gamma_2}\oplus  I\otimes {\Gamma_2})\oplus( I\otimes {\Gamma_2}\oplus  I\otimes {\Gamma_2})\ar[d,"\lambda\oplus \lambda"]\\
	    \Gamma_2\oplus \Gamma_2\ar[r,"\Delta",dashed]\ar[d,"\coproducto{\sumwe{\semS{\escalar p}}{\semS{\escalar q}}\circ(f_1\oplus f_2)\circ\Delta}{h}"] &(\Gamma_2\oplus \Gamma_2)\oplus(\Gamma_2\oplus \Gamma_2)\ar[d,"\coproducto{f_1}{h}\oplus\coproducto{f_2}{h}"]\\
	    A &   A\oplus   A\ar[l,"\sumwe{\semS{\escalar p}}{\semS{\escalar q}}"] 
	\end{tikzcd}
      \]
      The commutation of the diagram {\color{red}(**)} is justified as follows.
      \begin{align*}
	\coproducto{\sumwe{\semS{\escalar p}}{\semS{\escalar q}}\circ(f_1\oplus f_2)\circ\Delta}h
	&=\coproducto{\sumwe{\semS{\escalar p}}{\semS{\escalar q}}\circ(f_1\oplus f_2)\circ\Delta}{\Id\circ h}\\
	\textrm{(Lemma~\ref{lem:nablaPQInvDelta})}&=\coproducto{\sumwe{\semS{\escalar p}}{\semS{\escalar q}}\circ(f_1\oplus f_2)\circ\Delta}{(\sumwe{\semS{\escalar p}}{\semS{\escalar q}}\circ\Delta)\circ h}\\
	&=\sumwe{\semS{\escalar p}}{\semS{\escalar q}}\circ\coproducto{(f_1\oplus f_2)\circ\Delta}{\Delta\circ h}\\
	&=\sumwe{\semS{\escalar p}}{\semS{\escalar q}}\circ\nabla\circ((f_1\oplus f_2)\circ\Delta)\oplus(\Delta\circ h)\\
	\textrm{(Naturality of $\Delta$)}&=\sumwe{\semS{\escalar p}}{\semS{\escalar q}}\circ\nabla\circ((f_1\oplus f_2)\circ\Delta)\oplus((h\oplus h)\circ\Delta)\\
	&=\sumwe{\semS{\escalar p}}{\semS{\escalar q}}\circ\nabla\circ(f_1\oplus f_2\oplus h\oplus h)\circ(\Delta\oplus\Delta)\\
	\textrm{\color{red}(***)}&=(\nabla\oplus\nabla)\circ(f_1\oplus h\oplus f_2\oplus h)\circ\Delta\\
	&=((\nabla\circ(f_1\oplus h))\oplus(\nabla\circ(f_2\oplus h)))\circ\Delta\\
	&=\sumwe{\semS{\escalar p}}{\semS{\escalar q}}\circ(\coproducto {f_1}h\oplus\coproducto{f_2}h)\circ\Delta
      \end{align*}
      Where the equality {\color{red}(***)} is justified by the following commuting diagram, using the fact that $\sumwe{1}{1}=\nabla$.
      \[
	\begin{tikzcd}[labels=description,column sep=2cm,row sep=1.5cm,
	    execute at end picture={
	      \path 
	      (\tikzcdmatrixname-1-1) -- (\tikzcdmatrixname-2-1) coordinate[pos=0.5](aux1)
	      (\tikzcdmatrixname-1-2) -- (\tikzcdmatrixname-2-2) coordinate[pos=0.5](aux2)
	      (\tikzcdmatrixname-2-1) -- (\tikzcdmatrixname-3-1) coordinate[pos=0.5](aux3)
	      (\tikzcdmatrixname-2-2) -- (\tikzcdmatrixname-3-2) coordinate[pos=0.5](aux4)
	      (aux1) -- (\tikzcdmatrixname-1-2) node[midway,red,sloped]{\small (Lemma~\ref{lem:DeltaSigma})}
	      (aux3) -- (aux2) node[midway,red,sloped]{\small (Naturality of $\sigma$)}
	      (\tikzcdmatrixname-3-1) -- (aux4) node[midway,red,sloped]{\small (Lemma~\ref{lem:PropsSumwe})};
	    }
	    ]
	    \Gamma\oplus\Gamma\ar[r,"\Delta\oplus\Delta"]\ar[d,"\Delta"] & \Gamma\oplus\Gamma\oplus\Gamma\oplus\Gamma\ar[d,"f_1\oplus f_2\oplus h\oplus h"]\\
	    \Gamma\oplus\Gamma\oplus\Gamma\oplus\Gamma\ar[ur,"\Id\oplus\sigma\oplus\Id",dashed,sloped]\ar[d,"f_1\oplus h\oplus f_2\oplus h"] & A\oplus A\oplus A\oplus A\ar[d,"\nabla"]\\
	    A\oplus A\oplus A\oplus A\ar[ur,"\Id\oplus\sigma\oplus\Id",dashed,sloped]\ar[r,"\nabla\oplus\nabla"] & A\oplus A
	\end{tikzcd}
      \]

    \item 
      If $K[]=\elimoplus(v,x.u_1,y.K'[])$
      This case is analogous to the case $\elimoplus(v,x.K'[],y.u_2)$.

    \item
      If $K[]=\super{K'[]}s$, then $A=B\odot C$. This case is identical to the case $\pair{K'[]}s$.

    \item
      If $K[]=\super s{K'[]}$. This case is identical to the case $\pair s{K'[]}$.

    \item
      If $K[]=\fstsup(K'[])$. This case is identical to the case $\fst(K'[])$.

    \item
      If $K[]=\sndsup(K'[])$. This case is identical to the case $\snd(K'[])$.

    \item 
      If $K[]=\elimsup[\escalar p'\escalar q'](K'[],x.u_1,y.u_2)$.

      Then $\Gamma=\Gamma_1,\Gamma_2$.
      Let 
      \begin{align*}
	f_1&=\sem{\Gamma_1\vdash K'[r_1]:B\odot C} & f_2&=\sem{\Gamma_1\vdash K'[r_2]:B\odot C}\\
	g_1&=\sem{x:B,\Gamma_2\vdash u_1:A} & g_2&=\sem{y:C,\Gamma_2\vdash u_2:A}
      \end{align*}
      Then,
      \begin{align*}
	\sem{\Gamma\vdash K[t]:A}
	&=\sem{\Gamma_1,\Gamma_2\vdash\elimsup[\escalar p'\escalar q'](K'[t],x.u_1,y.u_2):A}\\
	&=\sumwe{\semS{\escalar p'}}{\semS{\escalar q'}}\circ({g_1}\oplus{g_2})\circ d\circ(\sem{\Gamma\vdash K'[t]:B\oplus C}\otimes\Id)\\
	\textrm{(by IH)}
	&=\sumwe{\semS{\escalar p'}}{\semS{\escalar q'}}\circ({g_1}\oplus{g_2})\circ d\circ( (\sumwe{\semS{\escalar p}}{\semS{\escalar q}}\circ(f_1\oplus f_2)\circ\Delta) \otimes\Id)\\
	\textrm{\color{red}(*)}&
	=
	\begin{aligned}[t]
	  \sumwe{\semS{\escalar p}}{\semS{\escalar q}}\circ(
	  &(\sumwe{\semS{\escalar p'}}{\semS{\escalar q'}}\circ({g_1}\oplus{g_2})\circ d\circ (f_1\otimes\Id))\\
	  \oplus&
	  (\sumwe{\semS{\escalar p'}}{\semS{\escalar q'}}\circ({g_1}\oplus{g_2})\circ d\circ (f_2\otimes\Id))
	  )\circ\Delta
	\end{aligned}
	\\
	&=
	\begin{aligned}[t]
	  \sumwe{\semS{\escalar p}}{\semS{\escalar q}}\circ(&\sem{\Gamma_1,\Gamma_2\vdash\elimsup[\escalar p'\escalar q'](K'[r_1],x.u_1,y.u_2):A}\\
	  \oplus&\sem{\Gamma_1,\Gamma_2\vdash\elimsup[\escalar p'\escalar q'](K'[r_2],x.u_1,y.u_2):A})\circ\Delta
	\end{aligned}
	\\
	&=\sumwe{\semS{\escalar p}}{\semS{\escalar q}}\circ(\sem{\Gamma\vdash K[r_1]:A}\oplus\sem{\Gamma\vdash K[r_2]:A})\circ\Delta
      \end{align*}
      Where the equality {\color{red}(*)} is justified by the following commuting diagram.
      \[
	\hspace{-8mm}\begin{tikzcd}[labels=description,column sep=2cm,row sep=1cm,
	    execute at end picture={
	      \path (\tikzcdmatrixname-2-1) -- (\tikzcdmatrixname-1-2) node[midway,red,yshift=4mm,xshift=-4mm,sloped]{\small (Corollary~\ref{cor:PropsFinv})};
	      \path 
	      (\tikzcdmatrixname-2-1) -- (\tikzcdmatrixname-3-1) coordinate[pos=0.5](aux1)
	      (\tikzcdmatrixname-1-2) -- (\tikzcdmatrixname-2-2) coordinate[pos=0.5](aux2)
	      (aux1) -- (aux2) node[midway,red,sloped]{\small (Lemma~\ref{lem:distrib})};
	      \path (\tikzcdmatrixname-4-1) -- (\tikzcdmatrixname-2-2) node[midway,red,yshift=4mm,xshift=-4mm,sloped]{\small ({Lemma \ref{lem:PropsF}})};
	      \path (\tikzcdmatrixname-4-2) -- (\tikzcdmatrixname-6-2) coordinate[pos=0.5](aux)
	      (aux) -- (\tikzcdmatrixname-6-1) node[midway,red,sloped]{\small ({Lemma~\ref{lem:sumwe-nat}})};
	      \path 
	      (\tikzcdmatrixname-4-1) -- (\tikzcdmatrixname-5-1) coordinate[pos=0.5](aux1)
	      (\tikzcdmatrixname-2-2) -- (\tikzcdmatrixname-3-2) coordinate[pos=0.5](aux2)
	      (aux1) -- (aux2) node[midway,red,sloped]{\small ({Lemma~\ref{lem:sumwe-nat}})};
	      \path 
	      (\tikzcdmatrixname-5-1) -- (\tikzcdmatrixname-6-1) coordinate[pos=0.5](aux1)
	      (\tikzcdmatrixname-3-2) -- (\tikzcdmatrixname-4-2) coordinate[pos=0.5](aux2)
	      (aux1) -- (aux2) node[midway,red,sloped]{\small ({Lemma~\ref{lem:sumwe-nat}})};
	    }
	    ]
	    {\Gamma_1}\otimes{\Gamma_2}\ar[d,"\Delta\otimes\Id"]\ar[r,"\Delta"] & ({\Gamma_1}\otimes{\Gamma_2})\oplus({\Gamma_1}\otimes{\Gamma_2})\ar[d,"(f_1\otimes\Id)\oplus(f_2\otimes\Id)"]\\
	    ({\Gamma_1}\oplus{\Gamma_1})\otimes{\Gamma_2}\ar[d,"(f_1\oplus f_2)\otimes\Id"]\ar[ur,"d",dashed,sloped]& (( B\oplus C)\otimes{\Gamma_2})\oplus(( B\oplus C)\otimes{\Gamma_2})\ar[d,"d\oplus d"]\ar[ddl,"\sumwe{\semS{\escalar p}}{\semS{\escalar q}}",sloped,dashed]\\
	    (( B\oplus C)\oplus( B\oplus C))\otimes{\Gamma_2}\ar[d,"\sumwe{\semS{\escalar p}}{\semS{\escalar q}}\otimes\Id"]\ar[ur,"d",dashed,sloped]& ( B\otimes{\Gamma_2}\oplus C\otimes{\Gamma_2})\oplus( B\otimes{\Gamma_2}\oplus C\otimes{\Gamma_2})\ar[d,"(g_1\oplus g_2)\oplus(g_1\oplus g_2)"]\ar[ddl,"\sumwe{\semS{\escalar p}}{\semS{\escalar q}}",dashed]\\
	    ( B\oplus C)\otimes{\Gamma_2}\ar[d,"d"] &  (A\oplus A)\oplus(A\oplus A)\ar[d,"\sumwe{\semS{\escalar p'}}{\semS{\escalar q'}}\oplus\sumwe{\semS{\escalar p'}}{\semS{\escalar q'}}"]\ar[ddl,"\sumwe{\semS{\escalar p}}{\semS{\escalar q}}",dashed]\\
	    B\otimes{\Gamma_2}\oplus C\otimes{\Gamma_2}\ar[d,"g_1\oplus g_2"] & A\oplus  A\ar[d,"\sumwe{\semS{\escalar p}}{\semS{\escalar q}}"]\\
	    A\oplus A\ar[r,"\sumwe{\semS{\escalar p'}}{\semS{\escalar q'}}"] &  A
	\end{tikzcd}
      \]

    \item
      If $K[]=\elimsup[\escalar p'\escalar q'](s,x.K'[],y.u_2)$.

      Then $\Gamma=\Gamma_1,\Gamma_2$.
      Let 
      \begin{align*}
	f_1&=\sem{x:B,\Gamma_2\vdash K'[r_1]:A} & f_2&=\sem{x:B,\Gamma_2\vdash K'[r_2]:A}\\
	g&=\sem{\Gamma_1\vdash v:B\odot C} & h&=\sem{y:C,\Gamma_2\vdash u_2:A}
      \end{align*}
      \begin{align*}
	\sem{\Gamma\vdash K[t]:A}
	&=\sem{\Gamma_1,\Gamma_2\vdash\elimsup[\escalar p'\escalar q'](v,x.K'[t],y.u_2):A}\\
	&=\sumwe{\semS{\escalar p'}}{\semS{\escalar q'}}\circ((\sem{x:B,\Gamma_2\vdash K'[t]:A})\oplus h)\circ d\circ(g\otimes\Id)\\
	\textrm{(by IH)}
	&=\sumwe{\semS{\escalar p'}}{\semS{\escalar q'}}\circ(({\sumwe{\semS{\escalar p}}{\semS{\escalar q}}\circ(f_1\oplus f_2)\circ\Delta})\oplus{h})\circ d\circ( g \otimes\Id)\\
	\textrm{\color{red}(*)}&=
	\begin{aligned}[t]
	  \sumwe{\semS{\escalar p}}{\semS{\escalar q}}\circ(
	  & (\sumwe{\semS{\escalar p'}}{\semS{\escalar q'}}\circ({f_1}\oplus{h})\circ d\circ (g\otimes\Id))\\
	  \oplus &
	  (\sumwe{\semS{\escalar p'}}{\semS{\escalar q'}}\circ({f_2}\oplus{h})\circ d\circ (g\otimes\Id))
	  )\circ\Delta
	\end{aligned}
	\\
	&=
	\begin{aligned}[t]
	  \sumwe{\semS{\escalar p}}{\semS{\escalar q}}\circ(&\sem{\Gamma_1,\Gamma_2\vdash\elimsup[\escalar p'\escalar q'](v,x.K'[r_1],y.u_2):A}\\
	  \oplus&\sem{\Gamma_1,\Gamma_2\vdash\elimsup[\escalar p'\escalar q'](v,x.K'[r_2],y.u_2):A})\circ\Delta
	\end{aligned}
	\\
	&=\sumwe{\semS{\escalar p}}{\semS{\escalar q}}\circ(\sem{\Gamma\vdash K[r_1]:A}\oplus\sem{\Gamma\vdash K[r_2]:A})\circ\Delta
      \end{align*}
      Where the equality {\color{red}(*)} is justified by the following commuting diagram.
      \[
	\hspace{-7mm}\begin{tikzcd}[labels=description,column sep=8mm,row sep=1cm,
	    execute at end picture={
	      \path (\tikzcdmatrixname-1-1) -- (\tikzcdmatrixname-1-2) coordinate[pos=0.5](aux1)
	      (\tikzcdmatrixname-2-1) -- (\tikzcdmatrixname-2-2) coordinate[pos=0.5](aux2)
	      (aux1) -- (aux2) node[midway,red]{\small (Naturality of $\Delta$)};
	      \path (\tikzcdmatrixname-3-1) -- (\tikzcdmatrixname-3-2) coordinate[pos=0.5](aux1)
	      (aux2) -- (aux1) node[midway,red]{\small (Naturality of $\Delta$)};
	      \path (\tikzcdmatrixname-4-1) -- (\tikzcdmatrixname-3-2) coordinate[pos=0.5](aux)
	      (aux) -- (\tikzcdmatrixname-3-1) node[midway,red] {\small (Lemma~\ref{lem:deltaDelta})};
	      \path (\tikzcdmatrixname-4-1) -- (\tikzcdmatrixname-5-1) coordinate[pos=0.5](aux1)
	      (\tikzcdmatrixname-3-2) -- (\tikzcdmatrixname-4-2) coordinate[pos=0.5](aux)
	      (aux) -- (aux1) node[midway,red,sloped,yshift=-3mm]{\small (Lemma~\ref{lem:delta-nat})};
	      \path (\tikzcdmatrixname-5-1) -- (\tikzcdmatrixname-5-2) node[midway,red]{\small Lemma~\ref{lem:iteration-sumwe}};
	    }
	    ]
	    {\Gamma_1}\otimes{\Gamma_2}\ar[d,"g\otimes\Id"]\ar[r,"\Delta"] & ({\Gamma_1}\otimes{\Gamma_2})\oplus({\Gamma_1}\otimes{\Gamma_2})\ar[d,"(g\otimes\Id)\oplus(g\otimes\Id)"]\\
	    ( B\oplus  C)\otimes{\Gamma_2}\ar[d,"d"]\ar[r,"\Delta",dashed]& (( B\oplus  C)\otimes{\Gamma_2})\oplus(( B\oplus  C)\otimes{\Gamma_2})\ar[d,"d\oplus d"]\\
	    ( B\otimes{\Gamma_2})\oplus ( C\otimes{\Gamma_2})\ar[r,"\Delta",dashed]\ar[d,"\Delta\oplus\Id"]& ( B\otimes{\Gamma_2}\oplus  C\otimes{\Gamma_2})\oplus( B\otimes{\Gamma_2}\oplus  C\otimes{\Gamma_2})\ar[d,"({f_1}\oplus{h})\oplus({f_2}\oplus{h})"]\\
	    ( B\otimes{\Gamma_2}\oplus  B\otimes{\Gamma_2})\oplus ( C\otimes{\Gamma_2})\ar[d,"({f_1\oplus f_2})\oplus{h}"]\ar[ur,"\delta",sloped,dashed] & ( A\oplus  A)\oplus( A\oplus  A)\ar[d,"\sumwe{\semS{\escalar p'}}{\semS{\escalar q'}}\oplus\sumwe{\semS{\escalar p'}}{\semS{\escalar q'}}"] \\
	    ( A\oplus  A)\oplus  A\ar[ur,"\delta",dashed,sloped]\ar[d,"\sumwe{\semS{\escalar p}}{\semS{\escalar q}}\oplus\Id"] &  A\oplus  A\ar[d,"\sumwe{\semS{\escalar p}}{\semS{\escalar q}}"]\\
	    A\oplus  A\ar[r,"\sumwe{\semS{\escalar p'}}{\semS{\escalar q'}}"]&  A
	\end{tikzcd}
      \]

    \item 
      If $K[]=\elimsup[\escalar p'\escalar q'](v,x.u_1,y.K'[])$.
      This is analogous to the case $\elimsup[\escalar p'\escalar q'](v,x.K'[],y.u_2)$.
      \qed
  \end{itemize}
\end{proof}

\end{document}